\documentclass[12pt,reqno,english]{article}
\usepackage[T1]{fontenc}
\usepackage[latin9]{inputenc}
\usepackage[letterpaper]{geometry}
\usepackage{color}
\usepackage{babel}
\usepackage{amsmath}
\usepackage{amsthm}
\usepackage{amssymb}
\usepackage{setspace}
\usepackage[authoryear]{natbib}
\usepackage{tikz}
\usepackage{tikz-3dplot}
\usepackage[unicode=true,pdfusetitle,
 bookmarks=true,bookmarksnumbered=false,bookmarksopen=false,
 breaklinks=false,pdfborder={0 0 1},backref=false,colorlinks=true]
 {hyperref}
\hypersetup{ pdfborderstyle=,linkcolor=magenta,citecolor=blue}

\makeatletter
\usepackage{caption}
\usepackage{subcaption}
\usepackage{bm}

\usepackage{color}
\usepackage[plain]{fullpage}%Pepe added this for wider margins
\usepackage{babel}

\usepackage{array}
\usepackage{booktabs}
\usepackage{multirow}

\usepackage{psfrag}\usepackage{epsf}\usepackage{enumerate}
\usepackage{url}% not crucial - just used below for the URL 

\usepackage{amsfonts}
\usepackage{pdflscape}
\usepackage{threeparttable}

\theoremstyle{plain}
\newtheorem{lem}{\protect\lemmaname}[section]
\theoremstyle{plain}

\newtheorem{theorem}{\textbf{Theorem}}%[section]
\newtheorem{lemma}{Lemma}%[section]
\newtheorem{assumption}{Assumption}%[section]
\theoremstyle{remark}

\providecommand{\remarkname}{Remark}

\usepackage{verbatim}

\makeatother

\providecommand{\lemmaname}{Lemma}
\providecommand{\propositionname}{Proposition}

\definecolor{myred}{RGB}{220, 80, 60}
\definecolor{mygreen}{RGB}{0, 150, 135}

\begin{document}
\title{Local Asymptotics for Treatment Choice \\ with Partial Identification\thanks{We thank Jack Porter for encouragement at an early stage of this project and Karun Adusumilli, Tim Armstrong, Xinyue Bei, Tim Christensen, Kei Hirano, Toru Kitagawa, Brendan Kline, Han Xu, and Kohei Yata as well as audiences at Bonn, TAMU, UT Austin, as well as participants at EcoSta 2026 and the Iowa Econometrics Conference for feedback. William Pan, Elliott Serna and Sara Yoo provided excellent research assistance. We gratefully acknowledge financial
support from the NSF under grant SES-2315600.}}

\author{Jos\'e Luis Montiel Olea\thanks{Email: jlo67@cornell.edu}\and Chen Qiu\thanks{Email: cq62@cornell.edu}  \and J\"{o}rg Stoye\thanks{Email: stoye@cornell.edu}}

\date{August 2026}

\maketitle

\vspace{-0.5cm}

\begin{abstract}
We provide a new  asymptotic framework to
derive approximately optimal treatment assignments when sampling noise from data is compounded by fundamental uncertainty due to partial identification. We recenter the reduced-form parameter around its \emph{least-favorable} configuration and consider drifting parameter sequences that yield both diminishing levels of sampling uncertainty and of partial identification. We characterize the limiting decision problem as a normal location shift model with a suitable limiting identified set.
We apply our results to treatment choice problems with contaminated outcomes, to robust welfare analyses with partially identified consumer surplus, and to the problem of aggregating experimental estimates for policy adoption.

\end{abstract}
\medskip
\textsc{Keywords}: statistical decision theory, treatment assignment, minimax regret, limit experiment, partial identification.

\newpage{}

\onehalfspacing

\section{Introduction}

A policy maker must decide between implementing a new policy or preserving the status quo. Her data provide information about the potential benefits of these two options. Unfortunately, these data only \emph{partially identify} payoff-relevant parameters and may therefore not reveal, even in large samples, the correct course of action. The question of how to best use the available data in such \emph{treatment choice problems with partial identification} has received recent attention in the literature; see, for example, \cite{ishihara2021}, \cite{yata2021}, \cite*{christensen2022optimal}, and \cite{kido2023locallyasymptoticallyminimaxstatistical}. Several interesting problems that arise in empirical research can be recast  using this framework; see \cite{MQS2023decision}. 

In this paper, we use Le Cam's limits of experiments framework \citep{le2000asymptotics,le2012asymptotic} to propose a simple local asymptotic approximation for a general class of treatment choice problems with partial identification.  In the problems that we study, the decision maker observes a random sample of size $n$ of a vector-valued outcome variable $Y_i$. There is a point-identified parameter $\gamma \in \mathbb{R}^{d_m}$ that flexibly determines the distribution of the data. There is also a payoff-relevant parameter, $U^{*} \in \mathbb{R}$, that is partially identified. As in \cite{christensen2022optimal}, we assume that the decision maker can use $\gamma$ to deduce restrictions on $U^*$. The loss function for the decision problem is given by the usual regret loss $L(a,U^*,\gamma) := U^* ( \mathbf{1}\{U^* \geq 0\}-a)$, where we interpret the action $a \in [0,1]$ as the fraction of the population exposed to the new policy.   

We show that the treatment choice problems that we have just described above can be conveniently approximated by a simpler problem. In the limiting decision problem, the decision maker observes a $d_m$-valued multivariate normal vector $\Delta$ with unknown mean, $h$, but known variance. There is a payoff-relevant parameter $\mu^* \in \mathbb{R}$ that, for each $h$, is known to belong to a nonempty interval $I_{\infty}(h):=[\underline{I}_{\infty}(h),\overline{I}_{\infty}(h)]$. We refer to $I_{\infty}(h)$ as the \emph{limiting (local) identified set}. The endpoints of the interval are possibly nonlinear transformations of $h$. The limiting loss function takes the form $L_{\infty}(a,\mu^*,h):= \mu^*(\mathbf{1}\{\mu^* \geq 0\}-a)$. 

Our Theorem \ref{thm: general} then shows that if we find a minimax rule for the limiting decision problem, we can  construct a sequence of decision rules in the original problem that are \emph{locally asymptotically minimax regret} optimal (in a sense we make precise). We then provide specific solutions to the limiting problem when the endpoints of $I_{\infty}(h)$ are affine (Theorems \ref{thm:full-diff} and \ref{thm:full-diff-asymptotic}), which occurs when the bounds of the original identified set for $U^*$ are differentiable. Since in many partially identified models the bounds of the original identified set are only directionally differentiable, we also consider the limiting problem without imposing the affinity of the bounds of the set $I_{\infty}(h)$. In this case, we show that when the parameter space for the original problem is i) convex and centrosymmetric (in a sense we make precise) and ii) the original bounds for the identified set are directionally differentiable, we can also solve the limiting problem (Theorems \ref{thm:non-diff} and \ref{thm:non-diff-asymptotic}).\footnote{In contrast, it is well known that directional differentiability causes irregularity for inference  problems \citep{hirano2012impossibility}.}   

We use two key ideas to justify our asymptotic approximations. The first key idea is to consider drifting parameter sequences over which the identified set for the payoff-relevant parameter shrinks to a point (at a root-$n$ rate) as the sample size grows large. We argue that local asymptotic approximations based on sequences that lead to \emph{near point-identification} ensure a nontrivial trade-off between identification and estimation in the limit as the sample size grows large. The idea of shrinking the identified set as the sample size grows large is based on the seminal work of \cite{armstrong2021sensitivity}, who use this device to conduct sensitivity analysis in point and partially-identified models defined by moment conditions which are only required to hold in an approximate sense. Drifting parameter sequences that lead to near point-identification are also discussed in \cite{song2014point} and \cite{Stoye2009ecma}. 

The second key idea is that we make an explicit recommendation about which reduced-form parameter value to localize about.  In point-identified treatment choice problems, one often localizes at a point such that the payoff-relevant parameter equals zero  \citep{HiranoPorter2009,HiranoPorter2020}. This choice is usually motivated as being the hardest (or least-favorable) for the decision maker. In partially-identified treatment choice problems, choosing where to localize is a more delicate matter. Indeed, any reduced-form parameter value at which the identified set for the payoff-relevant parameter contains zero makes the treatment choice problem hard,  but different values of the reduced-form parameter imply different degrees of partial identification. We argue that in our partially-identified treatment choice problem, there also exists a natural notion of a least-favorable or hardest case, which we formalize and advocate to localize around. 

We apply our results to three concrete problems. First, we consider treatment choice problems with binary outcomes that may be contaminated or corrupted \citep{horowitz1995identification}. Second, we present a stylized example in robust welfare analyses \citep{kang2025robustness}. Third, we revisit an example in the evidence aggregation framework in \cite{ishihara2021}.

{\scshape Related Literature.} \cite{Manski2000,manski2004statistical} and \cite{Dehejia2005} advocated using a decision-theoretic framework to study treatment choice problems. \cite{Manski2000,manski2005social,manski2007identification} and \citet{Stoye07} provide optimal treatment rules assuming the true distribution of the data is known. 
\cite{stoye2012minimax,stoye2012new}, \cite{yata2021}, \cite{ishihara2021} and \cite{MQS2023decision} focus on finite-sample minimax regret optimal rules; \citet{aradillas2024robust} on multiple prior minimax regret rules; and \cite{kitagawa2023treatment} on mean-squared regret optimal rules. 

In important recent work, \cite{christensen2020robust,christensen2022optimal} pioneered a local asymptotic approach for discrete choice problems when payoffs depend on a partially-identified parameter (binary treatment choice problems with partial identification being a special case). Their main innovation is to \emph{profile out} the partially-identified parameter in the loss function and recenter a decision rule's risk around that of an oracle who knows a reduced-form parameter that restricts the payoffs for the different choices. While they focus on a Bayes criterion based on profiled loss, \cite{xu2026asymptotic} extends their methodology to a minimax criterion and also allows for randomized decisions. Our paper shares a similar motivation to these two papers and continues the research agenda of \cite{christensen2022optimal}. An important difference is that we do not consider a loss function that profiles out the partially-identified parameter. Our asymptotic analysis with a shrinking identified set enables us to respect the problem's original loss function and retain a more traditional minimax regret perspective on the finite-sample problem. However, to achieve this, we must be specific about where to localize the reduced-form parameter, whereas \citet{christensen2022optimal} and \citet{xu2026asymptotic} can be very flexible in this choice. Another difference from \citet{xu2026asymptotic} is that we find that approximately minimax regret optimal decisions even when the bounds of the identified set are only directionally differentiable. Thus, we view our work as a natural complement to the work of \cite{christensen2020robust,christensen2022optimal,xu2026asymptotic}. \cite{song2014point} studied local asymptotic minimax decisions for interval-identified parameters by profiling out the mean-squared error risk function; see also \cite{kido2023locallyasymptoticallyminimaxstatistical} for analogous treatments in   treatment choice problems. We think all of these papers (and ours) contribute to creating a bridge to existing results concerning limit experiments in point-identified settings \citep{HiranoPorter2009,HiranoPorter2020}. 

Drifting parameter sequences of parameter values analogous to the near point-identification asymptotics used in this paper have a long history in econometrics: for example, the local-to-zero asymptotics in the linear instrumental variables model of \cite{staiger1997instrumental}, local-to-unit-root asymptotics in the study of nearly integrated autoregressive processes in \cite{phillips1988regression} and \cite{dou2021generalized}, local-to-identification-failure analysis for Generalized Method of Moments models in \cite{andrews2022optimal}. Local asymptotics are also used in the work of \cite{hirano2017forecasting} to compare the performance of different forecasting procedures under model uncertainty. See also \cite{powell2017identification} for further examples. 

An important motivation for our paper is the work of \citet{stoye2012minimax}, \citet{ishihara2021}, \citet{yata2021}, and \citet{MQS2023decision}. All of these papers consider treatment choice problems with partial identification where a point-identified (reduced-form) parameter and  a payoff-relevant but partially-identified parameter.  Crucially, the analysis of minimax-regret optimal decision rules is conducted under the assumption that the statistical information for the point-identified parameter is revealed only through the mean of a Gaussian location shift model. \cite{yata2021} makes significant progress characterizing exact finite-sample minimax-regret optimal treatment assignment rules by assuming that the parameter space is convex and centrosymmetric. As noted by \cite{hirano25}, \emph{``Given the shifted normal form for the reduced-form model, it is tempting to think of this statistical setup
as corresponding to the limit experiment in the locally asymptotically normal case, so that these results
can be interpreted as simultaneously providing asymptotic optimality results when the reduced-form
model is regular and parametric. However, making this connection is somewhat complicated.''} Our approach justifies the aforementioned Gaussian construction as a \emph{bona fide} limit experiment (subject to a suitable adjustment of the limit identified set). While our results assume that the statistical model is parametric (in the sense of being indexed by a finite-dimensional parameter), the results in \citet[][Appendix C and part of Section 4]{armstrong2021sensitivity} can be used to provide local asymptotic justifications  for treatment choice problems in semiparametric models where the identified set for the payoff-relevant parameter is characterized via  moment conditions. Beyond the use of a parametric statistical model, we note that a key difference is that our asymptotic analysis  simplifies the geometry of the identified set considerably in the limit. As a result, we are able to provide solutions to the limiting problem even beyond a centrosymmetric parameter space. 

Finally, our main results are most helpful to a researcher who believes that identification and  estimation-induced uncertainty are roughly equally important for their finite-sample problem.  If the dataset at hand is so large that they believe the lack of identification is the sole issue, one is free to set a slower drifting rate for the identification-induced uncertainty. In this case, we simply recommend the  plug-in or ``as-if'' \citep{manski2021econometrics} approach that just solves population-level problems after plugging in consistent estimators for identifiable quantities. Alternatively, if a researcher thinks that the model-induced uncertainty vanishes faster than estimation-induced uncertainty (because data is so scarce and the degree of partial identification is tiny in the model), our results equally apply by simply setting the bounds of the identified set in the limit experiment as approaching zero.

{\scshape Outline.} The rest of the paper is organized as follows. In Section \ref{sec:illustrate}, we explain the key idea of our paper via a simple and  stylized example. Section \ref{sec:frame} introduces our limit treatment choice problem in a general setup and formal notion of asymptotic minimax regret optimality. Section \ref{sec:results} 
presents our results on solving the limit problem and constructing asymptotically optimal rules. 
We apply our results to three applications in Section \ref{sec:applications}. Section \ref{sec:conclude} concludes.  All proofs and additional technical results are reserved in the Appendix and Online Appendix.

\section{Illustrative Example}\label{sec:illustrate}

In this section, we illustrate the key idea of the paper via a stylized example that also demonstrates some general difficulties of asymptotics with partially identified payoff-relevant parameters. Suppose a decision maker (DM) must decide what fraction $a\in[0,1]$ of a target population to assign to a new policy. Denote by $\mu^{*}\in\mathbb{R}$ the true treatment effect on the target population. Applying action $a$ yields welfare $\mu^{*}a$ (status quo welfare is assumed to be known and normalized to zero). Thus, the infeasible optimal rule is $\mathbf{1}\left\{ \mu^{*}\geq0\right\} $. DM observes one realization of $Y\sim\mathcal{N}(\mu,\sigma^{2})$, where $\sigma>0$ is known and $\mu$ is an $\textit{identified}$ treatment effect; it is related to $\mu^{*}$ by 
\begin{equation*}
\mu^{*}\in I(\mu):=\left[\mu-k,\mu+k\right]~~\text{ for some known }k>0.\label{eq:id.illustrative.eg}
\end{equation*}
Here, the parameter space is
\begin{equation}\label{eq:stylized}
 \Theta:=\{(\mu,\mu^{*})\mid\mu\in\mathbb{R},\mu^{*}\in[\mu-k,\mu+k]\}.   
\end{equation}
A statistical decision rule $d$ maps $Y$ to the unit interval, and the goal is to find a decision that minimizes worst-case expected regret defined as
\begin{equation}
\sup_{\mu^{*}\in I(\mu),\mu\in\mathbb{R}}{\mu^{*}\mathbf{1}\{\mu^{*}\geq0\}-\mu^{*}\mathbb{E}_{\mu}d(Y)}.\label{eq:illustrative.regret}
\end{equation}
This minimax regret (MMR) problem is tractable due to (i) normal likelihood of $Y$ and (ii) convexity and centrosymmetry of $\Theta$ \citep{stoye2012minimax,yata2021}. The qualitative and quantitative features of the optimal decision depend on the tradeoff between $\sigma$ (indicating estimation uncertainty) and $k$ (indicating severity of partial identification): if $k\leq\sigma\sqrt{\pi/2}$, then $\mathbf{1}\{Y\geq0\}$ is MMR optimal, uniquely so if the inequality is strict; if $k>\sigma\sqrt{\pi/2}$, then infinitely many MMR optimal rules exist, one of which is $\Phi\left(Y/\sqrt{2k^{2}/\pi-\sigma^{2}}\right)$ and all of which in a large regularity class involve randomized treatment assignment.

These findings are instructive, but in practice, data are rarely exactly normal.  As noted in  \cite{hirano25}, rigorously formalizing the idea that the above Gaussian construction represents a limit experiment result is challenging. 
\begin{figure}[http]
 \centering
\begin{tikzpicture}[
    axis/.style={->, thick, black},
    grayline/.style={very thick, gray},
    redline/.style={very thick, red},
    myredline/.style={very thick, myred},
    blueline/.style={very thick, blue},
    greenline/.style={very thick, mygreen},
    tangent/.style={blue, ultra thick}
]
\def\xmin{-2.8}
\def\xmax{2.8}
\def\ymin{-2.8}
\def\ymax{2.8}
\def\k{1.5}
\def\sep{7.5}

% --------------------------------------------------------
% PANEL 1 (left)
% --------------------------------------------------------
\def\mured{1.0}
\def\mugreen{-0.7}

\begin{scope}[xshift=0cm]
  \begin{scope}
    \clip (\xmin,\ymin) rectangle (\xmax,\ymax);
    \fill[orange!30]
      (\xmin, {\xmin-\k})
      -- (\xmin, {\xmin+\k})
      -- (\xmax, {\xmax+\k})
      -- (\xmax, {\xmax-\k})
      -- cycle;
    \draw[grayline]  (\xmin, {\xmin+\k}) -- (\xmax, {\xmax+\k});
    \draw[grayline] (\xmin, {\xmin-\k}) -- (\xmax, {\xmax-\k});
  \end{scope}

  % Horizontal axis full width, vertical axis shortened at bottom
  \draw[axis] (\xmin,0) -- (\xmax,0) node[right] {$\mu$};
  \draw[axis] (0,-2.2) -- (0,\ymax) node[above] {$\mu^*$};

  \draw[redline] (\mured, {\mured-\k}) -- (\mured, {\mured+\k});
  \filldraw[red] (\mured, {\mured+\k}) circle (1.5pt);
  \filldraw[red] (\mured, 0)           circle (1.5pt);
  \filldraw[red] (\mured, {\mured-\k}) circle (1.5pt);
  \node[red, font=\large\bfseries] (mu0label) at (\mured+0.7, 0.7) {$\mu_0$};
    \draw[->, red, thick] (mu0label.south west) -- (\mured+0.07, 0.07);

  \filldraw[black] (0, \k) node[left=3pt] {$k$};
  \filldraw[black] (0,-\k) node[right=3pt] {$-k$};

  \draw[greenline] (\mugreen, {\mugreen-\k}) -- (\mugreen, {\mugreen+\k});
  \filldraw[mygreen] (\mugreen, {\mugreen+\k}) circle (1.5pt);
  \filldraw[mygreen] (\mugreen, 0)             circle (1.5pt);
  \filldraw[mygreen] (\mugreen, {\mugreen-\k}) circle (1.5pt);
  \node[mygreen, font=\large] (mulabel) at (\mugreen-1, 1.4)
    {$\mu_0 + \dfrac{h_1}{\sqrt{n}}$};
    \draw[->, mygreen, thick] (mulabel.south) -- (\mugreen-0.08, 0.07);

  \draw[->, very thick, black]
      (\mugreen+0.15, -2.4) -- (\mured-0.15, -2.4)
      node[midway, below, font=\normalsize] {as $n \to \infty$};
\end{scope}

% --------------------------------------------------------
% PANEL 2 (right)
% --------------------------------------------------------
\def\extn{0.9}
\def\off{0.35}

\begin{scope}[xshift=\sep cm]
  \begin{scope}
    \clip (\xmin,\ymin) rectangle (\xmax,\ymax);

    \fill[orange!30]
      (\xmin, {\xmin-\k})
      -- (\xmin, {\xmin+\k})
      -- (\xmax, {\xmax+\k})
      -- (\xmax, {\xmax-\k})
      -- cycle;

    \fill[blue!30]
      ({-\extn}, {-\extn-\off})
      -- ({\extn}, {\extn-\off})
      -- ({\extn}, {\extn+\off})
      -- ({-\extn}, {-\extn+\off})
      -- cycle;

    \begin{scope}
      \clip
        ({-\extn}, {-\extn-\off})
        -- ({\extn}, {\extn-\off})
        -- ({\extn}, {\extn+\off})
        -- ({-\extn}, {-\extn+\off})
        -- cycle;
      \draw[tangent] (\xmin, {\xmin+\off}) -- (\xmax, {\xmax+\off});
      \draw[tangent] (\xmin, {\xmin-\off}) -- (\xmax, {\xmax-\off});
    \end{scope}

    \draw[very thick, gray] (\xmin, {\xmin+\k}) -- (\xmax, {\xmax+\k});
    \draw[very thick, gray] (\xmin, {\xmin-\k}) -- (\xmax, {\xmax-\k});
  \end{scope}

  \draw[axis] (\xmin,0) -- (\xmax,0) node[right] {$\mu$};
  \draw[axis] (0,\ymin) -- (0,\ymax) node[above] {$\mu^*$};

  \draw[redline] (0,-\k) -- (0,\k);
  \filldraw[red] (0, \k) circle (1.5pt) node[left=3pt, black] {$k$};
  \filldraw[red] (0,-\k) circle (1.5pt) node[right=3pt, black] {$-k$};
  \filldraw[red] (0, 0)  circle (1.5pt);
  \node[red, font=\large\bfseries] at (0.45, 0.43) {$\mu_0$};

  \node[blue, font=\large] at (1.75, 1.25) {$\Theta_n$};
  \draw[blue, ->, thick] (1.4, 1.2) to[out=210, in=30] (1, 1);
\end{scope}

\end{tikzpicture}
\caption{Two different asymptotics in the stylized example. Left: we localize at any $\mu_0$. As a result, $I\left(\mu_0+h/\sqrt{n}\right)\rightarrow I(\mu_0)$ as $n\rightarrow\infty$. Right: we localize at  $\mu_0=0$ and consider  a local parameter space $\Theta_n$ implying a shrinking identified set.}
\label{fig:illustrate}
\end{figure}

\subsection{The Challenge}\label{sec:eg.challenge}

Suppose now the DM observes a random sample $\{Y_i\}_{i=1}^{n}$, where $Y_i\sim\mathcal{N}(\mu,\sigma^{2})$ with both $\mu$ and $\sigma$ unknown. The parameter space becomes 
\[
\Theta=\{(\mu,\sigma^{2},\mu^{*})\mid\mu\in\mathbb{R},\sigma^{2}>0,\mu^{*}\in I(\mu)\}.
\]
As $n$ becomes large, the location of $\mu$ and $\sigma^{2}$ becomes easier to learn but that of $\mu^*$ may still remain ambiguous. Suppose that, in the spirit of \citet{HiranoPorter2009,HiranoPorter2020} and work that they build on, we recenter the reduced-form parameter around some $\mu=\mu_{0}$ and $\sigma^{2}=\sigma_{0}^{2}$:\footnote{The typical notion of localization, including that in  \cite{HiranoPorter2009,HiranoPorter2020},  focuses on point-identified problems and does not apply to partially identified problems.}
\begin{align*}
\mu & =\mu_{0}+\frac{h_{1}}{\sqrt{n}},~~\sigma^{2}=\sigma_{0}^{2}+\frac{h_{2}}{\sqrt{n}},
\end{align*}
where $h=(h_{1},h_{2})^{\top}\in\mathbb{R}^2$.
Then, the identified set for $\mu^{*}$ is $I\left(\mu_{0}+h_1/\sqrt{n}\right)$. As $n\rightarrow\infty$, the location  $h$ will be signaled by an \emph{asymptotically sufficient statistic} $\Delta\sim\mathcal{N}(h,\mathbf{I}_0^{-1})$ with Fisher Information $\mathbf{I}_0$, while it is clear that (see also the left panel of Figure \ref{fig:illustrate})
\begin{align*}
\mu_{0}+\frac{h_{1}}{\sqrt{n}} & \rightarrow\mu_{0},
~~I\left(\mu_{0}+\frac{h_{1}}{\sqrt{n}}\right)\rightarrow I\left(\mu_{0}\right).
\end{align*}
In other words, in the limit, the DM would observe $\Delta\sim\mathcal{N}(h,\mathbf{I}_{0}^{-1})$
and \emph{know} $\mu^{*}\in[\mu_{0}-k,\mu_{0}+k]$. In this limit, a rule  $d_\infty$  maps $\Delta$ to $[0,1]$ and the MMR problem becomes
\begin{equation}\label{eq:default.asymptotics.limit.game}
\min_{d_{\infty}}\sup_{\mu^*\in [\mu_{0}-k,\mu_{0}+k],h\in\mathbb{R}^{2}}\mu^{*}\left[\mathbf{1}\left\{ \mu^{*}\geq0\right\} -\mathbb{E}d_\infty(\Delta)\right].
\end{equation}
The signal $\Delta$ has become pure noise, revealing no information regarding $\mu_0$ or $\mu^*$. It follows that \eqref{eq:default.asymptotics.limit.game} is solved by the no-data rule  \citep{manski2007identification} 
\[
\left[\frac{\mu_{0}+k}{2k}\right]_{0}^{1}:=\begin{cases}
0, & \mu_{0}<-k\\
\tfrac{\mu_{0}+k}{2k}, & -k\leq\mu_{0}\leq k\\
1, & \mu_{0}>k
\end{cases}.
\]
%where $[x]_0^{1}$ denotes clamping $x$ to the unit interval $[0,1]$. 
This finding reflects partial identification overwhelming sampling noise in the limit. Indeed, under these asymptotics, sampling uncertainty vanishes at the standard $1/\sqrt{n}$ rate whereas the degree of partial identification is constant even in the limit. As a result,  there is no meaningful trade-off between estimation uncertainty and the severity of partial identification, as opposed to what we saw earlier in the ``finite-sample'' version of the problem. Moreover, these asymptotics cannot (sufficiently) inform the selection of good rules because too many rules will be asymptotically MMR optimal in the sense of worst-case regret converging to the same limit. For example, this applies to $\left[(\hat{\mu}+k)/(2k)\right]_0^1$ for any consistent estimator $\hat{\mu}$ of $\mu$.

\subsection{Introducing Local Asymptotics}\label{sec:eg.key.idea}
Our asymptotic approach embodies two major innovations; see also the right panel of Figure \ref{fig:illustrate} for a visualization. First, we are more explicit about which parameter values to localize about. In point-identified settings, one often picks a point at which the payoff-relevant parameter equals zero, representing a \emph{hardest case} where it is difficult to determine the best
treatment even with large sample sizes \citep[][p.1687]{HiranoPorter2009}.  In our example,  that would correspond to $\mu^*=0$. However, as $\mu^*$ is partially identified, any point of $(\mu,\sigma^2)$ such that $I(\mu)$ contains $0$ is consistent with $\mu^*=0$, and it is not immediately clear which point in the space of   $(\mu,\sigma^2)$ we should pick. We argue that, even in the partially identified setting, there is an analogous notion of the  \emph{hardest case}  that  motivates our choice of localization for the reduced-form parameter. In this stylized example, that corresponds to $\mu=0$ (and the point of localization for $\sigma^2$ can be arbitrary). Intuitively, if the sample size is sufficiently large to inform the location of $\mu$, then $\mu=0$ is the most challenging case as it involves the greatest   ambiguity regarding  $\mu^*$. See Section \ref{sec:local.point} for the definition of our ``most difficult case'' in a general setting. 

Second, to ensure a nontrivial trade-off between the severity of partial identification and estimation difficulty, we let the former vanish at an appropriate rate \citep{armstrong2021sensitivity}. That is, we look at drifting sequences that imply ``near point identification.''  Specifically,  we consider a local parameter space with a shrinking identified set: 
\begin{align*}\Theta_{n} :=\left\{ (\mu,\sigma^{2},\mu^{*})\mid \mu=0+\frac{h_{1}}{\sqrt{n}},\sigma^{2}=\sigma_{0}^{2}+\frac{h_{2}}{\sqrt{n}},\mu^{*}\in I_n\left(\frac{h_{1}}{\sqrt{n}}\right)\right\},
\end{align*}
where 
${I_{n}}\left(\frac{h_{1}}{\sqrt{n}}\right)  =\left[\frac{h_{1}}{\sqrt{n}}-k_{n},\frac{h_{1}}{\sqrt{n}}+{k_n}\right]$
for some $k_{n}$ such that $\sqrt{n}k_{n}=C>0$.
Now, consider the scaled  risk function of a  decision rule $d$ based on data $\{Y_i\}_{i=1}^{n}$, which can be written as 
\[
\sqrt{n}\mu^{*}\left(\mathbf{1}\left\{ \mu^{*}\geq0\right\} -\mathbb{E}_{(\mu,\sigma^{2})}[d((Y_i)_{i=1}^n)]\right).
\]
It follows that
\begin{align*}\sqrt{n}\mu^{*}  \in\sqrt{n}{I_{n}}\left(\frac{h_{1}}{\sqrt{n}}\right)=\sqrt{n}\left[\frac{h_{1}}{\sqrt{n}}-k_{n},\frac{h_{1}}{\sqrt{n}}+k_{n}\right] =\left[h_{1}-C,h_{1}+C\right].
\end{align*}
Furthermore, standard asymptotic arguments (e.g.,  Proposition 3.1 in \citealt{HiranoPorter2009}) yield that for any converging rule $d$, there is a limiting rule $d_{\infty}$ such that
\[
\mathbb{E}_{(\mu,\sigma^{2})}[d((Y_i)_{i=1}^n)]\rightarrow\mathbb{E}_{h}[d_{\infty}(\Delta)],
\]
where $\Delta\sim\mathcal{N}(h,\mathbf{I}_{0}^{-1})$. Thus, we have a stable limit game in which the researcher observes $\Delta\sim\mathcal{N}(h,\mathbf{I}_{0}^{-1})$, and the partially identified welfare contrast $\mu^{*}_{\infty}$ lies in a limit identified set $\left[h_{1}-C,h_{1}+C\right]$. The limit MMR problem becomes 
\begin{equation}\label{eq:intro.limit.mmr}
\min_{d_{\infty}:\mathbb{R}^{2}\rightarrow[0,1]}\sup_{\mu_{\infty}^{*}\in[h_{1}-C,h_{1}+C],h\in\mathbb{R}^{2}}\mu_{\infty}^{*}\left[\mathbf{1}\left\{ \mu_{\infty}^{*}\geq0\right\} -\mathbb{E}_{h}[d_{\infty}(\Delta)]\right],   
\end{equation}
recovering the stylized example; compare \eqref{eq:stylized}-\eqref{eq:illustrative.regret}. Letting  $w=(1,0)^{\top}$ and $\Sigma_{0}=(w)^{\top}\mathbf{I}_{0}^{-1}w$, we can apply existing results and find the solution of \eqref{eq:intro.limit.mmr} as follows:
If $C\leq\sqrt{\Sigma_{0}\pi/2}$, then $\mathbf{1}\{w^{\top}\Delta\geq0\}$ is MMR optimal; if $C>\sqrt{\Sigma_{0}\pi/2}$, one optimal rule is
$\Phi\left(w^{\top}\Delta\left(2C^{2}/\pi-\Sigma_{0}\right)^{-1/2}\right)$.
With this limit optimal rule, we may further find a matching feasible rule via the natural ``plug-in'' principle, i.e., replacing (i) $w^{\top}\Delta$ with $\sqrt{n}\hat{\mu}$, where $\hat{\mu}$ is a best regular estimator of $\mu$ (e.g., the MLE) and (ii) $\Sigma_{0}$ with a consistent estimator $\hat{\Sigma}$.  This is exactly what a researcher would use in practice based on the ``finite-sample'' results above if we view  $\hat{\mu}$ as $Y$ in \eqref{eq:illustrative.regret} and calibrate $C$  with  $\sqrt{n}k$. 

Next, we formalize the above discussions in a general framework. We show that our idea of applying ``near point identification'' at a suitably defined hardest case allows one to find asymptotically MMR optimal rules (in a sense we also precisely define) for a large class of treatment decision problems with partially identified parameters, even beyond convexity and centrosymmetry.

\section{The Limiting Decision Problem} \label{sec:frame}

\subsection{Setup}

Suppose a DM's payoff when taking action $a\in[0,1]$ is $aW_{1}+(1-a)W_{0}$
for some scalars $W_{1}$ and $W_{0}$. Denote by $U^{*}:=W_{1}-W_{0}\in\mathbb{R}$
the \emph{welfare contrast} between actions $a=1$ and $a=0$. 
If $U^{*}$ were known, the optimal action would be $\mathbf{1}\left\{ U^{*}\geq0\right\}$. The DM observes a random sample
$Y^{n}:=\left\{ Y_{i}\right\} _{i=1}^{n}\in\mathbf{Y}^{n}$ of size
$n$, where $Y_{i}\in\mathbf{Y}\subseteq\mathbb{R}^{d_{Y}}$, to learn
about $U^{*}$. We assume that $Y_{i}$ follows a parametric
model with a distribution $P_{\gamma}$ indexed by $\gamma\in M\subseteq\mathbb{R}^{d_{m}}$.
We require $\gamma$
to be point-identified in the usual sense, i.e., if $\gamma,\gamma^{\prime}\in M$ are such
that $\gamma\neq\gamma^{\prime}$, we have $P_{\gamma}\neq P_{\gamma^{\prime}}$.
We therefore treat the values of $\gamma\in M$ as the point-identified
reduced-form parameter.

As in \cite{christensen2022optimal}, our primary focus  is when $U^{*}$---the payoff-relevant parameter---is only partially identified, but the DM can use $\gamma$ to deduce restrictions on $U^*$. That is, if $\gamma$ were known, the most the DM could infer is that $U^{*}\in\mathcal{I}(\gamma)\subseteq\mathbb{R}$,
where $\mathcal{I}(\cdotp)$ is a set-valued mapping from $M$ to
$\mathbb{R}$ that may contain both positive and negative values. We refer to  $\mathcal{I}(\gamma)$ as the identified
set of $U^{*}$ given $\gamma\in M$. 

\begin{assumption}\label{asm:1} The identified set $\mathcal{I}(\gamma)$
depends on $\gamma$ only via a known function $\mu(\cdot):\mathbb{R}^{d_{m}}\rightarrow\mathbb{R}^{d_{\mu}}$,
with $d_{\mu}\leq d_{m}$. \end{assumption}

We may think of $\mu(\cdot)$ as a \emph{decision-relevant} reduced-form
parameter and view the rest of $\gamma$ as some nuisance parameter
not directly relevant for the identification of $U^{*}$. For instance,
in the example of Section \ref{sec:eg.challenge}, $\gamma$ refers
to the mean and variance of a normal distribution, while the identified
set of $U^{*}$ only depends on the mean.

Under Assumption \ref{asm:1}, we can write $\mathcal{I}(\gamma):=I(\mu(\gamma))$,
where 
\[
I(t):=\left\{ u\in\mathbb{R}\mid u\in\mathcal{I}(\gamma),\mu(\gamma)=t,\gamma\in M\right\} 
\]
is the identified set of $U^{*}$ given a fixed value of $\mu(\cdotp)$.
Let 
\[
M_{\mu}:=\left\{ t \in\mathbb{R}^{d_{\mu}}\mid\mu(\gamma)=t,\gamma\in M\right\} 
\]
collect all the values that the function $\mu(\cdotp)$ can take as
$\gamma$ ranges over  $M$. For any $t\in M_{\mu}$, let 
\[
\overline{I}(t):=\sup I(t);\quad\underline{I}(t):=\inf I(t)
\]
be the upper and lower bounds of $I(t)\subseteq\mathbb{R}$. Abusing  notation, sometimes we write $I(\mu)$, and pretend that
$\mu$ denotes a particular value of the function $\mu(\cdotp)$,
instead of the whole function. In the problems that we are interested
in, we assume there must be at least one point $t^{*}\in M_{\mu}$
for which $0\in I(t^{*})$ and $\underline{I}(t^{*})<0<\overline{I}(t^{*})$.
This means that at values of the decision-relevant reduced-form parameter
that equal $t^{*}$, the welfare contrast could be strictly positive
or strictly negative.

After observing data $Y^{n}$, the decision maker chooses a statistical
decision rule $d_{n}:\mathbf{Y}^{n}\rightarrow[0,1]$, interpreted
as the probability of treatment assignment or the fraction of the
target population to be treated. To  organize notation, it will be convenient to define
$\theta:=\left(\gamma^{\top},U^{*}\right)^{\top}$ and 
\[
\Theta:=\left\{ \left(a^{\top},b\right)^{\top}\in\mathbb{R}^{d_{m}+1}\mid a\in M,b\in\mathcal{I}(a)\right\}. 
\]
Note that the set $\Theta$ collects the values that $\gamma$ and $U^{*}$ can take jointly. We endow $\Theta$ with the standard subspace topology of $\mathbb{R}^{d_m+1}$. Throughout the rest of the paper we require that its topological interior, denoted as $\operatorname{int}(\Theta)$, is nonempty.   

In order to connect with previous work in the literature---in particular, with the work of \cite{yata2021}---for each $\theta\in\Theta$, denote by $m(\theta)$ the mapping that
selects the first $d_{m}$ elements of $\theta$ (i.e., the point-identified $\gamma$), and by $U(\theta)$ the mapping
that selects the last element of $\theta$ (i.e., the partially identified payoff-relevant parameter $U^{*}$). By construction, both $m(\cdotp)$
and $U(\cdotp)$ are linear. By definition,  $m(\cdot)$ is clearly not injective: $m(\theta) = m(\theta')$ does not imply $\theta=\theta'$.\footnote{Even though the framework considered by \cite{yata2021} appears to be significantly more general than ours (because he allows for the possibility of an underlying infinite-dimensional parameter $\theta$), we argue that the treatment choice problems that he considers can be recast using our framework. To see this,
given some underlying $\tilde{\theta}\in\tilde{\Theta}$ that may
be infinitely dimensional, let $\tilde{m}(\cdotp):\tilde{\Theta}\rightarrow\mathbb{R}^{d_{\tilde{m}}}$
map $\theta$ to a finite-dimensional reduced-form parameter and $\tilde{U}(\cdotp):\tilde{\Theta}\rightarrow\mathbb{R}$
map $\theta$ to a partially identified payoff-relevant parameter. Since the
risk function depends on $\theta$ only via $\tilde{m}(\cdotp)$
and $\tilde{U}(\cdotp)$, one can always reparametrize so that
$\theta:=(\tilde{m}(\tilde{\theta})^{\top},\tilde{U}(\tilde{\theta}))^{\top}\in\Theta\subseteq\mathbb{R}^{1+d_{\tilde{m}}}$, where $\Theta:=\{(\tilde{m}(\tilde{\theta})^{\top},\tilde{U}(\tilde{\theta}))\in\mathbb{R}^{1+d_{\tilde{m}}}\mid\tilde{\theta}\in\tilde{\Theta}\}$
is a set in $\mathbb{R}^{1+d_{\tilde{m}}}$. 
}  

We evaluate the performance of a decision rule $d_{n}$ via expected
regret:
\begin{align*}
R(d_{n},\theta) =U(\theta)\left(\mathbf{1}\left\{ U(\theta)\geq0\right\} -\mathbb{E}_{m(\theta)}\left[d_{n}\left(Y^{n}\right)\right]\right),
\end{align*}
where $\mathbb{E}_{m(\theta)}\left[\cdotp\right]$ denotes expectation with respect
to $Y^{n}$, whose distribution only depends on $m(\theta)$. We omit the subscript whenever there is no risk of confusion. Note that we also omit the explicit dependence of the risk on the sample size, for the sake of notational simplicity. A rule is finite-sample MMR optimal if it solves 
\begin{equation}
\min_{d_{n}}\sup_{\theta\in\Theta}R(d_{n},\theta).\label{eq:finite.sample.mmr}
\end{equation}

\textsc{Local Asymptotics.} To provide a simpler, large-sample representation of the
finite-sample MMR problem \eqref{eq:finite.sample.mmr}, we fix a parameter $\theta_0:=(\gamma_0^{\top},0)^{\top}\in \operatorname{int}(\Theta)$, such that $\gamma_0\in \operatorname{int}(M)$, $0\in \mathcal{I}(\gamma_0)$ and $\operatorname{int}(\mathcal{I}(\gamma_0))\neq \emptyset$. Then, we  consider a sequence of parameter values of the form:
\begin{equation}\label{eq:local.parameter.theta}
\theta_n=\theta_0+\frac{\theta_h}{\sqrt{n}},   
\end{equation}
where $\theta_h:=(h^{\top},\mu^*)^{\top} \in \mathbb{R}^{d_m+1}$. Since $m(\cdot)$ and $U(\cdot)$ are linear,  we also have
\begin{align}
m(\theta_{n})=\gamma_{0}+\frac{h}{\sqrt{n}},\quad U(\theta_{n})=\frac{\mu^{*}}{\sqrt{n}}.\label{eq:local.parameter.1}
\end{align}
In words, we recentered the reduced-form parameter at $\gamma_{0}$
and the payoff-relevant but partially identified parameter at $0$.

\subsection{Choice of Localization for the Point-Identified Parameter}\label{sec:local.point}
In point-identified settings, a standard choice for the localization parameter $\gamma_0$ is any point such that the corresponding payoff-relevant parameter equals zero, which corresponds to a \emph{hardest case} where it is difficult to determine the best
treatment even with large sample sizes \citep[][p.1687]{HiranoPorter2009}.  In our setup, the payoff-relevant parameter $U(\theta)$ is partially identified. This means that the choice of $\gamma_0$ is more delicate,  as any  values of $\gamma_0$ such that $0\in I(\mu(\gamma_0))$  would be consistent with a situation in which $U(\theta)=0$, and hence it might be difficult to determine the best treatment even in large samples. 

Let $\mu_0:=\mu(\gamma_0)$. We propose a notion of \emph{hardest case} that considers the population level MMR problem 
\begin{equation}\label{eq:population.mmr}
\min_{a\in[0,1]}\sup_{u\in I(\mu_0)}u\left(\mathbf{1}\left\{  u\geq0\right\} -a\right)
\end{equation}
corresponding to oracle knowledge of  the decision-relevant reduced-form parameter. 
Denote by $\overline{R}(\mu_0)$ the value of \eqref{eq:population.mmr}. Note $\overline{I}(\mu_0)=\underline{I}(\mu_0)$ corresponds to a point-identified problem, for which $\overline{R}(\mu_0)=0$. Otherwise, a MMR optimal  action for \eqref{eq:population.mmr} is \citep{manski2007identification}
\begin{equation}\label{eq:manski.rule.general}
\left[\frac{\overline{I}(\mu_0)}{\overline{I}(\mu_0)-\underline{I}(\mu_0)}\right]_{0}^{1},   
\end{equation}
achieving a value of
\[
\overline{R}(\mu_0)=\max\left\{ \frac{-\overline{I}(\mu_0)\underline{I}(\mu_0)}{\overline{I}(\mu_0)-\underline{I}(\mu_0)},0\right\}.
\]
We already assumed the existence of a decision-relevant reduced-form parameter, $\mu_0$, such that
$0\in I(\mu_0)$ and $\overline{I}(\mu_0) > 0 > \underline{I}(\mu_0)$. We strengthen this requirement by assuming that there is
a point $\mu_0$ in the interior of $M_{\mu}$ that satisfies this property, and that yields the highest possible value for the
population level MMR problem.
\begin{assumption}\label{asm:2}
There exists $\theta_0\in \operatorname{int}(\Theta)$  such that $m(\theta_0)=\gamma_0\in \operatorname{int}(M)$, $\mu_{0}=\mu(\gamma_0)\in\operatorname{int}\left(M_{\mu}\right)$, where $\overline{I}(\mu_0)>0>\underline{I}(\mu_0)$, and $\overline{R}(\mu_{0})\geq\overline{R}(\mu^{\prime})$
for all $\mu^{\prime}\in M_{\mu}$. \end{assumption} 
The value of the problem \eqref{eq:population.mmr} is maximized and strictly positive at $\mu_0$ satisfying Assumption \ref{asm:2}, which we interpret as $\mu_0$ being most difficult from the DM's perspective. We refer to such $\mu_0$ as a \emph{global
least favorable (decision-relevant reduced-form) point}. This means that $m(\theta_n)$ in \eqref{eq:local.parameter.1} is centered at any $\gamma_{0}\in \operatorname{int(M)}$ 
such that  $\mu(\gamma_{0})=\mu_{0}$. See Figure \ref{fig:lfp} for an illustrative example  of a global least favorable point.

We note that different values of the reduced-form parameter $\gamma_0$ can be associated with the
same decision-relevant reduced-form parameter $\mu_0$. This is  allowed in our theory,  as long as we have efficient estimators for $\gamma_0$ (as well as consistent estimators of the asymptotic variance). See analogous treatments in  \cite{HiranoPorter2009} with asymmetric welfare regret and \cite{kitagawa2026treatment} with nonlinear regret.

\begin{figure}
\centering
\begin{tikzpicture}[
    axis/.style={->, thick, black},
    grayline/.style={very thick, gray},
    redline/.style={very thick, red},
    myredline/.style={very thick, myred},
    blueline/.style={very thick, blue},
    greenline/.style={very thick, mygreen},
    tangent/.style={gray, ultra thick}
]
\def\xmin{-2.8}
\def\xmax{2.8}
\def\ymin{-2.8}
\def\ymax{2.8}
\def\k{1.5}

\begin{scope}
  \clip (\xmin,\ymin) rectangle (\xmax,\ymax);
  \fill[orange!30]
    (\xmin, {\xmin-\k})
    -- (\xmin, {\xmin+\k})
    -- (\xmax, {\xmax+\k})
    -- (\xmax, {\xmax-\k})
    -- cycle;
  \draw[grayline]  (\xmin, {\xmin+\k}) -- (\xmax, {\xmax+\k});
  \draw[grayline] (\xmin, {\xmin-\k}) -- (\xmax, {\xmax-\k});
\end{scope}

% Axes
\draw[axis] (\xmin,0) -- (\xmax,0) node[right] {$\mu$};
\draw[axis] (0,\ymin) -- (0,\ymax) node[above] {$\mu^*$};

% Red vertical segment from -k to k on u* axis
\draw[redline] (0,-\k) -- (0,\k);

% Red dots
\filldraw[orange] (-\k, 0)  circle (1.5pt);
\filldraw[blue] (\k,0)  circle (1.5pt);  % right of axis, lower-right quadrant
\filldraw[red] (0, 0)   circle (1.5pt);

% u0 label at origin (red, slightly offset)
\node[red, font=\large\bfseries] at (0.4, 0.25) {$\mu_0$};

% k on u axis (right side)
\node[below] at (\k, 0) {$k$};
% -k on u axis: upper-left quadrant, just above the axis
\node[above] at (-\k - 0.2, 0) {$-k$};

\end{tikzpicture}
\caption{A global least favorable point in the example of Section \ref{sec:eg.challenge}. On $\mu$-axis, every point to the left of the orange dot and to the right of the blue dot leads to $\overline{R}(\mu)=0$. Between these two points, $\overline{R}(\mu)$ is strictly positive and maximized at the red dot with a value of $k/2$.}
\label{fig:lfp}
\end{figure}

\subsection{Choice of Local Identified Set}

Let $\gamma_0$ and $\mu_0$ be defined as in Assumption \ref{asm:2}. We impose the following smoothness and regularity condition. 
\begin{assumption}\label{asm:diff}
\begin{itemize}
\item[(i)] $\mu(\cdotp)$ is (fully) differentiable in a local neighborhood of $\gamma_{0}$.
\item[(ii)] In a local neighborhood of $\mu_{0}$, $\underline{I}(\cdotp)$ and $\overline{I}(\cdotp)$ are attained and (Hadamard) directionally
differentiable for all directions $\mathbf{v}\in\mathbb{R}^{d_{\mu}}$,  and $I(\cdot)$ is connected. 
\end{itemize}
\end{assumption} 
Under Assumption \ref{asm:diff}, let 
\[
G_{0}:=\underset{\mathbb{R}^{d_{\mu}}\times\mathbb{R}^{d_{m}}}{\underbrace{\mu^{(1)}(\gamma_{0})}}
\]
be the gradient of $\mu(\cdotp)$ at $\gamma_{0}$.  Denote by $\overline{I}^{(1)}(\mu;\mathbf{v})$ the directional derivative
of $\overline{I}(\cdotp)$ at $\mu$ with respect to vector $\mathbf{v}\in\mathbb{R}^{d_{\mu}}$,
i.e., 
\[
\overline{I}^{(1)}(\mu;\mathbf{v}):=\lim_{\lambda\downarrow0,\mathbf{v'\rightarrow\mathbf{v}}}\frac{\overline{I}(\mu+\lambda\mathbf{v}')-\overline{I}(\mu)}{\lambda}.
\] 
If $\overline{I}(\cdotp)$
is (fully) differentiable at $\mu$, we have 
\[
\overline{I}^{(1)}(\mu;\mathbf{v})=\left(\overline{I}^{(1)}(\mu)\right)^{\top}\mathbf{v},
\]
where $\overline{I}^{(1)}(\mu)$ is the gradient of $\overline{I}(\cdotp)$
at $\mu$. The directional derivative and gradient of $\underline{I}(\cdotp)$
at $\mu$, written as $\underline{I}^{(1)}(\mu;\mathbf{v})$ and $\underline{I}^{(1)}(\mu)$
respectively, are defined analogously. 

For each $h\in\mathbb{R}^{d_m}$, the point-identified parameter follows a sequence 
$m(\theta_n)=\gamma_0+\frac{h}{\sqrt{n}}$. As $n\rightarrow\infty$, 
%$m(\theta_n)$ will be more and more concentrated around $\gamma_{0}$, and 
the identified set of $U(\theta_n)$ will be 
\begin{align}
I\left(\mu\left(\gamma_{0}+\frac{h}{\sqrt{n}}\right)\right) & \approx\left[\underline{I}(\mu_{0})+\frac{\underline{I}^{(1)}(\mu_{0};G_{0}h)}{\sqrt{n}},\overline{I}(\mu_{0})+\frac{\overline{I}^{(1)}(\mu_{0};G_{0}h)}{\sqrt{n}}\right]\nonumber\\
&\rightarrow\left[\underline{I}(\mu_{0}),\overline{I}(\mu_{0})\right],\text{ as }n\rightarrow\infty.\label{eq:fixed.id}
\end{align}
The limit identified set \eqref{eq:fixed.id} reveals two important  observations. First, in the limit $U(\theta_n)$ converges to zero (implying point identification); however, in the limit, \eqref{eq:fixed.id} is a set, which implies partial identification. We think this suggests an inconsistency in the analysis which ought to be reconciled. Second, \eqref{eq:fixed.id} does not depend on the local parameter $h$---in this limit,  the data became pure noise and the issue of partial identification overwhelms sampling uncertainty.

Given these two observations and also motivated by the local behavior of $I(\mu(\gamma_0+\frac{h}{\sqrt{n}}))$
with large $n$,  we will construct a particular sequence of local,  shrinking identified set in the form of \eqref{eq:local.id.U}-\eqref{eq:local.id.U.2} below, leading to a scaled limit identified set \eqref{eq:limit.ID}. We stress that our construction is not the only way in which one could approximate the decision problem of interest. As we explain in Appendix \ref{sec:global.ID}, there are other reasonable ways of constructing a shrinking identified set that may lead to a different geometry of the limit identified set.\footnote{For example, if the  identified set can be defined via a number of moment conditions, the analysis  in \cite{armstrong2021sensitivity} would yield a limiting identified set different from \eqref{eq:limit.ID}, although  the two approaches may coincide in simple cases, e.g., the stylized example in Section \ref{sec:illustrate}.}

Take  $\underline{I}(\mu_0)$, $\overline{I}(\mu_0)$, $\underline{I}^{(1)}(\mu_0;\cdot)$ and  $\overline{I}^{(1)}(\mu_0;\cdot)$ as given from the original identified set, we construct a new, local identified set  as follows: 
\begin{equation}\label{eq:local.id.U}
U(\theta_n)\in\left[\underline{I}_{{n}}(\mu_{0})+\frac{\underline{I}^{(1)}(\mu_{0};G_{0}h)}{\sqrt{n}},\overline{I}_{{n}}(\mu_{0})+\frac{\overline{I}^{(1)}(\mu_{0};G_{0}h)}{\sqrt{n}}\right],
\end{equation}
where $\underline{I}_{{n}}(\mu_{0})\rightarrow0$ and $\overline{I}_{{n}}(\mu_{0})\rightarrow0$ as $n\rightarrow\infty$ and
\begin{equation}\label{eq:local.id.U.2}
\frac{\underline{I}_{{n}}(\mu_{0})}{\overline{I}_{{n}}(\mu_{0})}=\frac{\underline{I}(\mu_0)}{\overline{I}(\mu_0)},\quad\sqrt{n}\underline{I}_{{n}}(\mu_{0})=\underline{C}{},\quad\sqrt{n}\overline{I}_{{n}}(\mu_{0})=\overline{C},   
\end{equation}
for some $\underline{C}<0,\overline{C}>0$. In particular,  the relative magnitude of  $\underline{I}_{{n}}(\mu_{0})$  versus $\overline{I}_{{n}}(\mu_{0})$  respects what we observe in the original identified set.  It follows that
\begin{align*}
\mu^*=\sqrt{n}U(\theta_n) & \in\left[\sqrt{n}\underline{I}_{{n}}(\mu_{0})+\underline{I}^{(1)}(\mu_{0};G_{0}h),\sqrt{n}\overline{I}_{{n}}(\mu_{0})+\overline{I}^{(1)}(\mu_{0};G_{0}h)\right]\\
 & =\left[\underline{C}+\underline{I}^{(1)}(\mu_{0};G_{0}h),\overline{C}+\overline{I}^{(1)}(\mu_{0};G_{0}h)\right].
\end{align*}
Therefore, we define the identified set
for $\mu^*$ in \eqref{eq:local.parameter.1} as
\begin{equation}
I_{\infty}(h):=[\underline{I}_{\infty}(h),\overline{I}_{\infty}(h)],\label{eq:limit.ID}
\end{equation}
where $\underline{I}_{\infty}(h):=\underline{C}+\underline{I}^{(1)}(\mu_{0};G_{0}h)$ and 
$\overline{I}_{\infty}(h):=\overline{C}+\overline{I}^{(1)}(\mu_{0};G_{0}h)$. We refer to \eqref{eq:limit.ID} as a \emph{limiting (local) identified set}, since in the limit it only reflects the local information of the original identified set $I(\cdot)$ around $\mu_0$. This means that our limiting identified set enjoys a simpler geometry than the original identified set and allows for slightly more tractability. As \eqref{eq:limit.ID} should be nonempty, we define
\begin{equation}\label{eq:M.h}
M_h:=\{h\in\mathbb{R}^{d_m}\mid\overline{I}_{\infty}(h)\geq\underline{I}_{\infty}(h)\}.
\end{equation}

It is important to stress that,  by working with such ``near-point-identification'' asymptotics, we do not  mean that the identified set  is literally shrinking as sample size increases. As noted in \cite{armstrong2021sensitivity},  the usefulness of any asymptotic device ``\emph{should be judged by whether it yields accurate approximations to the finite-sample behavior}''. In this regard, we think our approach offers a unifying framework for researchers to judge a suitable course of action in treatment choice problems with partial identification.  

\begin{itemize}
\item[(i)]  Our approach is most helpful if, given finite data and the  decision problem at hand, the researcher  believes that both sampling uncertainty and the severity of partial identification are  important considerations. In practice, they simply apply our theory below by replacing  $\overline{C}$ with $\sqrt{n}\overline{I}(\mu_0)$ and analogously replacing  $\underline{C}$ with $\sqrt{n}\underline{I}(\mu_0)$. 

\item[(ii)] If the data are so abundant that the researcher believes sampling uncertainty is rather tiny compared to the degree of partial identification, then they are free to set $\overline{C}=-\underline{C}=\infty$.  In this case, we would simply recommend to apply \eqref{eq:manski.rule.general} with consistent estimators of the corresponding reduced-form parameters. 

\item[(iii)] Finally, if the data is so scarce that the researcher believes the problem of lack of identification is not as important at all,  they may  simply apply our approach and evaluate the optimal rule by letting    $\overline{C}\downarrow0$ and $\underline{C}\uparrow0$. 
\end{itemize}

\subsection{Local Asymptotic Optimality}

To complete characterizing the limit  game,  we impose a standard differentiability
in quadratic mean (DQM) condition for the statistical model  $P_{m(\theta)}$.
\begin{assumption}\label{asm:DQM} For an open set $\Gamma\subseteq M$ 
that contains $\gamma_{0}$, the model $\left\{ P_{m(\theta)}:m(\theta)\in\Gamma\right\} $
satisfies the DQM assumption when $m(\theta)=\gamma_{0}$, i.e., there
exists a function $s:\mathbf{Y}\rightarrow\mathbb{R}^{d_{m}}$ such
that 
\begin{align*}
  \int\left[dP_{\gamma_{0}+h}^{1/2}(y)-dP_{\gamma_{0}}^{1/2}(y)-\frac{1}{2}h^{\top}s(y)dP_{\gamma_{0}}^{1/2}(y)\right]^{2}= o\left(\left\Vert h\right\Vert ^{2}\right),\text{ as }h\rightarrow0.
\end{align*}
Moreover, the Fisher information $\mathbf{I}_{0}=\mathbb{E}_{\gamma_{0}}\left[ss^{\prime}\right]$
is nonsingular. \end{assumption} \begin{lemma}\label{lem:DQM} Suppose
Assumptions \ref{asm:1}-\ref{asm:DQM} hold. For a sequence of 
rules $d_{n}$, if $\mathbb{E}_{\gamma_{0}+\frac{h}{\sqrt{n}}}[d_{n}(Y^{n})]$
converges as $n\rightarrow\infty$ for every $h$, then there exists
some $d_\infty:\mathbb{R}^{d_m}\rightarrow[0,1]$ such that for every $h$,
\begin{equation}\label{eq:matching.rule}
\mathbb{E}_{\gamma_{0}+\frac{h}{\sqrt{n}}}[d_{n}(Y^{n})]\rightarrow\mathbb{E}[d_\infty(\Delta)]   
\end{equation}
as $n\rightarrow\infty$, where $\Delta\sim\mathcal{N}(h,\mathbf{I}_{0}^{-1})$
follows a Gaussian distribution with mean $h$ and a covariance matrix
$\mathbf{I}_{0}^{-1}$. \end{lemma}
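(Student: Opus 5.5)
This is the standard asymptotic representation theorem for a locally asymptotically normal experiment, in the form of Hirano and Porter's Proposition 3.1 or van der Vaart's Theorem 15.1. The plan has three steps.

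\textbf{Step 1: local asymptotic normality.} Under Assumption \ref{asm:DQM}, with i.i.d.\ sampling, van der Vaart's Theorem 7.2 gives, for every fixed $h$,
\[
\log \frac{dP^n_{\gamma_0+h/\sqrt{n}}}{dP^n_{\gamma_0}}(Y^n) = h^\top Z_n - \tfrac12 h^\top \mathbf{I}_0 h + o_{P_{\gamma_0}}(1),
\]
where $Z_n := n^{-1/2}\sum_{i=1}^n s(Y_i)$. By the central limit theorem, $Z_n \rightsquigarrow \mathcal{N}(0,\mathbf{I}_0)$ under $P_{\gamma_0}$. Nonsingularity of $\mathbf{I}_0$ makes the limit experiment $\{\mathcal{N}(h,\mathbf{I}_0^{-1}) : h\in\mathbb{R}^{d_m}\}$ well defined. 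Assumptions \ref{asm:1}--\ref{asm:diff} play no role beyond placing $\gamma_0$ in the interior of $M$, so that $\gamma_0+h/\sqrt{n}\in\Gamma$ for large $n$.

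\textbf{Step 2: a joint limit under $P_{\gamma_0}$.} Since $d_n\in[0,1]$, the pair $(d_n(Y^n),Z_n)$ is tight under $P_{\gamma_0}$. Prohorov's theorem then gives a subsequence along which $(d_n, Z_n)\rightsquigarrow (D,Z)$ jointly, with $Z\sim\mathcal{N}(0,\mathbf{I}_0)$. Define $d_\infty(\delta) := \mathbb{E}[D \mid \mathbf{I}_0^{-1} Z=\delta]$. This is a measurable map into $[0,1]$, and it can be chosen without randomization because the randomization can be integrated out in the expectation.

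\textbf{Step 3: transfer to every $h$ via Le Cam's third lemma.} The likelihood ratios converge to $\exp(h^\top Z - \tfrac12 h^\top\mathbf{I}_0 h)$, so the sequences $P^n_{\gamma_0+h/\sqrt n}$ and $P^n_{\gamma_0}$ are mutually contiguous. Le Cam's third lemma then gives, along the subsequence,
\[
\mathbb{E}_{\gamma_0+h/\sqrt n}[d_n] \to \mathbb{E}\!\left[D\, e^{h^\top Z - \frac12 h^\top \mathbf{I}_0 h}\right]
= \mathbb{E}\!\left[d_\infty(\mathbf{I}_0^{-1}Z)\, e^{h^\top Z - \frac12 h^\top \mathbf{I}_0 h}\right].
\]
Uniform integrability holds because $d_n$ is bounded and the limiting density ratio has mean one. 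Changing measure, $\mathbf{I}_0^{-1}Z$ under the tilted law is distributed $\mathcal{N}(h,\mathbf{I}_0^{-1})$, so the right-hand side equals $\mathbb{E}[d_\infty(\Delta)]$.

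\textbf{Passing from the subsequence to the full sequence.} By hypothesis, $\mathbb{E}_{\gamma_0+h/\sqrt n}[d_n]$ converges for every $h$. Its limit therefore equals the subsequential limit just computed, which yields \eqref{eq:matching.rule}.

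\textbf{Main obstacle.} The delicate point is that the subsequence is chosen once, under $P_{\gamma_0}$, and is common to all $h$. Because of this, a single $d_\infty$ works simultaneously for every $h$, and the assumed convergence for each $h$ is what removes the dependence on the subsequence. The rest is routine citation of Hirano and Porter's Proposition 3.1 / van der Vaart's Theorem 15.1.
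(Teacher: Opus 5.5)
Your proposal is correct and takes essentially the same route as the paper. The paper proves the lemma only by citing Proposition 3.1 of Hirano and Porter (2009), and you have written out the standard argument behind that citation: LAN from DQM, then a single Prohorov subsequence chosen under $P_{\gamma_0}$, then Le Cam's third lemma, then conditioning on $\mathbf{I}_0^{-1}Z$ to remove the randomization. As you note, the assumed convergence for every $h$ is what identifies the full-sequence limit with the subsequential one.
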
 

Lemma \ref{lem:DQM} simply restates Proposition 3.1 of \cite{HiranoPorter2009} in the context of our statistical model $P_{m(\theta)}$ and point of localization $\gamma_0$. Let
\begin{equation}\label{eq:theta.n}
\Theta_n: = \{ \theta \in \Theta \: | \: \theta = \theta_0 + \theta_h/\sqrt{n}, \quad \textrm{where } \theta_h=(h^{\top},\mu^*)^{\top}  \textrm{satisfies } h \in M_h, \mu^* \in I_{\infty}(h) \}.    
\end{equation}
Together with 
our choice of $\gamma_0$, $M_h$ and $I_\infty(h)$, observe that for each $\theta_n\in\Theta_n$, we have $m(\theta_n)=\gamma_0+\frac{h}{\sqrt{n}}$ and $U(\theta_n)=\frac{\mu^*}{\sqrt{n}}$, where $h\in M_h$ and $\mu^*\in I_\infty(h)$. 
Therefore, for any $d_n$ that converges in the sense of Lemma \ref{lem:DQM}, we have
\begin{align*}
\sqrt{n}R(d_{n},\theta_n) & =\sqrt{n}\frac{\mu^{*}}{\sqrt{n}}\left(\mathbf{1}\left\{ \frac{\mu^{*}}{\sqrt{n}}\geq0\right\} -\mathbb{E}_{\gamma_{0}+\frac{h}{\sqrt{n}}}[d_{n}(Y^n)]\right)\\
 &\rightarrow \mu^{*}\left(\mathbf{1}\left\{ \mu^{*}\geq0\right\} -\mathbb{E}[d_{\infty}(\Delta)]\right),\text{as }n\rightarrow\infty.
\end{align*}
Then, the limit decision problem becomes the following:
The researcher observes $\Delta\sim\mathcal{N}(h,\mathbf{I}_{0}^{-1})$
and needs to come up with a rule $d_{\infty}:\mathbb{R}^{d_m}\rightarrow[0,1]$
that maps $\Delta$ to the unit interval. The limit payoff-relevant parameter is $\mu^{*}\in\mathbb{R}$,  partially identified with an
identified set $I_{\infty}(h)$.
In this limit game, a decision rule is MMR optimal if it achieves the value  
\begin{equation}
R^*:=\min_{d_{\infty}}\sup_{h\in M_h,\mu^{*}\in I_{\infty}(h)}\mu^{*}\left(\mathbf{1}\left\{ \mu^{*}\geq0\right\} -\mathbb{E}[d_{\infty}(\Delta)]\right).\label{eq:limit.game.1}
\end{equation}
We are now ready to define our notion of local asymptotic MMR optimality.  For $\Theta_n$ that has the local parametrization \eqref{eq:theta.n},  let 
\begin{equation}\label{eq:formal.local.space}
\Theta_{n}(J):=\left\{ \theta_n\in\Theta_n\mid \theta_n=\theta_0+\theta_{h}/\sqrt{n},\theta_h=(h^{\top},\mu^*)^{\top}, h\in J,\mu^{*}\in I_{\infty}(h)\right\},
\end{equation}
where $J$ is a finite subset of  $M_{h}$. A rule
$d_{n}$ is \emph{locally asymptotically MMR optimal} if\footnote{Let $\mathcal{D}$ be the set of all sequences of rules that converge in the sense of Lemma \ref{lem:DQM}. Our definition implies that $d_n$ is asymptotically MMR optimal in $\mathcal{D}$, i.e., the RHS of \eqref{eq:def.asymptotic.mmr} may be equivalently replaced with $\inf_{d'_n\in\mathcal{D}}\sup_{J}\liminf_{n\rightarrow\infty}\sup_{\theta_n\in\Theta_{n}(J)}\sqrt{n}R(d'_{n},\theta_n)$.}
\begin{equation}\label{eq:def.asymptotic.mmr}
\sup_{J}\liminf_{n\rightarrow\infty}\sup_{\theta_n\in\Theta_{n}(J)}\sqrt{n}R(d_{n},\theta_n)=R^{*}.    
\end{equation}

\begin{theorem}\label{thm: general}
Suppose Assumptions \ref{asm:1}-\ref{asm:DQM} hold and let $d^*_{\infty}$ be a solution of  \eqref{eq:limit.game.1}. If a sequence of statistical decision rules $d_n:\mathbf{Y}^n\rightarrow[0,1]$ matches $d^*_{\infty}$ in the sense of \eqref{eq:matching.rule}, $d_n$ is locally asymptotically MMR optimal. 
\end{theorem}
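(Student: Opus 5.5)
The argument verifies equality in \eqref{eq:def.asymptotic.mmr} by proving two inequalities. Write $r_\infty(d,h,\mu^*):=\mu^*(\mathbf{1}\{\mu^*\geq0\}-\mathbb{E}_h[d(\Delta)])$ for the limit regret, with $\Delta\sim\mathcal{N}(h,\mathbf{I}_0^{-1})$.

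\emph{Upper bound.} Fix a finite $J\subset M_h$. For each $\theta_n\in\Theta_n(J)$, the scaled regret equals
\[
\sqrt{n}R(d_n,\theta_n)=\mu^*\left(\mathbf{1}\{\mu^*\geq0\}-\mathbb{E}_{\gamma_0+h/\sqrt{n}}[d_n(Y^n)]\right).
\]
This holds because $\mathbf{1}\{\mu^*/\sqrt{n}\geq 0\}=\mathbf{1}\{\mu^*\geq0\}$ exactly. For fixed $h\in J$, this expression is affine in $\mu^*$. Its supremum over the compact interval $I_\infty(h)$ is therefore attained at an endpoint or at $0$, where the expression equals $0$. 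Write $p_n(h):=\mathbb{E}_{\gamma_0+h/\sqrt n}[d_n]$ and $p(h):=\mathbb{E}_h[d^*_\infty(\Delta)]$. The map $p\mapsto\sup_{\mu^*\in I_\infty(h)}\mu^*(\mathbf{1}\{\mu^*\geq0\}-p)$ is Lipschitz in $p$ with constant $\max\{|\underline{I}_\infty(h)|,|\overline{I}_\infty(h)|\}$, which is finite. Since $J$ is finite and $p_n(h)\to p(h)$ for every $h$ by the matching assumption \eqref{eq:matching.rule}, we get
\[
\lim_{n}\sup_{\theta_n\in\Theta_n(J)}\sqrt n R(d_n,\theta_n)=\max_{h\in J}\sup_{\mu^*\in I_\infty(h)}r_\infty(d^*_\infty,h,\mu^*)\leq R^*.
\]
Taking the supremum over $J$ gives ``$\leq R^*$''.

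One technical point needs a check: $\Theta_n(J)$ is defined as a subset of $\Theta$. Points $\theta_0+\theta_h/\sqrt n$ with $\mu^*\in I_\infty(h)$ may fail to lie in $\Theta$ for small $n$, but this only shrinks the set and so does not hurt the upper bound.

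\emph{Lower bound.} Here I must show $\sup_J\lim_n\sup_{\Theta_n(J)}\sqrt n R(d_n,\theta_n)\geq R^*$. I expect this to be the main obstacle.

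First, membership in $\Theta$. Because $\theta_0\in\operatorname{int}(\Theta)$, for each fixed $(h,\mu^*)$ the point $\theta_0+\theta_h/\sqrt n$ belongs to $\Theta$ for all $n$ large. Since $J$ is finite and each $I_\infty(h)$ is reduced to finitely many relevant points (the two endpoints), the limit above is indeed the full limiting supremum. This gives equality in the upper-bound display, so
\[
\sup_J\lim_n\sup_{\Theta_n(J)}\sqrt n R(d_n,\theta_n)=\sup_{J}\max_{h\in J}\sup_{\mu^*}r_\infty(d^*_\infty,h,\mu^*)=\sup_{h\in M_h}\sup_{\mu^*\in I_\infty(h)}r_\infty(d^*_\infty,h,\mu^*).
\]
The last expression equals $R^*$ because $d^*_\infty$ attains the minimax value in \eqref{eq:limit.game.1}.

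The footnote's stronger claim is optimality within the class $\mathcal{D}$ of converging sequences. For any $d'_n\in\mathcal{D}$, Lemma \ref{lem:DQM} supplies a limit rule $d'_\infty$, and the same computation yields the value $\sup_{h,\mu^*}r_\infty(d'_\infty,h,\mu^*)$. By definition of $R^*$ this is at least $R^*$, so no sequence in $\mathcal{D}$ does better.

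Hence both sides equal $R^*$, and the main care goes into the interchange of the limit with the supremum over $\mu^*$, which is handled by affinity in $\mu^*$ and the finiteness of $J$.
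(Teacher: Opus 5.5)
Your proof is correct and follows the paper's argument. Like the paper, you prove the two inequalities separately, observe that $\sqrt{n}R(d_n,\theta_n)$ is exactly the limit regret with $\mathbb{E}_{\gamma_0+h/\sqrt{n}}[d_n]$ in place of $\mathbb{E}[d^*_\infty]$, and interchange limit and supremum using the bound $|\mu^*|\,|p_n(h)-p(h)|$ over the bounded sets $I_\infty(h)$ together with finiteness of $J$.

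There are two cosmetic differences. First, you handle membership in $\Theta$ through the two endpoints, using that the regret is decreasing then increasing in $\mu^*$; it is piecewise affine rather than affine, as you wrote. This gives you an actual limit. Second, the paper instead asserts membership uniformly over $J$ for large $n$ and uses $\liminf\sup\geq\sup\liminf$.
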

Theorem \ref{thm: general} is our first main result. It justifies a Gaussian experiment with a suitable local limit identified set  as a limit experiment for a large class of treatment choice problems with partial identification. 
By finding a solution $d^*_\infty$ of the simpler Gaussian limit game \eqref{eq:limit.game.1}, one can discover an asymptotically optimal rule for the finite-sample problem \eqref{eq:finite.sample.mmr}. Specifically, Theorem \ref{thm: general} shows that any finite-sample rule $d_n$ such that $\mathbb{E}_{\gamma_{0}+h/\sqrt{n}}[d_{n}(Y^{n})]\rightarrow\mathbb{E}[d^*_\infty(\Delta)]$ for every $h$  as $n\rightarrow\infty$ is asymptotically MMR optimal in the sense of \eqref{eq:def.asymptotic.mmr}. In practice, once $d^*_\infty$ is found, $d_n$ can often be constructed by plugging in a best regular estimator for $\gamma_0$ as well as consistent estimators of other nuisance parameters in $d^*_\infty$.

\section{Solving the Limit Problem}\label{sec:results}

The illustrative example in Section \ref{sec:illustrate} corresponds to fully differentiable bounds and centrosymmetric parameter space. In general, however, the limit identified set in our characterization need not be centrosymmetric and may also display directionally differentiable bounds. In the next two results, we discover $d^*_\infty$ for two non-nested cases, namely by either relaxing centrosymmetry or full differentiability. These findings are also of interest as stand-alone results in the spirit of \citet{yata2021} or \citet{stoye2012minimax}.

\subsection{Fully Differentiable Bounds}
When $\underline{I}(\cdot)$
and $\overline{I}(\cdot)$ are fully differentiable at $\mu_{0}$, the identified set \eqref{eq:limit.ID} simplifies to
\[
I_{\infty}(h)=\left[\underline{C}+\left(\underline{I}^{(1)}(\mu_{0})\right)^{\top}G_{0}h,\overline{C}+\left(\overline{I}^{(1)}(\mu_{0})\right)^{\top}G_{0}h\right].
\]
In this case, the Gaussian limit game \eqref{eq:limit.game.1} can be solved by considering an even simpler, one-dimensional limit
game. Let $\kappa_{0}:=(\underline{I}(\mu_{0}))^{2}/(\overline{I}(\mu_{0}))^{2}>0$.  Lemma \ref{lem:CoMo} in Appendix \ref{sec:app.1} 
establishes that  
\begin{equation}\label{eq:co-mono}
\underline{I}^{(1)}(\mu_{0})=\overline{I}^{(1)}(\mu_{0})\cdotp\kappa_{0},
\end{equation}
i.e., the gradients of the lower and upper bounds of the original identified set at $\mu_0$ must share the same direction.\footnote{Intuitively, since $\mu_0$ is a global least favorable point in the sense of Assumption \ref{asm:2}, moving away from it cannot increase the value of $\overline{R}(\cdot)$. In the case of full differentiability, that pins down a first-order condition implying \eqref{eq:co-mono}.} It follows that $I_\infty(h)$ depends on $h$ only via $\upsilon:=\left(\overline{I}^{(1)}(\mu_{0})\right)^{\top}G_{0}h\in\mathbb{R}$. Therefore, consider the following one-dimensional game: a researcher observes a scalar-valued  signal
\[\hat{\upsilon}:=\left(\overline{I}^{(1)}(\mu_{0})\right)^{\top}G_0\Delta\sim\mathcal{N}\left(\upsilon,\sigma^{2}\right),\]
where $\sigma^{2}:=\left(\overline{I}^{(1)}(\mu_{0})\right)^{\top}G_{0}\mathbf{I}_{0}^{-1}G_{0}^{\top}\overline{I}^{(1)}(\mu_{0})$
is known and strictly positive;  a limit welfare contrast is $\upsilon^{*}\in\mathbb{R}$,
partially identified with an identified set 
\[
I(\upsilon)=\left[\underline{C}+\kappa_{0}\upsilon,\overline{C}+\upsilon\right],
\]
for values of $\upsilon$ such that $\left(\kappa_{0}-1\right)\upsilon\leq\overline{C}-\underline{C}$.
In this simple, one-dimensional game, a rule $d:\mathbb{R}\rightarrow[0,1]$
is MMR optimal if it solves 
\begin{equation}
\min_{d}\sup_{\upsilon\in\mathbb{R},\left(\kappa_{0}-1\right)\upsilon\leq\overline{C}-\underline{C},\upsilon^{*}\in I(\upsilon)}\upsilon^{*}\left(\mathbf{1}\left\{ \upsilon^{*}\geq0\right\} -\mathbb{E}[d(\hat{\upsilon})]\right)\label{eq:one.dim.limit.game.smooth}
\end{equation}

The shape of the parameter space in game \eqref{eq:one.dim.limit.game.smooth} depends on the value of $\kappa_0$ and can be one of the three cases illustrated in Figure \ref{fig:parameter.space.smooth}. The case of $\kappa_0=1$ nests a convex and centrosymmetric parameter space with fully differentiable bounds analyzed in the existing literature. Otherwise, the parameter space is still convex but not centrosymmetric in general.  We characterize analytically the solution of  \eqref{eq:one.dim.limit.game.smooth}, which we show is in fact a MMR solution of the limit game  \eqref{eq:limit.game.1}. 

\begin{figure}[http]
\centering
\begin{tikzpicture}[
    axis/.style={->, thick, black},
    redline/.style={very thick, gray},
    blueline/.style={very thick, gray}
]
\def\xmin{-2.2}
\def\xmax{2.2}
\def\ymin{-2.5}
\def\ymax{2.5}

% ----------------------------------------------------------
% PANEL 1 (left): red v*=v+0.5, blue v*=0.4v-0.5
%   Intersection: v+0.5 = 0.4v-0.5 => v=-1.67, v*=-1.17
%   Lines START at intersection, go right only
% ----------------------------------------------------------
\begin{scope}[xshift=0cm]
  \begin{scope}
    \clip (\xmin,\ymin) rectangle (\xmax,\ymax);
    % Triangle: apex at intersection, two corners at right edge
    \fill[orange!30]
      (-1.67, -1.17)
      -- (\xmax, {\xmax+0.5})
      -- (\xmax, {0.4*\xmax-0.5})
      -- cycle;
    % Lines only from intersection point rightward
    \draw[redline]  (-1.67,-1.17) -- (\xmax, {\xmax+0.5});
    \draw[blueline] (-1.67,-1.17) -- (\xmax, {0.4*\xmax-0.5});
  \end{scope}
  \draw[axis] (\xmin,0) -- (\xmax,0) node[right] {$v$};
  \draw[axis] (0,\ymin) -- (0,\ymax) node[above] {$v^*$};
\end{scope}

% ----------------------------------------------------------
% PANEL 2 (middle): red v*=v+0.5, blue v*=v-0.5 (parallel)
%   No intersection — full band across window
% ----------------------------------------------------------
\begin{scope}[xshift=5.5cm]
  \begin{scope}
    \clip (\xmin,\ymin) rectangle (\xmax,\ymax);
    \fill[orange!30]
      (\xmin, {\xmin-0.5})
      -- (\xmin, {\xmin+0.5})
      -- (\xmax, {\xmax+0.5})
      -- (\xmax, {\xmax-0.5})
      -- cycle;
    \draw[redline]  (\xmin, {\xmin+0.5}) -- (\xmax, {\xmax+0.5});
    \draw[blueline] (\xmin, {\xmin-0.5}) -- (\xmax, {\xmax-0.5});
  \end{scope}
  \draw[axis] (\xmin,0) -- (\xmax,0) node[right] {$v$};
  \draw[axis] (0,\ymin) -- (0,\ymax) node[above] {$v^*$};
\end{scope}

% ----------------------------------------------------------
% PANEL 3 (right): red v*=v+0.5, blue v*=3v-2.5
%   Intersection: v+0.5 = 3v-2.5 => 2v=3 => v=1.5, v*=2.0
%   Lines END at intersection, come from left only
%   Blue shifted down by 2 (was 3v-0.5, now 3v-2.5) for more gap
% ----------------------------------------------------------
\begin{scope}[xshift=11cm]
  \begin{scope}
    \clip (\xmin,\ymin) rectangle (\xmax,\ymax);
    % Triangle: apex at intersection, two corners at left edge
    \fill[orange!30]
      (1.5, 2.0)
      -- (\xmin, {\xmin+0.5})
      -- (\xmin, {3*\xmin-2.5})
      -- cycle;
    % Lines only from left edge up to intersection point
    \draw[redline]  (\xmin, {\xmin+0.5})   -- (1.5, 2.0);
    \draw[blueline] (\xmin, {3*\xmin-2.5}) -- (1.5, 2.0);
  \end{scope}
  \draw[axis] (\xmin,0) -- (\xmax,0) node[right] {$v$};
  \draw[axis] (0,\ymin) -- (0,\ymax) node[above] {$v^*$};
\end{scope}

\end{tikzpicture}
\caption{Parameter space in limit game \eqref{eq:one.dim.limit.game.smooth} (from left to right: $\kappa_0<1$, $\kappa_0=1$ and $\kappa_0>1$)}
\label{fig:parameter.space.smooth}
\end{figure}
Let $\phi(\cdotp)$ and $\Phi(\cdotp)$ be the standard normal pdf and
cdf and  $\Phi^{-1}(\cdotp)$ the inverse of $\Phi(\cdotp)$. Define  
\begin{align*}
\overline{\sigma} & :=-\frac{\overline{C}}{\underline{C}}\left(\overline{C}-\underline{C}\right)\phi\left(\Phi^{-1}\left(\frac{\overline{C}}{\overline{C}-\underline{C}}\right)\right),\quad 
t^{*} :=-\Phi^{-1}\left(\frac{\overline{C}}{\overline{C}-\underline{C}}\right)\overline{\sigma}.
\end{align*}
Also, for each $t,\upsilon\in\mathbb{R}$, and each $\sigma>0$, write 
\begin{align*}
R_{2}(t,\upsilon;\sigma)  :=(\upsilon+\overline{C})\Phi\left(\frac{t-\upsilon}{\sigma}\right),\quad
R_{1}(t,\upsilon;\sigma)  :=-(\kappa_{0}\upsilon+\underline{C})\Phi\left(\frac{\upsilon-t}{\sigma}\right).
\end{align*}

\begin{theorem}\label{thm:full-diff} Suppose
Assumptions \ref{asm:1}-\ref{asm:DQM} hold,  $\underline{I}(\cdot)$
and $\overline{I}(\cdot)$ are fully differentiable at $\mu_{0}$,  $G_{0}\mathbf{I}_{0}^{-1}G_{0}^{\top}$ is positive definite and $\sigma^{2}>0$.  
The following results hold true for the limit game \eqref{eq:limit.game.1}.
 
\begin{itemize}
\item[(i)] If $\sigma<\overline{\sigma}$, 
$
d_{\infty,RT}=\Phi\left(\frac{\hat{\upsilon}-t^{*}}{\sqrt{\overline{\sigma}^{2}-\sigma^{2}}}\right)
$
is MMR optimal.
\item[(ii)] If $\sigma\geq\overline{\sigma}$, then 
$\mathbf{1}\left\{ \hat{\upsilon}\geq t_{0}\right\}$
is MMR optimal, where $t_{0}$ is the unique $t\in\mathbb{R}$ such
that 
\begin{equation}\label{eq:fixed.point.program.simple}
\max_{\left\{ \upsilon\in\mathbb{R}:\left(\kappa_{0}-1\right)\upsilon\leq\overline{C}-\underline{C},\upsilon+\overline{C}\geq0\right\} }R_{2}(t,\upsilon;\sigma)=\max_{\left\{ \upsilon\in\mathbb{R}:\left(\kappa_{0}-1\right)\upsilon\leq\overline{C}-\underline{C},\kappa_{0}\upsilon+\underline{C}\leq0\right\} }R_{1}(t,\upsilon;\sigma).    
\end{equation}
\end{itemize}
\end{theorem}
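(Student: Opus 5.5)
First, reduce the limit game \eqref{eq:limit.game.1} to the one-dimensional game \eqref{eq:one.dim.limit.game.smooth}, and then solve the latter by a guess-and-verify (saddle point) argument.

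\textbf{Reduction.} By Lemma \ref{lem:CoMo}, \eqref{eq:co-mono} holds, so $I_\infty(h)=[\underline{C}+\kappa_0\upsilon,\overline{C}+\upsilon]$ with $\upsilon=(\overline{I}^{(1)}(\mu_0))^\top G_0 h$. The constraint $h\in M_h$ becomes $(\kappa_0-1)\upsilon\le \overline{C}-\underline{C}$. Since $G_0\mathbf{I}_0^{-1}G_0^\top$ is positive definite, $\upsilon$ ranges over all of $\mathbb{R}$ (subject to that constraint) as $h$ varies.

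\emph{Upper bound.} Any rule $d(\hat\upsilon)$ of the one-dimensional game is feasible in the original game, with the same risk. Hence the value of the original game is at most the value of \eqref{eq:one.dim.limit.game.smooth}.

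\emph{Lower bound.} Restrict nature to $h=\upsilon\, b$, where $b:=\mathbf{I}_0^{-1}G_0^\top \overline{I}^{(1)}(\mu_0)/\sigma^2$. Along this one-dimensional submodel, $\hat\upsilon$ is a sufficient statistic for $\upsilon$. Any $d_\infty(\Delta)$ can therefore be replaced by $\mathbb{E}[d_\infty(\Delta)\mid\hat\upsilon]$, which does not depend on $\upsilon$ and leaves the risk unchanged. So the original game's value is at least that of the one-dimensional game, and a solution of \eqref{eq:one.dim.limit.game.smooth} solves \eqref{eq:limit.game.1}.

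\textbf{Solving the one-dimensional game.} Worst-case regret is attained either at $\upsilon^*=\overline{C}+\upsilon$ (treating too little) or at $\upsilon^*=\underline{C}+\kappa_0\upsilon$ (treating too much). I would look for a least-favorable prior supported on two points $\upsilon_1<\upsilon_2$. At $\upsilon_2$ nature takes the upper endpoint, with positive welfare $\overline{C}+\upsilon_2$; at $\upsilon_1$ it takes the lower endpoint, with negative welfare $\underline{C}+\kappa_0\upsilon_1$. Against this prior the Bayes rule thresholds the likelihood ratio, giving $\mathbf{1}\{\hat\upsilon\ge t\}$, and nature's best responses to threshold rules are exactly the maximizations of $R_2$ and $R_1$ in \eqref{eq:fixed.point.program.simple}.

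\emph{Case (ii), $\sigma\ge\overline\sigma$.} Show that $t\mapsto \max R_2(t,\cdot)$ is continuous and strictly increasing, and that $t\mapsto\max R_1(t,\cdot)$ is continuous and strictly decreasing, with the appropriate limits as $t\to\pm\infty$. This gives the unique crossing $t_0$. Then verify that a prior mixing over the two maximizers $\upsilon_1,\upsilon_2$ at $t_0$ makes $\mathbf{1}\{\hat\upsilon\ge t_0\}$ Bayes. The first-order conditions for those maximizers combine with equality of the likelihood ratio at $t_0$, and the condition $\sigma\ge\overline\sigma$ should guarantee that the required prior weights lie in $[0,1]$. Equalized risks then give the saddle point.

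\emph{Case (i), $\sigma<\overline\sigma$.} The signal is more precise than the threshold $\overline\sigma$. The rule $d_{\infty,RT}$ makes $\mathbb{E}[d_{\infty,RT}(\hat\upsilon)]=\Phi((\upsilon-t^*)/\overline\sigma)$, so its risk equals that of the threshold rule $\mathbf{1}\{\tilde\upsilon\ge t^*\}$ with noise level exactly $\overline\sigma$. It therefore suffices to show that this is the boundary case of (ii): $t^*$ solves \eqref{eq:fixed.point.program.simple} at $\sigma=\overline\sigma$, and the least-favorable prior degenerates to nature choosing $\upsilon_1=\upsilon_2=\upsilon^\dagger$. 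Take $\upsilon^\dagger$ to be the $\upsilon$ at which the identified set is $[\underline{C},\overline{C}]$ shifted appropriately, i.e. the point where the population MMR problem is hardest. Mixing over the two endpoints of $I(\upsilon^\dagger)$ with weights from \eqref{eq:manski.rule.general} gives every rule with $\mathbb{E}d=\overline{C}/(\overline{C}-\underline{C})$ at $\upsilon^\dagger$ the same Bayes risk. The lower bound for the value is then the no-data value $-\overline{C}\underline{C}/(\overline{C}-\underline{C})$, and $d_{\infty,RT}$ attains it. Concretely, I would check three things:
\begin{itemize}
\item $\Phi((\upsilon^\dagger-t^*)/\overline\sigma)=\overline{C}/(\overline{C}-\underline{C})$;
\item the first-order conditions of $R_1,R_2$ in $\upsilon$ vanish at $\upsilon^\dagger$, which is exactly what the definition of $\overline\sigma$ via $\phi(\Phi^{-1}(\cdot))$ encodes;
\item these critical points are global maxima over the constrained $\upsilon$-sets.
\end{itemize}

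\textbf{Main obstacle.} I expect the main obstacle to be the global verification in both cases: showing that the maximizers of $R_1$ and $R_2$ over the constrained (non-centrosymmetric, $\kappa_0\neq1$) sets are interior critical points, and that no boundary point of the triangle-shaped parameter space in Figure \ref{fig:parameter.space.smooth} yields larger regret. I would first try quasi-concavity of $\upsilon\mapsto(\upsilon+\overline{C})\Phi((t-\upsilon)/\sigma)$ on $\{\upsilon+\overline{C}\ge0\}$, via log-concavity of $\Phi$, and the analogous property for $R_1$. Separately, the constraint $(\kappa_0-1)\upsilon\le\overline{C}-\underline{C}$ is only active where the identified set is a single point; there the two regrets coincide, so the endpoints of the triangle can be handled directly.
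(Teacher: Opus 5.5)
Your reduction to the one-dimensional game is correct: $\hat{\upsilon}$ is sufficient along the submodel $h=\upsilon b$. Your case (i) is in substance the paper's argument. The lower bound is the no-data value at $h=\mathbf{0}$; your $\upsilon^\dagger$ is just $\upsilon=0$, where the limit identified set is exactly $[\underline{C},\overline{C}]$. The upper bound comes from checking that $R_2(t^*,\cdot;\overline{\sigma})$ and $R_1(t^*,\cdot;\overline{\sigma})$ peak at $\upsilon=0$. Log-concavity is a cleaner route to the unimodality the paper uses in Lemma \ref{lem:full.diff.lem.1}.

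The gap is in case (ii), at exactly the point where you say $\sigma\geq\overline{\sigma}$ enters. Take a prior with mass $p$ on $(\upsilon_2,\overline{C}+\upsilon_2)$ and $1-p$ on $(\upsilon_1,\underline{C}+\kappa_0\upsilon_1)$. The Bayes comparison depends on $\hat{\upsilon}$ only through the log-likelihood ratio $(\upsilon_2-\upsilon_1)\hat{\upsilon}/\sigma^2+\mathrm{const}$. If $\upsilon_2>\upsilon_1$, the Bayes rule is $\mathbf{1}\{\hat{\upsilon}\geq t(p)\}$, and $t(p)$ sweeps all of $\mathbb{R}$ as $p$ ranges over $(0,1)$. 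So weights in $(0,1)$ are automatic; that is not what $\sigma\geq\overline{\sigma}$ buys. What is not automatic is the ordering itself. The points $\overline{\upsilon}_2(t_0(\sigma);\sigma)$ and $\overline{\upsilon}_1(t_0(\sigma);\sigma)$ come from two separate maximizations, and nothing in your plan rules out $\overline{\upsilon}_2<\overline{\upsilon}_1$. In that case the Bayes rule is a lower-threshold rule, and no choice of $p$ makes $\mathbf{1}\{\hat{\upsilon}\geq t_0\}$ Bayes. This is not a technicality. In the centrosymmetric case $\kappa_0=1$ one has $t_0=0$ and $\overline{\upsilon}_1=-\overline{\upsilon}_2$, and $\overline{\upsilon}_2>0$ iff $\sigma>2\overline{C}\phi(0)=\overline{\sigma}$. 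So the ordering is precisely the content of the hypothesis, and your plan assumes it (``two points $\upsilon_1<\upsilon_2$'') without an argument. The first-order conditions you invoke do not supply it; they only encode nature's best response, which holds by definition of the maximizers.

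The paper closes this step in Lemma \ref{lem:full.diff.lem.2}, via $f(\sigma):=\overline{\upsilon}_2(t_0(\sigma);\sigma)-\overline{\upsilon}_1(t_0(\sigma);\sigma)$:
\begin{itemize}
\item $f$ is continuous, because the argmaxes and $t_0(\cdot)$ are, and $f(\overline{\sigma})=0$.
\item $f(\sigma)>0$ for large $\sigma$. Comparing $R_2(0,\upsilon;\sigma)$ with $R_1(0,-\upsilon;\sigma)$ gives $t_0(\sigma)<0$. Then $\partial_\upsilon R_1(t_0,\upsilon;\sigma)<0$ at $\upsilon=-\overline{C}$, so $\overline{\upsilon}_1<-\overline{C}\leq\overline{\upsilon}_2$.
\item Suppose $f$ vanished at some $\tilde{\sigma}>\overline{\sigma}$. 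The common maximizer $\tilde{x}$ would then support a least favorable prior on the two endpoints of $I(\tilde{x})$, so the value would equal the no-data value at $\tilde{x}$.
\item The no-data value at every $x$ is a lower bound on the value, so $\tilde{x}$ must maximize it, which forces $\tilde{x}=0$.
\item The coverage condition $\Phi(-t_0/\tilde{\sigma})=\overline{C}/(\overline{C}-\underline{C})$ and the two first-order conditions at $0$ then force $\tilde{\sigma}=\overline{\sigma}$, a contradiction.
\item By the intermediate value theorem, $f>0$ on $(\overline{\sigma},\infty)$.
\end{itemize}
You need this step, or another proof of the ordering, before your Bayes/equalizer verification in case (ii) goes through. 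The boundary $\sigma=\overline{\sigma}$ can then be handled as in case (i), as you indicate.
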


Theorem \ref{thm:full-diff} demonstrates two interesting features of the multivariate-signal game \eqref{eq:limit.game.1} with fully differentiable bounds at $\mu_0$. First, there exists a MMR optimal rule that depends on data $\Delta$ only via an efficient  linear combination $\hat{\upsilon}$. Along this direction, the limit MMR problem becomes effectively one-dimensional. Second, the key insights of the MMR optimal rule for the symmetric case \citep{stoye2012minimax,yata2021,MQS2023decision} extend to a non-centrosymmetric parameter space. In particular, $\overline{\sigma}$ depends only on $\overline{C}$ and $\underline{C}$ and can be interpreted as a measure of the severity of the identification problem, while $\sigma$ signals the effective estimation difficulty. If estimation difficulty dominates the severity of partial identification (case (ii)), an MMR optimal rule is a threshold rule, although the threshold is no longer zero in general and must be found via a (simple one-dimensional) fixed point  program \eqref{eq:fixed.point.program.simple}.\footnote{The idea of the fixed point program is analogous to that in \cite{tetenov2012statistical}, who focuses on MMR optimal rules in point-identified situations with asymmetric welfare regret. In this regard, we uncover an interesting connection between a point-identified problem with asymmetric regret and a partially identified problem with symmetric regret but non-centrosymmetric parameter space.}  If, however, the severity of partial identification  dominates estimation difficulty (case (i)), the  MMR optimal rule randomizes, in which case we find one randomized rule $d_{\infty,RT}$. We are confident that, in case (i), there are  infinitely many MMR optimal rules, although pinning down a piecewise linear rule in the spirit of \cite{MQS2023decision}  appears to be extremely algebraically involved. Theorem \ref{thm:full-diff} informs the following asymptotic results.  

\begin{theorem}\label{thm:full-diff-asymptotic}

Suppose the conditions of Theorem \ref{thm:full-diff} hold.  Moreover, let $\hat{\gamma}$
be a best regular estimator of $\gamma_{0}$ such that 
\[
\sqrt{n}(\hat{\gamma}-\gamma_{0}-\frac{h}{\sqrt{n}})\overset{h}{\rightsquigarrow}N(\mathbf{0},\mathbf{I}_{0}^{-1}),\text{ for all }h\in\mathbb{R}^{d_m}.
\]
Let $\hat{\sigma}$ be a consistent estimator of $\sigma$ when $h=\mathbf{0}$,
$\hat{t}$ be the unique $t\in\mathbb{R}$ such that 
\eqref{eq:fixed.point.program.simple} holds with $\sigma$ replaced by $\hat{\sigma}$.  Then, the following feasible  rule is locally asymptotically
MMR optimal: 
\[
d_{F}:=\mathbf{1}\left\{ \hat{\sigma}\geq\overline{\sigma}\right\} \cdot d_{F,\hat{t}}+\mathbf{1}\left\{ \hat{\sigma}<\overline{\sigma}\right\} \cdot d_{F,RT},
\]
where
\begin{align*}
d_{F,\hat{t}} & :=\mathbf{1}\left\{ \sqrt{n}\left(\overline{I}^{(1)}(\mu_{0})\right)^{\top}(\mu(\hat{\gamma})-\mu_0)\geq\hat{t}\right\} ,\quad
d_{F,RT}:  =\Phi\left(\frac{\sqrt{n}\left(\overline{I}^{(1)}(\mu_{0})\right)^{\top}(\mu(\hat{\gamma})-\mu_0)-t^{*}}{\sqrt{\overline{\sigma}^{2}-\hat{\sigma}^{2}}}\right).
\end{align*}
\end{theorem}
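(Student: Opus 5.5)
The plan is to reduce the claim to Theorem \ref{thm: general} combined with Theorem \ref{thm:full-diff}. Theorem \ref{thm: general} says that any sequence $d_n$ matching an MMR optimal limit rule $d^*_\infty$ in the sense of \eqref{eq:matching.rule} is locally asymptotically MMR optimal. It therefore suffices to show two things. First, for every $h\in\mathbb{R}^{d_m}$ we have $\mathbb{E}_{\gamma_0+h/\sqrt{n}}[d_F]\to\mathbb{E}_h[d^*_\infty(\Delta)]$. Second, $d^*_\infty$ is the rule from Theorem \ref{thm:full-diff} in the relevant case: $d_{\infty,RT}$ when $\sigma<\overline{\sigma}$, and $\mathbf{1}\{\hat\upsilon\geq t_0\}$ when $\sigma\geq\overline{\sigma}$.

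\textbf{Step 1 (weak convergence of the statistic).} Let $T_n:=\sqrt{n}(\overline{I}^{(1)}(\mu_0))^\top(\mu(\hat\gamma)-\mu_0)$. By Assumption \ref{asm:diff}(i), the delta method and regularity of $\hat\gamma$ give
\[
\sqrt{n}(\mu(\hat\gamma)-\mu_0)=G_0\sqrt{n}(\hat\gamma-\gamma_0)+o_p(1)
\]
under $P^n_{\gamma_0+h/\sqrt{n}}$. Since $\sqrt{n}(\hat\gamma-\gamma_0)\rightsquigarrow N(h,\mathbf{I}_0^{-1})$, it follows that $T_n\rightsquigarrow N(\upsilon,\sigma^2)$ with $\upsilon=(\overline{I}^{(1)}(\mu_0))^\top G_0h$. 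This is exactly the law of $\hat\upsilon$ in the limit game.

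Next, $\hat\sigma\to\sigma$ in probability under $h=\mathbf{0}$. Under Assumption \ref{asm:DQM}, Le Cam's first lemma gives contiguity of $P^n_{\gamma_0+h/\sqrt{n}}$ and $P^n_{\gamma_0}$, so the convergence $\hat\sigma\to\sigma$ also holds under every local $h$.

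\textbf{Step 2 (convergence of expectations, by case).} The rule $d_F$ is bounded, so weak convergence plus Slutsky suffice, provided the limiting maps are continuous at almost every point of the limiting law.

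In the case $\sigma<\overline{\sigma}$, we have $\mathbf{1}\{\hat\sigma<\overline\sigma\}\to1$ in probability. The map $(x,s)\mapsto\Phi((x-t^*)/\sqrt{\overline\sigma^2-s^2})$ is continuous near $s=\sigma$, so by the continuous mapping theorem $\mathbb{E}[d_F]\to\mathbb{E}[d_{\infty,RT}(\hat\upsilon)]$. Note that $t^*$ and $\overline\sigma$ are known constants.

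In the case $\sigma>\overline\sigma$, the indicator tends to $0$ in probability. We then need $\hat t\to t_0$ in probability, after which $P(T_n\geq\hat t)\to P(\hat\upsilon\geq t_0)$ because the normal limit has no atom at $t_0$. To get $\hat t\to t_0$, define $t(\sigma')$ as the unique root of \eqref{eq:fixed.point.program.simple} at variance parameter $\sigma'$ and show it is continuous in $\sigma'$ near $\sigma$. The plan is as follows:
\begin{itemize}
\item The left side of \eqref{eq:fixed.point.program.simple} is strictly increasing in $t$ and the right side strictly decreasing, which gives uniqueness.
\item Both sides are jointly continuous in $(t,\sigma')$. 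This follows from Berge's maximum theorem, after restricting the maximization to a compact set using the decay of $\Phi$ tails.
\item An implicit-function or monotone-root argument then yields continuity of $t(\cdot)$.
\end{itemize}

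\textbf{Step 3 (the boundary case $\sigma=\overline\sigma$).} This is the main obstacle: $\hat\sigma$ may fall on either side of $\overline\sigma$ with nonvanishing probability, so $d_F$ need not converge to a single rule. The plan is to show that the two branches have the same limit at the boundary. As $\sigma'\uparrow\overline\sigma$, the randomized rule $\Phi((x-t^*)/\sqrt{\overline\sigma^2-\sigma'^2})$ converges to $\mathbf{1}\{x\geq t^*\}$. It therefore suffices to verify that $t_0=t^*$ when $\sigma=\overline\sigma$.

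To check this, evaluate \eqref{eq:fixed.point.program.simple} at $t=t^*$ and $\sigma=\overline\sigma$. The definitions of $\overline\sigma$ and $t^*$ are precisely the first-order conditions that make the two worst-case regrets coincide, so this should amount to direct algebra with $\phi(\Phi^{-1}(\cdot))$. Granting this, both branches converge to $\mathbf{1}\{\hat\upsilon\geq t^*\}$. A subsequence argument then shows that $\mathbb{E}[d_F]$ converges to $P(\hat\upsilon\geq t_0)$ regardless of how $\hat\sigma$ straddles $\overline\sigma$.

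\textbf{Conclusion.} The matching limit rule is MMR optimal for \eqref{eq:limit.game.1} by Theorem \ref{thm:full-diff}, and Theorem \ref{thm: general} completes the proof.
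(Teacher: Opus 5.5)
Your proposal is correct and follows essentially the same route as the paper. It reduces the claim to Theorem \ref{thm: general} and Theorem \ref{thm:full-diff} by showing, as in Lemma \ref{lem:matching.diff}, that $d_F$ is matched with $d_{\infty,RT}$ when $\sigma<\overline{\sigma}$ and with $\mathbf{1}\{\hat{\upsilon}\geq t_0\}$ when $\sigma>\overline{\sigma}$, using Le Cam's first lemma, the delta method, and continuity of $t_0(\cdot)$; your boundary argument for $\sigma=\overline{\sigma}$, resting on $t_0(\overline{\sigma})=t^{*}$ and the randomized branch degenerating to the threshold rule, is the idea the paper implements in Lemma \ref{lem:full.diff.lem.3} by bounding the probability that the two branches' indicator events disagree.
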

In our partially identified setting, 
the shape of the limit optimal rule found in Theorem \ref{thm:full-diff} depends on the value of $\sigma$, which is unknown and must be estimated. As a result, establishing asymptotic optimality of $d_F$
 is more challenging than in \cite{HiranoPorter2009}. We show the discontinuities caused by the regime change do not affect the asymptotic validity of $d_F$. In particular, irrespective of the true value of $\sigma$, $d_F$ is always matched  with the right limiting optimal rule found in Theorem \ref{thm:full-diff}.

% \begin{rem}\label{rem:multiple}
The form of our feasible  rule  explicitly depends on the location of $\mu_0$. Our asymptotic theory applies  to any least favorable point that satisfies Assumption \ref{asm:2} and does not require uniqueness of $\mu_0$. However, if there are multiple least favorable points, they may lead to different decision rules. When this happens, we think researchers should have the freedom to judge judiciously  which point of localization is more pertinent to their analyses. Otherwise, we recommend to report the one associated with a higher calculated value of $R^*$ defined in \eqref{eq:limit.game.1}. 
% \end{rem}

\subsection{Directionally Differentiable Bounds}\label{sec:centro}
Many partially identified models feature only directionally differentiable bounds. When $\underline{I}(\cdot)$ and $\overline{I}(\cdot)$
 are only directionally differentiable at $\mu_{0}$, the geometry of our limit game becomes more complicated. 
To gain tractability, let 
\[
\Theta_{\mathcal{M}}:=\left\{ \left(\mu(m(\theta))^{\top},U(\theta)\right)^{\top}\in\mathbb{R}^{d_{\mu}+1}\mid\theta\in\Theta\right\} 
\]
be the set of values the decision-relevant parameter $\mu(m(\theta))$
and the payoff-relevant parameter $U(\theta)$ can take for the original problem \eqref{eq:finite.sample.mmr}. 
We maintain
the following:
\begin{assumption}\label{asm:centro} \begin{itemize}
\item[(i)] $\Theta_{\mathcal{M}}$ is
convex and centrosymmetric.
\item[(ii)] There exists some $\mu\in M_{\mu}$
such that $\overline{I}(\mu)>\overline{I}(\mathbf{0})$. 
\end{itemize}
\end{assumption}
With the additional Assumption \ref{asm:centro}(i), $\overline{I}(\cdot)$ is concave.\footnote{For example, one may apply \citet[][Lemma B.4]{yata2021} or \citet[][Lemma C.2]{MQS2023decision} to $\Theta_{\mathcal{M}}$.} Moreover, Lemma \ref{lem:global.lfp} shows that  $\mathbf{0}\in\mathbb{R}^{d_{\mu}}$ is a global least favorable point, at which we recenter our reduced-form parameter. As $h$ only affects the limit identified set via $G_{0}h$, we may still simplify the problem by  redefining a limit decision-relevant reduced-form parameter $\mu:=G_{0}h$\footnote{We slightly abuse notation here because $\mu$ is not the same as $\mu\in M_{\mu}$ in the original parameter space.} and 
searching among rules that depend on $\Delta$ only via 
\[
G_0\Delta\sim\mathcal{N}(\mu,\Sigma_{0}),
\]
where $\Sigma_{0}:=G_{0}\mathbf{I}_{0}^{-1}G_{0}^{\top}$ is assumed to be positive definite. The identified set for the limit payoff-relevant parameter then simplifies to 
\begin{equation}
 I_{\infty}(\mu)=[\underline{I}_{\infty}(\mu),\overline{I}_{\infty}(\mu)],\label{eq:limit.id.set.simple}
\end{equation}
where
$\underline{I}_{\infty}(\mu)=\underline{C}+\underline{I}^{(1)}(\mathbf{0};\mu)$, 
$\overline{I}_{\infty}(\mu)=\overline{C}+\overline{I}^{(1)}(\mathbf{0};\mu)$, and $\mu$ can take values in the set 
\[
M_{\mu,\infty}:=\{\mu\in\mathbb{R}^{d_{\mu}}\mid \overline{I}_{\infty}(\mu)\geq\underline{I}_{\infty}(\mu)\}. 
\]
Thus, one may consider finding a rule $d:\mathbb{R}^{d_\mu}\rightarrow[0,1]$ that solves  the following problem
\begin{equation}
\min_{d}\sup_{\mu^{*}\in[\underline{I}_{\infty}(\mu),\overline{I}_{\infty}(\mu)],\mu\in M_{\mu,\infty}}\mu^{*}\left[\mathbf{1}\left\{ \mu^{*}\geq0\right\} -\mathbb{E}[d(G_0\Delta)]\right].\label{pf:R.star}
\end{equation}
Note the parameter space of $(\mu,\mu^*)$ in \eqref{pf:R.star} is still convex and centrosymmetric, but data $G_0\Delta$ becomes Gaussian. This characterization endorses  the finite-sample results considered by \citet{yata2021} as in fact a \emph{bona fide} limiting decision problem, with a suitable adjustment of the  original identified set to its localized version. Therefore,  the main results in \cite{yata2021} apply to \eqref{pf:R.star}; in particular, an optimal MMR rule utilizes an efficient linear combination of $G_0\Delta$ and can be found by applying \citet[][Theorem 1]{yata2021}. Moreover, when $\overline{C}$ is sufficiently large, the results in \cite{MQS2023decision} imply there are infinitely many MMR optimal rules. Next, we contribute to the literature by showing that the efficient linear combination can  be alternatively found by solving either a simple quadratic program \eqref{eq:quadratic.program}  or a nonlinear optimization program \eqref{eq:non.benign} below.

Denote by $\partial\overline{I}(\mathbf{0})$ the superdifferential
of $\overline{I}(\cdotp)$ at $\mathbf{0}$.  By \citet[][p.217-218, Section 23]{rockafellarconvex}, we have 
\[
\overline{I}^{(1)}(\mathbf{0};\mu)=\min_{p\in\partial\overline{I}(\mathbf{0})}p^{\top}\mu,\quad \forall  \mu \in \mathbb{R}^{d_\mu},
\]
where $\partial\overline{I}(\mathbf{0})$ is nonempty, compact, and
convex and $\overline{I}^{(1)}(\mathbf{0};\cdotp)$ is a finite positively
homogeneous concave function (i.e., $\overline{I}^{(1)}(\mathbf{0};\alpha\mu)=\alpha\overline{I}^{(1)}(\mathbf{0};\mu)$ for all $\alpha>0$). Moreover, Assumption \ref{asm:centro}(ii) implies
that $\mathbf{0}\notin\partial\overline{I}(\mathbf{0})$. 
Therefore, denote  by  $\overline{p}:=\overline{p}(\Sigma_0)$ the
unique solution of 
\begin{equation}\label{eq:quadratic.program}
\min_{p\in\partial\overline{I}(\mathbf{0})}\left\{ p^{\top}\Sigma_{0}p\right\}^{1/2},  
\end{equation}
and write 
\begin{align*}
\overline{t} & :=\frac{2\overline{C}}{\underline{I}^{(1)}(\mathbf{0};\Sigma_{0}\overline{p})-\overline{I}^{(1)}(\mathbf{0};\Sigma_{0}\overline{p})}\in(0,\infty],
\end{align*}
with the understanding that $\overline{t}=\infty$ if $\underline{I}^{(1)}(\mathbf{0};\Sigma_{0}\overline{p})-\overline{I}^{(1)}(\mathbf{0};\Sigma_{0}\overline{p})=0$. 

\begin{theorem}\label{thm:non-diff}

Suppose Assumptions \ref{asm:1}-\ref{asm:centro} hold and $\Sigma_0$ is positive definite. Consider the limit game \eqref{eq:limit.game.1} with $\mu_0=\mathbf{0}$. The following statements hold true. 
\begin{itemize}
\item[(i)] If $\left( \frac{\pi}{2}\overline{p}^{\top}\Sigma_{0}\overline{p}\right) ^{1/2}<\overline{C}$,
\[
d_{\infty,RT}:=\Phi\left(\frac{\overline{p}^{\top}G_0\Delta}{\sqrt{\frac{2}{\pi}\overline{C}^2-\overline{p}^{\top}\Sigma_{0}\overline{p}}}\right)
\]
is MMR optimal.
\item[(ii)]  If $\left( \frac{\pi}{2}\overline{p}^{\top}\Sigma_{0}\overline{p}\right) ^{1/2}\geq\overline{C}$,
and 
\begin{equation}\label{eq:benign.non.benign.boundary}
\overline{x}:=\overline{x}(\Sigma_0) :=\arg\max_{x\in[0,\infty)}(\overline{C}+x)\Phi\left(-\frac{x}{\left(\overline{p}^{\top}\Sigma_{0}\overline{p}\right)^{1/2}}\right)\leq\overline{t}\overline{I}^{(1)}(\mathbf{0};\Sigma_{0}\overline{p}),   
\end{equation}
$d_{\infty,\overline{p}}:=\mathbf{1}\left\{ \overline{p}^{\top}G_0\Delta\geq0\right\}$
is MMR optimal. 
\item[(iii)] Otherwise, $d_{\infty,\mu^{o}}:=\mathbf{1}\left\{ \left(\mathbf{\mu}^{o}\right)^{\top}\Sigma_{0}^{-1}G_0\Delta\geq0\right\}$ is MMR optimal, where
$\mathbf{\mu}^{o}:=\mathbf{\mu}^{o}(\Sigma_0)$ uniquely solves 
\begin{equation}\label{eq:non.benign}
\max_{\{\mu\in M_{\mu,\infty}:\overline{I}^{(1)}(\mathbf{0};\mu)\geq0\}}r(\mu;\Sigma_0),  
 \end{equation}
and  $r(\mu;\Sigma_0):=\left(\overline{C}+\overline{I}^{(1)}(\mathrm{\mathbf{0}};\mu)\right)\Phi\left(-\sqrt{\mu^{\top}\Sigma_0^{-1}\mu}\right)$.  
\end{itemize}
\end{theorem}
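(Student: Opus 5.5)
The plan is to reduce the multivariate limit game to a family of one-dimensional games along directions $w$, solve each of those, and certify optimality with a saddle-point (least-favorable prior) argument. This follows the logic of \citet[Theorem 1]{yata2021} and \citet{stoye2012minimax}.

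\textbf{Reduction.} First, $h$ enters the limit identified set only through $\mu=G_0h$, and $\Sigma_0$ is positive definite. So it suffices to solve \eqref{pf:R.star} with data $X:=G_0\Delta\sim\mathcal{N}(\mu,\Sigma_0)$; any rule of $X$ is a rule of $\Delta$, and the value of \eqref{pf:R.star} is at most $R^*$. Second, by Assumption \ref{asm:centro}(i) and Lemma \ref{lem:global.lfp}, the localized parameter space $\{(\mu,\mu^*):\mu\in M_{\mu,\infty},\mu^*\in I_\infty(\mu)\}$ is convex and centrosymmetric. Centrosymmetry gives $\underline{I}^{(1)}(\mathbf{0};\mu)=-\overline{I}^{(1)}(\mathbf{0};-\mu)$ and $\underline{C}=-\overline{C}$, and $\overline{I}_\infty(\cdot)$ is concave.

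\textbf{Lower bound.} For each direction I restrict nature to two-point priors on $\pm(\mu,\overline{I}_\infty(\mu))$. The regret against any rule is then bounded below by the worst case of the one-dimensional problem in which $\Sigma_0^{-1}\mu$ is the sufficient direction. Applying Neyman--Pearson/Karlin--Rubin to this symmetric two-point problem shows that the best response is a threshold rule at zero in $\mu^\top\Sigma_0^{-1}X$. Its Bayes regret is $(\overline{C}+\overline{I}^{(1)}(\mathbf{0};\mu))\Phi(-\sqrt{\mu^\top\Sigma_0^{-1}\mu})=r(\mu;\Sigma_0)$, which explains the program \eqref{eq:non.benign}.

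\textbf{Identifying the least-favorable direction.} I would substitute $\mu=x\,\Sigma_0\overline{p}/(\overline{p}^\top\Sigma_0\overline{p})^{1/2}$. The KKT conditions of \eqref{eq:quadratic.program}, namely that $\overline{p}$ minimizes $p^\top\Sigma_0 p$ over the superdifferential, give $\overline{I}^{(1)}(\mathbf{0};\Sigma_0\overline{p})=\min_{p\in\partial\overline I(\mathbf 0)}p^\top\Sigma_0\overline{p}=\overline{p}^\top\Sigma_0\overline{p}$. Along this ray, $r$ reduces to $(\overline{C}+x s)\Phi(-x)$ with $s=(\overline{p}^\top\Sigma_0\overline{p})^{1/2}$, which is the objective in \eqref{eq:benign.non.benign.boundary} after rescaling. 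The ray remains in $M_{\mu,\infty}$ only as long as $\overline{I}_\infty\ge\underline{I}_\infty$. That cutoff is exactly $x$ up to $\overline{t}\,\overline{I}^{(1)}(\mathbf{0};\Sigma_0\overline{p})$, which is what defines $\overline t$. This produces three cases:
\begin{itemize}
\item[(i)] If $\overline{C}$ is large relative to $s\sqrt{\pi/2}$, the maximizer of $(\overline C+xs)\Phi(-x)$ is at $x=0$. The Bayes rule is then indifferent, and nature's prior must mix over the whole ray. In that case the randomized rule $d_{\infty,RT}$ equalizes regret along the ray, exactly as in the scalar example of Section \ref{sec:illustrate} with effective variance $\overline{p}^\top\Sigma_0\overline{p}$.
\item[(ii)] If the interior maximizer $\overline{x}$ is feasible, i.e.\ it satisfies \eqref{eq:benign.non.benign.boundary}, then $\mathbf{1}\{\overline{p}^\top X\ge0\}$ is a best response to the prior at $\pm\mu(\overline x)$.
\item[(iii)] Otherwise the constraint binds. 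I would show that the maximizer of \eqref{eq:non.benign} is unique, using concavity of $\overline{I}^{(1)}(\mathbf 0;\cdot)$ together with log-concavity of $\Phi$ so that $\log r$ is strictly quasi-concave along rays. The optimal direction is then $\Sigma_0^{-1}\mu^o$.
\end{itemize}

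\textbf{Upper bound (main obstacle).} The hard part is to show that the candidate rule's worst-case regret over the whole, non-smooth parameter space does not exceed the Bayes value. For a threshold rule $\mathbf 1\{w^\top X\ge0\}$, regret at $(\mu,\overline I_\infty(\mu))$ equals $\overline I_\infty(\mu)\Phi(-w^\top\mu/\sqrt{w^\top\Sigma_0 w})$. For fixed $w^\top\mu$, I would maximize $\overline{I}_\infty(\mu)$; by the minimax theorem applied to $\min_p p^\top\mu$, this maximization is attained when the supergradient choice matches $w$. This is precisely where the quadratic program comes in: $\overline p$ aligns the supergradient with the data direction, so the worst case lies on the chosen ray. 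In case (iii) I would use the first-order conditions of \eqref{eq:non.benign} to obtain a supergradient $p^o\propto\Sigma_0^{-1}\mu^o$, up to scaling, and repeat the argument. In case (i), I would add the standard convolution computation showing that regret is constant along the ray and smaller elsewhere. The lower and upper bounds match, which gives MMR optimality in \eqref{pf:R.star} and hence in \eqref{eq:limit.game.1}.
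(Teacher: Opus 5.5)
\textbf{The gap is the upper bound in case (iii).} For case (ii) you bound $\overline I_\infty(\mu)\le\overline C+p^\top\mu$ with $p\in\partial\overline I(\mathbf{0})$ proportional to the rule's direction $w$. That relaxation forgets the constraint $\mu\in M_{\mu,\infty}$ (i.e.\ $\underline I_\infty\le\overline I_\infty$). Case (iii) is exactly the regime in which that constraint holds the value strictly below the linear-embedding value $(\overline C+\overline x)\Phi(-\overline x/s)$, with $s=(\overline p^\top\Sigma_0\overline p)^{1/2}$. Two things go wrong.

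\emph{The proportional supergradient need not exist.} Because the constraint binds, the first-order conditions of \eqref{eq:non.benign} carry a multiplier $\eta>0$ and give $\Sigma_0^{-1}\mu^o\propto(1+\eta)p_1-\eta q_1$. Here $p_1$ attains $\min_{p\in\partial\overline I(\mathbf{0})}p^\top\mu^o$ and $q_1$ attains the maximum, which equals $\underline I^{(1)}(\mathbf{0};\mu^o)$. This is a supergradient of $\overline I_\infty$ restricted to $M_{\mu,\infty}$, and in general it is not proportional to any element of $\partial\overline I(\mathbf{0})$. In the evidence-aggregation example with $\rho>\sigma_2/\sigma_1$ it is proportional to $(-\eta,1+\eta)^\top$, which puts negative weight on $\hat\mu_1$.

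\emph{Even if one existed, the bound could not be tight at $\mu^o$.} Take $p^o\in\partial\overline I(\mathbf{0})$ with $\mu^o\propto\Sigma_0p^o$. Tightness at $\mu^o$ requires $p^{o\top}\mu^o=\overline I^{(1)}(\mathbf{0};\mu^o)$. Combined with $\mu^o\propto\Sigma_0p^o$, this is the variational inequality $(p-p^o)^\top\Sigma_0p^o\ge0$ for all $p\in\partial\overline I(\mathbf{0})$. That forces $p^o=\overline p$ and puts $\mu^o$ on the ray $\{t\Sigma_0\overline p\}$, which fails in case (iii). So your upper bound strictly exceeds $r(\mu^o;\Sigma_0)$, and it cannot meet the two-point lower bound.

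\emph{How to close it.} The paper handles this step in Lemma \ref{lem:thm.non.diff.case.3.1}. It solves a restricted game on a compact convex $S$ via a Kakutani fixed point of Nature's best response. It then shows the fixed point is interior to the artificial constraints (Lemma \ref{lem:boundary}) and uses quasiconcavity along segments to pass to the full space. A repair closer to your route is to use the supporting hyperplane at $\mu^o$ that includes the normal cone of $M_{\mu,\infty}$. Since $\mu^o$ maximizes the concave function $\log r(\cdot;\Sigma_0)$ over $\{\mu\in M_{\mu,\infty}:\overline I_\infty(\mu)>0\}$, you get, on that set,
$\log\overline I_\infty(\mu)\le\log\overline I_\infty(\mu^o)+\kappa(\mu^o)^\top\Sigma_0^{-1}(\mu-\mu^o)$,
where $\kappa=\phi(y^o)/(y^o\Phi(-y^o))$ and $y^o=((\mu^o)^\top\Sigma_0^{-1}\mu^o)^{1/2}$. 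The worst-case regret of $d_{\infty,\mu^o}$ then reduces to a log-concave function of $c=(\mu^o)^\top\Sigma_0^{-1}\mu$, maximized at $c=(y^o)^2$ with value $r(\mu^o;\Sigma_0)$.

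\textbf{Smaller points.}
\begin{itemize}
\item In case (i), $d_{\infty,RT}$ does not equalize regret along the ray. Its regret is at most $(\overline C+c)\Phi(-c/(\overline C\sqrt{2/\pi}))$ with $c=\overline p^\top\mu$, which peaks at $c=0$ with value $\overline C/2$. The least favorable prior need not be spread along the ray: the degenerate prior at $h=\mathbf{0}$ with $\mu^*=\pm\overline C$ already gives the matching no-data lower bound.
\item Your reduction has the inequality backwards. Since rules of $G_0\Delta$ are rules of $\Delta$, $R^*\le\mathbf{R}^*$, so upper bounds transfer but lower bounds do not. You must place the two-point priors in $(h,\mu^*)$-space at $h=\pm\mathbf{I}_0^{-1}G_0^\top\Sigma_0^{-1}\mu$. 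Their Bayes rule among all rules of $\Delta$ is $\mathbf{1}\{\mu^\top\Sigma_0^{-1}G_0\Delta\ge0\}$, with Bayes value $r(\mu;\Sigma_0)$.
\item Uniqueness of $\mu^o$ needs strict concavity of $\mu\mapsto\log\Phi(-(\mu^\top\Sigma_0^{-1}\mu)^{1/2})$ along arbitrary segments, not only along rays (Lemma \ref{lem:non.diff.quasiconcave}). Existence also needs an argument, because the feasible set in \eqref{eq:non.benign} can be unbounded.
\end{itemize}
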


The intuition behind Theorem \ref{thm:non-diff} is as follows. Focus on the limit game among rules that depend on $\Delta$ only via $G_0\Delta$. Consistent with the existing results in the literature, a least favorable prior randomizes evenly between two symmetric points around $\mathbf{0}$, i.e., between  $(-\mu,\underline{I}_\infty(-\mu))$ and $(\mu,\overline{I}_\infty(\mu))$ where $\overline{I}_\infty(\mu)> 0$.  In ``benign cases'' where $\Sigma_0$ is not too large (statements (i) and (ii) above), the direction of $\mu$ is found via \eqref{eq:quadratic.program}, which determines an efficient direction along $\Sigma_0\overline{p}$. Along this direction, the game again essentially becomes one-dimensional. In cases where $\left( \frac{\pi}{2}\overline{p}^{\top}\Sigma_{0}\overline{p}\right) ^{1/2}\leq\overline{C}$,  one has $\mu=0$, and a limit MMR optimal rule randomizes whenever the inequality is strict. In fact, applying \citet[][Theorem 1]{MQS2023decision} to case (i), we can conclude that infinitely many MMR optimal rules exist, among them the piecewise linear rule 
\begin{equation*}\label{eq:d.linear}
d^{*}_{\text{linear}}:=\left[\frac{\overline{p}^{\top}G_0\Delta+\rho^{*}}{2\rho^{*}}\right]_0^1,
\end{equation*}
where $\rho^*$ is the unique strictly positive solution of
\begin{equation*}
\left(\frac{\rho^{*}}{2\overline{C}}\right) -\frac{1}{2} + \Phi\left(-\frac{\rho^{*}}{\left(\overline{p}^{\top}\Sigma_{0}\overline{p}\right)^{1/2}}\right) =0.
\end{equation*} 
In case (ii), $\mu$ has a strictly  positive length (i.e., $\mu=t\Sigma_0\overline{p}$ for some $t>0$) whenever the inequality is strict, and $d_{\infty,\overline{p}}$ is the (a.e.) unique Bayes response. A caveat is that the described  equilibrium can only be sustained if the length of $\mu$ is not too large,  as $\mu$ must also be in  $M_{\mu,\infty}$, which may be bounded in direction $\Sigma_0\overline{p}$. Condition \eqref{eq:benign.non.benign.boundary} gives this boundary condition. If it is  violated, we must search for a possibly alternative optimal direction via the nonlinear program \eqref{eq:non.benign}. In all cases, we also provide a suitable two-point symmetric prior in the space of $(h,\mu^*)$ that supports our decision rules, confirming that, although they depend only on $G_0\Delta$, they are MMR optimal for problem \eqref{eq:limit.game.1}. With Theorem \ref{thm:non-diff}, we next provide  a feasible plug-in rule that will be asymptotically MMR optimal.

\begin{theorem}\label{thm:non-diff-asymptotic}

Suppose the conditions of Theorem \ref{thm:non-diff} hold. Moreover, let $\hat{\gamma}$ be a best regular estimator of $\gamma_{0}$ such that \[ \sqrt{n}(\hat{\gamma}-\gamma_{0}-\frac{h}{\sqrt{n}})\overset{h}{\rightsquigarrow}N(\mathbf{0},\mathbf{I}_{0}^{-1}),\text{ for all }h\in\mathbb{R}^{d_m}. \]
Let $\hat{\Sigma}$ be a positive definite and consistent estimator of $\Sigma_0$ when $h=\mathbf{0}$,  $\hat{p}:=\overline{p}(\hat{\Sigma})$, $\hat{x}:=\overline{x}(\hat{\Sigma})$ and let $\mathbf{\hat{\mu}}^{o}$ solve \eqref{eq:non.benign} with $\Sigma_0$ replaced by $\hat{\Sigma}$. The following feasible rule is locally asymptotically MMR optimal:
\begin{align*}
d_{F} & :=\mathbf{1}\left\{ \left(\frac{\pi}{2}\hat{p}^{\top}\hat{\Sigma}\hat{p}\right)^{1/2}<\overline{C}\right\} d_{F,RT}\\
 & +\mathbf{1}\left\{ \left(\frac{\pi}{2}\hat{p}^{\top}\hat{\Sigma}\hat{p}\right)^{1/2}\geq\overline{C},\hat{x}\leq\frac{2\overline{C}\overline{I}^{(1)}(\mathbf{0};\hat{\Sigma}\hat{p})}{\underline{I}^{(1)}(\mathbf{0};\hat{\Sigma}\hat{p})-\overline{I}^{(1)}(\mathbf{0};\hat{\Sigma}\hat{p})}\right\} d_{F,\hat{p}}\\
 & +\mathbf{1}\left\{ \left(\frac{\pi}{2}\hat{p}^{\top}\hat{\Sigma}\hat{p}\right)^{1/2}\geq\overline{C},\hat{x}>\frac{2\overline{C}\overline{I}^{(1)}(\mathbf{0};\hat{\Sigma}\hat{p})}{\underline{I}^{(1)}(\mathbf{0};\hat{\Sigma}\hat{p})-\overline{I}^{(1)}(\mathbf{0};\hat{\Sigma}\hat{p})}\right\} d_{F,\hat{\mu}^{o}},
\end{align*}
where
\begin{align*}
d_{F,RT} & :=\Phi\left(\frac{\sqrt{n}\hat{p}^{\top}\mu(\hat{\gamma})}{\sqrt{\frac{2}{\pi}\overline{C}^{2}-\hat{p}^{\top}\hat{\Sigma}\hat{p}}}\right),d_{F,\hat{p}}:=\mathbf{1}\left\{ \hat{p}^{\top}\mu(\hat{\gamma})\geq0\right\} ,
d_{F,\hat{\mu}^{o}} :=\mathbf{1}\left\{ \left(\mathbf{\hat{\mu}}^{o}\right)^{\top}\hat{\Sigma}^{-1}\mu(\hat{\gamma})\geq0\right\}.
\end{align*}

\end{theorem}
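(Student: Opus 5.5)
The proof follows Theorem \ref{thm: general}. Fix a local parameter $h\in M_h$. I first show that under $P_{\gamma_0+h/\sqrt{n}}$ the expected value of $d_F$ converges to $\mathbb{E}[d^*_\infty(\Delta)]$, where $d^*_\infty$ is the MMR optimal rule that Theorem \ref{thm:non-diff} assigns to the true $\Sigma_0$. Given this matching, Theorem \ref{thm: general} yields local asymptotic MMR optimality. Here $\mu_0=\mathbf{0}$, so $\mu(\gamma_0)=\mathbf{0}$.

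\textbf{Step 1: joint convergence.} By Assumption \ref{asm:diff}(i) and the delta method, $\sqrt{n}\mu(\hat\gamma)=\sqrt{n}(\mu(\hat\gamma)-\mu(\gamma_0))\overset{h}{\rightsquigarrow}\mathcal{N}(G_0h,\Sigma_0)$, which has the law of $G_0\Delta$. The model is DQM, so by Le Cam's third lemma contiguity holds. Consistency of $\hat\Sigma$ at $h=\mathbf{0}$ therefore transfers to $\hat\Sigma\to_p\Sigma_0$ under every local $h$. Slutsky gives joint convergence of $(\sqrt{n}\mu(\hat\gamma),\hat\Sigma)$ to $(G_0\Delta,\Sigma_0)$.

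\textbf{Step 2: continuity of the plug-in quantities in $\Sigma$ at $\Sigma_0$.} I need $\hat p=\overline{p}(\hat\Sigma)\to_p\overline{p}(\Sigma_0)$. This holds because the projection in \eqref{eq:quadratic.program} is onto the compact convex set $\partial\overline{I}(\mathbf{0})$ in a strictly convex norm. Its unique minimizer is continuous in the positive definite $\Sigma$ by Berge's maximum theorem. Next, $\overline{I}^{(1)}(\mathbf{0};\cdot)$ and $\underline{I}^{(1)}(\mathbf{0};\cdot)$ are finite concave and convex functions respectively, hence continuous. So the threshold $2\overline{C}\,\overline{I}^{(1)}(\mathbf{0};\hat\Sigma\hat p)/(\underline{I}^{(1)}-\overline{I}^{(1)})$ converges, including to $+\infty$ when the denominator is zero. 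The maximizer $\hat x$ is the unique maximizer of a smooth quasi-concave criterion and is continuous in $\overline{p}^\top\Sigma\overline{p}$. For $\hat\mu^o$, the uniqueness in Theorem \ref{thm:non-diff}(iii), continuity of $r(\mu;\Sigma)$, and compactness of the relevant feasible set give consistency via a standard argmax theorem. To get compactness I will argue that a maximizer cannot escape to infinity, since $\Phi(-\|\cdot\|)$ decays faster than linear growth.

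\textbf{Step 3: regime selection (main obstacle).} The indicators are discontinuous, so I split into cases according to where $\Sigma_0$ falls.

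If $\Sigma_0$ lies strictly inside a regime, i.e.\ the inequalities defining (i), (ii) or (iii) hold strictly, the indicator of that regime tends to one in probability. The plug-in rule then converges in distribution to the corresponding limit rule, e.g.\ $\Phi\bigl(\overline{p}^\top G_0\Delta/\sqrt{\tfrac{2}{\pi}\overline{C}^2-\overline{p}^\top\Sigma_0\overline{p}}\bigr)$. Bounded convergence then yields convergence of expectations. For the threshold rules, the discontinuity set $\{\overline{p}^\top G_0\Delta=0\}$ is null because $\Sigma_0$ is positive definite, so the continuous mapping theorem applies.

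On boundaries, I must show that either adjacent rule is itself MMR optimal, so every mixture of them along the sequence has the correct limit risk.

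On the (i)/(ii) boundary, $\tfrac{\pi}{2}\overline{p}^\top\Sigma_0\overline{p}=\overline{C}^2$. The denominator of $d_{F,RT}$ then tends to zero, and $d_{F,RT}$ converges to $\mathbf{1}\{\overline{p}^\top G_0\Delta\ge0\}$ almost surely. Here $\overline{x}=0$, so condition \eqref{eq:benign.non.benign.boundary} holds and the limit is rule (ii).

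On the (ii)/(iii) boundary, $\overline{x}=\overline{t}\,\overline{I}^{(1)}(\mathbf{0};\Sigma_0\overline{p})$. I expect the program \eqref{eq:non.benign} to be maximized in direction $\Sigma_0\overline{p}$, so that $\mu^o\propto\Sigma_0\overline{p}$. The two rules then coincide, since $(\mu^o)^\top\Sigma_0^{-1}\propto\overline{p}^\top$.

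Verifying this last coincidence from the proof of Theorem \ref{thm:non-diff} is the delicate point. If it fails, I would instead subsequence the event probabilities and use the fact that both candidate limit rules attain $R^*$ by the saddle-point priors. However, mixtures of the two rules need not be optimal, so this coincidence is really needed.

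With the matching established for every $h$, Theorem \ref{thm: general} concludes the proof.
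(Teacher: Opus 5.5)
Your route is the paper's. Lemma \ref{lem:non-diff-asymptotic-master} matches $d_F$ regime by regime, using Le Cam's lemmas and the continuity of $\overline{p}(\cdot)$, $\overline{x}(\cdot)$, $\mu^{o}(\cdot)$ from Lemma \ref{lem:non-diff-continuous}. The two boundaries are handled in Lemma \ref{lem:non-diff-matching}, and Theorem \ref{thm: general} finishes.

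The one step you leave conditional is $\mu^{o}\propto\Sigma_0\overline{p}$ on the (ii)/(iii) boundary. It is genuinely needed, and the paper simply asserts it in the form $\mu^{o}=\overline{t}\Sigma_0\overline{p}$. It is true and takes a few lines:
\begin{itemize}
\item Let $s:=(\overline{p}^{\top}\Sigma_0\overline{p})^{1/2}$. Since $\overline{p}\in\partial\overline{I}(\mathbf{0})$, Cauchy--Schwarz gives $\overline{I}^{(1)}(\mathbf{0};\mu)\le\overline{p}^{\top}\mu\le s\,(\mu^{\top}\Sigma_0^{-1}\mu)^{1/2}$ for every $\mu$.
\item Hence for any $\mu$ feasible in \eqref{eq:non.benign}, with $x:=\overline{I}^{(1)}(\mathbf{0};\mu)\ge0$, we get $r(\mu;\Sigma_0)\le(\overline{C}+x)\Phi(-x/s)\le(\overline{C}+\overline{x})\Phi(-\overline{x}/s)$, the last step by definition of $\overline{x}$.
\item On this boundary necessarily $\overline{t}<\infty$, so the point $\overline{t}\Sigma_0\overline{p}$ is feasible (Step 2 of the proof of Theorem \ref{thm:non-diff}(ii)).
\item Lemma \ref{lem:key} gives $\overline{I}^{(1)}(\mathbf{0};\Sigma_0\overline{p})=s^{2}$, so the boundary condition reads $\overline{x}=\overline{t}s^{2}$.
\item Therefore at $\overline{t}\Sigma_0\overline{p}$ we have $\overline{I}^{(1)}=\overline{t}s^2=\overline{x}$ and $(\mu^{\top}\Sigma_0^{-1}\mu)^{1/2}=\overline{t}s=\overline{x}/s$. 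This point attains the bound, and uniqueness of the maximizer gives $\mu^{o}=\overline{t}\Sigma_0\overline{p}$.
\end{itemize}
The same argument shows $\mu^{o}=(\overline{x}/s^{2})\Sigma_0\overline{p}$ throughout regime (ii).

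Your diagnosis of why the subsequence fallback fails needs sharpening. A fixed-weight mixture of two MMR optimal rules is itself MMR optimal, because worst-case regret is convex in $\mathbb{E}[d]$. The real obstruction is the regime indicator, which is a function of $\hat\Sigma$. Its probability need not converge, and it can be asymptotically correlated with $\sqrt{n}\mu(\hat\gamma)$. So two feasible rules with the same weak limit can, once spliced by the indicator, converge to something else. For the same reason, coincidence of the limit rules is not quite enough; you need asymptotic equivalence of the competing feasible rules themselves.
\begin{itemize}
\item At the (ii)/(iii) boundary: $\overline{t}^{-1}\hat\Sigma^{-1}\hat\mu^{o}-\hat p\to_p 0$ and $\sqrt{n}\mu(\hat\gamma)=O_p(1)$. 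The two test statistics therefore differ by $o_p(1)$ and share the atomless limit $\overline{p}^{\top}\mathcal{N}(G_0h,\Sigma_0)$, so $P(d_{F,\hat\mu^{o}}\neq d_{F,\hat p})\to0$.
\item At the (i)/(ii) boundary: replace ``almost surely'' by $\mathbf{1}\{(\tfrac{\pi}{2}\hat p^{\top}\hat\Sigma\hat p)^{1/2}<\overline{C}\}\,|d_{F,RT}-d_{F,\hat p}|\to_p0$. This holds because the denominator of $d_{F,RT}$ tends to zero in probability while its numerator has an atomless limit.
\end{itemize}
This is exactly the $X_n,Y_n$ (and $g_{n,1},g_{n,2}$) device of Lemmas \ref{lem:non-diff-matching} and \ref{lem:full.diff.lem.3}. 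A minor point: contiguity under DQM comes from Le Cam's first lemma, not the third.
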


\section{Applications}\label{sec:applications}

\subsection{Contaminated and Corrupted Data}\label{sec:contaminate}
This example corresponds to a parameter space that can be recast as  convex and centrosymmetric with fully differentiable bounds; see \cite{qiu2026statistical} for a related example in regression discontinuity design.  Suppose a DM must choose between two treatments based on RCT data, but some outcomes are contaminated. For each unit $i$, let  $D_i\in\{0,1\}$ be its treatment status, where $D_i=1$ denotes treatment and $D_i=0$ denotes control. The realized outcome for each unit is  $Y_i=D_iY_i(1)+(1-D_i)Y_i(0)$, where $Y_i(1),Y_i(0)\in\{0,1\}$ are binary potential outcomes (success/no success) under treatment and control, respectively. The DM also observes $S_i\in\{0,1\}$, where $S_i=1$ means the realized outcome $Y_i$ is contaminated or missing and therefore cannot be used for decision making. The payoff-relevant parameter is the average treatment effect $\mu^*=\mathbb{E}[Y(1)]-\mathbb{E}[Y(0)]$. \cite{horowitz1995identification} derive the sharp identified set of $\mu^*$:
\begin{equation}\label{eq:HM-bound}
[(1-p_1)\gamma_1-(1-p_0)\gamma_0-p_0,(1-p_1)\gamma_1-(1-p_0)\gamma_0+p_1],   
\end{equation}
where $\gamma_1=\mathbb{E}[Y_i\mid D_i=1,S_i=0]$, $\gamma_0=\mathbb{E}[Y_i\mid D_i=0,S_i=0]$, $p_1=\mathbb{E}[S_i=1\mid D_i=1]$, and $p_0=\mathbb{E}[S_i=1\mid D_i=0]$. Suppose the contamination rates $p_1,p_0\in(0,1)$ are known, e.g., from RCT experts or through validations from past experiments. This fits  our framework with a convex and centrosymmetric parameter space by writing $\theta=(\mu^*,\gamma_1,\gamma_0)$, $m(\theta)=(\gamma_1,\gamma_0)$, and $U(\theta)=\mu^*$. The identified set depends on $\gamma_1$ and $\gamma_0$ only via $\mu:=(1-p_1)\gamma_1-(1-p_0)\gamma_0-(p_0-p_1)/2$, so that we can rewrite the identified set more compactly as $I(\mu)=[\underline{I}(\mu),\overline{I}(\mu)]$, where 
\begin{equation}\label{eq:HM-bound-centro}
 \underline{I}(\mu)=\mu-k, \overline{I}(\mu)=\mu+k   
\end{equation}
and $k=(p_1+p_0)/2\in(0,1)$. The global least favorable point has $\mu_0=0$,  $\overline{I}^{(1)}(\mu_0)=1$, and $\kappa_0=1$.  Theorem \ref{thm:full-diff-asymptotic} directly applies.

If $p_1$ and $p_0$ are also unknown, they become part of $m(\theta)$. In this case, $k$ in \eqref{eq:HM-bound-centro} is unknown as well. The global least favorable point  occurs on the boundary violating  Assumption \ref{asm:2}, namely at $p_1=p_0=1$, i.e.,  no realized outcome is observed and the identified set becomes completely uninformative (also resembling an issue faced in \cite{song2014point}). Our result does not apply to this irregular case.   In Appendix \ref{sec:non.standard}, we slightly modify our asymptotic framework to accommodate this situation so that one may use  estimated versions of $p_1$ and $p_0$ to construct optimal rules. See also \cite{xu2026asymptotic} for an alternative treatment.  

\subsection{Robust Welfare Analyses}
This example is inspired by \cite{kang2025robustness} and corresponds to a scenario with fully differentiable bounds but non-centrosymmetric parameter space. A policy maker contemplates levying an \emph{ad valorem}  tax $\tau$ on a good and using the tax revenue to generate some known social surplus $G$. Suppose the demand curve does not shift before and after the tax levy.  The status-quo price and quantity, denoted as $p_0$ and $q_0$, are known. The new price after the levy, $p_1=p_0(1+\tau)$, is also known.  The policy maker observes some individual-level experimental data that point-identify the new quantity $q_1:=q_1(\gamma)$, which itself can be a smooth transformation of some finite-dimensional parameter $\gamma$ from   structural or reduced-form modeling. Even if $q_1$ is perfectly observed, we still do not know the true knowledge of the demand curve between $p_1$ and $p_0$. As a result, the loss in consumer surplus   $\Delta CS$ is only partially identified. In this example, $U(\theta)=G-\Delta CS$, while $m(\theta)=\gamma$. 
\begin{figure}[http]
\centering
\begin{tikzpicture}[axis/.style={->, thick, black}]

\pgfmathsetmacro{\xuint}{4.75}              % <-- TURN THIS KNOB
\pgfmathsetmacro{\xlint}{\xuint * 0.2}
\pgfmathsetmacro{\slopeU}{2.0/\xuint}
\pgfmathsetmacro{\slopeL}{4.0/\xuint}

\pgfmathsetmacro{\lowerAtRight}{0.8 - \slopeL*\xuint}
\pgfmathsetmacro{\upperAtLeft} {2.0 - \slopeU*\xlint}

\coordinate (UL) at (\xlint, \upperAtLeft);
\coordinate (UR) at (\xuint, 0);
\coordinate (LR) at (\xuint, \lowerAtRight);
\coordinate (LL) at (\xlint, 0);

\fill[orange!25] (UL) -- (UR) -- (LR) -- (LL) -- cycle;

\draw[very thick, gray]
    (-0.3, {2.0 - \slopeU*(-0.3)}) -- ({\xuint+0.8}, {2.0 - \slopeU*(\xuint+0.8)});
\draw[very thick, gray]
    (-0.3, {0.8 - \slopeL*(-0.3)}) -- ({\xuint+0.8}, {0.8 - \slopeL*(\xuint+0.8)});

\draw[axis] (-0.5, 0) -- ({\xuint+1.2}, 0) node[right] {$q_1$};
\draw[axis] (0, -4.0) -- (0, 3.5)          node[above] {$G - \Delta CS$};

\draw[thick, myred, dashed] (\xlint,  3.2) -- (\xlint, -3.8);
\draw[thick, myred, dashed] (\xuint,  3.2) -- (\xuint, -3.8);

\end{tikzpicture}
    \caption{Identified set of the net social surplus in a stylized example from \cite{kang2025robustness}. The area between the dotted red lines is where the decision problem is nontrivial and where the global least favorable point is located.}
    \label{fig:robust.welfare}
\end{figure}

Assuming the gradient of the demand curve at any price between $p_1$ and $p_0$ is bounded between
\[
\left[\frac{q_1-q_0}{p_1-p_0}\frac{1}{1-r},\frac{q_1-q_0}{p_1-p_0}(1-r)\right]
\]
for some given $r\in(0,1)$, \cite{kang2025robustness} find the sharp identified set of $G-\Delta CS$ as $I(q_1):=[\underline{I}(q_1),\overline{I}(q_1)]$, where
\[
\overline{I}(q_1)= G-\frac{(p_{1}-p_{0})(1-r)q_{0}}{2-r}-\frac{p_{1}-p_{0}}{2-r}q_{1},\quad \underline{I}(q_1)= G-\frac{(p_{1}-p_{0})q_{0}}{2-r}-\frac{\left(p_{1}-p_{0}\right)\left(1-r\right)}{2-r}q_{1},
\]
which  depend on $\gamma$ only through $q_1$ and are affine in $q_1$.  See Figure \ref{fig:robust.welfare} for an illustration.  The problem is only interesting when $G$ is in the relevant range so that there exists some $q_1$ such that $\overline{I}(q_1)>0>\underline{I}(q_1)$. In that case, the global least favorable location $q_1^0$ has a closed-form solution. Once  $q_1^0$ is found, we can set $\overline{C}=\sqrt{n}\overline{I}(q_1^0)$ and $\underline{C}=\sqrt{n}\underline{I}(q_1^0)$.  Furthermore, note $\overline{I}^{(1)}(q_1^0)=-\left(p_{1}-p_{0}\right)/(2-r)$ and $\kappa_0=1-r<1$, implying the limit identified set corresponds to a case in the left panel of Figure \ref{fig:parameter.space.smooth}. Then, Theorem \ref{thm:full-diff-asymptotic} can be applied with an asymptotically efficient estimator for $q_1$ (e.g., by applying the transformation $q_1(\hat{\gamma})$ where $\hat{\gamma}$ is the MLE for $\gamma$) and  a consistent estimator of its asymptotic variance.

\subsection{Evidence Aggregation}\label{sec:IK}
We revisit an evidence aggregation example in \citet{ishihara2021}.
Suppose a DM is interested in implementing a new policy in a target
country and observes a random sample $\left\{ Y_{i},S_{i}\right\} _{i=1}^{n}$
collected from (for simplicity) two nearby countries, where $Y_{i}$
denotes the outcome of interest when the policy is applied to unit
$i$ and $S_{i}$ is a binary indicator for the country of origin
of each unit ($S_{i}=0$ means unit $i$ is from country 1 and $S_{i}=1$
means unit $i$ is from country 2). The DM specifies a parametric
model $P_{\vartheta}$ for $\{Y_{i},S_{i}\}$. The policy effects
of the two nearby countries, denoted as $\mu_{1}:=\mu_{1}(\vartheta)$
and $\mu_{2}:=\mu_{2}(\vartheta)$ respectively, are modeled as known
smooth functionals of $\vartheta$. The status quo policy effect is
known and normalized to zero. The DM is willing to extrapolate the
policy effect in the target country based on those from nearby countries:
\[
\left\{ \mu_{0}\in\mathbb{R}:\left|\mu_{0}-\mu_{j}\right|\leq C_{j},j=1,2\right\} 
\]
for some $C_{j}>0,j=1,2$. In this example, we have $\theta=\left(\mu_{0},\vartheta\right)\in\mathbb{R}^{1+d_{\vartheta}}$,
$U(\theta)=\mu_{0}$, $m(\theta)=\vartheta\in\mathbb{R}^{d_{\vartheta}}$,
and the identified set of $\mu_{0}$ depends on $m(\theta)$ only
via $\mu_{1}$ and $\mu_{2}$. % , implying 
% % \[
% % G_{0}=\left(\begin{array}{cccc}
% % 1 & 0 & 0 & 0\\
% % 1 & 1 & 0 & 0
% % \end{array}\right).
% % \]
Furthermore, the identified set for $\mu_{0}$ can be written as intersection
bounds: 
\begin{equation}\label{eq:IK.original.bound}
I(\mu_{1},\mu_{2})=\left[\underline{I}(\mu_{1},\mu_{2}),\overline{I}(\mu_{1},\mu_{2})\right], 
\end{equation}
where $\underline{I}(\mu_{1},\mu_{2})=\max\left\{ \mu_{1}-C_{1},\mu_{2}-C_{2}\right\} $
and $\overline{I}(\mu_{1},\mu_{2})=\min\left\{ \mu_{1}+C_{1},\mu_{2}+C_{2}\right\} $.
The space of $\left(\mu_{0},\mu_{1},\mu_{2}\right)$ is convex and
centrosymmetric, so $\mu_{1}=\mu_{2}=0$ is a global least favorable
point. 

Suppose there is one unique nearest neighbor, e.g., $0<C_{1}<C_{2}$.
Then, note at $\left(0,0\right)$, $\overline{I}(\mu_{1},\mu_{2})$
is fully differentiable with $\overline{I}^{(1)}(0,0)=(1,0)$. As
a result, our Theorems \ref{thm:full-diff} and \ref{thm:full-diff-asymptotic}
apply. In particular, the limit identified set in the sense of (\ref{eq:limit.id.set.simple})
would be
\begin{equation}\label{eq:ik.limit.bound.1}
\bar{I}_{\infty}(\mu_{1},\mu_{2})=\mu_{1}+\overline{C},~~\underline{I}_{\infty}(\mu_{1},\mu_{2})=\mu_{1}-\overline{C},    
\end{equation}
where $\overline{C}=\sqrt{n}\overline{I}(0,0)=\sqrt{n}C_{1}$. Note
how the geometry of \eqref{eq:ik.limit.bound.1} differs from the original identified set \eqref{eq:IK.original.bound},  precisely due to the way we constructed our shrinking identified set in \eqref{eq:local.id.U}-\eqref{eq:local.id.U.2}
that only incorporates the local information of $I(\cdotp)$ around $\left(0,0\right)$.\footnote{This is not the only reasonable way to approximate the decision problem. In Appendix \ref{sec:global.ID}, we discuss an alternative approach to construct the shrinking identified set, which allows incorporating more features of the original identified set. }  Let $\hat{\vartheta}$ be the MLE for $\vartheta$ and define $\hat{\mu}:=(\hat{\mu}_{1},\hat{\mu}_{2}):=(\mu_{1}(\hat{\vartheta}),\mu_{2}(\hat{\vartheta}))$.
Suppose 
\[
\Sigma_{0}:=\left(\begin{array}{cc}
\sigma_{1}^{2} & \rho\sigma_{1}\sigma_{2}\\
\rho\sigma_{1}\sigma_{2} & \sigma_{2}^{2}
\end{array}\right)
\]
is positive definite with $\sigma_{1},\sigma_{2}>0$ and $\rho\in(-1,1)$.
Let 
\[
\hat{\Sigma}:=\left(\begin{array}{cc}
\hat{\sigma}_{1}^{2} & \hat{\rho}\hat{\sigma}_{1}\hat{\sigma}_{2}\\
\hat{\rho}\hat{\sigma}_{1}\hat{\sigma}_{2} & \hat{\sigma}_{2}^{2}
\end{array}\right)
\]
be a positive definite and consistent estimator of $\Sigma_{0}$.
Theorems \ref{thm:full-diff} and \ref{thm:full-diff-asymptotic}
imply that the asymptotically optimal decision rule would be either
$\mathbf{1}\left\{ \hat{\mu}_{1}\geq0\right\} $ or its randomized
version depending on the magnitude of $2\overline{C}_{1}^{2}/\pi$
versus  $\hat{\sigma}_{1}^{2}$.

Now, consider the case when $C_{1}=C_{2}=C>0$. Then, at
$\left(0,0\right)$, $\overline{I}(\mu_{1},\mu_{2})$ is only directionally
differentiable and our Theorems \ref{thm:non-diff} and \ref{thm:non-diff-asymptotic}
apply. We set $\overline{C}=-\underline{C}=\sqrt{n}C$, and can calculate
that 
\[
\partial\overline{I}(0,0)=\left\{ \left(e,1-e\right)^{\top}\in\mathbb{R}^{2}\mid e\in[0,1]\right\} .
\]
For each $\mathbf{v}=(\mathbf{v}_{1},\mathbf{v}_{2})$ with unit length,
we have $\overline{I}^{(1)}((0,0);\mathbf{v})=\min\left\{ \mathbf{v}_{1},\mathbf{v}_{2}\right\} $,
which is nonnegative for all $\mathbf{v}$ in the first quadrant (i.e.,
both $\mathbf{v}_{1}$ and $\mathbf{v}_{2}$ are nonnegative). Analogously,
$\underline{I}^{(1)}((0,0);\mathbf{v})=\max\left\{ \mathbf{v}_{1},\mathbf{v}_{2}\right\} $.
Thus, in this example, the identified set in the limit game becomes
\[
\bar{I}_{\infty}(\mu)=\min\left\{ \mu_{1},\mu_{2}\right\} +\overline{C},~~\underline{I}_{\infty}(\mu)=\max\left\{ \mu_{1},\mu_{2}\right\} -\overline{C}.
\]
Moreover, along each direction $\mathbf{v}$ in the first quadrant,
the maximum possible length that $\mu$ can take is 
\begin{align*}
\overline{t}(\mathbf{v}) & =\frac{2\overline{C}}{\max\left\{ \mathbf{v}_{1},\mathbf{v}_{2}\right\} -\min\left\{ \mathbf{v}_{1},\mathbf{v}_{2}\right\} }=\begin{cases}
\frac{2\overline{C}}{\mathbf{v}_{1}-\mathbf{v}_{2}}, & \mathbf{v}_{1}>\mathbf{v}_{2},\\
\infty, & \mathbf{v}_{1}=\mathbf{v}_{2},\\
\frac{2\overline{C}}{\mathbf{v}_{2}-\mathbf{v}_{1}}, & \mathbf{v}_{1}<\mathbf{v}_{2}.
\end{cases}
\end{align*}
See also Figure \ref{fig:ik.example} for an illustration of $M_{\mu,\infty}$
and $\left\{ \mu\in M_{\mu,\infty}:\overline{I}^{(1)}(\mathbf{0};\mu)\geq0\right\} $
in this example. 
\begin{figure}
\centering 

\begin{tikzpicture}[
    axis/.style={->, thick, black},
    grayline/.style={very thick, gray},
    redline/.style={very thick, red},
    blueline/.style={very thick, blue}
]

% ============================================================
% PANEL 1 (left): 2D flat view
% ============================================================
\begin{scope}[xshift=0cm]
  \def\xmin{-2.8}
  \def\xmax{2.8}
  \def\ymin{-2.8}
  \def\ymax{2.8}
  \def\constant{1.5}

  \begin{scope}
    \clip (\xmin,\ymin) rectangle (\xmax,\ymax);
    \begin{scope}
      \clip (0,0) rectangle (\xmax,\ymax);
      \fill[orange!30]
        (\xmin, {\xmin-\constant})
        -- (\xmin, {\xmin+\constant})
        -- (\xmax, {\xmax+\constant})
        -- (\xmax, {\xmax-\constant})
        -- cycle;
    \end{scope}
    \draw[grayline]  (\xmin, {\xmin+\constant}) -- (\xmax, {\xmax+\constant});
    \draw[grayline] (\xmin, {\xmin-\constant}) -- (\xmax, {\xmax-\constant});
    \begin{scope}
      \clip (0,0) rectangle (\xmax,\ymax);
      \draw[gray, thin]  (\xmin, {\xmin+\constant}) -- (\xmax, {\xmax+\constant});
      \draw[gray, thin] (\xmin, {\xmin-\constant}) -- (\xmax, {\xmax-\constant});
    \end{scope}
    \draw[very thick, magenta] (\xmin,\ymin) -- (\xmax,\ymax);
    \draw[->, very thick, magenta] (\xmax-0.01,\ymax-0.01) -- (\xmax,\ymax);
  \end{scope}

  \draw[axis] (\xmin,0) -- (\xmax,0) node[right] {$\mu_1$};
  \draw[axis] (0,\ymin) -- (0,\ymax) node[above] {$\mu_2$};

  \node[left]  at (0, \constant+0.1)   {$2\overline{C}$};
  \node[right] at (0, -\constant)      {$-2\overline{C}$};
  \node[below] at (\constant+0.1, 0)   {$2\overline{C}$};
  \node[above] at (-\constant-0.2, 0)  {$-2\overline{C}$};

  \node[magenta, font=\small] at (2.3, 1.62)
    {$\dfrac{\Sigma_0 \overline{p}}{\|\Sigma_0 \overline{p}\|}$};
\end{scope}

% ============================================================
% PANEL 2 (right): 3D view
% ============================================================
\begin{scope}[xshift=8cm]
  \tdplotsetmaincoords{80}{-20}
  \begin{scope}[tdplot_main_coords, scale=0.9, xshift=0cm, yshift=-0.75cm]
    \def\tlo{0}
    \def\thi{3.5}
    \def\off{1.2}
    \def\apex{1.5}
    % Tilt factor: z = t * slope lifts the strip ~30 deg from horizontal
    % (arctan(slope * sqrt(2)) ~ 30 deg  =>  slope ~ 0.43)
    \def\slope{0.43}

    % --- BOTTOM HALF (blue, tilted) ---
    \fill[blue!20, opacity=0.75]
      ({\tlo-\off}, {\tlo+\off}, {\tlo*\slope})
      -- ({\thi-\off}, {\thi+\off}, {\thi*\slope})
      -- ({\thi}, {\thi}, {\thi*\slope-\apex})
      -- ({\tlo}, {\tlo}, {\tlo*\slope-\apex})
      -- cycle;
    \draw[blue!60, thick]
      ({\tlo-\off}, {\tlo+\off}, {\tlo*\slope})
      -- ({\tlo}, {\tlo}, {\tlo*\slope-\apex});
    \draw[blue!60, thick]
      ({\thi-\off}, {\thi+\off}, {\thi*\slope})
      -- ({\thi}, {\thi}, {\thi*\slope-\apex});
    \draw[blue!60, thick]
      ({\tlo}, {\tlo}, {\tlo*\slope-\apex})
      -- ({\thi}, {\thi}, {\thi*\slope-\apex});

    \fill[blue!20, opacity=0.75]
      ({\tlo+\off}, {\tlo-\off}, {\tlo*\slope})
      -- ({\thi+\off}, {\thi-\off}, {\thi*\slope})
      -- ({\thi}, {\thi}, {\thi*\slope-\apex})
      -- ({\tlo}, {\tlo}, {\tlo*\slope-\apex})
      -- cycle;
    \draw[blue!60, thick]
      ({\tlo+\off}, {\tlo-\off}, {\tlo*\slope})
      -- ({\tlo}, {\tlo}, {\tlo*\slope-\apex});
    \draw[blue!60, thick]
      ({\thi+\off}, {\thi-\off}, {\thi*\slope})
      -- ({\thi}, {\thi}, {\thi*\slope-\apex});
    \draw[blue!60, thick]
      ({\tlo}, {\tlo}, {\tlo*\slope-\apex})
      -- ({\thi}, {\thi}, {\thi*\slope-\apex});

    % --- Magenta direction arrow (flat, unchanged) ---
    \draw[very thick, magenta, ->]
      (-2.5, -2.5, 0) -- (6.5, 6.5, 0);

    \node[magenta, font=\small] at (-3.5, -3.5, 0){};

    % --- TOP HALF (magenta, tilted) ---
    \fill[magenta!20, opacity=0.75]
      ({\tlo-\off}, {\tlo+\off}, {\tlo*\slope})
      -- ({\thi-\off}, {\thi+\off}, {\thi*\slope})
      -- ({\thi}, {\thi}, {\thi*\slope+\apex})
      -- ({\tlo}, {\tlo}, {\tlo*\slope+\apex})
      -- cycle;
    \draw[magenta!60, thick]
      ({\tlo-\off}, {\tlo+\off}, {\tlo*\slope})
      -- ({\tlo}, {\tlo}, {\tlo*\slope+\apex});
    \draw[magenta!60, thick]
      ({\thi-\off}, {\thi+\off}, {\thi*\slope})
      -- ({\thi}, {\thi}, {\thi*\slope+\apex});
    \draw[magenta!60, thick]
      ({\tlo}, {\tlo}, {\tlo*\slope+\apex})
      -- ({\thi}, {\thi}, {\thi*\slope+\apex});

    \fill[magenta!20, opacity=0.75]
      ({\tlo+\off}, {\tlo-\off}, {\tlo*\slope})
      -- ({\thi+\off}, {\thi-\off}, {\thi*\slope})
      -- ({\thi}, {\thi}, {\thi*\slope+\apex})
      -- ({\tlo}, {\tlo}, {\tlo*\slope+\apex})
      -- cycle;
    \draw[magenta!60, thick]
      ({\tlo+\off}, {\tlo-\off}, {\tlo*\slope})
      -- ({\tlo}, {\tlo}, {\tlo*\slope+\apex});
    \draw[magenta!60, thick]
      ({\thi+\off}, {\thi-\off}, {\thi*\slope})
      -- ({\thi}, {\thi}, {\thi*\slope+\apex});
    \draw[magenta!60, thick]
      ({\tlo}, {\tlo}, {\tlo*\slope+\apex})
      -- ({\thi}, {\thi}, {\thi*\slope+\apex});

    % --- MAGENTA LINES (tilted: near end at z=0, far end rises) ---
    \draw[thick, magenta!60]
      ({\tlo+\off}, {\tlo-\off}, {\tlo*\slope})
      -- ({\thi+\off}, {\thi-\off}, {\thi*\slope});
    \draw[thick, magenta!60]
      ({\tlo-\off}, {\tlo+\off}, {\tlo*\slope})
      -- ({\thi-\off}, {\thi+\off}, {\thi*\slope});

    \draw[thick, ->] (0,0,-2.25) -- (0,0,4)   node[anchor=south]{$\mu_0$};
    \draw[thick, ->] (0,-6.5,0)  -- (0,6.5,0)  node[anchor=south]{$\mu_2$};
    \draw[thick, ->] (-3.5,0,0)  -- (4.5,0,0)  node[anchor=north]{$\mu_1$};
  \end{scope}
\end{scope}

\end{tikzpicture}

\caption{Left: Illustration of $M_{\mu,\infty}$ in the evidence aggregation example when $C_1=C_2=C$. The
highlighted area represents $\{ \mu\in M_{\mu,\infty}:\overline{I}^{(1)}(\mathbf{0};\mu)\geq0\}$. Nature's equilibrium choice of $(\mu_1,\mu_2)$ in the limit game is along the red line whenever $\rho\leq\text{min}\{ \sigma_{1}/{\sigma_{2}},\sigma_{2}/\sigma_{1}\}$. Right: A 3D illustration of the parameter space in the same example. 
\label{fig:ik.example}}
\end{figure}
Given the simple structure of $\partial\overline{I}(0,0)$, the unique
solution of the empirical analog of \eqref{eq:quadratic.program}
is $\hat{p}=\left(\hat{e},1-\hat{e}\right)^{\top}$, where 
\[
\hat{e}=\left[\frac{\hat{\sigma}_{2}^{2}-\hat{\rho}\hat{\sigma}_{1}\hat{\sigma}_{2}}{\hat{\sigma}_{1}^{2}+\hat{\sigma}_{2}^{2}-2\hat{\rho}\hat{\sigma}_{1}\hat{\sigma}_{2}}\right]_{0}^{1}.
\]
Moreover, 
\begin{align*}
\underline{I}^{(1)}(\mathbf{0};\hat{\Sigma}\hat{p})-\overline{I}^{(1)}(\mathbf{0};\hat{\Sigma}\hat{p}) & =\max\left\{ \hat{\Sigma}\hat{p}\right\} -\min\left\{ \hat{\Sigma}\hat{p}\right\} =\begin{cases}
0, & \hat{\rho}\leq\text{min}\left\{ \hat{\sigma}_{1}/\hat{\sigma}_{2},\hat{\sigma}_{2}/\hat{\sigma}_{1}\right\} ,\\
\hat{\rho}\hat{\sigma}_{1}\hat{\sigma}_{2}-\hat{\sigma}_{2}^{2}, & \hat{\rho}>\hat{\sigma}_{2}/\hat{\sigma}_{1},\\
\hat{\rho}\hat{\sigma}_{1}\hat{\sigma}_{2}-\hat{\sigma}_{1}^{2}, & \hat{\rho}>\hat{\sigma}_{1}/\hat{\sigma}_{2}.
\end{cases}
\end{align*}
Theorem \ref{thm:non-diff-asymptotic} can be directly applied. In
particular, if $\hat{\rho}\leq\min\left\{ \hat{\sigma}_{1}\big/\hat{\sigma}_{2},\hat{\sigma}_{2}\big/\hat{\sigma}_{1}\right\} $,
$\hat{e}\in[0,1]$ and information from both countries will always
be used if the inequality is strict. The asymptotically optimal decision
is either a randomized one based on $\hat{p}^{\top}\hat{\mu}$ or
a threshold rule $\mathbf{1}\{\hat{p}^{\top}\hat{\mu}\geq0\}$, depending
on the magnitude of $(\pi/2)\hat{p}^{\top}\hat{\Sigma}\hat{p}$ versus
$\overline{C}^{2}$. This characterization agrees with those in \citet[Section 3.2]{ishihara2021}
and \citet[Example 1]{yata2021} that focused on $\rho=0$.\footnote{\citet{ishihara2021} characterize the efficient weight numerically
for a class of non-randomized threshold rules. \citet{yata2021} characterizes
the optimal decision with techniques analogous to \citet{armstrong2018optimal}.
To be clear, their general results also apply to $\rho\neq0$. } The cases for $\hat{\rho}>\hat{\sigma}_{2}/\hat{\sigma}_{1}$ or
$\hat{\rho}>\hat{\sigma}_{1}/\hat{\sigma}_{2}$ are more involved.
For example, suppose $\hat{\rho}>\hat{\sigma}_{2}/\hat{\sigma}_{1}$.
Then, $\hat{p}=(0,1)^{\top}$. If $\left(\pi/2\right)^{1/2}\hat{\sigma}_{2}<\overline{C}$,
an asymptotic optimal rule is a randomized one based on $\hat{\mu}_{2}$
only. Let $\hat{x}$ solve $\max_{x\in[0,\infty)}(\overline{C}+x)\Phi\left(-x/\hat{\sigma}_{2}\right)$.
If $\left(\pi/2\right)^{1/2}\hat{\sigma}_{2}\geq\overline{C}$ and
$\hat{x}\leq(2\overline{C}\hat{\sigma}_{2})/(\hat{\rho}\hat{\sigma}_{1}-\hat{\sigma}_{2})$,
our optimal decision is simply $\mathbf{1}\left\{ \hat{\mu}_{2}\geq0\right\} $.
Otherwise, case (iii) of Theorem \ref{thm:non-diff} would apply.
The optimal weight of aggregating information from the two countries
must be computed by finding $\hat{\mu}^{o}=\arg\max_{\{\mu\in M_{\mu,\infty}:\min\left\{ \mu_{1},\mu_{2}\right\} \geq0\}}r(\mu;\hat{\Sigma})$,
and the optimal decision $\mathbf{1}\bigl\{\left(\mathbf{\hat{\mu}}^{o}\right)^{\top}\hat{\Sigma}^{-1}\hat{\mu}\geq0\bigr\}$
may use information from both countries again.

\section{Conclusion}\label{sec:conclude}
In this paper, we propose a new asymptotic framework to study treatment assignment problems with partially identified parameters. Our proposal has two novel features: (i) we consider a sequence of drifting parameters such that the extent of partial identification vanishes at the same rate as sampling uncertainty; (ii) we localize the reduced-form parameter at what we call a \emph{global least favorable point}, which we argue is a natural counterpart of the  \emph{hardest case} considered by \cite{HiranoPorter2009} for point-identified settings. Our approach characterizes the limit treatment choice problem in a Gaussian location shift model with a limit local identified set; to show this, we not only establish a rigorous link between finite-sample and limit problems but also generalize earlier analyses of the latter. Similarly, we show how to handle a large class of non-centrosymmetric models. Finally, our approach is computationally inexpensive, providing a simple recipe for applied researchers to find approximately optimal rules for a wide range of problems with partially identified parameters. 

Several extensions and open questions remain. First, it may be possible to solve the limit game with directional differentiability of the bounds even with centrosymmetry dropped (e.g., by utilizing Lemma \ref{lem:CoMo}(i)), although we suspect that this will involve more technicalities; we leave it for future research. Second, we also do not consider reduced-form parameters that are semiparametric in nature. It remains to be investigated to what extent our results for parametric models are relevant for their semiparametric counterparts.
Third, it may also be interesting to investigate whether one can combine our approach with recent ones by \citet{BenTal}, \citet{eisenhauer}, or \citet{andrews2025certified}, who take the confidence sets for identified quantities as constraining  the state space and solve the decision problem subject to those restrictions. 

\appendix

\section{Proofs of Main Results}\label{sec:app.1}

\begin{lemma}\label{lem:CoMo}
Let Assumptions \ref{asm:1} and \ref{asm:2} hold. Then the following
statements are true:
\begin{itemize}
\item[(i)] If $\overline{I}(\cdotp)$ and $\underline{I}(\cdotp)$ are directionally
differentiable at $\mu_{0}$ for all $\mathbf{v}\in\mathbb{R}^{d_{\mu}}$,
then 
\[
\underline{I}^{(1)}(\mu_{0};\mathbf{v})\geq\overline{I}^{(1)}(\mu_{0};\mathbf{v})\cdotp\kappa_{0},\forall\mathbf{v}\in\mathbb{R}^{d_{\mu}}.
\]
\item[(ii)] If $\overline{I}(\cdotp)$ and $\underline{I}(\cdotp)$ are differentiable
at $\mu_{0}$, we have $\underline{I}^{(1)}(\mu_{0})=\overline{I}^{(1)}(\mu_{0})\cdotp\kappa_{0}$. 
\end{itemize}
\end{lemma}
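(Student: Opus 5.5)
The approach is a first-order optimality argument. By Assumption \ref{asm:2}, $\mu_0$ maximizes $\overline{R}(\cdot)$ over $M_\mu$ and lies in $\operatorname{int}(M_\mu)$. Since $\overline{I}(\mu_0)>0>\underline{I}(\mu_0)$, continuity of the bounds implies that for $\mu$ near $\mu_0$ the same sign pattern holds. There, $\overline{R}(\mu)=F(\overline{I}(\mu),\underline{I}(\mu))$ with
\[
F(a,b):=\frac{-ab}{a-b}.
\]
Continuity of the bounds near $\mu_0$ follows from Hadamard directional differentiability in Assumption \ref{asm:diff}(ii); otherwise I would add it to the hypotheses of the lemma. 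The function $F$ is smooth on $\{a>0>b\}$, and its partial derivatives at $(a,b)=(\overline{I}(\mu_0),\underline{I}(\mu_0))$ are
\[
F_a=\frac{-b(a-b)+ab}{(a-b)^2}=\frac{b^2}{(a-b)^2}>0,\qquad F_b=\frac{-a(a-b)-ab}{(a-b)^2}=\frac{-a^2}{(a-b)^2}<0 .
\]

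For (i), fix $\mathbf{v}$ and set $\mu_\lambda=\mu_0+\lambda\mathbf{v}$, $\lambda\downarrow 0$. This point lies in $M_\mu$ for small $\lambda$ because $\mu_0$ is interior. By the chain rule for directional derivatives (a smooth outer function composed with directionally differentiable inner maps),
\[
\lim_{\lambda\downarrow0}\frac{\overline{R}(\mu_\lambda)-\overline{R}(\mu_0)}{\lambda}=F_a\,\overline{I}^{(1)}(\mu_0;\mathbf{v})+F_b\,\underline{I}^{(1)}(\mu_0;\mathbf{v}).
\]
Global maximality makes the difference quotients nonpositive, so the limit is $\le 0$. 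Hence
\[
b^2\,\overline{I}^{(1)}(\mu_0;\mathbf{v})-a^2\,\underline{I}^{(1)}(\mu_0;\mathbf{v})\le 0,
\]
and dividing by $a^2>0$ gives $\underline{I}^{(1)}(\mu_0;\mathbf{v})\ge \kappa_0\overline{I}^{(1)}(\mu_0;\mathbf{v})$, since $\kappa_0=b^2/a^2$.

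For (ii), under full differentiability the directional derivatives are linear in $\mathbf{v}$. Applying (i) to both $\mathbf{v}$ and $-\mathbf{v}$ yields
\[
(\underline{I}^{(1)}(\mu_0)-\kappa_0\overline{I}^{(1)}(\mu_0))^\top\mathbf{v}\ge0\quad\text{and}\quad\le 0
\]
for every $\mathbf{v}$, so the gradient identity holds.

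The main obstacle is the chain-rule step. One must verify that $\overline{R}$ equals the smooth formula on a whole neighborhood, which needs the bounds to be continuous and the truncation $\max\{\cdot,0\}$ to be inactive. This is handled by the strict inequalities at $\mu_0$. Beyond that, only the first-order expansion $\overline{I}(\mu_\lambda)=\overline{I}(\mu_0)+\lambda\overline{I}^{(1)}(\mu_0;\mathbf{v})+o(\lambda)$ (and the same for $\underline{I}$) is needed. Plugging these into a Taylor expansion of $F$ gives the limit directly.
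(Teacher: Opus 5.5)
Your proof is correct and follows the paper's own argument. The paper also uses the first-order necessary condition $\overline{R}^{(1)}(\mu_0;\mathbf{v})\le 0$ at the global maximizer $\mu_0$. Your chain rule through $F(a,b)=-ab/(a-b)$ yields exactly its inequality $\overline{I}^{(1)}(\mu_0;\mathbf{v})(\underline{I}(\mu_0))^2\le\underline{I}^{(1)}(\mu_0;\mathbf{v})(\overline{I}(\mu_0))^2$, and your $\pm\mathbf{v}$ step for (ii) is the paper's $\overline{R}^{(1)}(\mu_0)=0$.

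You also do not need Assumption 3 for continuity. The lemma's own hypothesis that the directional derivative along $\mathbf{v}$ is finite already forces $\overline{I}(\mu_0+\lambda\mathbf{v})\to\overline{I}(\mu_0)$ and $\underline{I}(\mu_0+\lambda\mathbf{v})\to\underline{I}(\mu_0)$ as $\lambda\downarrow 0$. That ray-wise continuity is all your argument uses.
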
 
\begin{proof}
Under Assumption
\ref{asm:2}, the value of $\overline{R}(\cdot)$ is strictly positive at $\mu_0$, which also 
cannot be increased by deviating from $\mu_{0}$. As $\underline{I}(\cdotp)$ and $\overline{I}(\cdotp)$
are directionally differentiable, so is $\overline{R}(\cdotp)$. Then, we must have 
\[
\overline{R}^{(1)}(\mu_{0};\mathbf{v})\leq0,\forall\mathbf{v}\in\mathbb{R}^{d_{\mu}},
\]
which after algebra, must imply (note $\overline{I}(\mu_{0})>0>\underline{I}(\mu_{0})$
and $\overline{I}(\mu_{0})-\underline{I}(\mu_{0})>0$)
\[
\overline{I}^{(1)}(\mu_{0};\mathbf{v})\left(\underline{I}(\mu_{0})\right)^{2}\leq\underline{I}^{(1)}(\mu_{0};\mathbf{v})\left(\overline{I}(\mu_{0})\right)^{2}.
\]
If $\underline{I}(\cdotp)$ and $\overline{I}(\cdotp)$ are differentiable,
so is $\overline{R}(\cdotp)$. It follows then 
$\overline{R}^{(1)}(\mu_{0})=0$,
implying 
\[
\overline{I}^{(1)}(\mu_{0})\left(\underline{I}(\mu_{0})\right)^{2}=\underline{I}^{(1)}(\mu_{0})\left(\overline{I}(\mu_{0})\right)^{2}.\] 
\end{proof}

\subsection{Proof of Theorem \ref{thm: general}}
Let $d_F$  be a sequence of rules (we omit the index $n$ for notational simplicity) that matches with $d^*_{\infty}$ in the sense of \eqref{eq:matching.rule}.

\textbf{Step 1}: We show 
$\sup_{J}\liminf_{n\rightarrow\infty}\sup_{\theta_n\in\Theta_{n}(J)}\sqrt{n}R(d_{F},\theta_n)\geq R^{*}$.
For each $\theta_n=\theta_0+\theta_h/\sqrt{n}$, we have 
$m(\theta_n)=\gamma_{0}+h/\sqrt{n}$ and $U(\theta_n)=\mu^{*}/\sqrt{n}$, where $h\in M_h$ and $\mu^{*}\in I_{\infty}(h)$. Thus, we can write 
\begin{align*}
\sqrt{n}R(d_{F},\theta_n) & =\sqrt{n}\frac{\mu^{*}}{\sqrt{n}}\left(\mathbf{1}\left\{ \frac{\mu^{*}}{\sqrt{n}}\geq0\right\} -\mathbb{E}_{\gamma_{0}+\frac{h}{\sqrt{n}}}[d_{F}]\right)
 =\mu^{*}\left(\mathbf{1}\left\{ \mu^{*}\geq0\right\} -\mathbb{E}_{\gamma_{0}+\frac{h}{\sqrt{n}}}[d_{F}]\right)\\
 & :=R\left(d_{F},\gamma_{0}+\frac{h}{\sqrt{n}},\mu^{*}\right).
\end{align*}
Note for each finite subset $J$ of $M_h$ and all $n$ sufficiently large,   we have $\left((\gamma_{0}+h/\sqrt{n})^{\top},\mu^{*}/\sqrt{n}\right)\in \Theta$ for all $h\in J$ and $\mu^{*}\in I_{\infty}(h)$. Thus, for all $n$ sufficiently large, 
\[\sup_{\theta_n\in\Theta_{n}(J)}\sqrt{n}R(d_{F},\theta_n)=\sup_{h\in J,\mu^{*}\in I_{\infty}(h)}R\left(d_{F},\gamma_{0}+\frac{h}{\sqrt{n}},\mu^{*}\right).\]
For $d^*_{\infty}$ in the limit game, write
\[
R_{\infty}(d^*_{\infty},h,\mu^{*}):=\mu^{*}\left(\mathbf{1}\left\{ \mu^{*}\geq0\right\} -\mathbb{E}[d^*_{\infty}]\right).
\]
Since  $d_{F}$ is matched with $d^*_{\infty}$ in the sense of 
\eqref{eq:matching.rule}, we have that, as $n\rightarrow\infty$, for each $h\in M_{h}$
and each $\mu^{*}\in I_{\infty}(h)$, 
\[
R\left(d_{F},\gamma_{0}+\frac{h}{\sqrt{n}},\mu^{*}\right)\rightarrow R_{\infty}(d^*_{\infty},h,\mu^{*}).
\]
Conclude that, for each finite subset $J$ of $M_{h}$, we have
\begin{align*}
\liminf_{n\rightarrow\infty}\sup_{\theta_n\in\Theta_{n}(J)}\sqrt{n}R(d_{F},\theta_n) & =\liminf_{n\rightarrow\infty}\sup_{h\in J,\mu^{*}\in I_{\infty}(h)}R\left(d_{F},\gamma_{0}+\frac{h}{\sqrt{n}},\mu^{*}\right)\\
 & \geq\sup_{h\in J,\mu^{*}\in I_{\infty}(h)}\liminf_{n\rightarrow\infty}R\left(d_{F},\gamma_{0}+\frac{h}{\sqrt{n}},\mu^{*}\right)\\
 & =\sup_{h\in J,\mu^{*}\in I_{\infty}(h)}\lim_{n\rightarrow\infty}R\left(d_{F},\gamma_{0}+\frac{h}{\sqrt{n}},\mu^{*}\right)\\
 & =\sup_{h\in J,\mu^{*}\in I_{\infty}(h)}R_{\infty}(d^*_{\infty},h,\mu^{*}).
\end{align*}
Therefore, 
\begin{align*}
\sup_{J}\liminf_{n\rightarrow\infty}\sup_{\theta_n\in\Theta_{n}(J)}\sqrt{n}R(d_{F},\theta_n) & \geq\sup_{J}\sup_{h\in J,\mu^{*}\in I_{\infty}(h)}R_{\infty}(d^*_{\infty},h,\mu^{*})=\sup_{h\in M_{h},\mu^{*}\in I_{\infty}(h)}R_{\infty}(d^*_{\infty},h,\mu^{*})=R^{*},
\end{align*}
where the last equality follows as $d^*_{\infty}$ is a solution of \eqref{eq:limit.game.1}.

\textbf{Step 2}: We show 
$\sup_{J}\liminf_{n\rightarrow\infty}\sup_{\theta_n\in\Theta_{n}(J)}\sqrt{n}R(d_{F},\theta_n)\leq R^{*}$. Fix a $\theta_n\in\Theta_n(J)$. We have
$m(\theta_n)=\gamma_{0}+\frac{h}{\sqrt{n}}$ and $U(\theta_n)=\frac{\mu^{*}}{\sqrt{n}}$, where $h\in J$ and $\mu^{*}\in I_{\infty}(h)$. Then, $\sqrt{n}R(d_{F},\theta_n)  =\mu^{*}\left(\mathbf{1}\left\{ \mu^{*}\geq0\right\} -\mathbb{E}_{\gamma_{0}+\frac{h}{\sqrt{n}}}[d_{F}]\right)$ and $R_{\infty}(d^*_{\infty},h,\mu^{*})=\mu^{*}\left(\mathbf{1}\left\{ \mu^{*}\geq0\right\} -\mathbb{E}[d^*_{\infty}]\right)$. Therefore, 
\begin{align*}
 & \left|\sqrt{n}R(d_{F},\theta_n)-R_{\infty}(d^*_{\infty},h,\mu^{*})\right|\\
= & \left|\mu^{*}\right|\left|\mathbb{E}_{\gamma_{0}+\frac{h}{\sqrt{n}}}[d_{F}]-\mathbb{E}[d^*_{\infty}]\right|\\
\leq & \max\left\{ \left|\overline{I}_{\infty}(h)\right|,\left|\underline{I}_{\infty}(h)\right|\right\} \left|\mathbb{E}_{\gamma_{0}+\frac{h}{\sqrt{n}}}[d_{F}]-\mathbb{E}[d^*_{\infty}]\right|.
\end{align*}
Note $d_{F}$ is matched with $d^*_{\infty}$ in the sense of \eqref{eq:matching.rule}.
Thus, for each $h\in J$ and each $\varepsilon>0$, there exists
some $N_{h}$ such that, for all $n>N_{h}$, 
\[
\left|\sqrt{n}R(d_{F},\theta_n)-R_{\infty}(d^*_{\infty},h,\mu^{*})\right|<\varepsilon
\]
for all $\mu^{*}\in I_{\infty}(h)$, implying that for each $h\in J$
and all $\mu^{*}\in I_{\infty}(h)$:
\begin{equation}
\sqrt{n}R(d_{F},\theta_n)\leq R_{\infty}(d^*_{\infty},h,\mu^{*})+\varepsilon\label{eq:pf.lem.asy.2}
\end{equation}
for $n>N_{h}$. Since $J$ is finite, we can pick $N=\max_{h\in J}\left\{ N_{h}\right\} $
so that, when  $n>N$, (\ref{eq:pf.lem.asy.2}) holds for all $h\in J$
and $\mu^{*}\in I_{\infty}(h)$. That implies for $n>N$, we have
\[
\sup_{\theta_n\in\Theta_{n}(J)}\sqrt{n}R(d_{F},\theta_n)\leq\sup_{h\in J,\mu^{*}\in I_{\infty}(h)}R_{\infty}(d^*_{\infty},h,\mu^{*})+\varepsilon,
\]
which in turn, means
\[
\liminf_{n\rightarrow\infty}\sup_{\theta_n\in\Theta_{n}(J)}\sqrt{n}R(d_{F},\theta_n)\leq\sup_{h\in J,\mu^{*}\in I_{\infty}(h)}R_{\infty}(d^*_{\infty},h,\mu^{*})+\varepsilon.
\]
As $\varepsilon>0$ is arbitrary, conclude that 
\[
\liminf_{n\rightarrow\infty}\sup_{\theta_n\in\Theta_{n}(J)}\sqrt{n}R(d_{F},\theta_n)\leq\sup_{h\in J,\mu^{*}\in I_{\infty}(h)}R_{\infty}(d^*_{\infty},h,\mu^{*}),
\]
and 
\begin{align*}
\sup_{J}\liminf_{n\rightarrow\infty}\sup_{\theta_n\in\Theta_{n}(J)}\sqrt{n}R(d_{F},\theta_n) & \leq\sup_{J}\sup_{h\in J,\mu^{*}\in I_{\infty}(h)}R_{\infty}(d^*_{\infty},h,\mu^{*})
 =\sup_{h\in M_{h},\mu^{*}\in I_{\infty}(h)}R_{\infty}(d^*_{\infty},h,\mu^{*}) =R^{*}.
\end{align*}

\subsection{Proof of Theorem \ref{thm:full-diff}}
We focus on the case $\kappa_{0}<1$, in which case note $\overline{C}>-\underline{C}>0$.
The proof for case $\kappa_{0}\geq1$ is analogous. Denote by $\mathbf{R}_{\upsilon}^{*}$ the value of the limit game among rules
that depend on $\Delta$ only via $\hat{\upsilon}=\left(\overline{I}^{(1)}(\mu_{0})\right)^{\top}G_{0}\Delta\sim\mathcal{N}\left(\upsilon,\sigma^{2}\right)$,
i.e., the value of \eqref{eq:one.dim.limit.game.smooth}.

\subsubsection*{Proof of Statement (i)}

\textbf{Step 1}: Note when $\sigma<\overline{\sigma}$, $d_{\infty,RT}$
is well defined, and $\mathbb{E}[d_{\infty,RT}(\hat{\upsilon})]=\Phi\left(\frac{\upsilon-t^{*}}{\overline{\sigma}}\right)$. Therefore, the worst-case expected regret of $d_{\infty,RT}$ in game \eqref{eq:one.dim.limit.game.smooth} writes  
\begin{align*}
\mathbf{R}_{RT}^{*} & :=\sup_{\upsilon\in\mathbb{R},\left(\kappa_{0}-1\right)\upsilon\leq\overline{C}-\underline{C},\upsilon^{*}\in I(\upsilon)}\upsilon^{*}\left(\mathbf{1}\left\{ \upsilon^{*}\geq0\right\} -\mathbb{E}[d_{RT}(\hat{\upsilon})]\right)\\
 & =\max\left\{ \sup_{\upsilon\in[-\overline{C},\infty)}R_{2}(t^{*},\upsilon;\overline{\sigma}),\sup_{\upsilon\in\left[\frac{\overline{C}-\underline{C}}{\kappa_{0}-1},-\frac{\underline{C}}{\kappa_{0}}\right]}R_{1}(t^{*},\upsilon;\overline{\sigma})\right\},
\end{align*}
where the second equality uses $\max\left\{ \frac{\overline{C}-\underline{C}}{\kappa_{0}-1},-\overline{C}\right\} =-\overline{C}$
when $\kappa_0<1$ (note $\kappa_{0}=\underline{C}^{2}/{\overline{C}^{2}}$). By construction, 
\[
\frac{\partial R_{2}(t^{*},\upsilon;\overline{\sigma})}{\partial\upsilon}\mid_{\upsilon=0}=0,\frac{\partial R_{1}(t^{*},\upsilon;\overline{\sigma})}{\partial\upsilon}\mid_{\upsilon=0}=0.
\]
Furthermore, Lemma \ref{lem:full.diff.lem.1} implies that $R_{2}(t^{*},\cdotp;\overline{\sigma})$
is first increasing then decreasing. Therefore, 
\[
\sup_{\upsilon\in[-\overline{C},\infty)}R_{2}(t^{*},\upsilon;\overline{\sigma})=R_{2}(t^{*},0;\overline{\sigma})=\overline{C}\Phi\left(\frac{t^{*}}{\overline{\sigma}}\right)=\frac{-\underline{C}\overline{C}}{\overline{C}-\underline{C}}.
\]
Analogously, we have
\[
\sup_{\upsilon\in\left[\frac{\overline{C}-\underline{C}}{\kappa_{0}-1},-\frac{\underline{C}}{\kappa_{0}}\right]}R_{1}(t^{*},\upsilon;\overline{\sigma})=R_{1}(t^{*},0;\overline{\sigma})=\frac{-\underline{C}\overline{C}}{\overline{C}-\underline{C}}.
\]
Conclude that $\mathbf{R}_{RT}^{*}=\frac{-\underline{C}\overline{C}}{\overline{C}-\underline{C}}$,
and $R^{*}\leq\mathbf{R}_{\upsilon}^{*}\leq\mathbf{R}_{RT}^{*}=\frac{-\underline{C}\overline{C}}{\overline{C}-\underline{C}}.$

\noindent\textbf{Step 2}: We show $R^{*}\geq\frac{-\underline{C}\overline{C}}{\overline{C}-\underline{C}}$
whenever $\sigma<\overline{\sigma}$. This is because 
\begin{align*}
R^{*} & \geq R_{\mathbf{0}}^{*}:=\min_{d(\Delta)}\sup_{h=\mathbf{0},\mu^{*}\in\left[\underline{C},\overline{C}\right]}\mu^{*}\left[\mathbf{1}\left\{ \mu^{*}\geq0\right\} -\mathbb{E}[d(\Delta)]\right].
\end{align*}
To find the value of $R_{\mathbf{0}}^{*}$, let $x_{d}:=\mathbb{E}[d(\Delta)]$
where $\Delta\sim \mathcal{N}(\mathbf{0},\mathbf{I}_{0}^{-1})$. Then,
\begin{align*}
 & \sup_{h=\mathbf{0},\mu^{*}\in\left[\underline{C},\overline{C}\right]}\mu^{*}\left[\mathbf{1}\left\{ \mu^{*}\geq0\right\} -\mathbb{E}[d(\Delta)]\right]
=  \max\left\{ \overline{C}\left(1-x_{d}\right),-\underline{C}x_{d}\right\} .
\end{align*}
Thus, $R_{\mathbf{0}}^{*}$ is achieved at any rule such that
$\overline{C}\left(1-x_{d}\right)=-\underline{C}x_{d}$, i.e., $x_{d}=\frac{\overline{C}}{\overline{C}-\underline{C}}$,
implying $R_{\mathbf{0}}^{*}=\frac{-\underline{C}\overline{C}}{\overline{C}-\underline{C}}$.

\noindent\textbf{Step 3}: Conclude that, based on Steps 1 and 2 above, we have $R^{*}=\frac{-\underline{C}\overline{C}}{\overline{C}-\underline{C}}$, and that  $d_{\infty,RT}$ achieves the value of $R^*$ and is MMR optimal for the limiting game \eqref{eq:limit.game.1}.

\subsubsection*{Proof of Statement (ii)}

\textbf{Step 1}: Consider the following class of threshold rules: $d_{t}(\hat{\upsilon}):=\mathbf{1}\left\{ \hat{\upsilon}\geq t\right\}$,  $t\in\mathbb{R}$. Analogous to proof of statement (i), in case of $\kappa_0<1$, the
worst case expected regret of $d_{t}(\hat{\upsilon})$ writes
\begin{align*}
 & \max\left\{ \max_{\left\{ \upsilon\in\mathbb{R}:\left(\kappa_{0}-1\right)\upsilon\leq\overline{C}-\underline{C},\upsilon+\overline{C}\geq0\right\} }R_{2}(t,\upsilon;\sigma),\max_{\left\{ \upsilon\in\mathbb{R}:\left(\kappa_{0}-1\right)\upsilon\leq\overline{C}-\underline{C},\kappa_{0}\upsilon+\underline{C}\leq0\right\} }R_{1}(t,\upsilon;\sigma) \right\}\\
 =&\max\left\{ \max_{\upsilon\in\left[-\overline{C},\infty\right)}R_{2}(t,\upsilon;\sigma),\max_{\upsilon\in\left[\frac{\overline{C}-\underline{C}}{\kappa_{0}-1},-\frac{\underline{C}}{\kappa_{0}}\right]}R_{1}(t,\upsilon;\sigma)\right\}.
\end{align*}
Lemma \ref{lem:full.diff.lem.1} shows that 
\begin{align*}
\overline{\upsilon}_{2}(t;\sigma) & :=\arg\max_{\upsilon\in\left[-\overline{C},\infty\right)}R_{2}(t,\upsilon;\sigma)
\end{align*}
and 
\[
\overline{\upsilon}_{1}(t;\sigma):=\arg\max_{\upsilon\in\left[\frac{\overline{C}-\underline{C}}{\kappa_{0}-1},-\frac{\underline{C}}{\kappa_{0}}\right]}R_{1}(t,\upsilon;\sigma)
\]
are single-valued continuous functions in both $t$ and $\sigma$,
and there exists some unique $t\in\mathbb{R}$ such that 
$R_{2}(t,\overline{\upsilon}_{2}(t;\sigma);\sigma)  =R_{1}(t,\overline{\upsilon}_{1}(t;\sigma);\sigma)$, 
which also implicitly defines a single-valued continuous function
$t_{0}(\sigma):\mathbb{R}^{+}\rightarrow\mathbb{R}$. 

\noindent\textbf{Step 2}: Consider the case when $\sigma=\overline{\sigma}$. Algebra
shows  $t_{0}(\overline{\sigma})=t^{*},\overline{\upsilon}_{2}(t_{0}(\overline{\sigma});\overline{\sigma})=\overline{\upsilon}_{1}(t_{0}(\overline{\sigma});\overline{\sigma})=0$. Then, completely analogous to proof of statement (i), we can establish that
$\mathbf{1}\left\{ \hat{\upsilon}\geq t^{*}\right\} $ is MMR optimal. 

\noindent\textbf{Step 3}: For the case when $\sigma>\overline{\sigma}$, Lemma \ref{lem:full.diff.lem.2} establishes
that $\overline{\upsilon}_{2}(t_{0}(\sigma);\sigma)>\overline{\upsilon}_{1}(t_{0}(\sigma);\sigma)$. To simplify notation, let $\beta:=G_0^{\top}\overline{I}^{(1)}(\mu_0)$.
Then, for some $p\in(0,1)$, consider a prior $\pi(p)$ in the space
of $(h,\mu^{*})$ that randomizes between
\begin{align*}
\left(I_{0}^{-1}\beta\frac{\overline{\upsilon}_{2}(t_{0}(\sigma);\sigma)}{\sigma^{2}},\beta^{\top}I_{0}^{-1}\beta\frac{\overline{\upsilon}_{2}(t_{0}(\sigma);\sigma)}{\sigma^{2}}+\overline{C}\right),\text{w/ prob }p & ;\\
\left(I_{0}^{-1}\beta\frac{\overline{\upsilon}_{1}(t_{0}(\sigma);\sigma)}{\sigma^{2}},\kappa_{0}\beta^{\top}I_{0}^{-1}\beta\frac{\overline{\upsilon}_{1}(t_{0}(\sigma);\sigma)}{\sigma^{2}}+\underline{C}\right),\text{w/ prob }1-p & .
\end{align*}
Note $\beta^{\top}I_{0}^{-1}\beta=\sigma^{2}$, $\overline{\upsilon}_{2}(t_{0}(\sigma);\sigma)+\overline{C}>0$
and $\kappa_{0}\overline{\upsilon}_{1}(t_{0}(\sigma);\sigma)+\underline{C}<0$.
It follows that, after algebra, the corresponding Bayes optimal rule
is $\mathbf{1}\left\{ \beta^{\top}\Delta\geq t(p)\right\}$,
where note $\beta^{\top}\Delta=\hat{\upsilon}$, and
\begin{align*}
t(p) & :=\frac{\sigma^{2}}{\overline{\upsilon}_{2}(t_{0}(\sigma);\sigma)-\overline{\upsilon}_{1}(t_{0}(\sigma);\sigma)}\log\left(\frac{-(1-p)\left(\kappa_{0}\overline{\upsilon}_{1}(t_{0}(\sigma);\sigma)+\underline{C}\right)}{p\left(\overline{\upsilon}_{2}(t_{0}(\sigma);\sigma)+\overline{C}\right)}\right)\\
 & +\frac{\overline{\upsilon}_{2}(t_{0}(\sigma);\sigma)+\overline{\upsilon}_{1}(t_{0}(\sigma);\sigma)}{2}.
\end{align*}
As $p$ ranges in $\left(0,1\right)$, $\frac{(1-p)}{p}\in(0,\infty)$.
Thus, there exists some $p_{0}\in(0,1)$ such that $t(p_{0})=t_{0}$,
implying that $\mathbf{1}\left\{ \hat{\upsilon}\geq t_{0}\right\} $
is Bayes optimal with respect to prior $\pi(p_{0})$. 

\noindent \textbf{Step 4}: We have shown that $\mathbf{1}\left\{ \hat{\upsilon}\geq t_{0}\right\} $
is Bayes optimal with respect to prior $\pi(p_{0})$ with a Bayes
value
\[
r(p_{0})=R_{2}(t_{0},\overline{\upsilon}_{2}(t_{0}(\sigma);\sigma);\sigma)=R_{1}(t_{0},\overline{\upsilon}_{1}(t_{0}(\sigma);\sigma);\sigma).
\]
Furthermore, Step 1 implies that the worst case expected regret of $\mathbf{1}\left\{ \hat{\upsilon}\geq t_{0}\right\}$ equals $r(p_0)$. Thus, we have established an equilibrium; $\mathbf{1}\left\{ \hat{\upsilon}\geq t_{0}\right\} $
is MMR optimal and $\pi(p_{0})$ is least favorable.

\subsection{Proof of Theorem \ref{thm:full-diff-asymptotic}}
Lemma \ref{lem:matching.diff} shows that $d_{F}$ is
matched with $d_{\infty,RT}$ in the sense of \eqref{eq:matching.rule} if $\sigma<\overline{\sigma}$ and  matched with 
$\mathbf{1}\left\{ \hat{\upsilon}\geq t_{0}\right\}$ 
if  $\sigma>\overline{\sigma}$. Lemma \ref{lem:full.diff.lem.3} shows that  $d_{F}$ is still matched with 
$\mathbf{1}\left\{ \hat{\upsilon}\geq t_{0}\right\}$ even when   $\sigma=\overline{\sigma}$. Then, the conclusion follows directly from invoking Theorem \ref{thm: general}.

\subsection{Proof of Theorem \ref{thm:non-diff}}

Under Assumptions \ref{asm:1}, \ref{asm:2}, \ref{asm:diff} and \ref{asm:centro}, recall the following properties: (i) $M_{\mu}$ is convex and centrosymmetric; (ii) $\overline{I}(\mu)=-\underline{I}(-\mu)$ for all $\mu\in M_{\mu}$, and in particular, $\overline{C}=-\underline{C}$; (iii) $\overline{I}(\cdotp)$ is concave, and $\underline{I}(\cdotp)$ is
convex;  (iv) $\overline{I}^{(1)}(\mathbf{0};\cdotp)$ is  concave and positively homogeneous; (v)  $\overline{I}^{(1)}(\mathbf{0};\mathbf{v})=-\underline{I}^{(1)}(\mathbf{0};-\mathbf{v})$
for any $\mathbf{v}\in\mathbb{R}^{d_{\mu}}$; 
(vi) $\overline{I}^{(1)}(\mathbf{0};\mathbf{v})\leq-\overline{I}^{(1)}(\mathbf{0};-\mathbf{v})=\underline{I}^{(1)}(\mathbf{0};\mathbf{v})$ for any $\mathbf{v}\in\mathbb{R}^{d_{\mu}}$. Moreover, Lemma \ref{lem:global.lfp} implies that  $\mu_0=\mathbf{0}$  is indeed a global least favorable point.  To simplify notation, let $\varDelta:=G_{0}\Delta\sim\mathcal{N}(\mu,\Sigma_0)$. Denote by $\mathbf{R}^{*}$ the value of \eqref{pf:R.star}, i.e., the value of \eqref{eq:limit.game.1}   among rules that depend on $\Delta$ only via $\varDelta$. Clearly, $R^*\leq\mathbf{R}^*$.

\subsubsection*{Proof of Statement (i)}
\textbf{Step 1}: 
We show $R^*\leq\frac{\overline{C}}{2}$ in case (i). 
Note $\overline{p}\in\partial \overline{I}(\mathbf{0})$, so we have
\[
\overline{I}^{(1)}(\mathbf{0};\mu)\leq\overline{p}^{\top}\mu,\text{ for all }\mu.
\]
As $\overline{I}^{(1)}(\mathbf{0};\mathbf{v})=-\underline{I}^{(1)}(\mathbf{0};-\mathbf{v})$
for any $\mathbf{v}\in\mathbb{R}^{d_{\mu}}$, we also have
\[
\underline{I}^{(1)}(\mathbf{0};\mu)\geq\overline{p}^{\top}\mu,\text{ for all }\mu.
\]
Conclude that $\mathbf{R}^{*}$ is upper bounded by the value of of the following ``one
dimensional linear embedding MMR game'': 
\begin{align}\label{eq:linear.embedding}
\min_{d}\sup_{\mu\in \mathbb{R}^{d_\mu},\mu^{*}\in\left[\underline{C}+\overline{p}^{\top}\mu,\overline{C}+\overline{p}^{\top}\mu\right]}\mu^{*}\left[\mathbf{1}\left\{ \mu^{*}\geq0\right\} -\mathbb{E}[d(\overline{p}^{\top}\varDelta)]\right],
\end{align}
where $d$ is a decision rule based on $\overline{p}^{\top}\varDelta\sim\mathcal{N}\left(\overline{p}^{\top}\mu,\overline{p}^{\top}\Sigma_{0}\overline{p}\right)$ only. 
The solution and value of \eqref{eq:linear.embedding} are completely known. When $\left\{\frac{\pi}{2} \overline{p}^{\top}\Sigma_{0}\overline{p}\right\} ^{1/2}<\overline{C}$, $d_{\infty,RT}$ solves \eqref{eq:linear.embedding} and achieves a value  of 
$\frac{\overline{C}}{2}$. Thus, $R^*\leq\frac{\overline{C}}{2}$. 

\noindent \textbf{Step 2}: We show $R^*\geq\frac{\overline{C}}{2}$ in case (i). 
This is because
\begin{align*}
R^{*} & \geq\min_{d(\Delta)}\sup_{h=\mathbf{0},\mu^{*}\in\left[\underline{C},\overline{C}\right]}\mu^{*}\left[\mathbf{1}\left\{ \mu^{*}\geq0\right\} -\mathbb{E}[d(\Delta)]\right]=\frac{\overline{C}}{2}.
\end{align*}

\noindent \textbf{Step 3}: Conclude that  $R^{*}=\frac{\overline{C}}{2}$. Moreover, as  $d_{\infty,RT}$  achieves a value of $\frac{\overline{C}}{2}$ in the
upper bound linear embedding game, and the worst case regret of $d_{\infty,RT}$
in the original game \eqref{eq:limit.game.1} cannot be larger than that of the upper bound linear
embedding game, we conclude that $d_{\infty,RT}$ must achieve the
value of $\frac{\overline{C}}{2}$ and is MMR optimal for \eqref{eq:limit.game.1}.

\subsubsection*{Proof of Statement (ii)}
\textbf{Step 1}: The existence and uniqueness of $\overline{x}$ is provided in Lemma \ref{lem:non-diff-continuous}. Analogous to Step 1 for the proof of statement (i),
when $\left\{ \frac{\pi}{2}\overline{p}^{\top}\Sigma_{0}\overline{p}\right\} ^{1/2}\geq\overline{C}$,
we have 
\begin{equation}\label{eq:pf.thm.non.diff.ii}
R^{*}\leq\mathbf{R}^{*}\leq(\overline{C}+\overline{x})\Phi\left(-\frac{\overline{x}}{\left(\overline{p}^{\top}\Sigma_{0}\overline{p}\right)^{1/2}}\right),   
\end{equation}
where the RHS of \eqref{eq:pf.thm.non.diff.ii}
is the value of the linear embedding game \eqref{eq:linear.embedding} and $\overline{x}>0$ if   $\left\{ \frac{\pi}{2}\overline{p}^{\top}\Sigma_{0}\overline{p}\right\} ^{1/2}>\overline{C}$.

\noindent \textbf{Step 2}: Let $\overline{\mathbf{h}}:=\mathbf{I}_{0}^{-1}G_{0}^{\top}\overline{p}$,
and consider $h$ taking values along the direction of $\overline{\mathbf{h}}$,
i.e., $h=t\overline{\mathbf{h}}$ for some $t\in\mathbb{R}$. We show
that $t$ takes values in an interval $\left[-\overline{t},\overline{t}\right]$,
where 
\[
\overline{t}=\frac{2\overline{C}}{\underline{I}^{(1)}(\mathbf{0};\Sigma_{0}\overline{p})-\overline{I}^{(1)}(\mathbf{0};\Sigma_{0}\overline{p})}>0,
\]
with the understanding that the interval becomes $(-\infty,\infty)$
if and only if $\underline{I}^{(1)}(\mathbf{0};\Sigma_{0}\overline{p})=\overline{I}^{(1)}(\mathbf{0};\Sigma_{0}\overline{p})$. To see this, note when $h=t\overline{\mathbf{h}}$,
the lower and upper bounds of the identified set for $\mu^{*}$ writes
(note $G_{0}\overline{\mathbf{h}}=\Sigma_{0}\overline{p}$)
\begin{align*}
\underline{I}_{\infty}(t\overline{\mathbf{h}})= & \underline{C}+\underline{I}^{(1)}(\mathbf{0};tG_{0}\overline{\mathbf{h}})
=  \begin{cases}
\underline{C}+t\underline{I}^{(1)}(\mathbf{0};\Sigma_{0}\overline{p}), & t\geq0,\\
\underline{C}-t\underline{I}^{(1)}(\mathbf{0};-\Sigma_{0}\overline{p}), & t<0,
\end{cases}
\end{align*}
and
\begin{align*}
\overline{I}_{\infty}(t\overline{\mathbf{h}}) & =\overline{C}+\overline{I}^{(1)}(\mathbf{0};tG_{0}\overline{\mathbf{h}}) =\begin{cases}
\overline{C}+t\overline{I}^{(1)}(\mathbf{0};\Sigma_{0}\overline{p}) & t\geq0,\\
\overline{C}-t\overline{I}^{(1)}(\mathbf{0};-\Sigma_{0}\overline{p}) & t<0.
\end{cases}
\end{align*}
A nonnegative $t$ is feasible whenever $\underline{I}_{\infty}(t\overline{\mathbf{h}})\leq\overline{I}_{\infty}(t\overline{\mathbf{h}})$, i.e., 

\[
\underline{C}+t\underline{I}^{(1)}(\mathbf{0};\Sigma_{0}\overline{p})\leq \overline{C}+t\overline{I}^{(1)}(\mathbf{0};\Sigma_{0}\overline{p}),
\]
where recall $\overline{C}=-\underline{C}$ and  
$\overline{I}^{(1)}(\mathbf{0};\Sigma_{0}\overline{p})\leq\underline{I}^{(1)}(\mathbf{0};\Sigma_{0}\overline{p})$. Also, Lemma \ref{lem:key} shows that $\overline{I}^{(1)}(\mathbf{0};\Sigma_{0}\overline{p})>0$. Therefore, we find the upper bound of the values that $t$ can take is $\overline{t}$ after simple algebra. Then, the lower bound of the values that  $t$ can take must be $-\overline{t}$ due to centrosymmetry. 

\noindent \textbf{Step 3}: We show $R^{*}\geq(\overline{C}+\overline{x})\Phi\left(-\frac{\overline{x}}{\left(\overline{p}^{\top}\Sigma_{0}\overline{p}\right)^{1/2}}\right)$. Consider two cases. If $\overline{x}=0$, invoking  Step 2 of proof for statement (i) immediately yields the conclusion. If $0<\overline{x}\leq\bar{t}\overline{I}^{(1)}(\mathbf{0};\Sigma_{0}\overline{p}),$
we pick 
\[
t^{*}:=\frac{\overline{x}}{\overline{I}^{(1)}(\mathbf{0};\Sigma_{0}\overline{p})}\in\left(0,\bar{t}\right].
\]
Consider a prior $\pi$ in the space of $\left(h,\mu^{*}\right)$
that randomizes evenly between 
\[
\left\{ \left(t^{*}\overline{\mathbf{h}},\overline{C}+t^{*}\overline{I}^{(1)}(\mathbf{0};G_{0}\overline{\mathbf{h}})\right),\left(-t^{*}\overline{\mathbf{h}},\underline{C}+\underline{I}^{(1)}(\mathbf{0};-t^{*}G_{0}\overline{\mathbf{h}})\right)\right\} ,
\]
where note 
\begin{align*}
\overline{C}+t^{*}\overline{I}^{(1)}(\mathbf{0};G_{0}\overline{\mathbf{h}}) & =\overline{C}+t^{*}\overline{I}^{(1)}(\mathbf{0};\Sigma_{0}\overline{p})>0,\quad
\underline{C}+\underline{I}^{(1)}(\mathbf{0};-t^{*}G_{0}\overline{\mathbf{h}}))  =-\left(\overline{C}+t^{*}\overline{I}^{(1)}(\mathbf{0};G_{0}\overline{\mathbf{h}})\right)<0.
\end{align*}
The Bayes optimal rule with respect to $\pi$ is 
\[
\mathbf{1}\left\{ t^{*}\overline{\mathbf{h}}^{\top}\mathbf{I}_{0}\Delta\geq0\right\} =\mathbf{1}\left\{ \overline{p}^{\top}G_{0}\Delta\geq0\right\} ,
\]
with a corresponding Bayes value 
\begin{align*}
r^{*} & :=\left(\overline{C}+t^{*}\overline{I}^{(1)}(\mathbf{0};G_{0}\overline{\mathbf{h}})\right)\Phi\left(-t^{*}\left(\overline{\mathbf{h}}^{\top}I_{0}\overline{\mathbf{h}}\right)^{1/2}\right)\\
 & =\left(\overline{C}+t^{*}\overline{I}^{(1)}(\mathbf{0};\Sigma_{0}\overline{p})\right)\Phi\left(-t^{*}\left(\overline{p}^{\top}\Sigma_{0}\overline{p}\right)^{1/2}\right)\\
 & =\left(\overline{C}+t^{*}\overline{I}^{(1)}(\mathbf{0};\Sigma_{0}\overline{p})\right)\Phi\left(-\frac{t^{*}\overline{I}^{(1)}(\mathbf{0};\Sigma_{0}\overline{p})}{\left(\overline{p}^{\top}\Sigma_{0}\overline{p}\right)^{1/2}}\right)\\
 & =\left(\overline{C}+\overline{x}\right)\Phi\left(-\frac{\overline{x}}{\left(\overline{p}^{\top}\Sigma_{0}\overline{p}\right)^{1/2}}\right),
\end{align*}
where the second equality follows from plugging in $\overline{\mathbf{h}}=\mathbf{I}_{0}^{-1}G_{0}^{\top}\overline{p}$,
the third equality follows from Lemma \ref{lem:key}, and the final
equality follows from $t^{*}\overline{I}^{(1)}(\mathbf{0};\Sigma_{0}\overline{p})=\overline{x}$
by construction. The conclusion of this step follows because $R^{*}\geq r^{*}$. 

\noindent \textbf{Step 4}: Based on Steps 1-3 above, we conclude that: $R^{*}=\left(\overline{C}+\overline{x}\right)\Phi\left(-\frac{\overline{x}}{\left(\overline{p}^{\top}\Sigma_{0}\overline{p}\right)^{1/2}}\right)\geq\overline{C}/2$, with equality if and only if $\overline{x}=0$, 
and $\mathbf{1}\left\{ \overline{p}^{\top}G_{0}\Delta\geq0\right\} $
is indeed MMR optimal for the limit game \eqref{eq:limit.game.1}.

\subsubsection*{Proof of Statement (iii)}
\begin{proof}
First, the existence and uniqueness of $\mu^{o}$ is established in Lemma \ref{lem:non-diff-continuous}. Then, Lemma \ref{lem:thm.non.diff.case.3.1} shows that in case (iii), a solution of $\eqref{pf:R.star}$, i.e., a MMR optimal rule that depends on $\Delta$ only via $\varDelta$ is 
 $\mathbf{1}\left\{ \left(\mu^{o}\right)^{\top}\Sigma_{0}^{-1}\varDelta\geq0\right\}$, 
where $\mu^{o}$  solves \eqref{eq:non.benign}.   Next, given $\mu^{o}$, let  $h^{o}:=\mathbf{I}_0^{-1}G_0^{\top}\Sigma_0^{-1}\mu^{o}$. By construction,  $G_{0}h^{o}=\mu^{o}$.  In the space of $(h,\mu^*)$, consider a  prior
that randomizes evenly between 
\[
\left\{ \left(h^{o},\overline{C}+\overline{I}^{(1)}(\mathbf{0};G_{0}h^{o})\right),\left(-h^{o},\underline{C}+\underline{I}^{(1)}(\mathbf{0};-G_{0}h^{o})\right)\right\},
\]
for which the Bayes optimal rule is $\mathbf{1}\left\{ \left(h^{o}\right)^{\top}\mathbf{I}_{0}\Delta\geq0\right\} =\mathbf{1}\left\{ (\mu^{o})^{\top}\Sigma_{0}^{-1}\varDelta\geq0\right\}$, with the corresponding Bayes value 
$\left(\overline{C}+\overline{I}^{(1)}(\mathbf{0};G_{0}h^{o})\right)\Phi\left(-\sqrt{\left(h^{o}\right)^{\top}\mathbf{I}_{0}h^{o}}\right)$, which equals $r(\mu^{o};\Sigma_0)$ after simple algebra. Conclude that $R^{*}=r(\mu^{o};\Sigma_0)$, and $\mathbf{1}\left\{ \left(\mu^{o}\right)^{\top}\Sigma_{0}^{-1}\varDelta\geq0\right\}$ achieves  $R^*$.
\end{proof}

\subsection{Proof of Theorem \ref{thm:non-diff-asymptotic}}
Lemma \ref{lem:non-diff-asymptotic-master} shows that $d_F$ is matched with the right limiting optimal rule under conditions of Theorem \ref{thm:non-diff-asymptotic}. The conclusion follows directly by invoking Theorem \ref{thm: general}.

\section{Other Forms of  Shrinking Identified Set}\label{sec:global.ID}

In our main text, we constructed a particular sequence of shrinking identified set in the form of \eqref{eq:local.id.U} and \eqref{eq:local.id.U.2}, leading to a limiting identified set \eqref{eq:limit.ID} that only incorporates the local geometry of the original identified set $I(\cdot)$ around $\mu_0$.  There are other reasonable ways to construct a shrinking identified set that may lead to a different geometry in the limit, which may retain more global features of the original  identified set. We provide an alternative (arguably more general) approach to constructing the shrinking identified set below and illustrate the differences by revisiting the example in Section \ref{sec:IK}.\footnote{We thank Tim Armstrong for comments that motivated this section.} 

Following the setup in Section \ref{sec:frame}, consider the point-identified parameter following a sequence $m(\theta_n)=\gamma_0+\frac{h}{\sqrt{n}}$ for $h\in\mathbb{R}^{d_m}$. As $n\rightarrow\infty$,  the identified set of $U(\theta_n)$ will also be approximately
\begin{align}\label{eq:global.1}
I\left(\mu\left(\gamma_{0}+\frac{h}{\sqrt{n}}\right)\right) & \approx\left[\underline{I}\left(\mu_{0}+\frac{G_{0}h}{\sqrt{n}}\right),\overline{I}\left(\mu_{0}+\frac{G_{0}h}{\sqrt{n}}\right)\right].
\end{align}
Note our \eqref{eq:local.id.U} just applies first-order approximation to the bounds on the RHS of \eqref{eq:global.1}. Instead of applying \eqref{eq:local.id.U}-\eqref{eq:local.id.U.2}, consider a general drifting identified set $I_n(\cdot)=[\underline{I}_n(\cdot),\overline{I}_n(\cdot)]$ indexed by sample size $n$,  such that for all local deviations $t/\sqrt{n}$ from $\mu_0$, we have $\underline{I}_n\left(\mu_{0}+t/\sqrt{n}\right)\rightarrow0$, $\overline{I}_n\left(\mu_{0}+t/\sqrt{n}\right)\rightarrow 0$,  and 
\begin{equation}\label{eq:global.1}
\sqrt{n}\underline{I}_n\left(\mu_{0}+t/\sqrt{n}\right)=\underline{I}_{\infty}(\mu_0+t), \sqrt{n}\overline{I}_n\left(\mu_{0}+t/\sqrt{n}\right)= \overline{I}_{\infty}(\mu_0+t),
\end{equation}
where both $\underline{I}_{\infty}(\mu_0+t)$ and  $\overline{I}_{\infty}(\mu_0+t)$ are finite numbers.  Then, we construct the drifting identified set as
\begin{align}\label{eq:global.2}
U(\theta_n)\in&\left[\underline{I}_n\left(\mu_{0}+G_{0}h/\sqrt{n}\right),\overline{I}_n\left(\mu_{0}+G_{0}h/\sqrt{n}\right)\right],
\end{align}
implying 
\begin{align*}\label{eq:global.3}
\sqrt{n}U(\theta_n)\in&\left[\sqrt{n}\underline{I}_n\left(\mu_{0}+G_{0}h/\sqrt{n}\right),\sqrt{n}\overline{I}_n\left(\mu_{0}+G_{0}h/\sqrt{n}\right)\right]\\
= &\left[\underline{I}_\infty\left(\mu_{0}+G_{0}h\right),\overline{I}_\infty\left(\mu_{0}+G_{0}h\right)\right].
\end{align*}
It follows that we can still characterize the limit decision problem as \eqref{eq:limit.game.1}, except that the limit identified set is re-defined as
\begin{equation}\label{eq:global.3}
I_{\infty}(h):=\left[\underline{I}_\infty\left(\mu_{0}+G_{0}h\right),\overline{I}_\infty\left(\mu_{0}+G_{0}h\right)\right],    
\end{equation}
and $M_h$ is re-defined analogously as long as \eqref{eq:global.3} is not empty. Depending on how $I_n(\cdot)$ is constructed, \eqref{eq:global.3}  may retain some global features of the original identified set. However, the limiting decision problem also becomes more difficult to solve  due to the more complicated geometry of the limit identified set in general.

One case that allows tractability is when we impose Assumption \ref{asm:centro}. Then, note  $\mu_0=\mathbf{0}$ and the limit identified set simplifies to 
\begin{equation}\label{eq:global.4}
I_{\infty}(h):=\left[\underline{I}_\infty\left(G_{0}h\right),\overline{I}_\infty\left(G_{0}h\right)\right].  
\end{equation}
The limit parameter space can still be convex and centrosymmetric, but with a limit identified set more general than that considered in our form \eqref{eq:limit.ID}. As a result, our theorems in Section \ref{sec:results} do not apply. However, one can still apply results in \cite{yata2021} to solve the limit problem, based on which an asymptotically optimal decision rule can be constructed.

To give a concrete example, consider the evidence aggregation application in Section \ref{sec:IK} with $0<C_1<C_2$, and $\mu_0=(0,0)$. Note the original identified set follows \eqref{eq:IK.original.bound}.  We can alternatively consider the following drifting identified set:
\begin{equation}\label{eq:ik.alternative.drift}
\underline{I}_{n}(\mu)=\max\left\{ \mu_1-C_{1,n},\mu_2-C_{2,n}\right\},
\overline{I}_{n}(\mu)=\min\left\{ \mu_1+C_{1,n},\mu_2+C_{2,n}\right\},   
\end{equation}
where  $C_{1,n}\rightarrow0$ and $C_{2,n}\rightarrow0$ as $n\rightarrow\infty$,  
$\sqrt{n}C_{1,n}=C_{1,\infty}$  and $\sqrt{n}C_{2,n}=C_{2,\infty}$ for $0<C_{1,\infty}<C_{2,\infty}<\infty$. Applying \eqref{eq:global.1}-\eqref{eq:global.2} to \eqref{eq:ik.alternative.drift}, we have that, for all $t=(t_1,t_2)$,
\[
\underline{I}_{n}(t/\sqrt{n})=\max\left\{ t_1/\sqrt{n}-C_{1,n},t_2/\sqrt{n}-C_{2,n}\right\},
\overline{I}_{n}(t/\sqrt{n})=\min\left\{ t_1/\sqrt{n}+C_{1,n},t_1/\sqrt{n}+C_{2,n}\right\},
\]
yielding the following limit identified set (cf. \eqref{eq:ik.limit.bound.1}):
\begin{equation}\label{eq:ik.global.limit.set}
\underline{I}_\infty\left(\mu\right)=\max\left\{ \mu_1-C_{1,\infty},\mu_2-C_{2,\infty}\right\},\overline{I}_\infty\left(\mu\right)=\min\{\mu_1+C_{1,\infty},\mu_2+C_{2,\infty}\}.
\end{equation}
Note how the geometry of \eqref{eq:ik.global.limit.set} is more analogous to that of the original identified set. The associated decision problem can be solved by applying results in \cite{yata2021}. Moreover, note the limit identified set \eqref{eq:ik.limit.bound.1} corresponds to a case of \eqref{eq:ik.alternative.drift} when $C_{1,n}\rightarrow0$, $C_{2,n}\rightarrow0$ (or $C_{2,n}$ is fixed) as  $n\rightarrow\infty$, but with $\sqrt{n}C_{1,n}=C_{1,\infty}<\infty$  and $\sqrt{n}C_{2,n}\rightarrow\infty$.   

\bibliographystyle{ecta}
\bibliography{references}

\newpage

\part*{Online Appendix }\label{sec:OA}

\section{Remedial Asymptotics When Assumption \ref{asm:2} is Violated}\label{sec:non.standard}
Consider the application in Section \ref{sec:contaminate} with an unknown rate of contamination in the population. Then, the reduced-form parameter becomes $\Gamma:=(\gamma_1,\gamma_0,p_1,p_0)$, so that   $m(\theta)=\Gamma$ and $U(\theta)=\mu^*$. The identified set in \eqref{eq:HM-bound-centro} is still valid; it depends  on $\Gamma$  only via $\mu(\Gamma)=(1-p_1)\gamma_1-(1-p_0)\gamma_0-(p_0-p_1)/2$ and $k(\Gamma)=(p_1+p_0)/2$. A global least favorable point would correspond to any $\Gamma$ such that $k(\Gamma)=1$, i.e.,  $p_1=p_0=1$ lies on the boundary of the parameter space and  is irregular. Then, \eqref{eq:HM-bound-centro} becomes $[-1,1]$, completely uninformative. 

The root cause of such irregularity is Nature's freedom to pick $k(\Gamma)$ so that the length of the identified set can be as large as possible. Our proposal is to slightly change the action space of  Nature to somewhat restore regularity. We propose to localize at any $\Gamma_0$ such that $\mu(\Gamma_0)=0$ and $k(\Gamma_0)=k_0$ for some $k_0\in(0,1)$. This can be interpreted as a ``second-best'' least favorable point: if $k(\Gamma_0)$ is fixed, then setting  $\mu(\Gamma_0)=0$ is indeed the most challenging case for the population game of \eqref{eq:population.mmr}. Next, we still consider a local parameter sequence for the reduced-form such that $m(\theta_n)=\Gamma_0+\frac{h}{\sqrt{n}}$. For the payoff-relevant parameter, we construct  a local parameter sequence $U(\theta_n)=\frac{\mu^*}{\sqrt{n}}$, 
where 
\begin{equation}\label{eq:id.non.standard}
\mu^*\in[G_0h-k_0,G_0h+k_0],   
\end{equation}
and $G_0$ is still defined as the gradient of $\mu(\Gamma)$ at the point of localization $\Gamma_0$. The limiting identified set \eqref{eq:id.non.standard} can be motivated as restricting  the action space of Nature: Suppose, once the point of localization is picked (so $k_0$ is fixed), Nature can only pick values of $U(\theta)$ in a drifting identified set 
\[\left[\mu\left(\Gamma_0+\frac{h}{\sqrt{n}}\right)-\frac{k_0}{\sqrt{n}}, \mu\left(\Gamma_0+\frac{h}{\sqrt{n}}\right)+\frac{k_0}{\sqrt{n}}\right].\]
Then, the identified set for $\sqrt{n}U(\theta)$ in the limit  would match \eqref{eq:id.non.standard}. In the above construction, the catch is that $k_0$ is completely fixed, implying for each sample size, Nature cannot arbitrarily enlarge the length of the identified set. 
% In other words, the local parameter $h$ can not affect the identified set via varying its width. 
With such modification, we can see that the limiting game becomes a standard one: the DM observes a Gaussian signal $\Delta\sim\mathcal{N}(h,\mathbf{I}_0^{-1})$ and the identified set for the limiting partially identified parameter is \eqref{eq:id.non.standard}. Applying Theorem \ref{thm:full-diff}, the MMR optimal rule in the limit game is $\mathbf{1}\{G_0\Delta\geq0\}$ if $\sigma:=(G_0\mathbf{I}^{-1}_0G_0^{\top})^{1/2}\geq\sqrt{2/\pi}k_0$ and to take fractional decision $\Phi\left(G_0\Delta/(\frac{2}{\pi}k_0^2-\sigma^2)^{1/2}\right)$ otherwise. To find a feasible matching rule, note in this case, even if $k_0$ is unknown, we can replace it with a consistent estimator, e.g., $k(\hat{\Gamma})$, analogous to the treatment of an unknown $\sigma$ in general. The asymptotic validity of such feasible rules can still be established by appropriately modifying the proofs of Theorem \ref{thm:full-diff-asymptotic}. It remains to see how general this approach can be to deal with scenarios when Assumption \ref{asm:2} fails, which we leave for future research.

\section{Additional Results: Full Differentiability}

\begin{lem}
\label{lem:full.diff.lem.1} Suppose $\kappa_0<1$. Under conditions of Theorem \ref{thm:full-diff}, 
\[
\overline{\upsilon}_{2}(t;\sigma):=\arg\max_{\upsilon\in\left[-\overline{C},\infty\right)}R_{2}(t,\upsilon;\sigma)
\]
 and 
\[
\overline{\upsilon}_{1}(t;\sigma):=\arg\max_{\upsilon\in\left[\frac{\overline{C}-\underline{C}}{\kappa_{0}-1},-\frac{\underline{C}}{\kappa_{0}}\right]}R_{1}(t,\upsilon;\sigma).
\]
are single-valued continuous functions in both $t$ and $\sigma$.
Furthermore, for each $\sigma>0$, there exists some unique $t\in\mathbb{R}$
such that 
\begin{align}
R_{2}(t,\overline{\upsilon}_{2}(t;\sigma);\sigma) & =R_{1}(t,\overline{\upsilon}_{1}(t;\sigma);\sigma),\label{eq:t.unique-1}
\end{align}
which implicitly defines a single-valued continuous function $t_{0}(\sigma):\mathbb{R}^{+}\rightarrow\mathbb{R}$
from the values of $\sigma$ to a unique $t\in\mathbb{R}$.
\end{lem}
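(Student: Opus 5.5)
We analyze each maximization separately, using log-concavity of the normal cdf, and then handle the fixed-point equation by a monotonicity/intermediate-value argument.

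\textbf{Step 1 (the maximizer $\overline{\upsilon}_{2}$).} On $[-\overline{C},\infty)$ write $x=\upsilon+\overline{C}\geq0$. Then
\[
R_{2}=x\,\Phi\left(\frac{t+\overline{C}-x}{\sigma}\right),
\]
which is positive for $x>0$, zero at $x=0$, and tends to $0$ as $x\to\infty$ (Gaussian tail). Hence a maximizer exists and is interior. On $x>0$,
\[
\log R_{2}=\log x+\log\Phi\left(\frac{t+\overline{C}-x}{\sigma}\right),
\]
which is strictly concave because $\log x$ is strictly concave and $\log\Phi$ is concave. So the maximizer is unique. It is characterized by the first-order condition
\[
\frac{1}{x}=\frac{1}{\sigma}\,\frac{\phi}{\Phi}\left(\frac{t+\overline{C}-x}{\sigma}\right).
\]
The left side is strictly decreasing in $x$ and the right side is strictly increasing in $x$ (the inverse Mills ratio $\phi/\Phi$ is decreasing). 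The FOC defines $x$ by a strictly monotone equation that is jointly continuous in $(x,t,\sigma)$, so the implicit function theorem, or Berge's maximum theorem combined with uniqueness, gives continuity of $\overline{\upsilon}_{2}(t;\sigma)$ in $(t,\sigma)$.

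\textbf{Step 2 (the maximizer $\overline{\upsilon}_{1}$).} On the compact interval $\left[\frac{\overline{C}-\underline{C}}{\kappa_{0}-1},-\underline{C}/\kappa_{0}\right]$ put $y=-(\kappa_{0}\upsilon+\underline{C})\geq0$. The factor $y$ is linear and positive in the interior, and $\Phi((\upsilon-t)/\sigma)$ is log-concave. So $\log R_{1}$ is strictly concave where $R_{1}>0$, and $R_{1}$ vanishes at the right endpoint. The maximum is therefore unique, possibly at the left endpoint. Continuity in $(t,\sigma)$ follows from Berge's maximum theorem, since the constraint set is fixed and compact and the argmax is single-valued.

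\textbf{Step 3 (existence and uniqueness of $t$).} Define the value functions $V_{2}(t;\sigma):=\max R_{2}$ and $V_{1}(t;\sigma):=\max R_{1}$; both are continuous. For every fixed $\upsilon$ with positive weight:
\begin{itemize}
\item $R_{2}(t,\upsilon;\sigma)$ is strictly increasing in $t$, so $V_{2}$ is strictly increasing in $t$;
\item $R_{1}(t,\upsilon;\sigma)$ is strictly decreasing in $t$, so $V_{1}$ is strictly decreasing in $t$.
\end{itemize}
In each case this holds because the value is attained at a maximizer with positive weight. Hence $D(t):=V_{2}-V_{1}$ is continuous and strictly increasing.

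It remains to check the limits of $D$:
\begin{itemize}
\item As $t\to-\infty$, we have $V_{1}\geq R_{1}(t,0;\sigma)\to-\underline{C}>0$. Also $V_{2}\to0$, since $\sup_{x}x\,\Phi((t+\overline{C}-x)/\sigma)\to0$ uniformly by the tail bound. So $D<0$.
\item As $t\to\infty$, we have $V_{2}\geq R_{2}(t,0;\sigma)\to\overline{C}$, while $V_{1}\to0$ because the constraint set of $\upsilon$ is compact. So $D>0$.
\end{itemize}
The intermediate value theorem then gives a unique root $t_{0}(\sigma)$.

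\textbf{Step 4 (continuity of $t_{0}$).} Take $\sigma_{k}\to\sigma$. The roots $t_{0}(\sigma_{k})$ stay bounded, by the uniform versions of the limits in Step 3 on a compact neighborhood of $\sigma$. Every subsequential limit is a root of $D(\cdot;\sigma)$, by joint continuity of $D$, and that root is unique. Hence $t_{0}(\sigma_{k})\to t_{0}(\sigma)$.

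\textbf{Main obstacle.} The step needing most care is the strict monotonicity and the uniform limits of $V_{2}$ over the unbounded domain. I would control these with the bound $x\,\Phi(-(x-a)/\sigma)\le$ a Gaussian-tail estimate, which is uniform in $\sigma$ on compacts.
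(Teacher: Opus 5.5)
Your proposal is correct and has the same structure as the paper's proof. Both establish unique maximizers pinned down by first-order conditions (with a possible left-endpoint solution for $\overline{\upsilon}_{1}$), get continuity from the implicit function theorem and Berge's theorem, and obtain a unique crossing because $R_{2}(t,\overline{\upsilon}_{2}(t;\sigma);\sigma)$ increases and $R_{1}(t,\overline{\upsilon}_{1}(t;\sigma);\sigma)$ decreases in $t$ with opposite limiting behavior. You differ only in how the substeps are justified, and your versions are more self-contained. Strict concavity of $\log R_{1}$ and $\log R_{2}$ (from log-concavity of $\Phi$) replaces the paper's unproved ``increasing then decreasing, concave then convex'' shape claims. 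Your positive-weight-at-the-maximizer argument and Gaussian tail bounds replace the appeal to Tetenov's Lemma 1. Your Step 4 actually proves continuity of $t_{0}(\cdot)$, which the paper only asserts.
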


\begin{proof}
Observe the following:
\begin{itemize}
\item[1)] For each $t\in\mathbb{R}$ and $\sigma>0$, $R_{2}(t,\upsilon;\sigma)$
is increasing then decreasing, and first concave then convex as $\upsilon$
ranges from $-\overline{C}$ to $+\infty$;
\item[2)] Analogously, for each $t\in\mathbb{R}$ and $\sigma>0$, as $\upsilon$
ranges from $-\infty$ to $-\frac{\underline{C}}{\kappa_{0}}$, $R_{1}(t,\upsilon;\sigma)$
is increasing then decreasing, and first convex then concave.
\end{itemize} 
As a result, $\max_{\upsilon\in\left[-\overline{C},\infty\right)}R_{2}(t,\upsilon;\sigma)$
is uniquely characterized by the FOC
\[
\frac{\phi\left(\frac{\upsilon-t}{\sigma}\right)}{1-\Phi\left(\frac{\upsilon-t}{\sigma}\right)}=\frac{\sigma}{(\upsilon+\overline{C})},
\]
which implicitly defines a single-valued function 
\begin{align*}
\overline{\upsilon}_{2}(t;\sigma) & :=\arg\max_{\upsilon\in\left[-\overline{C},\infty\right)}R_{2}(t,\upsilon;\sigma).
\end{align*}
Implicit function theorem implies that $\overline{\upsilon}_{2}(t;\sigma)$
is differentiable (and thus continuous) in both $t$ and $\sigma$.
Analogously, 
\[
\max_{\upsilon\in\left[\frac{\overline{C}-\underline{C}}{\kappa_{0}-1},-\frac{\underline{C}}{\kappa_{0}}\right]}R_{1}(t,\upsilon;\sigma)
\]
is uniquely characterized by either FOC 
\[
\frac{\phi\left(\frac{t-\upsilon}{\sigma}\right)}{1-\Phi\left(\frac{t-\upsilon}{\sigma}\right)}=-\frac{\kappa_{0}\sigma}{\kappa_{0}\upsilon+\underline{C}},
\]
or is at the boundary point $\frac{\overline{C}-\underline{C}}{\kappa_{0}-1}$,
which also defines a single-valued function:
\[
\overline{\upsilon}_{1}(t;\sigma):=\arg\max_{\upsilon\in\left[\frac{\overline{C}-\underline{C}}{\kappa_{0}-1},-\frac{\underline{C}}{\kappa_{0}}\right]}R_{1}(t,\upsilon;\sigma).
\]
By Berge's theorem of maximization, $\overline{\upsilon}_{1}(t;\sigma)$
is continuous in both $t$ and $\sigma$ as well. Therefore, conclude
that (cf., Lemma 1 of \cite{tetenov2012statistical}) for each $\sigma>0$:

\begin{itemize}
\item[1)] $R_{2}(t,\overline{\upsilon}_{2}(t;\sigma);\sigma)$ is continuous and strictly increasing
in $t$, $R_{2}(t,\overline{\upsilon}_{2}(t;\sigma);\sigma)\rightarrow0$ as $t\rightarrow-\infty$, and $R_{2}(t,\overline{\upsilon}_{2}(t;\sigma);\sigma)\rightarrow+\infty$ as $t\rightarrow+\infty$;
\item[2)] $R_{1}(t,\overline{\upsilon}_{1}(t;\sigma);\sigma)$ is continuous and decreasing
in $t$, $R_{1}(t,\overline{\upsilon}_{1}(t;\sigma);\sigma)$ approaches some finite positive constant as $t\rightarrow-\infty$, and  $R_{1}(t,\overline{\upsilon}_{1}(t;\sigma);\sigma)\rightarrow 0$ as $t\rightarrow+\infty$.
\end{itemize}
Conclude that there exists some unique $t\in\mathbb{R}$ such that
\[
R_{2}(t,\overline{\upsilon}_{2}(t;\sigma);\sigma)=R_{1}(t,\overline{\upsilon}_{1}(t;\sigma);\sigma),
\]
which implicitly defines a single-valued continuous function $t_{0}(\sigma):\mathbb{R}^{+}\rightarrow\mathbb{R}$
from the values of $\sigma$ to a unique $t\in\mathbb{R}$.
\end{proof}
\begin{lem}
\label{lem:full.diff.lem.2} Under conditions of Lemma  \ref{lem:full.diff.lem.1}, suppose $\sigma>\overline{\sigma}$. Then,  $\overline{\upsilon}_{2}(t_{0}(\sigma);\sigma)>\overline{\upsilon}_{1}(t_{0}(\sigma);\sigma)$.
\end{lem}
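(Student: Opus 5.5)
The plan is a continuity-in-$\sigma$ argument. I will show that the gap closes only at $\sigma=\overline{\sigma}$ and that it is positive somewhere above $\overline{\sigma}$. Define
\[
D(\sigma):=\overline{\upsilon}_{2}(t_{0}(\sigma);\sigma)-\overline{\upsilon}_{1}(t_{0}(\sigma);\sigma),\qquad \sigma>0.
\]
By Lemma \ref{lem:full.diff.lem.1}, $\overline{\upsilon}_{1}$, $\overline{\upsilon}_{2}$ and $t_{0}(\cdot)$ are continuous, so $D$ is continuous. Step 2 of the proof of Theorem \ref{thm:full-diff}(ii) gives $D(\overline{\sigma})=0$. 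It then suffices to prove two things: (a) $D(\sigma)\neq 0$ for every $\sigma>\overline{\sigma}$; (b) $D(\sigma)>0$ for at least one $\sigma>\overline{\sigma}$. The intermediate value theorem then gives $D>0$ on all of $(\overline{\sigma},\infty)$.

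For (a), fix $\sigma$ and write $t=t_{0}(\sigma)$. First consider the case where $\overline{\upsilon}_{1}$ sits at the left boundary $\frac{\overline{C}-\underline{C}}{\kappa_{0}-1}$. Since $\kappa_{0}<1$, this boundary is strictly below $-\overline{C}\le\overline{\upsilon}_{2}$, so $D>0$ and there is nothing to prove. Otherwise both maximizers are interior and satisfy their FOCs from the proof of Lemma \ref{lem:full.diff.lem.1}. Suppose $D(\sigma)=0$, so both equal a common $\upsilon$, and set $z=(\upsilon-t)/\sigma$. Let $V>0$ be the common value in \eqref{eq:t.unique-1}. Then
\[
1-\Phi(z)=\frac{V}{\upsilon+\overline{C}},\qquad \Phi(z)=\frac{-V}{\kappa_{0}\upsilon+\underline{C}}.
\]
Substituting into the two hazard-rate FOCs yields
\[
\phi(z)=\frac{\sigma V}{(\upsilon+\overline{C})^{2}},\qquad \phi(z)=\frac{\kappa_{0}\sigma V}{(\kappa_{0}\upsilon+\underline{C})^{2}}.
\]
Equating these gives $(\kappa_{0}\upsilon+\underline{C})^{2}=\kappa_{0}(\upsilon+\overline{C})^{2}$. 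We have $\kappa_{0}\upsilon+\underline{C}<0<\upsilon+\overline{C}$ and $\sqrt{\kappa_{0}}=-\underline{C}/\overline{C}$, so this becomes $-\kappa_{0}\upsilon-\underline{C}=-(\underline{C}/\overline{C})(\upsilon+\overline{C})$, that is, $\kappa_{0}\upsilon=(\underline{C}/\overline{C})\upsilon$. Because $\kappa_{0}>0>\underline{C}/\overline{C}$, this forces $\upsilon=0$. Adding the two displayed identities at $\upsilon=0$ gives $V=\frac{-\underline{C}\overline{C}}{\overline{C}-\underline{C}}$, hence $\Phi(z)=\frac{\overline{C}}{\overline{C}-\underline{C}}$. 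The $\phi$-equation then returns $\sigma=\overline{C}^{2}\phi(z)/V$, which by the definition of $\overline{\sigma}$ is exactly $\sigma=\overline{\sigma}$. So $D(\sigma)=0$ only at $\sigma=\overline{\sigma}$, which proves (a).

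For (b), I will look at $\sigma\to\infty$. The interval over which $\overline{\upsilon}_{1}$ ranges is bounded above by $-\underline{C}/\kappa_{0}$ regardless of $\sigma$. So it suffices to show $\overline{\upsilon}_{2}(t_{0}(\sigma);\sigma)\to\infty$. Rescale $\upsilon=\sigma u$ and $t=\sigma s$. Then $R_{2}/\sigma\approx u\,\Phi(s-u)$, whose maximizer $u^{*}(s)$ is strictly positive for every $s$. I expect the main obstacle to be keeping $s_{0}=t_{0}(\sigma)/\sigma$ bounded as $\sigma\to\infty$. To get this, I will use the monotonicity of $R_{2}(t,\overline{\upsilon}_{2};\sigma)$ and $R_{1}(t,\overline{\upsilon}_{1};\sigma)$ in $t$ from Lemma \ref{lem:full.diff.lem.1}, together with the rescaled limits (both values are of order $\sigma$ after scaling). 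This pins $s_{0}$ in a compact set, so $\overline{\upsilon}_{2}\ge c\sigma\to\infty$ for some $c>0$. Therefore $D(\sigma)>0$ for large $\sigma$, and by (a) and continuity $D>0$ for all $\sigma>\overline{\sigma}$. If the rescaling argument proves delicate, the fallback is a first-order expansion of $D$ at $\sigma=\overline{\sigma}$ via the implicit function theorem applied to the FOCs, which yields the sign of $D'(\overline{\sigma})$.
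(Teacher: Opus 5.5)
Your overall architecture is the paper's: $D$ is continuous, has no zero on $(\overline{\sigma},\infty)$, and is positive for large $\sigma$. Your part (a) is correct and more direct than the paper's Step 3. The paper argues by contradiction through an equilibrium whose least favorable prior sits at a common $\tilde{x}$, and then differentiates the no-data value $-\frac{(x+\overline{C})(\kappa_0x+\underline{C})}{(1-\kappa_0)x+\overline{C}-\underline{C}}$. You instead combine the two hazard-rate FOCs with the equal-value condition \eqref{eq:t.unique-1}. Both routes reach the same equation $(\kappa_0\upsilon+\underline{C})^2=\kappa_0(\upsilon+\overline{C})^2$, hence $\upsilon=0$ and $\sigma=\overline{\sigma}$. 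Your handling of the left-boundary case for $\overline{\upsilon}_1$ is also right.

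The gap is in part (b). It is false that $s_0=t_0(\sigma)/\sigma$ stays in a compact set, and $\overline{\upsilon}_2\geq c\sigma$ is also false.

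The reason is a fact you noted yourself. $\overline{\upsilon}_1$ ranges over the fixed compact interval $[\frac{\overline{C}-\underline{C}}{\kappa_0-1},-\underline{C}/\kappa_0]$. Hence $R_1(t,\overline{\upsilon}_1(t;\sigma);\sigma)\leq\frac{\kappa_0\overline{C}-\underline{C}}{1-\kappa_0}$ for all $t$ and $\sigma$. So the $R_1$-value is $O(1)$, not of order $\sigma$, and by \eqref{eq:t.unique-1} the common value at $t_0(\sigma)$ is $O(1)$.

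Now suppose $s_0\geq-M$. Monotonicity in $t$ gives $R_2(t_0,\overline{\upsilon}_2;\sigma)\geq R_2(-M\sigma,\sigma;\sigma)=(\sigma+\overline{C})\Phi(-M-1)\to\infty$, a contradiction. So $s_0\to-\infty$.

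Write $\tau=(t_0+\overline{C})/\sigma$. The FOC shows that $(\overline{\upsilon}_2+\overline{C})/\sigma$ is bounded by the Mills ratio $\Phi(\tau)/\phi(\tau)$, which tends to $0$. Hence $\overline{\upsilon}_2/\sigma\to0$. The target $\overline{\upsilon}_2\to\infty$ is still true, but only sublinearly. Proving it your way would need a quantitative lower bound on $t_0$ from Mills-ratio asymptotics, which you do not provide. The implicit-function fallback is only asserted: the sign of $D'(\overline{\sigma})$ is the whole content and is not computed.

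The easy repair is the paper's Step 2. Bound $\overline{\upsilon}_1$ from above instead of $\overline{\upsilon}_2$ from below; this needs only an upper bound on $t_0$.

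1. For large $\sigma$, $\partial_\upsilon R_1(0,\upsilon;\sigma)|_{\upsilon=0}=-[\kappa_0\Phi(0)+\underline{C}\phi(0)/\sigma]<0$, so $\overline{\upsilon}_1(0;\sigma)<0$.
2. For $\upsilon>0$, using $\kappa_0<1$ and $\overline{C}>-\underline{C}$, you have $R_2(0,\upsilon;\sigma)=(\upsilon+\overline{C})\Phi(-\upsilon/\sigma)>(\kappa_0\upsilon-\underline{C})\Phi(-\upsilon/\sigma)=R_1(0,-\upsilon;\sigma)$.
3. Evaluate this at $\upsilon=-\overline{\upsilon}_1(0;\sigma)$. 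At $t=0$ the maximal $R_2$-value then exceeds the maximal $R_1$-value, so $t_0(\sigma)<0$ by the monotonicity in Lemma \ref{lem:full.diff.lem.1}.
4. With $t_0<0$, for large $\sigma$,
\[
\partial_\upsilon R_1(t_0,\upsilon;\sigma)\big|_{\upsilon=-\overline{C}}<-\kappa_0\Phi(-\overline{C}/\sigma)+\frac{\kappa_0\overline{C}-\underline{C}}{\sigma}\phi(0)<0 .
\]
5. $R_1(t_0,\cdot;\sigma)$ is log-concave, hence unimodal, so $\overline{\upsilon}_1<-\overline{C}\leq\overline{\upsilon}_2$.

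Combined with your part (a) and continuity, this completes the proof.
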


\begin{proof}
Let $f(\sigma):=\overline{\upsilon}_{2}(t_{0}(\sigma);\sigma)-\overline{\upsilon}_{1}(t_{0}(\sigma);\sigma)$.
Note $f(\sigma)$ is a continuous function because Lemma \ref{lem:full.diff.lem.1}
established that both $\overline{\upsilon}_{2}(\cdotp;\cdotp)$ and
$\overline{\upsilon}_{1}(\cdotp;\cdotp)$ are continuous, and that
$t_{0}(\cdotp)$ is also continuous.

\noindent\textbf{Step 1}: Consider  $\sigma=\overline{\sigma}$, in which case  $f(\overline{\sigma})=0$,
as we can verify that
$t_{0}(\overline{\sigma})=t^{*},\overline{\upsilon}_{2}(t_{0}(\overline{\sigma});\overline{\sigma})=\overline{\upsilon}_{1}(t_{0}(\overline{\sigma});\overline{\sigma})=0$.

\noindent \textbf{Step 2}: Consider $\sigma$ sufficiently large, in which case we have $\overline{\upsilon}_{2}(t_{0}(\sigma);\sigma)>\overline{\upsilon}_{1}(t_{0}(\sigma);\sigma)$.
To see this, note for $\sigma$ sufficiently large, 
\begin{align*}
\frac{\partial R_{2}(0,\upsilon;\sigma)}{\partial\upsilon} & \mid_{\upsilon=0}=\Phi\left(0\right)-\frac{\overline{C}}{\sigma}\phi\left(0\right)>0,
\frac{\partial R_{1}(0,\upsilon;\sigma)}{\partial\upsilon}  \mid_{\upsilon=0}=-\left[\kappa_{0}\Phi\left(0\right)+\frac{\underline{C}}{\sigma}\phi\left(0\right)\right]<0,
\end{align*}
implying $\overline{\upsilon}_{2}(0;\sigma)>0$
and $\overline{\upsilon}_{1}(0;\sigma)<0$ if $\sigma$ is sufficiently
large. Furthermore, note for each $\upsilon>0$: 
\begin{align*}
R_{2}(0,\upsilon;\sigma) & =(\upsilon+\overline{C})\Phi\left(\frac{-\upsilon}{\sigma}\right),
R_{1}(0,-\upsilon;\sigma) =(\kappa_{0}\upsilon-\underline{C})\Phi\left(\frac{-\upsilon}{\sigma}\right).
\end{align*}
As $\kappa_{0}<1$ and $\overline{C}>-\underline{C}$, conclude that
$R_{2}(0,\upsilon;\sigma)>R_{1}(0,-\upsilon;\sigma)$. This implies
that, whenever $\sigma$ is sufficiently large,
\[
R_{2}(0,\overline{\upsilon}_{2}(0;\sigma);\sigma)\geq R_{2}(0,-\overline{\upsilon}_{1}(0;\sigma);\sigma)>R_{1}(0,\overline{\upsilon}_{1}(0;\sigma);\sigma),
\]
which, in turn, means $t_{0}(\sigma)<0$ for $\sigma$ sufficiently
large. Now, note (using $t_{0}(\sigma)<0$ and $-\kappa_{0}\overline{C}+\underline{C}<0$)
\begin{align*}
 & \frac{\partial R_{1}(t_{0}(\sigma),\upsilon;\sigma)}{\partial\upsilon}\mid_{\upsilon=-\overline{C}} \\
 =&-\kappa_{0}\Phi\left(\frac{-\overline{C}-t_{0}(\sigma)}{\sigma}\right)-\frac{(-\kappa_{0}\overline{C}+\underline{C})}{\sigma}\phi\left(\frac{-\overline{C}-t_{0}(\sigma)}{\sigma}\right)\\
 < &-\kappa_{0}\Phi\left(\frac{-\overline{C}}{\sigma}\right)+\frac{(\kappa_{0}\overline{C}-\underline{C})}{\sigma}\phi\left(0\right).
\end{align*}
Conclude that for $\sigma$ sufficiently large, we also have $\frac{\partial R_{1}(t_{0}(\sigma),\upsilon;\sigma)}{\partial\upsilon}\mid_{\upsilon=-\overline{C}}<0$.
Therefore, for $\sigma$ sufficiently large, we have 
$\overline{\upsilon}_{1}(t_{0}(\sigma);\sigma)<-\overline{C}\leq\overline{\upsilon}_{2}(t_{0}(\sigma);\sigma)$,
implying $f(\sigma)>0$ for $\sigma$ sufficiently large. 

\textbf{Step 3}: Suppose, on the contrary, it were
the case that $f(\sigma)\leq0$ for some $\sigma>\bar{\sigma}$. Since $f(\sigma)$ is continuous, 
there must exist some $\widetilde{\sigma}>\overline{\sigma}$ such
that 
\[
\overline{\upsilon}_{1}(t_{0}(\widetilde{\sigma});\widetilde{\sigma})=\overline{\upsilon}_{2}(t_{0}(\widetilde{\sigma});\widetilde{\sigma}).
\]
In this case, let $\tilde{x}:=\overline{\upsilon}_{2}(t_{0}(\widetilde{\sigma});\widetilde{\sigma})=\overline{\upsilon}_{1}(t_{0}(\widetilde{\sigma});\widetilde{\sigma})$.
Then,  analogous to Steps 3-4 in the proof of  Theorem \ref{thm:full-diff}(ii), we can establish an equilibrium for the game (\ref{eq:one.dim.limit.game.smooth})
in which a least favorable prior randomizes between $\left(\tilde{x},\tilde{x}+\overline{C}\right)$
and $\left(\tilde{x},\kappa_{0}\tilde{x}+\underline{C}\right)$, with the corresponding  value of the game equal 
\[
-\frac{\left(\tilde{x}+\overline{C}\right)\left(\kappa_{0}\tilde{x}+\underline{C}\right)}{(1-\kappa_{0})\tilde{x}+\overline{C}-\underline{C}}.
\]
Since randomizing between $\left(\tilde{x},\tilde{x}+\overline{C}\right)$
and $\left(\tilde{x},\kappa_{0}\tilde{x}+\underline{C}\right)$ is least
favorable, it must be the case that 
\[
-\frac{\left(\tilde{x}+\overline{C}\right)\left(\kappa_{0}\tilde{x}+\underline{C}\right)}{(1-\kappa_{0})\tilde{x}+\overline{C}-\underline{C}}\geq-\frac{\left(x+\overline{C}\right)\left(\kappa_{0}x+\underline{C}\right)}{(1-\kappa_{0})x+\overline{C}-\underline{C}}
\]
for all $x$ in a neighborhood of $\tilde{x}$, which implies a FOC
that pins down $\tilde{x}=0$, i.e., $\overline{\upsilon}_{2}(t_{0}(\widetilde{\sigma});\widetilde{\sigma})=\overline{\upsilon}_{1}(t_{0}(\widetilde{\sigma});\widetilde{\sigma})=0$.
This means that $\left(t_{0}(\widetilde{\sigma}),\widetilde{\sigma}\right)$
would need to satisfy the following three conditions for the equilibrium:
\begin{itemize}
\item[a)] $\mathbb{E}_{\hat{\upsilon}\sim \mathcal{N}(0,\widetilde{\sigma}^{2})}\left[\mathbf{1}\left\{ \hat{\upsilon}\geq t_{0}(\widetilde{\sigma})\right\} \right]=\Phi\left(-\frac{t_{0}(\widetilde{\sigma})}{\widetilde{\sigma}}\right)=\frac{\overline{C}}{\overline{C}-\underline{C}}$,

\item[b)] $\frac{\partial}{\partial\upsilon}R_{2}(t_{0}(\widetilde{\sigma}),\upsilon;\widetilde{\sigma})\mid_{\upsilon=0}=\Phi\left(\frac{t_{0}(\widetilde{\sigma})}{\widetilde{\sigma}}\right)-\frac{\overline{C}}{\widetilde{\sigma}}\phi\left(\frac{t_{0}(\widetilde{\sigma})}{\widetilde{\sigma}}\right)=0$,

\item[c)] $\frac{\partial}{\partial\upsilon}R_{1}(t_{0}(\widetilde{\sigma}),\upsilon;\widetilde{\sigma})\mid_{\upsilon=0}=-\left(\kappa_{0}\Phi\left(\frac{-t_{0}(\widetilde{\sigma})}{\widetilde{\sigma}}\right)+\frac{\underline{C}}{\widetilde{\sigma}}\phi\left(\frac{-t_{0}(\widetilde{\sigma})}{\widetilde{\sigma}}\right)\right)=0$,
\end{itemize}
which, after some algebra, implies
\[
\widetilde{\sigma}=\frac{-\overline{C}\left(\overline{C}-\underline{C}\right)}{\underline{C}}\phi\left(\Phi^{-1}\left(\frac{\overline{C}}{\overline{C}-\underline{C}}\right)\right)=\overline{\sigma}.
\]
But this contradicts the assumption that $\widetilde{\sigma}>\overline{\sigma}$.
\end{proof}

\begin{lem}\label{lem:matching.diff}
Suppose the conditions of Theorem \ref{thm:full-diff-asymptotic} hold. If $\sigma<\overline{\sigma}$, $d_{F}$ is
matched with $d_{\infty,RT}$ in the sense of \eqref{eq:matching.rule};  If
$\sigma>\overline{\sigma}$, $d_{F}$ is matched with $\mathbf{1}\left\{ \hat{\upsilon}\geq t_{0}\right\}$. 
% Lemma \ref{lem:DQM}
\end{lem}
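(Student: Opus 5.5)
We argue directly from the joint weak convergence of the efficient estimator and the variance estimator under local alternatives. Write $\beta:=G_0^{\top}\overline{I}^{(1)}(\mu_0)$ and $\hat{\upsilon}_n:=\sqrt{n}\left(\overline{I}^{(1)}(\mu_{0})\right)^{\top}(\mu(\hat{\gamma})-\mu_0)$.

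\textbf{Step 1: limit of the linear statistic.} Since $\mu(\cdot)$ is differentiable at $\gamma_0$ (Assumption \ref{asm:diff}(i)), the delta method applies under each local sequence $\gamma_0+h/\sqrt{n}$. Combined with the regularity of $\hat{\gamma}$, it gives
\[
\hat{\upsilon}_n\overset{h}{\rightsquigarrow}\mathcal{N}(\beta^{\top}h,\sigma^2),
\]
and this is exactly the law of $\hat{\upsilon}=\beta^{\top}\Delta$ when $\Delta\sim\mathcal{N}(h,\mathbf{I}_0^{-1})$.

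\textbf{Step 2: consistency of $\hat{\sigma}$ under local alternatives.} By Le Cam's first lemma, the DQM condition in Assumption \ref{asm:DQM} makes $P^n_{\gamma_0+h/\sqrt{n}}$ contiguous to $P^n_{\gamma_0}$. Therefore $\hat{\sigma}\to\sigma$ in probability under every $h$, not only at $h=\mathbf{0}$.

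\textbf{Step 3: reduction to the correct regime.} Suppose first that $\sigma<\overline{\sigma}$. Then $P_h(\hat{\sigma}<\overline{\sigma})\to1$, so $d_F=d_{F,RT}$ with probability tending to one. On that event, $\sqrt{\overline{\sigma}^2-\hat{\sigma}^2}\to\sqrt{\overline{\sigma}^2-\sigma^2}>0$ in probability. Slutsky's lemma and the continuous mapping theorem (the map $(x,s)\mapsto\Phi((x-t^*)/\sqrt{\overline{\sigma}^2-s^2})$ is continuous near $(x,\sigma)$) then give
\[
d_{F,RT}\rightsquigarrow\Phi\left(\frac{\hat{\upsilon}-t^*}{\sqrt{\overline{\sigma}^2-\sigma^2}}\right).
\]
Because the rules take values in $[0,1]$, weak convergence implies convergence of expectations. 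Hence $\mathbb{E}_{\gamma_0+h/\sqrt{n}}[d_F]\to\mathbb{E}[d_{\infty,RT}(\hat{\upsilon})]$ for every $h$, which is \eqref{eq:matching.rule}.

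If instead $\sigma>\overline{\sigma}$, then $d_F=d_{F,\hat{t}}$ with probability tending to one.

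\textbf{Step 4: the threshold case.} The key ingredient here is that $\hat{t}=t_0(\hat{\sigma})\to t_0(\sigma)$ in probability. This follows from Lemma \ref{lem:full.diff.lem.1}, which shows that $t_0(\cdot)$ is a single-valued continuous function on $\mathbb{R}^+$, together with the continuous mapping theorem. Then $(\hat{\upsilon}_n,\hat{t})\rightsquigarrow(\hat{\upsilon},t_0)$ jointly, since $\hat{t}$ converges to a constant. The indicator $\mathbf{1}\{x\geq t\}$ is discontinuous only on the set $\{x=t\}$. That set has probability zero under the Gaussian limit because $\sigma>0$. The portmanteau theorem therefore yields
\[
\mathbb{E}_{\gamma_0+h/\sqrt{n}}[d_{F,\hat{t}}]\to P(\hat{\upsilon}\geq t_0)=\mathbb{E}\left[\mathbf{1}\{\hat{\upsilon}\geq t_0\}\right].
\]
The event on which $d_F$ differs from the selected branch has vanishing probability, and all rules are bounded by one, so it does not affect the limit.

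\textbf{Main obstacle.} The delicate point is the continuity of $t_0(\cdot)$ and its well-definedness at $\hat{\sigma}$, which is a random and possibly distorted input. Lemma \ref{lem:full.diff.lem.1} supplies both properties for every positive $\sigma$. Since $\hat{\sigma}>0$ with probability tending to one, this step goes through.

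I would also state explicitly that the boundary case $\sigma=\overline{\sigma}$ is excluded here. In that case the probability of the regime-selection event does not converge, and it is handled separately.
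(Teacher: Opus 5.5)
Your proposal is correct and follows essentially the same route as the paper's proof: consistency of $\hat{\sigma}$ under every local alternative via Le Cam's first lemma, $\hat{t}=t_0(\hat{\sigma})\to t_0$ from the continuity of $t_0(\cdot)$ in Lemma \ref{lem:full.diff.lem.1}, weak convergence of the linear statistic, the regime indicator converging in probability to $0$ or $1$, and boundedness of the rules to pass to expectations. The differences are minor. You get the local limit of $\hat{\upsilon}_n$ directly from the assumed regularity of $\hat{\gamma}$ plus the delta method, where the paper invokes Le Cam's third lemma. You also explicitly justify the continuous-mapping step for the indicator through its null discontinuity set, which the paper leaves implicit.
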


\begin{proof}
Let $P_{h}^{n}$ be the probability measure when
$\gamma=\gamma_{0}+\frac{h}{\sqrt{n}}$. Denote by $\overset{P_{h}^{n}}{\rightarrow}$
convergence in probability under $P_{h}^{n}$.

\noindent\textbf{Step 1}: Preparation. Under stated assumptions,
note the following results hold:
\begin{itemize}
\item[1)] Le Cam's first lemma implies that $\hat{\sigma}\overset{P_{h}^{n}}{\rightarrow}\sigma$
for all $h$. 

\item[2)]  Since Lemma \ref{lem:full.diff.lem.1}
establishes that $t_{0}(\cdotp)$ is a continuous function, 
conclude that 
\[
\hat{t}=t_{0}(\hat{\sigma})\overset{P_{h}^{n}}{\rightarrow}t_{0}(\sigma)=t_{0}
\]
for all $h$ as well. 

\item[3)] Note $\gamma_0$ is such that $\mu(\gamma_0)=\mu_0$.  Then,  Le Cam's third lemma implies 
\[
\sqrt{n}(\mu(\hat{\gamma})-\mu_0)\overset{h}{\rightsquigarrow}\mathcal{N}(G_{0}h,G_{0}I_{0}^{-1}G_{0}^{\top}).\]

\item[4)] It follows by continuous mapping theorem that 
\[
\sqrt{n}\left(\overline{I}^{(1)}(\mu_{0})\right)^{\top}(\mu(\hat{\gamma})-\mu_0)\overset{h}{\rightsquigarrow}\mathcal{N}\left(\left(\overline{I}^{(1)}(\mu_{0})\right)^{\top}G_{0}h,\sigma^{2}\right),
\]
\[
\mathbf{1}\left\{ \sqrt{n}\left(\overline{I}^{(1)}(\mu_{0})\right)^{\top}(\mu(\hat{\gamma})-\mu_0)\geq\hat{t}\right\} \overset{h}{\rightsquigarrow}\mathbf{1}\left\{ \mathcal{N}\left(\left(\overline{I}^{(1)}(\mu_{0})\right)^{\top}G_{0}h,\sigma^{2}\right)\geq t_{0}\right\} ,
\]
and if $\sigma<\overline{\sigma}$, 
\[
\frac{\sqrt{n}\left(\overline{I}^{(1)}(\mu_{0})\right)^{\top}(\mu(\hat{\gamma})-\mu_0)-t^{*}}{\sqrt{\overline{\sigma}^{2}-\hat{\sigma}^{2}}}\overset{h}{\rightsquigarrow}\mathcal{N}\left(\frac{\left(\overline{I}^{(1)}(\mu_{0})\right)^{\top}G_{0}h-t^{*}}{\sqrt{\overline{\sigma}^{2}-\sigma^{2}}},\frac{\sigma^{2}}{\overline{\sigma}^{2}-\sigma^{2}}\right).
\]
\end{itemize}

\noindent \textbf{Step 2}: We show that if
$\sigma>\overline{\sigma}$, $d_{F}$ is matched with $\mathbf{1}\left\{ \hat{\upsilon}\geq t_{0}\right\}$. By Slutsky's theorem, we have
$\mathbf{1}\left\{ \hat{\sigma}\geq\overline{\sigma}\right\} \overset{P_{h}^{n}}{\rightarrow}1$ and that $\mathbf{1}\left\{ \hat{\sigma}<\overline{\sigma}\right\} \overset{P_{h}^{n}}{\rightarrow}0$ as $n\rightarrow\infty$. 
Step 1 already establishes 
\[
d_{F,\hat{t}}\overset{h}{\rightsquigarrow}\mathbf{1}\left\{ \mathcal{N}\left(\left(\overline{I}^{(1)}(\mu_{0})\right)^{\top}G_{0}h,\sigma^{2}\right)\geq t_{0}\right\} .
\]
As $d_{F,RT}\leq1$, conclude that in case of $\sigma>\overline{\sigma}$, we have
\begin{align*}
d_{F} & =\mathbf{1}\left\{ \hat{\sigma}\geq\overline{\sigma}\right\} \cdot d_{F,\hat{t}}+\mathbf{1}\left\{ \hat{\sigma}<\overline{\sigma}\right\} \cdot d_{F,RT} \overset{h}{\rightsquigarrow}\mathbf{1}\left\{ \mathcal{N}\left(\left(\overline{I}^{(1)}(\mu_{0})\right)^{\top}G_{0}h,\sigma^{2}\right)\geq t_{0}\right\} .
\end{align*}
Furthermore, note $d_{F}$ is uniformly integrable. Thus, we have
\[
\mathbb{E}\left[d_{F}\right]{\rightarrow}\mathbb{E}\left[\mathbf{1}\left\{ \mathcal{N}\left(\left(\overline{I}^{(1)}(\mu_{0})\right)^{\top}G_{0}h,\sigma^{2}\right)\geq t_{0}\right\} \right]
\]
for each $h$ as $n\rightarrow\infty$. Therefore, if $\sigma>\overline{\sigma}$, $d_{F}$
is matched with rule
\begin{equation}
\mathbf{1}\left\{ \mathcal{N}\left(\left(\overline{I}^{(1)}(\mu_{0})\right)^{\top}G_{0}h,\sigma^{2}\right)\geq t_{0}\right\} \label{eq:full.diff.asymptotics}
\end{equation}
in the sense of Lemma \ref{lem:DQM}. Note $\hat{\upsilon}\sim N\left(\left(\overline{I}^{(1)}(\mu_{0})\right)^{\top}G_{0}h,\sigma^{2}\right)$,
implying that $\mathbf{1}\left\{ \hat{\upsilon}\geq t_{0}\right\}$
shares the same distribution as \eqref{eq:full.diff.asymptotics}.
Conclude that $d_{F}$ is matched with rule $\mathbf{1}\left\{ \hat{\upsilon}\geq t_{0}\right\}$ if $\sigma>\overline{\sigma}$.
The proof for $\sigma<\overline{\sigma}$ is completely analogous.
\end{proof}

\begin{lem}
\label{lem:full.diff.lem.3} Suppose the conditions of Theorem \ref{thm:full-diff-asymptotic} hold. If $\sigma=\overline{\sigma}$, $d_{F}$ is matched with $\mathbf{1}\left\{ \hat{\upsilon}\geq t_{0}\right\} $
in the sense of \eqref{eq:matching.rule}.
\end{lem}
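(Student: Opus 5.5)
The goal is to show that when $\sigma=\overline{\sigma}$,
\[
\mathbb{E}_{\gamma_0+h/\sqrt{n}}[d_F]\rightarrow \Phi\left(\frac{\upsilon-t_0}{\sigma}\right),\qquad \upsilon=\left(\overline{I}^{(1)}(\mu_0)\right)^{\top}G_0h,
\]
where $t_0=t_0(\overline{\sigma})=t^*$ by Step 1 of the proof of Lemma \ref{lem:full.diff.lem.2}. The difficulty is that $\hat{\sigma}$ falls on either side of $\overline{\sigma}$ with non-vanishing probability, so the indicator weights $\mathbf{1}\{\hat\sigma\geq\overline\sigma\}$ and $\mathbf{1}\{\hat\sigma<\overline\sigma\}$ need not converge. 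I would therefore avoid controlling the weights and instead show that \emph{both} components $d_{F,\hat t}$ and $d_{F,RT}$ are close, in a suitable sense, to the same limit object. Write
\[
Z_n:=\sqrt{n}\left(\overline{I}^{(1)}(\mu_{0})\right)^{\top}(\mu(\hat{\gamma})-\mu_0).
\]
By Step 1 of Lemma \ref{lem:matching.diff}, $Z_n\overset{h}{\rightsquigarrow}\mathcal{N}(\upsilon,\sigma^2)$, $\hat\sigma\overset{P_h^n}{\rightarrow}\overline\sigma$, and $\hat t=t_0(\hat\sigma)\overset{P_h^n}{\rightarrow}t^*$, the last by continuity of $t_0(\cdot)$ from Lemma \ref{lem:full.diff.lem.1}.

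\textbf{Step 1 (threshold branch).} On the event $\{\hat\sigma\geq\overline\sigma\}$, $d_F=\mathbf{1}\{Z_n\geq\hat t\}$. By Slutsky, $Z_n-\hat t\rightsquigarrow\mathcal{N}(\upsilon-t^*,\sigma^2)$. This limit has no atom at $0$, so $\mathbf{1}\{Z_n\geq \hat t\}\rightsquigarrow \mathbf{1}\{\mathcal{N}(\upsilon,\sigma^2)\geq t^*\}$.

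\textbf{Step 2 (randomized branch).} On $\{\hat\sigma<\overline\sigma\}$, $d_F=\Phi\big((Z_n-t^*)/s_n\big)$ with $s_n:=\sqrt{\overline\sigma^2-\hat\sigma^2}$, and here $s_n\overset{P}{\rightarrow}0$. For each fixed $\epsilon>0$, on the event $\{|Z_n-t^*|>\epsilon\}$ the quantity $\Phi((Z_n-t^*)/s_n)$ is within $o_P(1)$ of $\mathbf{1}\{Z_n\geq t^*\}$, because $\epsilon/s_n\to\infty$ in probability. The complementary event $\{|Z_n-t^*|\leq\epsilon\}$ has $\limsup$ probability at most $P(|\mathcal{N}(\upsilon,\sigma^2)-t^*|\leq\epsilon)$, which tends to $0$ as $\epsilon\downarrow0$. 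Also $\mathbf{1}\{Z_n\geq t^*\}-\mathbf{1}\{Z_n\geq\hat t\}=o_P(1)$ by the same anti-concentration argument, since $\hat t-t^*=o_P(1)$.

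\textbf{Step 3 (combine).} Let $A_n:=|d_F-\mathbf{1}\{Z_n\geq\hat t\}|$. Steps 1–2 show $A_n\leq \mathbf{1}\{\hat\sigma<\overline\sigma\}\,|\Phi((Z_n-t^*)/s_n)-\mathbf{1}\{Z_n\geq \hat t\}|=o_P(1)$, uniformly over the random regime. Hence $d_F=\mathbf{1}\{Z_n\geq \hat t\}+o_P(1)$, and this does not require the regime indicators to converge. Since $0\leq d_F\leq1$, uniform integrability gives
\[
\mathbb{E}[d_F]\rightarrow P(\mathcal{N}(\upsilon,\sigma^2)\geq t^*)=\mathbb{E}[\mathbf{1}\{\hat\upsilon\geq t_0\}],
\]
which is the matching claimed in \eqref{eq:matching.rule}.

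The main obstacle is Step 2: the randomized rule degenerates into a threshold rule only because $s_n\to0$, so the $\epsilon$-truncation argument, together with the absence of an atom at $t^*$ in the Gaussian limit, must be carried out carefully. The boundary case $\hat\sigma=\overline\sigma$ exactly belongs to the threshold branch and causes no extra issue.
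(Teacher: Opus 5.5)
Your proposal is correct and is essentially the paper's argument. Like the paper, you never need the regime indicator $\mathbf{1}\{\hat{\sigma}<\overline{\sigma}\}$ to converge; you compare the randomized and threshold branches only on $\{\hat{\sigma}<\overline{\sigma}\}$ and remove the discrepancy using $\hat t\to t_0=t^*$, $\sqrt{\overline{\sigma}^2-\hat{\sigma}^2}\to 0$, and the fact that the Gaussian limit has no atom at $t^*$ (the paper does this through the bound $P(|g_{n,1}-g_{n,2}|>\varepsilon)+P(|g_{n,1}|\le\varepsilon)$). The only difference is cosmetic. The paper checks the CDF of $d_F$ at each level $x\in(0,1)$, rewriting $\{d_{F,RT}\le x\}$ as a threshold event with shifted cutoff $t^*+\Phi^{-1}(x)\sqrt{\overline{\sigma}^2-\hat{\sigma}^2}$. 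You instead prove $d_F-d_{F,\hat t}=o_P(1)$ directly by $\epsilon$-truncation, which is slightly more direct but uses the same ingredients.
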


\begin{proof}
When $\sigma=\overline{\sigma}$, we need to show that, for all $h$,
we have
$\mathbb{E}[d_{F}]\rightarrow\mathbb{E}\left[\mathbf{1}\left\{ \hat{\upsilon}\geq t_{0}\right\} \right]$ as $n\rightarrow\infty$,
where $\hat{\upsilon}\sim\mathcal{N}\left(\left(\overline{I}^{(1)}(\mu_{0})\right)^{\top}G_{0}h,\sigma^{2}\right)$.
As $d_{F}=\mathbf{1}\left\{ \hat{\sigma}\geq\overline{\sigma}\right\} \cdot d_{F,\hat{t}}+\mathbf{1}\left\{ \hat{\sigma}<\overline{\sigma}\right\} \cdot d_{F,RT}$ is uniformly integrable, it suffices to show that, for
every $h$, 
\begin{align*}
  d_{F}
\overset{h}{\rightsquigarrow} & X_{0}:=\mathbf{1}\left\{ \mathcal{N}\left(\left(\overline{I}^{(1)}(\mu_{0})\right)^{\top}G_{0}h,\sigma^{2}\right)\geq t_{0}\right\} .
\end{align*}
\textbf{Step 1}: The distribution function of $X_{0}$ is 
\[
X_{0}=\begin{cases}
0, & \text{w/ prob.}\text{ }\Phi\left(\frac{t_{0}-\left(\overline{I}^{(1)}(\mu_{0})\right)^{\top}G_{0}h}{\sigma}\right),\\
1, & \text{w/ prob.}\text{ }\Phi\left(\frac{\left(\overline{I}^{(1)}(\mu_{0})\right)^{\top}G_{0}h-t_{0}}{\sigma}\right).
\end{cases}
\]
Thus, by definition of convergence in distribution, it suffices to
show that, for each $h$, we have, as $n\rightarrow\infty$,
\begin{align*}
Pr\left\{ d_{F}\leq x\right\} \rightarrow & \Phi\left(\frac{t_{0}-\left(\overline{I}^{(1)}(\mu_{0})\right)^{\top}G_{0}h}{\sigma}\right)
\end{align*}
for every $x\in(0,1)$. Note the desired convergence trivially holds
when $x<0$ and $x>1$.

\noindent\textbf{Step 2}: Pick any $x\in(0,1)$, we can write 
\begin{align*}
Pr\left\{ d_{F}\leq x\right\} = & \mathbb{E}\left[\mathbf{1}\left\{ d_{F}\leq x\right\} \right]\\
= & \mathbb{E}\left[\mathbf{1}\left\{ d_{F}\leq x\right\} \mathbf{1}\left\{ \hat{\sigma}\geq\overline{\sigma}\right\} \right]+\mathbb{E}\left[\mathbf{1}\left\{ d_{F}\leq x\right\} \mathbf{1}\left\{ \hat{\sigma}<\overline{\sigma}\right\} \right]\\
= & \mathbb{E}\left[\mathbf{1}\left\{ d_{F,\hat{t}}\leq x\right\} \mathbf{1}\left\{ \hat{\sigma}\geq\overline{\sigma}\right\} \right]+\mathbb{E}\left[\mathbf{1}\left\{ d_{F,RT}\leq x\right\} \mathbf{1}\left\{ \hat{\sigma}<\overline{\sigma}\right\} \right]\\
= & A_{n}(h)+B_{n}(h),
\end{align*}
where 
\begin{align*}
A_{n}(h):= & \mathbb{E}\left[\mathbf{1}\left\{ d_{F,\hat{t}}\leq x\right\} \right],\\
B_{n}(h):= & \mathbb{E}\left[\mathbf{1}\left\{ d_{F,RT}\leq x\right\} \mathbf{1}\left\{ \hat{\sigma}<\overline{\sigma}\right\} \right]-\mathbb{E}\left[\mathbf{1}\left\{ d_{F,\hat{t}}\leq x\right\} \mathbf{1}\left\{ \hat{\sigma}<\overline{\sigma}\right\} \right]
\end{align*}
Note,
\begin{align*}
A_{n} & (h)=Pr\left\{ d_{F,\hat{t}}\leq x\right\} =Pr\left\{ d_{F,\hat{t}}=0\right\} =Pr\left\{ \sqrt{n}\left(\overline{I}^{(1)}(\mu_{0})\right)^{\top}(\mu(\hat{\gamma})-\mu_0)<\hat{t}\right\} \\
 & \rightarrow\Phi\left(\frac{t_{0}-\left(\overline{I}^{(1)}(\mu_{0})\right)^{\top}G_{0}h}{\sigma}\right),
\end{align*}
as $n\rightarrow\infty$ for all $h$, since Step 1 of the proof for
Lemma \ref{lem:matching.diff} established that 
\[
\mathbf{1}\left\{ \sqrt{n}\left(\overline{I}^{(1)}(\mu_{0})\right)^{\top}(\mu(\hat{\gamma})-\mu_0)\geq\hat{t}\right\} \overset{h}{\rightsquigarrow}X_{0}.
\]
\textbf{Step 3}: Therefore, it suffices to show that, for each $h$,
$B_{n}(h)\rightarrow0$ as $n\rightarrow\infty$. To this end, write
\begin{align*}
g_{n,1}(\hat{\gamma},\hat{\sigma}) & :=\sqrt{n}\left(\overline{I}^{(1)}(\mu_{0})\right)^{\top}(\mu(\hat{\gamma})-\mu_{0})-t^{*}-\Phi^{-1}\left(x\right)\mathbf{1}{\left\{ \hat{\sigma}<\overline{\sigma}\right\}}\sqrt{\overline{\sigma}^{2}-\hat{\sigma}^{2}},\\
g_{n,2}(\hat{\gamma},\hat{t}) & :=\sqrt{n}\left(\overline{I}^{(1)}(\mu_{0})\right)^{\top}(\mu(\hat{\gamma})-\mu_{0})-\hat{t},\\
G & :=\mathcal{N}\left(\left(\overline{I}^{(1)}(\mu_{0})\right)^{\top}G_{0}h,\sigma^{2}\right)-t_{0}.
\end{align*}
Note $\Phi^{-1}\left(x\right)\mathbf{1}{\left\{ \hat{\sigma}<\overline{\sigma}\right\}}\sqrt{\overline{\sigma}^{2}-\hat{\sigma}^{2}}\overset{P_{h}^{n}}{\rightarrow}0,\hat{t}\overset{P_{h}^{n}}{\rightarrow}t_{0}$ and $t^{*}=t_{0}$ when $\overline{\sigma}=\sigma$. It follows by
continuous mapping theorem that $g_{n,1}(\hat{\gamma},\hat{\sigma})\overset{h}{\rightsquigarrow}G$,
$g_{n,2}(\hat{\gamma},\hat{t})\overset{h}{\rightsquigarrow}G$, and $g_{n,1}(\hat{\gamma},\hat{\sigma})-g_{n,2}(\hat{\gamma},\hat{t})\overset{P_{h}^{n}}{\rightarrow}0$.
Therefore,
\begin{align*}
\vert B_{n}(h)\vert= & \vert\mathbb{E}\left[\mathbf{1}\left\{ d_{F,RT}\leq x\right\} \mathbf{1}\left\{ \hat{\sigma}<\overline{\sigma}\right\} \right]-\mathbb{E}\left[\mathbf{1}\left\{ d_{F,\hat{t}}\leq x\right\} \mathbf{1}\left\{ \hat{\sigma}<\overline{\sigma}\right\} \right]\vert\\
= & \vert\mathbb{E}\left[\mathbf{1}\left\{ g_{n,1}(\hat{\gamma},\hat{\sigma})\leq0\right\} \mathbf{1}\left\{ \hat{\sigma}<\overline{\sigma}\right\} \right]-\mathbb{E}\left[\mathbf{1}\left\{ g_{n,2}(\hat{\gamma},\hat{t})<0\right\} \mathbf{1}\left\{ \hat{\sigma}<\overline{\sigma}\right\} \right]\vert\\
\leq & \mathbb{E}\left|\mathbf{1}\left\{ g_{n,1}(\hat{\gamma},\hat{\sigma})\leq0\right\} -\mathbf{1}\left\{ g_{n,2}(\hat{\gamma},\hat{t})<0\right\} \right|\\
= & Pr\left\{ \mathbf{1}\left\{ g_{n,1}(\hat{\gamma},\hat{\sigma})\leq0\right\} \neq\mathbf{1}\left\{ g_{n,2}(\hat{\gamma},\hat{t})<0\right\} \right\}. 
\end{align*}
Thus, it suffices to show that, as $n\rightarrow\infty$,
\[
Pr\left\{ \mathbf{1}\left\{ g_{n,1}(\hat{\gamma},\hat{\sigma})\leq0\right\} \neq\mathbf{1}\left\{ g_{n,2}(\hat{\gamma},\hat{t})<0\right\} \right\} \rightarrow0.
\]
To this end, pick any $\varepsilon>0$. Note 
\[
Pr\left\{ \mathbf{1}\left\{ g_{n,1}(\hat{\gamma},\hat{\sigma})\leq0\right\} \neq\mathbf{1}\left\{ g_{n,2}(\hat{\gamma},\hat{t})<0\right\} \right\} \leq B_{n,1}(h,\varepsilon)+B_{n,2}(h,\varepsilon),
\]
where 
\begin{align*}
B_{n,1}(h,\varepsilon) & :=Pr\left\{ \left|g_{n,1}(\hat{\gamma},\hat{\sigma})-g_{n,2}(\hat{\gamma},\hat{t})\right|>\varepsilon\right\} ,\quad
B_{n,2}(h,\varepsilon)  :=Pr\left\{ \left|g_{n,1}(\hat{\gamma},\hat{\sigma})\right|\leq\varepsilon\right\} .
\end{align*}
Note $B_{n,1}(h,\varepsilon)\rightarrow0$,
and due to Portmanteau's lemma, $B_{n,2}(h,\varepsilon)\rightarrow Pr\left\{ \left|G\right|\leq\varepsilon\right\} $. Then, $B_{n}(h)\rightarrow0$
follows by letting $\varepsilon\downarrow0$ and noting $Pr\left\{ G=0\right\} =0$.

\end{proof}

\section{Additional Results: Directional Differentiability}

\begin{lem}\label{lem:global.lfp}
Suppose Assumptions \ref{asm:1}-\ref{asm:centro} hold. Then, 
$\mu_{0}=\mathbf{0}$ is a global least favorable point.
\end{lem}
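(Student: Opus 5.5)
Under Assumption \ref{asm:centro}(i), centrosymmetry of $\Theta_{\mathcal{M}}$ means $(\mu,u)\in\Theta_{\mathcal{M}}$ implies $(-\mu,-u)\in\Theta_{\mathcal{M}}$. First I record the structural consequences. Centrosymmetry gives $M_\mu=-M_\mu$, hence $\mathbf{0}\in M_\mu$ by convexity. It also gives $I(-\mu)=-I(\mu)$, so $\underline{I}(-\mu)=-\overline{I}(\mu)$. Convexity gives that $\overline{I}(\cdot)$ is concave and $\underline{I}(\cdot)$ is convex on $M_\mu$, as already noted after Assumption \ref{asm:centro}. In particular
\[
\overline{I}(\mathbf{0})=-\underline{I}(\mathbf{0}),\qquad \overline{I}(\mathbf{0})=\overline{I}\bigl(\tfrac12\mu+\tfrac12(-\mu)\bigr)\geq \tfrac12\overline{I}(\mu)+\tfrac12\overline{I}(-\mu)=\tfrac12\bigl(\overline{I}(\mu)-\underline{I}(\mu)\bigr),
\]
so the length of every identified set is at most $2\overline{I}(\mathbf{0})$. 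Writing $k:=\overline{I}(\mathbf{0})\geq0$, this says $\overline{I}(\mu)-\underline{I}(\mu)\leq 2k$ for all $\mu\in M_\mu$.

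Next I bound $\overline{R}(\mu)$ pointwise. If $\overline{I}(\mu)\le 0$ or $\underline{I}(\mu)\ge0$, then $\overline{R}(\mu)=0\le\overline{R}(\mathbf{0})$ trivially. Otherwise set $a=\overline{I}(\mu)>0$ and $b=-\underline{I}(\mu)>0$. By AM--GM, $ab/(a+b)\le (a+b)/4\le 2k/4=k/2$. At $\mu=\mathbf{0}$ we have $a=b=k$, so $\overline{R}(\mathbf{0})=k^2/(2k)=k/2$ provided $k>0$. Hence $\overline{R}(\mathbf{0})\ge\overline{R}(\mu)$ for all $\mu\in M_\mu$.

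It remains to verify the nondegeneracy and interiority requirements of Assumption \ref{asm:2}, and this is the main obstacle. First, I need $k>0$. The paper's standing assumption provides some $t^*$ with $\underline{I}(t^*)<0<\overline{I}(t^*)$, so $\overline{I}(t^*)-\underline{I}(t^*)>0$, and the displayed inequality gives $2k\geq \overline{I}(t^*)-\underline{I}(t^*)>0$. Thus $\overline{I}(\mathbf{0})>0>\underline{I}(\mathbf{0})$. Second, I need $\mathbf{0}\in\operatorname{int}(M_\mu)$. Since $M_\mu$ is convex and symmetric, it suffices that $M_\mu$ has nonempty interior, because then $\mathbf{0}$ is the midpoint of an interior point $x$ and $-x$, which is also interior. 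Nonempty interior of $M_\mu$ should follow from $\operatorname{int}(\Theta)\neq\emptyset$ together with the local differentiability of $\mu(\cdot)$. If that fails, I would restrict to the affine hull, and I would otherwise take this as implicit in the maintained setup. Third, I need some $\gamma_0\in\operatorname{int}(M)$ with $\mu(\gamma_0)=\mathbf{0}$ and $(\gamma_0^\top,0)^\top\in\operatorname{int}(\Theta)$. Here I would use that $0$ lies strictly inside $I(\mathbf{0})$, together with continuity of the bounds near $\mathbf{0}$ (from directional differentiability in Assumption \ref{asm:diff}). This is where I expect the most care to be needed.

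Assumption \ref{asm:centro}(ii) is not needed for the least-favorable property itself; it only ensures $\mathbf{0}\notin\partial\overline{I}(\mathbf{0})$ later.
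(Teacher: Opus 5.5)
Your argument is correct and is essentially the paper's. The paper writes $\overline{R}(\mu)$ as half the harmonic mean of $\overline{I}(\mu)$ and $-\underline{I}(\mu)$ and bounds it by half their arithmetic mean, which is your AM--GM step; it then uses concavity of $\overline{I}(\cdot)$ together with $-\underline{I}(\mu)=\overline{I}(-\mu)$ to reach $\overline{I}(\mathbf{0})/2=\overline{R}(\mathbf{0})$, while simply asserting the nondegeneracy and interiority conditions as consequences of Assumptions \ref{asm:2} and \ref{asm:centro} (so your extra checks go beyond the paper, and $\operatorname{int}(M_\mu)\neq\emptyset$ is given directly by $\mu_0\in\operatorname{int}(M_\mu)$ in Assumption \ref{asm:2}, which closes your midpoint argument).
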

\begin{proof}
Note Assumptions \ref{asm:2} and \ref{asm:centro} imply that $\mathbf{0}\in int(M_\mu)$ is such that $\overline{I}(\mathbf{0})>0>\underline{I}(\mathbf{0})$
and $\overline{R}(\mathbf{0})=\frac{\overline{I}(\mathbf{0})}{2}$.
We  show that $\overline{R}(\mathbf{0})\geq\overline{R}(\mathbf{\mu})$
for all $\mu\in M_{\mu}$ such that $\mu\neq\mathbf{0}$. To see this,
note it suffices to check all $\mu\in M_{\mu}$ such that $\overline{I}(\mathbf{\mu})>0>\underline{I}(\mu)$, for which
\begin{align*}
\overline{R}(\mathbf{\mu}) & =\frac{1}{2}\frac{2}{\left(\frac{1}{\left[-\underline{I}(\mu)\right]}+\frac{1}{\bar{I}(\mu)}\right)} \leq\frac{1}{2}\left(\frac{\bar{I}(\mu)+\left[-\underline{I}(\mu)\right]}{2}\right) \leq\frac{1}{2}\left(\overline{I}\left(\frac{\mu-\mu}{2}\right)\right)=\frac{\overline{I}(\mathbf{0})}{2},
\end{align*}
where the first inequality follows from the fact that the harmonic mean
is always no larger than the arithmetic mean for positive numbers, and
the second inequality follows from concavity of $\overline{I}(\cdotp)$
and $-\underline{I}(\mu)=\overline{I}(-\mu)$.
\end{proof}

\begin{lem}
\label{lem:key}Suppose the conditions of Theorem \ref{thm:non-diff} hold. Consider
the following problem:
\begin{equation}
\max_{\mathbf{v}\in\mathbb{R}^{d_{\mu}}:\mathbf{v}^{\top}\mathbf{v}=1}\overline{I}^{(1)}(\mathbf{0};\mathbf{v})\left(\mathbf{v}^{\top}\Sigma_{0}^{-1}\mathbf{v}\right)^{-1/2}.\label{eq:key-1}
\end{equation}
A solution of (\ref{eq:key-1}) is $\overline{\mathbf{v}}=\frac{\Sigma_{0}\overline{p}}{\left\Vert \Sigma_{0}\overline{p}\right\Vert }$,
where $\overline{p}$ is the unique solution of $\min_{p\in\partial\overline{I}(\mathbf{0})}\left\{ p^{\top}\Sigma_{0}p\right\} ^{1/2}$.
Moreover, $\overline{I}^{(1)}(\mathbf{0};\overline{\mathbf{v}})\left(\mathbf{\overline{v}}^{\top}\Sigma_{0}^{-1}\mathbf{\mathbf{\overline{v}}}\right)^{-1/2}=\left\{ \overline{p}^{\top}\Sigma_{0}\overline{p}\right\} ^{1/2}>0$. 
\end{lem}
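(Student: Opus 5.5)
This is a minimax duality statement. Since the objective is positively homogeneous of degree zero in $\mathbf{v}$ (because $\overline{I}^{(1)}(\mathbf{0};\cdot)$ is positively homogeneous), I would drop the unit-norm constraint. Substituting $\mathbf{v}=\Sigma_0^{1/2}w$ turns the problem into
\[
\sup_{w\neq \mathbf{0}}\frac{\min_{p\in\partial\overline{I}(\mathbf{0})}p^{\top}\Sigma_0^{1/2}w}{\|w\|},
\]
using $\overline{I}^{(1)}(\mathbf{0};\mathbf{v})=\min_{p\in\partial\overline{I}(\mathbf{0})}p^{\top}\mathbf{v}$ (Rockafellar, as recalled before Theorem \ref{thm:non-diff}) and $(\mathbf{v}^{\top}\Sigma_0^{-1}\mathbf{v})^{1/2}=\|w\|$. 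Writing $K:=\Sigma_0^{1/2}\partial\overline{I}(\mathbf{0})$, which is compact and convex and does not contain $\mathbf{0}$ (since $\mathbf{0}\notin\partial\overline{I}(\mathbf{0})$ by Assumption \ref{asm:centro}(ii) and $\Sigma_0$ is positive definite), the problem becomes $\sup_{\|w\|=1}\min_{q\in K}q^{\top}w$. This is the classical dual of the minimum-norm-point problem $\min_{q\in K}\|q\|$.

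\textbf{Upper bound (weak duality).} For any unit $w$ and the minimum-norm point $\bar q=\Sigma_0^{1/2}\overline{p}$, we have $\min_{q\in K}q^{\top}w\le \bar q^{\top}w\le\|\bar q\|$ by Cauchy--Schwarz. Hence the value is at most $\|\bar q\|=\{\overline{p}^{\top}\Sigma_0\overline{p}\}^{1/2}$. Here $\overline{p}$ is unique because $p\mapsto p^{\top}\Sigma_0p$ is strictly convex on a convex compact set.

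\textbf{Attainment.} Take $\bar w=\bar q/\|\bar q\|$. The variational inequality characterizing the projection of $\mathbf{0}$ onto $K$ gives $(q-\bar q)^{\top}\bar q\ge0$ for all $q\in K$, that is, $\min_{q\in K}q^{\top}\bar q=\|\bar q\|^2$. So the objective at $\bar w$ equals $\|\bar q\|$. Translating back, $\bar{\mathbf v}\propto\Sigma_0^{1/2}\bar w\propto\Sigma_0\overline{p}$, and normalizing gives $\overline{\mathbf v}=\Sigma_0\overline{p}/\|\Sigma_0\overline{p}\|$. The value equals $\{\overline{p}^{\top}\Sigma_0\overline{p}\}^{1/2}$, which is strictly positive because $\overline{p}\neq\mathbf{0}$ and $\Sigma_0\succ0$.

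A direct check is also useful for the later applications: $\overline{I}^{(1)}(\mathbf{0};\Sigma_0\overline{p})=\min_p p^{\top}\Sigma_0\overline{p}=\overline{p}^{\top}\Sigma_0\overline{p}$ by the first-order optimality condition of the quadratic program \eqref{eq:quadratic.program}. In addition, $(\overline{p}^{\top}\Sigma_0\Sigma_0^{-1}\Sigma_0\overline{p})^{1/2}=(\overline{p}^{\top}\Sigma_0\overline{p})^{1/2}$, which confirms the ratio.

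The only delicate step is the first-order condition, i.e.\ the projection inequality $\min_{p\in\partial\overline{I}(\mathbf{0})}p^{\top}\Sigma_0\overline{p}=\overline{p}^{\top}\Sigma_0\overline{p}$. I would justify it by noting that $g(\lambda)=(\overline{p}+\lambda(p-\overline{p}))^{\top}\Sigma_0(\overline{p}+\lambda(p-\overline{p}))$ is minimized at $\lambda=0$ over $[0,1]$, so $g'(0)\ge0$.
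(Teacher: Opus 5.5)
Your proof is correct, and it takes a different route from the paper.

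\textbf{The paper's route.} It first uses homogeneity to replace the unit sphere by the ellipsoid $\{\mathbf{v}:\mathbf{v}^{\top}\Sigma_0^{-1}\mathbf{v}\le 1\}$, with a short contradiction argument (relying on the value being positive) showing that the maximum over the ellipsoid equals the maximum over its boundary. It then rewrites the problem as $\max_{\mathbf{v}}\min_{p\in\partial\overline{I}(\mathbf{0})}p^{\top}\mathbf{v}$ and invokes Sion's minimax theorem to swap max and min. Finally, it evaluates the inner maximum for each fixed $p$ by Cauchy--Schwarz, which turns the dual into the quadratic program defining $\overline{p}$.

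\textbf{Your route.} You whiten with $\Sigma_0^{1/2}$ and exhibit the saddle point directly. Weak duality comes from Cauchy--Schwarz at the minimum-norm point. Attainment comes from the projection inequality $(p-\overline{p})^{\top}\Sigma_0\overline{p}\ge 0$.

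\textbf{Comparison.} Your argument is more elementary: it needs no minimax theorem and no sphere-versus-ball step. It is also constructive. It shows outright that $\Sigma_0\overline{p}$ attains the bound, and it yields the identity $\overline{I}^{(1)}(\mathbf{0};\Sigma_0\overline{p})=\overline{p}^{\top}\Sigma_0\overline{p}$. That identity is what the paper actually uses downstream, for example in the Bayes-value computation in the proof of Theorem \ref{thm:non-diff}(ii).

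In the paper, the identity and the identification of the maximizer are only asserted at the end. They do follow from Sion's theorem plus a saddle-point argument that is left implicit: the max--min optimizer must be a best response to $\overline{p}$, and that best response is unique by the equality case of Cauchy--Schwarz. The paper's approach has the advantage that the equality of values comes without guessing the optimizer in advance. Yours establishes the value and the maximizer together in one self-contained verification.
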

\begin{proof}
Let $f(\mathbf{v}):=\overline{I}^{(1)}(\mathbf{0};\mathbf{v})\left(\mathbf{v}^{\top}\Sigma_{0}^{-1}\mathbf{v}\right)^{-1/2}$,
which is continuous in $\left\{ \mathbf{v}\in\mathbb{R}^{d_{\mu}}:\mathbf{v}^{\top}\mathbf{v}=1\right\} $.
Thus, $\mathbf{V}^{*}:=\max_{\mathbf{v}\in\mathbb{R}^{d_{\mu}}:\mathbf{v}^{\top}\mathbf{v}=1}f(\mathbf{v})$
is finite. Moreover, note $\mathbf{v}^{\top}\Sigma_{0}^{-1}\mathbf{v}>0$
for all $\mathbf{v}^{\top}\mathbf{v}=1$, among which there exists
some $\mathbf{v}$ such that $\overline{I}^{(1)}(\mathbf{0};\mathbf{v})>0$
due to Assumption \ref{asm:centro}(ii). Thus, $\mathbf{V}^{*}>0$
is attained at some $\mathbf{v}^{\top}\mathbf{v}=1$ such that $\overline{I}^{(1)}(\mathbf{0};\mathbf{v})>0$.
Furthermore, as $f(\alpha\mu)=f(\mu)$ for all $\alpha>0$ and all
$\mu\neq\mathbf{0}$, $\mathbf{V}^{*}$ depends only on the direction
of $\mathbf{v}$ and not on its length. Therefore, $\mathbf{V}^{*}=\max_{\mathbf{v}^{\top}\Sigma_{0}^{-1}\mathbf{v}=1}\overline{I}^{(1)}(\mathbf{0};\mathbf{v}).$

First, we show that 
\begin{equation}
\max_{\mathbf{v}^{\top}\Sigma_{0}^{-1}\mathbf{v}=1}\overline{I}^{(1)}(\mathbf{0};\mathbf{v})=\max_{\mathbf{v}^{\top}\Sigma_{0}^{-1}\mathbf{v}\leq1}\overline{I}^{(1)}(\mathbf{0};\mathbf{v}).\label{eq:pf.key.equal}
\end{equation}
 Clearly, $\max_{\mathbf{v}^{\top}\Sigma_{0}^{-1}\mathbf{v}\leq1}\overline{I}^{(1)}(\mathbf{0};\mathbf{v})\geq\max_{\mathbf{v}^{\top}\Sigma_{0}^{-1}\mathbf{v}=1}\overline{I}^{(1)}(\mathbf{0};\mathbf{v})>0$.
Suppose, on the contrary, the first inequality is strict.  Then, there exists some $\widetilde{\mathbf{v}}$ such that $\left(\widetilde{\mathbf{v}}\right)^{\top}\Sigma_{0}^{-1}\widetilde{\mathbf{v}}<1$
and
\begin{equation}
\overline{I}^{(1)}(\mathbf{0};\widetilde{\mathbf{v}})>\max_{\mathbf{v}^{\top}\Sigma_{0}^{-1}\mathbf{v}=1}\overline{I}^{(1)}(\mathbf{0};\mathbf{v})>0.\label{eq:pf.key.contra}
\end{equation}
It follows that $\widetilde{\mathbf{v}}\neq0$ and $\left(\widetilde{\mathbf{v}}\right)^{\top}\Sigma_{0}^{-1}\widetilde{\mathbf{v}}>0$.
Then, let $\widetilde{\alpha}=\left(1/\left(\widetilde{\mathbf{v}}\right)^{\top}\Sigma_{0}^{-1}\widetilde{\mathbf{v}}\right)^{1/2}>1$.
We have $\left(\widetilde{\alpha}\widetilde{\mathbf{v}}\right)^{\top}\Sigma_{0}^{-1}\left(\widetilde{\alpha}\widetilde{\mathbf{v}}\right)=1$,
and 
\[
\overline{I}^{(1)}(\mathbf{0};\widetilde{\alpha}\widetilde{\mathbf{v}})=\widetilde{\alpha}\overline{I}^{(1)}(\mathbf{0};\mathbf{\widetilde{\mathbf{v}}})>\overline{I}^{(1)}(\mathbf{0};\mathbf{\widetilde{\mathbf{v}}}),
\]
which contradicts (\ref{eq:pf.key.contra}). Conclude that the equality
must hold. 

Then, note the RHS of (\ref{eq:pf.key.equal}) can be recast as a
max-min problem 
\begin{equation}
\max_{\mathbf{v}\in\mathbb{R}^{d_{\mu}}:\mathbf{v}^{\top}\Sigma_{0}^{-1}\mathbf{v}\leq1}\min_{p\in\partial \overline{I}(\mathbf{0})}p^{\top}\mathbf{v},\label{eq:key-2}
\end{equation}
where both $\left\{ \mathbf{v}\in\mathbb{R}^{d_{\mu}}:\mathbf{v}^{\top}\Sigma_{0}^{-1}\mathbf{v}\leq1\right\} $
and $\partial\overline{I}(\mathbf{0})$ are convex
and compact. Moreover, $p^{\top}\mathbf{v}$ is continuous and affine in both $p$ and $\mathbf{v}$.  It follows by Sion's minimax theorem \citep[Theorem 3]{simons1995minimax}
that 
\begin{equation}
\max_{\mathbf{v}\in\mathbb{R}^{d_{\mu}}:\mathbf{v}^{\top}\Sigma_{0}^{-1}\mathbf{v}\leq1}\min_{p\in\partial \overline{I}(\mathbf{0})}p^{\top}\mathbf{v}=\min_{p\in\partial\overline{I}(\mathbf{0})}\max_{\mathbf{v}\in\mathbb{R}^{d_{\mu}}:\mathbf{v}^{\top}\Sigma_{0}^{-1}\mathbf{v}\leq1}p^{\top}\mathbf{v}.\label{eq:key-min-max}
\end{equation}
For each $p\in\partial\overline{I}(\mathbf{0})$, Cauchy-Schwarz inequality
implies
\begin{align*}
\max_{\mathbf{v}\in\mathbb{R}^{d_{\mu}}:\mathbf{v}^{\top}\Sigma_{0}^{-1}\mathbf{v}\leq1}p^{\top}\mathbf{v} & \leq\max_{\mathbf{v}\in\mathbb{R}^{d_{\mu}}:\mathbf{v}^{\top}\Sigma_{0}^{-1}\mathbf{v}\leq1}\left(p^{\top}\Sigma_{0}p\right)^{1/2}\left(\mathbf{v}^{\top}\Sigma_{0}^{-1}\mathbf{v}\right)^{1/2}\leq\left(p^{\top}\Sigma_{0}p\right)^{1/2},
\end{align*}
where $p^{\top}\Sigma_{0}p>0$ as $\Sigma_{0}$ is positive definite
and $p\neq\mathbf{0}$. As $\frac{\Sigma_{0}p}{\left(p^{\top}\Sigma_{0}p\right)^{1/2}}$
achieves $\left(p^{\top}\Sigma_{0}p\right)^{1/2}$, conclude that
\[
\max_{\mathbf{v}\in\mathbb{R}^{d_{\mu}}:\mathbf{v}^{\top}\Sigma_{0}^{-1}\mathbf{v}\leq1}p^{\top}\mathbf{v}=\left(p^{\top}\Sigma_{0}p\right)^{1/2}.
\]
It follows that the min-max problem in (\ref{eq:key-min-max}) reduces
to 
\begin{equation}
\min_{p\in\partial\overline{I}(\mathbf{0})}\left\{ p^{\top}\Sigma_{0}p\right\} ^{1/2},\label{eq:key-4}
\end{equation}
which has a unique solution referred to as $\overline{p}$ since (\ref{eq:key-4})
is equivalent to $\min_{p\in\partial\overline{I}(\mathbf{0})}p^{\top}\Sigma_{0}p$,
which  has a strictly convex objective function. Conclude that
\[
\mathbf{V}^{*}=\left( \overline{p}^{\top}\Sigma_{0}\overline{p}\right)^{1/2}=\left(\overline{p}^{\top}\Sigma_{0}\overline{p}\right)^{-1/2}\overline{I}^{(1)}(\mathbf{0};\Sigma_{0}\overline{p}),
\]
and a solution of \eqref{eq:key-1} is $\frac{\Sigma_{0}\overline{p}}{\left\Vert \Sigma_{0}\overline{p}\right\Vert }$.
\end{proof}

\begin{lem}\label{lem:non-diff-continuous}

Under the conditions of Theorem \ref{thm:non-diff}, $\overline{p}(\Sigma_0)$, $\mu^{o}(\Sigma_0)$ and $\overline{x}(\Sigma_0)$ are all continuous (single-valued) functions of $\Sigma_0$.
\end{lem}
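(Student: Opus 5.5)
The plan is to treat the three maps one at a time, in the order $\overline{p}$, $\overline{x}$, $\mu^{o}$. For each map, I will establish (a) existence and uniqueness of the optimizer for every positive definite $\Sigma$, and (b) continuity via Berge's maximum theorem. Berge's theorem gives an upper hemicontinuous argmax correspondence, which is a continuous function once it is single-valued. To apply it, I will restrict attention to positive definite $\Sigma$ in a small closed neighbourhood $\mathcal{N}$ of a fixed $\Sigma_0$. On $\mathcal{N}$ the eigenvalues of $\Sigma$ are bounded above and away from zero, and I will show that the optimizers can be confined to a fixed compact set independent of $\Sigma\in\mathcal{N}$. The feasible sets $\partial\overline{I}(\mathbf{0})$, $[0,\infty)$ and $\{\mu\in M_{\mu,\infty}:\overline{I}^{(1)}(\mathbf{0};\mu)\geq0\}$ do not depend on $\Sigma$, so the constraint correspondence is constant. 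This is trivially continuous once each set is intersected with that fixed compact set.

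\emph{The map $\overline{p}$.} The set $\partial\overline{I}(\mathbf{0})$ is nonempty, compact and convex, and it does not contain $\mathbf{0}$ by Assumption \ref{asm:centro}(ii). The objective $p^{\top}\Sigma p$ is jointly continuous in $(p,\Sigma)$ and strictly convex in $p$. Hence the minimizer exists and is unique, and Berge's theorem yields continuity of $\overline{p}(\Sigma)$. It follows that $s(\Sigma):=(\overline{p}(\Sigma)^{\top}\Sigma\overline{p}(\Sigma))^{1/2}$ is continuous and strictly positive.

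\emph{The map $\overline{x}$.} Write $g(x,s)=(\overline{C}+x)\Phi(-x/s)$ and substitute $z=x/s$. The derivative $\partial_x g$ has the same sign as
\[
m(z)-z-\overline{C}/s,\qquad m(z):=\frac{1-\Phi(z)}{\phi(z)},
\]
where $m$ is the Mills ratio. Because $m$ is strictly decreasing, this expression is strictly decreasing in $z$ on $[0,\infty)$ and tends to $-\infty$. Therefore the sign pattern of $\partial_x g$ is "$+$ then $-$", or "$-$" throughout. This gives a unique maximizer, which equals $0$ exactly when $\sqrt{\pi/2}\,s\leq\overline{C}$, using $m(0)=\sqrt{\pi/2}$. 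For compactness, $g(x,s)\to0$ as $x\to\infty$ uniformly for $s$ in a compact subset of $(0,\infty)$, while $g(0,s)=\overline{C}/2>0$. So the maximizer lies in a fixed interval $[0,K]$. Berge's theorem, combined with continuity of $s(\Sigma)$, then gives continuity of $\overline{x}(\Sigma)$.

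\emph{The map $\mu^{o}$ (main obstacle).} The feasible set is closed and convex. $M_{\mu,\infty}$ is the superlevel set of the concave function $\overline{I}_\infty-\underline{I}_\infty$, and $\{\overline{I}^{(1)}(\mathbf{0};\mu)\geq0\}$ is a superlevel set of a concave function. For existence and compactness, note that $\overline{I}^{(1)}(\mathbf{0};\mu)$ grows at most linearly while $\Phi(-\|\mu\|_{\Sigma^{-1}})$ decays like a Gaussian, uniformly over $\Sigma\in\mathcal{N}$. Since $r(\mathbf{0};\Sigma)=\overline{C}/2>0$, the maximizers lie in a fixed ball. For uniqueness, I will work with $\log r$ on the feasible set, where $\overline{C}+\overline{I}^{(1)}(\mathbf{0};\mu)\geq\overline{C}>0$. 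I will decompose
\[
\log r(\mu;\Sigma)=\log\bigl(\overline{C}+\overline{I}^{(1)}(\mathbf{0};\mu)\bigr)+\log\Phi\bigl(-\|\mu\|_{\Sigma^{-1}}\bigr).
\]
The first term is concave, as the log of a positive concave function. The second is concave, as the composition of the concave increasing map $\log\Phi$ with the concave map $-\|\cdot\|_{\Sigma^{-1}}$. The delicate step is strict concavity, which I would argue along an arbitrary segment $[\mu_1,\mu_2]$ with $\mu_1\neq\mu_2$. If the segment is not contained in a ray from the origin, then $\|\cdot\|_{\Sigma^{-1}}$ is strictly convex along it, because it is a Euclidean norm in transformed coordinates. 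Composing with the strictly increasing $\log\Phi$ then gives strict concavity of the second term. If the segment lies in a ray, the norm is affine along it and $\log\Phi$ is strictly concave, so the second term is again strictly concave. The remaining case is a segment through $\mathbf{0}$ with points on opposite sides. Here the norm is convex with a kink, and I would use that $\log\Phi$ is strictly concave and strictly decreasing in $\|\mu\|$ to conclude. This boundary case is where I expect the most care to be needed. Given strict concavity, the maximizer $\mu^{o}(\Sigma)$ is unique. Since $r$ is jointly continuous in $(\mu,\Sigma)$ and the feasible set can be restricted to the fixed compact ball, Berge's theorem gives continuity of $\mu^{o}(\Sigma)$.
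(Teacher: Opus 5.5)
Your proof is correct. Its skeleton is the paper's (uniqueness of each optimizer, then Berge's maximum theorem), but you reach compactness and strict concavity by different and lighter means.

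For $\mu^{o}$, the paper writes $\mu=t\mathbf{v}$ and uses unimodality of $t\mapsto r(t\mathbf{v};\Sigma_0)$ to define a direction-wise maximizer $\overline{t}(\mathbf{v};\Sigma_0)$ and the radius $\overline{\mathbf{t}}(\Sigma_0)=\sup_{\mathbf{v}}\overline{t}(\mathbf{v};\Sigma_0)$. It then restricts to the $\Sigma_0$-dependent compact set $\mathcal{C}(\Sigma_0)$ and needs Lemma \ref{lem:non-diff.correspondence} before Berge applies. That lemma proves upper and lower hemicontinuity of $\mathcal{C}(\cdot)$ and itself rests on continuity of $\overline{\mathbf{t}}(\cdot)$. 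You instead use the bound $r(\mu;\Sigma)\leq(\overline{C}+L\|\mu\|)\Phi(-\|\mu\|/\overline{\lambda}^{1/2})$, with $L$ bounding $\|p\|$ on $\partial\overline{I}(\mathbf{0})$ and $\overline{\lambda}$ bounding the eigenvalues of $\Sigma$ on the neighbourhood. Together with feasibility of $\mathbf{0}$ and $r(\mathbf{0};\Sigma)=\overline{C}/2$, this freezes the constraint set. The hemicontinuity lemma and the continuity of $\overline{\mathbf{t}}$ then become unnecessary, and since continuity is a local property nothing is lost.

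For uniqueness, Lemma \ref{lem:non.diff.quasiconcave} proves negative definiteness of the Hessian of $\log\Phi(-\|\mu\|_{\Sigma_0^{-1}})$ away from $\mathbf{0}$ and handles segments through $\mathbf{0}$ separately. Your strict-triangle-inequality argument avoids second derivatives. Two small corrections there:
\begin{itemize}
\item Your first case should read ``segment not contained in a line through the origin.'' A segment straddling $\mathbf{0}$ lies in no ray, yet the norm is only piecewise affine on it.
\item The straddling case closes in one line. Along such a segment $\mu_\lambda=u(\lambda)y$ with $u$ affine, and $\log\Phi(-|u|a)=\min\{\log\Phi(ua),\log\Phi(-ua)\}$ with $a=\|y\|_{\Sigma^{-1}}>0$. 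This is a minimum of two strictly concave functions of $\lambda$, hence strictly concave.
\end{itemize}

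Finally, for $\overline{x}$ the paper only says to compactify and apply Berge, without establishing uniqueness. Your Mills-ratio sign argument supplies that uniqueness explicitly. It also recovers the threshold $\sqrt{\pi/2}\,s\leq\overline{C}$ for $\overline{x}=0$, matching the case split in Theorem \ref{thm:non-diff}.
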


\begin{proof}
\textbf{Continuity of $\overline{p}(\Sigma_0)$}. 
Note $\left( p^{\top}\Sigma_{0}p\right) ^{1/2}$ is continuous in both $\Sigma_0$ and $p$, and $\partial\overline{I}(\mathbf{0})$ is nonempty and  compact. Therefore, $\overline{p}(\Sigma_0)$, defined as the solution of $\min_{p\in\partial \overline{I}(\mathbf{0})}\left\{ p^{\top}\Sigma_{0}p\right\} ^{1/2}$ is upper-hemicontinuous, nonempty and compact-valued by Berge's theorem of maximization. As Lemma \ref{lem:key} established that  $\overline{p}(\Sigma_0)$ is unique, conclude that $\overline{p}(\Sigma_0)$ is in fact  a continuous function.

\noindent \textbf{Continuity of $\mu^{o}(\Sigma_0)$}. 
\noindent \textbf{Step 1}: We show that there exists some finite $\overline{\mathbf{t}}(\Sigma_0)\geq0$
such that 
\begin{equation}\label{eq:non.diff.continuity.1}
\max_{\left\{ \mu\in M_{\mu,\infty}:\overline{I}^{(1)}(\mathbf{0};\mu)\geq0\right\} }r(\mu;\Sigma_0)=\max_{\left\{ \mu\in M_{\mu,\infty}:\overline{I}^{(1)}(\mathbf{0};\mu)\geq0;\mu^{\top}\mu\leq\mathbf{\overline{t}}^{2}(\Sigma_0)\right\} }r(\mu;\Sigma_0).
\end{equation}
For each $\mu$, we can write $\mu=t\mathbf{v}$ for some $\mathbf{v}$ such that $\mathbf{v}^{\top}\mathbf{v}=1$
is the direction of $\mu$ and some $t\geq0$ is the length of $\mu$. Therefore,
\[
r(\mu;\Sigma_0)=r(t\mathbf{v};\Sigma_0)=\left(\overline{C}+t\overline{I}^{(1)}(\mathrm{\mathbf{0}};\mathbf{v})\right)\Phi\left(-t\sqrt{\mathbf{v}^{\top}\Sigma_0^{-1}{\mathbf{v}}}\right),
\]
and
\[
\max_{\{\mu\in M_{\mu,\infty}:\overline{I}^{(1)}(\mathbf{0};\mu)\geq0\}}r(\mu;\Sigma_0)=\max_{\left\{ \mathbf{v}^{\top}\mathbf{v}=1:\overline{I}^{(1)}(\mathrm{\mathbf{0}};\mathbf{v})\geq0\right\} }\max_{\left\{ t\geq0:t\mathbf{v}\in M_{\mu,\infty}\right\} }r(t\mathbf{v};\Sigma_0).
\]
For each $\mathbf{v}^{\top}\mathbf{v}=1$ such that $\overline{I}^{(1)}(\mathrm{\mathbf{0}};\mathbf{v})\geq0$,
note $r(t\mathbf{v};\Sigma_0)$ is continuous, either decreasing, or
first increasing and then decreasing in $t\geq0$. It follows that $\max_{\left\{ t\geq0\right\} }r(t\mathbf{v};\Sigma)
$
is either uniquely characterized by a FOC or at $t=0$, which also
defines a single-valued function $\overline{t}(\mathbf{v};\Sigma_0)$
continuous in both $\mathbf{v}$ and $\Sigma_0$. As the set $\left\{ \mathbf{v}^{\top}\mathbf{v}=1:\overline{I}^{(1)}(\mathrm{\mathbf{0}};\mathbf{v})\geq0\right\} $
is compact, there exists a finite-valued $\mathbf{\overline{t}}(\Sigma_0)$
such that 
\begin{equation}\label{eq:t.max}
\mathbf{\overline{t}}(\Sigma_0)=\sup_{\left\{ \mathbf{v}^{\top}\mathbf{v}=1:\overline{I}^{(1)}(\mathrm{\mathbf{0}};\mathbf{v})\geq0\right\} }\overline{t}(\mathbf{v};\Sigma_0).   
\end{equation}
Moreover, $\mathbf{\overline{t}}(\Sigma_0)$ is also continuous in $\Sigma_0$. These imply that the inner maximization problem must be achieved at some finitely valued $t\leq\overline{\mathbf{t}}(\Sigma_0)$ for each $\mathbf{v}^{\top}\mathbf{v}=1$ such that $\overline{I}^{(1)}(\mathrm{\mathbf{0}};\mathbf{v})\geq0$, which, in turn, implies that \eqref{eq:non.diff.continuity.1} must hold. 

\noindent \textbf{Step 2}: Conclusion from Step 1 implies that a solution of \eqref{eq:non.benign} can be found by restricting to the nonempty and compact set 
\begin{equation}\label{eq:non.diff.C.set}
\mathcal{C}(\Sigma_0):=\left\{ \mu\in M_{\mu,\infty}:\overline{I}^{(1)}(\mathbf{0};\mu)\geq0;\mu^{\top}\mu\leq\mathbf{\overline{t}}^{2}(\Sigma_0)\right\},
\end{equation}
which verifies the existence of $\mu^{o}$. Lemma
\ref{lem:non.diff.quasiconcave} shows that $r(\cdotp;\Sigma_0)$ is strictly quasiconcave in $\left\{ \mu\in M_{\mu,\infty}:\overline{I}^{(1)}(\mathbf{0};\mu)\geq0\right\}$, which is convex itself. This verifies the uniqueness of $\mu^{o}$. It follows that ${\mu}^{o}(\Sigma_0)=\arg\max_{\mu\in\mathcal{C}(\Sigma_0)}r(\mu;\Sigma_0)$. 

\noindent \textbf{Step 3:} We show $\mu^{o}(\Sigma_0)$ is continuous.  By Lemma \ref{lem:non-diff.correspondence},  $\mathcal{C}(\Sigma_0)$
is a continuous correspondence. Invoking Berge's theorem of maximization yields that
${\mu}^{o}(\Sigma_0)$ is upper-hemicontinuous. As Step 2 established that $\mu^{o}(\Sigma_0)$ is  single-valued, conclude that ${\mu}^{o}(\Sigma_0)$
is in fact a continuous function. 

\noindent \textbf{Continuity of $\overline{x}(\Sigma_0)$}. Note we showed that $\overline{p}=\overline{p}(\Sigma_0)$ is continuous in $\Sigma_0$. Then, the continuity of $\overline{x}(\Sigma_0)$ follows by applying Berge's theorem of maximization after compactifying $[0,\infty)$ with arguments analogous to the proof of continuity of $\mu^{o}(\Sigma_0)$ above. 
\end{proof}

\begin{lem}\label{lem:thm.non.diff.case.3.1}
In case (iii) of Theorem \ref{thm:non-diff}, a solution of (\ref{pf:R.star})
is $\mathbf{1}\left\{ \left(\mu^{o}\right)^{\top}\Sigma_{0}^{-1}\varDelta\geq0\right\}$, 
where $\mu^{o}$ solves \eqref{eq:non.benign}. 
\end{lem}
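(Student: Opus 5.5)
The plan is a saddle-point (equilibrium) argument for the game \eqref{pf:R.star}, in which the data are $\varDelta\sim\mathcal{N}(\mu,\Sigma_0)$ and Nature picks $(\mu,\mu^*)$ with $\mu\in M_{\mu,\infty}$. Write $d^o:=\mathbf{1}\{(\mu^o)^{\top}\Sigma_0^{-1}\varDelta\geq0\}$. I will produce a prior $\pi^o$ under which $d^o$ is Bayes, and then show that the worst-case regret of $d^o$ equals the Bayes risk of $\pi^o$. Together these give $\mathbf{R}^*=r(\mu^o;\Sigma_0)$, with $d^o$ attaining it.

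\textbf{Lower bound.} Take the symmetric two-point prior $\pi^o$ placing mass $1/2$ on each of
\[
(\mu^o,\;\overline{C}+\overline{I}^{(1)}(\mathbf{0};\mu^o))\quad\text{and}\quad(-\mu^o,\;\underline{C}+\underline{I}^{(1)}(\mathbf{0};-\mu^o)).
\]
By property (v) and $\overline{C}=-\underline{C}$, the two welfare values are $\pm(\overline{C}+\overline{I}^{(1)}(\mathbf{0};\mu^o))$. The positive one is strictly positive because the feasible set in \eqref{eq:non.benign} requires $\overline{I}^{(1)}(\mathbf{0};\mu^o)\geq0$. Both support points lie in the parameter space, since $M_{\mu,\infty}$ is centrosymmetric. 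The likelihood ratio between the two points is monotone in $(\mu^o)^{\top}\Sigma_0^{-1}\varDelta$ and the payoffs are symmetric, so the Bayes rule is exactly $d^o$. Its Bayes risk is $r(\mu^o;\Sigma_0)$, because $\Pr_{\mu^o}\{(\mu^o)^{\top}\Sigma_0^{-1}\varDelta<0\}=\Phi(-\sqrt{(\mu^o)^{\top}\Sigma_0^{-1}\mu^o})$. This yields $\mathbf{R}^*\geq r(\mu^o;\Sigma_0)$. I also need $\mu^o\neq\mathbf{0}$ so that $d^o$ is well defined. This follows because case (iii) fails the benign conditions: along the ray $t\Sigma_0\overline{p}$, the function $r$ is increasing at $t=0$ when $(\frac{\pi}{2}\overline{p}^{\top}\Sigma_0\overline{p})^{1/2}\geq\overline{C}$ with strict $\overline{x}$-violation.

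\textbf{Upper bound.} I must show $\sup_{\mu,\mu^*}R(d^o;\mu,\mu^*)\leq r(\mu^o;\Sigma_0)$. For fixed $\mu$, regret is linear in $\mu^*$ on each sign region, so only the endpoints matter:
\[
\bigl(\overline{C}+\overline{I}^{(1)}(\mathbf{0};\mu)\bigr)\Phi(-a(\mu))\ \text{ on }\{\overline{I}_\infty(\mu)\geq0\},\qquad
-\bigl(\underline{C}+\underline{I}^{(1)}(\mathbf{0};\mu)\bigr)\Phi(a(\mu))\ \text{ on }\{\underline{I}_\infty(\mu)\leq0\},
\]
where $a(\mu):=(\mu^o)^{\top}\Sigma_0^{-1}\mu/\|\mu^o\|_{\Sigma_0^{-1}}$. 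Substituting $\mu\mapsto-\mu$ and using property (v), the second expression becomes the first, so it suffices to bound
\[
g(\mu):=\bigl(\overline{C}+\overline{I}^{(1)}(\mathbf{0};\mu)\bigr)\Phi(-a(\mu))\leq r(\mu^o;\Sigma_0)\quad\text{for all feasible }\mu.
\]
By Cauchy--Schwarz, $a(\mu)\leq\|\mu\|_{\Sigma_0^{-1}}$. Hence, if $\overline{I}^{(1)}(\mathbf{0};\mu)\geq0$ and $a(\mu)\geq0$, I get $g(\mu)\geq r(\mu;\Sigma_0)$, which points the wrong way. So the argument must instead use optimality of $\mu^o$ through a first-order condition.

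\textbf{KKT step (main obstacle).} The plan here is to use the KKT conditions of \eqref{eq:non.benign}, which is a strictly quasiconcave program over a convex set, to obtain a supergradient inequality. Optimality of $\mu^o$ should give a $q\in\partial\overline{I}(\mathbf{0})$ attaining $\overline{I}^{(1)}(\mathbf{0};\mu^o)=q^{\top}\mu^o$, together with a multiplier for the boundary of $M_{\mu,\infty}$. With these, $\overline{C}+\overline{I}^{(1)}(\mathbf{0};\mu)\leq\overline{C}+q^{\top}\mu$, and $q$ is aligned with $\Sigma_0^{-1}\mu^o$ modulo the boundary normal. Then $g(\mu)\leq(\overline{C}+c\,a(\mu))\Phi(-a(\mu))$, a one-dimensional function of $a(\mu)$. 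The FOC of $r$ at $\mu^o$ shows that this function is maximized at $a=\|\mu^o\|_{\Sigma_0^{-1}}$, with value $r(\mu^o;\Sigma_0)$.

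The delicate part is the case where $\mu^o$ lies on the boundary $\overline{I}_\infty=\underline{I}_\infty$ of $M_{\mu,\infty}$; this is exactly case (iii). There the multiplier alters the gradient direction, and the argument must control points $\mu$ beyond $a(\mu)>\|\mu^o\|_{\Sigma_0^{-1}}$. For this I would exploit the constraint $\overline{I}_\infty(\mu)\geq\underline{I}_\infty(\mu)$, that is, $2\overline{C}\geq\underline{I}^{(1)}-\overline{I}^{(1)}$, to bound the slope $c$, and then use the unimodality of $(\overline{C}+cx)\Phi(-x)$ from Lemma~\ref{lem:non-diff-continuous}.
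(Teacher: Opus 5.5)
Your lower bound is correct. The symmetric prior on $\pm\mu^{o}$ makes $d^{o}$ Bayes with Bayes risk $r(\mu^{o};\Sigma_0)$, and your argument that $\mu^{o}\neq\mathbf{0}$ in case (iii) is fine.

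\textbf{The gap.} The content of the lemma is the upper bound: $g(\mu)\le r(\mu^{o};\Sigma_0)$ for all $\mu\in M_{\mu,\infty}$, with $g(\mu)=g(\mu,\mu^{o})$ in the paper's notation. This is not proved, and your KKT sketch fails exactly where you flag it. In case (iii) $\mu^{o}$ lies on the boundary of $M_{\mu,\infty}$. Stationarity then only gives $q=c\,\Sigma_0^{-1}\mu^{o}/\|\mu^{o}\|_{\Sigma_0^{-1}}+\nu n$, with $\nu\ge0$ and $n$ an outer normal to $M_{\mu,\infty}$ at $\mu^{o}$. So $q^{\top}\mu=c\,a(\mu)+\nu n^{\top}\mu$ is not a function of $a(\mu)$. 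Your one-dimensional bound $g(\mu)\le(\overline{C}+c\,a(\mu))\Phi(-a(\mu))$ is actually false: at $\mu=\mu^{o}$ it would require $\nu n^{\top}\mu^{o}\le0$. But $n^{\top}\mu^{o}>0$ because $\mathbf{0}\in\operatorname{int}M_{\mu,\infty}$. Also $\nu>0$, since otherwise $\mu^{o}$ would be the unconstrained optimum on the ray $\Sigma_0\overline{p}$ at $\overline{x}$, which lies outside $M_{\mu,\infty}$ in case (iii). Bounding the slope $c$ through $2\overline{C}\ge\underline{I}^{(1)}-\overline{I}^{(1)}$ does nothing about the term $\nu n^{\top}\mu$.

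\textbf{The missing ingredient.} You need the supporting-hyperplane inequality $n^{\top}\mu\le n^{\top}\mu^{o}$ on the convex set $M_{\mu,\infty}$. It moves the multiplier into the intercept, not the slope: $g(\mu)\le(\overline{C}+\nu n^{\top}\mu^{o}+c\,a(\mu))\Phi(-a(\mu))$. This is log-concave in $a$ and stationary at $a=\|\mu^{o}\|_{\Sigma_0^{-1}}$ by the KKT condition. There it takes the value $(\overline{C}+q^{\top}\mu^{o})\Phi(-\|\mu^{o}\|_{\Sigma_0^{-1}})=r(\mu^{o};\Sigma_0)$. You would still need to justify nonsmooth KKT, e.g. via Slater.

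\textbf{A multiplier-free alternative.} First, $\log g$ is concave; this is the computation in Lemma \ref{lem:quasi.concavity}. Next, $\psi(\mu):=\log r(\mu;\Sigma_0)-\log g(\mu)=\log\Phi(-\|\mu\|_{\Sigma_0^{-1}})-\log\Phi(-a(\mu))$ satisfies $\psi\le0$ with equality at $\mu^{o}$. Since $\mu^{o}\neq\mathbf{0}$, $\psi$ is differentiable there, so $\nabla\psi(\mu^{o})=0$. Now suppose some $\hat\mu\in M_{\mu,\infty}$ had $g(\hat\mu)>g(\mu^{o})$. Concavity along the segment, which stays in $M_{\mu,\infty}$ by convexity, gives $\log r$ a first-order increase at $\mu^{o}+\epsilon(\hat\mu-\mu^{o})$. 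These points are feasible in \eqref{eq:non.benign} for small $\epsilon$, because $\overline{I}^{(1)}(\mathbf{0};\mu^{o})>0$: on $\{\overline{I}^{(1)}=0\}$ one has $r\le\overline{C}/2<r(\mu^{o};\Sigma_0)$. This contradicts optimality of $\mu^{o}$.

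\textbf{Comparison with the paper.} Completed this way, your direct saddle-point check at $\pm\mu^{o}$ is shorter than the paper's route. The paper never verifies the best-response property at $\mu^{o}$ directly. It builds a restricted game on the compact set $S$ and gets a fixed point $\mu^\star$ by Kakutani. It then passes from local to global optimality using quasiconcavity plus the interiority result of Lemma \ref{lem:boundary}. Only at the end does it identify $\mu^\star=\mu^{o}$, through the least-favorable-prior property and uniqueness of $\mu^{o}$.
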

\begin{proof}
Fix some scalars $x$ and $t$ such that
\begin{equation}\label{eq:x}
x\in\left(0,\bar{t}\overline{I}^{(1)}(\mathbf{0};\Sigma_{0}\overline{p})\right), \quad t>\overline{\mathbf{t}}(\Sigma_0),
\end{equation}
where  $\overline{\mathbf{t}}(\Sigma_0)$ is defined in \eqref{eq:t.max}.  Denote by
\[
S:=\left\{ \mu\in M_{\mu,\infty}\mid\overline{I}^{(1)}(\mathbf{0};\mu)\geq x,\mu^{\top}\mu\leq t^2 \right\}.   
\]
Due to centrosymmetry, for each $\mu\in S$, $-\mu\in M_{\mu,\infty}$. Consider
solving the following MMR problem in a restricted parameter space when either $\mu$ or $-\mu$ can take values in $S$ (which we write as $\pm\mu\in S$):
\begin{equation}
\min_{d(\varDelta)}\sup_{\pm\mu\in S,\mu^{*}\in I_{\infty}(\mu)}\mu^{*}\left[\mathbf{1}\left\{ \mu^{*}\geq0\right\} -\mathbb{E}[d(\varDelta)]\right].\label{eq:prop3.1}
\end{equation}
Below, we take several steps to show that: (1) \eqref{eq:prop3.1} has a solution $\mathbf{1}\left\{ \left(\mu^{\star}\right)^{\top}\Sigma_{0}^{-1}\varDelta\geq0\right\}$, where
$\mu^{\star}\in S$ solves a fixed-point program;  (2) $\mathbf{1}\left\{ \left(\mu^{\star}\right)^{\top}\Sigma_{0}^{-1}\varDelta\geq0\right\} $
is also the MMR solution of (\ref{pf:R.star}); and (3) $\mu^\star$ can be found by solving   \eqref{eq:non.benign}. 

\noindent \textbf{Step 1}:  For each $\tilde{\mu}\in S$, consider a prior
$\pi_{\tilde{\mu}}$ that randomizes evenly between 
\[
\left\{ \left(\tilde{\mu},\overline{I}^{(1)}(\mathbf{0};\tilde{\mu})+\overline{C}\right),(-\tilde{\mu},\underline{I}^{(1)}(\mathbf{0};-\tilde{\mu})+\underline{C})\right\} ,
\]
where note $\underline{I}^{(1)}(\mathbf{0};-\tilde{\mu})+\underline{C}=-\overline{I}^{(1)}(\mathbf{0};\tilde{\mu})-\overline{C}$, and these two points belong to the parameter
space in (\ref{eq:prop3.1}). The Bayes optimal rule with respect
to $\pi_{\tilde{\mu}}$ is $\mathbf{1}\left\{ \tilde{\mu}^{\top}\Sigma_{0}^{-1}\varDelta\geq0\right\} $,
with a corresponding Bayes value 
$r(\tilde{\mu};\Sigma_0)$.
Furthermore, the best response of Nature with respect to $\mathbf{1}\left\{ \tilde{\mu}^{\top}\Sigma_{0}^{-1}\varDelta\geq0\right\} $
in the restricted parameter space is 
\begin{align*}
 & \sup_{\pm\mu\in S,\mu^{*}\in I_{\infty}(\mu),}\mu^{*}\left[\mathbf{1}\left\{ \mu^{*}\geq0\right\} -\Phi\left(\frac{\tilde{\mu}^{\top}\Sigma_{0}^{-1}\mu}{\left(\tilde{\mu}^{\top}\Sigma_{0}^{-1}\tilde{\mu}\right)^{1/2}}\right)\right]\\
= & \sup_{\mu\in S}\left(\overline{C}+\overline{I}^{(1)}(\mathbf{0};\mu)\right)\Phi\left(-\frac{\tilde{\mu}^{\top}\Sigma_{0}^{-1}\mu}{\left(\tilde{\mu}^{\top}\Sigma_{0}^{-1}\tilde{\mu}\right)^{1/2}}\right),
\end{align*}
where the second equality uses the centrosymmetry. Write 
\[
g(\mu,\widetilde{\mu}):=\left(\overline{C}+\overline{I}^{(1)}(\mathbf{0};\mu)\right)\Phi\left(-\frac{\tilde{\mu}^{\top}\Sigma_{0}^{-1}\mu}{\left(\tilde{\mu}^{\top}\Sigma_{0}^{-1}\tilde{\mu}\right)^{1/2}}\right).
\]
Consider the correspondence $\phi:S\rightrightarrows S$ defined as
\begin{equation}\label{eq:correspondence}
\arg\max_{\mu\in S}g(\mu,\widetilde{\mu}).   
\end{equation}
\noindent \textbf{Step 2}: We show that $\phi$ has a fixed point.  To see this, note:

\begin{itemize}
\item[(a)] $S$ is convex and compact by construction. 

\item[(b)]  Since $g(\cdotp;\cdotp)$ is continuous, and $S$ is compact and
nonempty, it follows by Berge's maximum theorem that the correspondence
$\phi(\cdotp)$ is upper hemicontinuous with nonempty and compact
values. It follows then $\phi$
 has a closed graph.

\item[(c)] Note $\phi(\cdotp)$ is nonempty. Furthermore, quasiconcavity
of $g(\cdotp,\widetilde{\mu})$ (Lemma \ref{lem:quasi.concavity})
implies that $\phi(\cdotp)$ is convex-valued. 
\end{itemize}
With claims (a)-(c) above, we verified that all conditions of Kakutani's fixed point
theorem hold. Conclude that $\phi:S\rightrightarrows S$ has
a fixed point $\mu^{\star}$, and \eqref{eq:prop3.1} has a solution $\mathbf{1}\left\{ \left(\mu^{\star}\right)^{\top}\Sigma_{0}^{-1}\varDelta\geq0\right\}$, where
$\mu^{\star}\in S$.

\noindent \textbf{Step 3}:  Conclusion from Step 2 implies that a solution
for (\ref{eq:prop3.1}) is $\mathbf{1}\left\{ \left(\mu^{\star}\right)^{\top}\Sigma_{0}^{-1}\varDelta\geq0\right\}$, which we now show 
is also the MMR solution of (\ref{pf:R.star}), i.e.,
\begin{equation}\label{eq:local.is.global}
\sup_{\mu\in M_{\mu,\infty},\overline{C}+\overline{I}^{(1)}(\mathbf{0};\mu)\geq0}g(\mu,\mu^{\star})=\sup_{\mu\in S}g(\mu,\mu^{\star})=g(\mu^\star,\mu^\star).  
\end{equation}
\noindent Suppose not. Then, there exists some $\widehat{\mu}$ not
in $S$ but in $\{\mu\in M_{\mu,\infty},\overline{C}+\overline{I}^{(1)}(\mathbf{0};\mu)\geq0\}$
such that $g(\widehat{\mu};\mu^{\star})>g(\mu^{\star};\mu^{\star})$. Then, for $\lambda\in(0,1)$, consider 
\[
\widehat{\mu}_{\lambda}:=\lambda\widehat{\mu}+(1-\lambda)\mu^{\star}\in\{\mu\in M_{\mu,\infty},\overline{C}+\overline{I}^{(1)}(\mathbf{0};\mu)\geq0\}.
\]
Algebra shows 
\begin{align*}
g(\widehat{\mu}_{\lambda};\mu^{\star}) & =\left(\overline{C}+\overline{I}^{(1)}(\mathbf{0};\widehat{\mu}_{\lambda})\right)\Phi\left(-\frac{\left(\mu^{\star}\right)^{\top}\Sigma_{0}^{-1}\widehat{\mu}_{\lambda}}{\left(\left(\mu^{\star}\right)^{\top}\Sigma_{0}^{-1}\mu^{\star}\right)^{1/2}}\right)\\
 & =\left(\overline{C}+\overline{I}^{(1)}(\mathbf{0};\lambda\widehat{\mu}+(1-\lambda)\mu^{\star})\right)\Phi\left(-\left(\left(\mu^{\star}\right)^{\top}\Sigma_{0}^{-1}\mu^{\star}\right)^{1/2}-\lambda\delta_{\widehat{\mu}}\right),
\end{align*}
where $\delta_{\widehat{\mu}}:=\frac{\left(\mu^{\star}\right)^{\top}\Sigma_{0}^{-1}\left(\widehat{\mu}-\mu^{\star}\right)}{\left(\left(\mu^{\star}\right)^{\top}\Sigma_{0}^{-1}\mu^{\star}\right)^{1/2}}$. Consider two cases:

\begin{enumerate}
\item If $\widehat{\mu}$ is such that $\delta_{\widehat{\mu}}=0$, then
notice $g(\widehat{\mu}_{\lambda};\mu^{\star})$ is concave in $\lambda$
(as $\overline{I}^{(1)}(\mathbf{0};\cdotp)$ is concave), implying
\begin{align*}
g(\widehat{\mu}_{\lambda};\mu^{\star}) & =\left(\overline{C}+\overline{I}^{(1)}(\mathbf{0};\lambda\widehat{\mu}+(1-\lambda)\mu^{\star})\right)\Phi\left(-\left(\left(\mu^{\star}\right)^{\top}\Sigma_{0}^{-1}\mu^{\star}\right)^{1/2}\right)\\
 & \geq\lambda g(\widehat{\mu};\mu^{\star})+(1-\lambda)g(\mu^{\star};\mu^{\star})\\
 & >g(\mu^{\star};\mu^{\star}),
\end{align*}

\noindent for all $\lambda\in(0,1)$ since we posited $g(\widehat{\mu};\mu^{\star})>g(\mu^{\star};\mu^{\star})$. This forms a contradiction. Indeed,
Lemma \ref{lem:boundary} shows that we must have $\overline{I}^{(1)}(\mathbf{0};\mu^{\star})>x$
and $(\mu^{\star})^{\top}\mu^{\star}<t^{2}$. Therefore, for $\lambda>0$
sufficiently close to zero, we have $\widehat{\mu}_{\lambda}\in S$.
As $\mu^{\star}$ maximizes $g(\cdotp;\mu^{\star})$ in $S$, we must have
that
\begin{equation}\label{eq:local.max}
g(\widehat{\mu}_{\lambda};\mu^{\star})\leq g(\mu^{\star};\mu^{\star})   
\end{equation}
for $\lambda>0$ sufficiently small.

\item  Suppose that $\widehat{\mu}$ is such that $\delta_{\widehat{\mu}}\neq0$.
Then, analogous to the proof of Lemma \ref{lem:non.diff.quasiconcave}, we can show that $g(\widehat{\mu}_{\lambda};\mu^{\star})$
is strictly quasiconcave in $\lambda$. Thus, we have
\[
g(\widehat{\mu}_{\lambda};\mu^{\star})>\min\left\{ g(\mu^{\star};\mu^{\star}),g(\widehat{\mu};\mu^{\star})\right\} =g(\mu^{\star};\mu^{\star})
\]
for all $\lambda\in(0,1)$. This analogously forms a contradiction
with \eqref{eq:local.max}. 
\end{enumerate}
Hence, \eqref{eq:local.is.global} must  hold, and we conclude that
$\mathbf{1}\left\{ \left(\mu^{\star}\right)^{\top}\Sigma_{0}^{-1}\varDelta\geq0\right\} $
is indeed an MMR optimal rule for \eqref{pf:R.star} and  $\mathbf{R}^*=r(\mu^{\star};\Sigma_0)$. 

\noindent\textbf{Step 4}: Note conclusion from Step 3 immediately implies that the prior that randomizes evenly between
\[
\left\{ \left(\mu^{\star},\overline{I}^{(1)}(\mathbf{0};\mu^{\star})+\overline{C}\right),\left(-\mu^{\star},\underline{I}^{(1)}(\mathbf{0};-\mu^{\star})+\underline{C}\right)\right\} 
\]
is least favorable, which must also solve  \eqref{eq:non.benign}, as we note $r(\mu;\Sigma_{0})$ is also the Bayes value for any two-point symmetric prior that randomizes evenly between
\[\left\{ \left(\mu,\overline{I}^{(1)}(\mathbf{0};\mu)+\overline{C}\right),\left(-\mu,\underline{I}^{(1)}(\mathbf{0};-\mu)+\underline{C}\right)\right\}\]
for any $\mu\in M_{\mu,\infty}$ such that $\overline{I}^{(1)}(\mathbf{0};\mu)\geq0$.
\end{proof}

\begin{lem}
\label{lem:boundary}Under conditions of Lemma \ref{lem:thm.non.diff.case.3.1},
the fixed point $\mu^{\star}$ of the correspondence defined in \eqref{eq:correspondence} is such that $\overline{I}^{(1)}(\mathbf{0};\mu^{\star})>x$ and $(\mu^{\star})^{\top}\mu^{\star}<t^2$,  where $x$ and $t$ are defined in \eqref{eq:x}.
\end{lem}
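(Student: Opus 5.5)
We argue by contradiction for each of the two inequalities separately, using that $\mu^\star$ is a fixed point of \eqref{eq:correspondence}, i.e., $\mu^\star\in\arg\max_{\mu\in S}g(\mu,\mu^\star)$. Along the way we use $g(\mu^\star,\mu^\star)=r(\mu^\star;\Sigma_0)$ and the observation that, restricted to the ray through $\mu^\star$, $g(s\mu^\star/\|\mu^\star\|,\mu^\star)=r(s\mathbf{v}^\star;\Sigma_0)$ with $\mathbf{v}^\star:=\mu^\star/\|\mu^\star\|$. This holds because the argument of $\Phi$ becomes $-s(\mathbf{v}^{\star\top}\Sigma_0^{-1}\mathbf{v}^\star)^{1/2}$ and $\overline{I}^{(1)}(\mathbf{0};\cdot)$ is positively homogeneous.

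\textbf{Norm constraint.} Suppose $(\mu^\star)^\top\mu^\star=t^2$. Along the ray, $s\mapsto r(s\mathbf{v}^\star;\Sigma_0)$ is either decreasing or single-peaked, with unconstrained maximizer $\overline{t}(\mathbf{v}^\star;\Sigma_0)\le\overline{\mathbf{t}}(\Sigma_0)<t$. This is Step 1 of the proof of Lemma \ref{lem:non-diff-continuous}. Hence $r$ is strictly decreasing at $s=t$. The point $s\mathbf{v}^\star$ for $s$ slightly below $t$ still satisfies $\overline{I}^{(1)}(\mathbf{0};s\mathbf{v}^\star)=(s/t)\overline{I}^{(1)}(\mathbf{0};\mu^\star)$. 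To keep this point in $S$ we need this value to be at least $x$. If $\overline{I}^{(1)}(\mathbf{0};\mu^\star)>x$, a small shrink stays in $S$ and strictly increases $g$, contradicting the fixed point. If instead both constraints bind, we move toward an interior direction of $S$: $S$ contains a point $\mu'$ with $\overline{I}^{(1)}(\mathbf{0};\mu')>x$ and $\|\mu'\|<t$, for example $s'\Sigma_0\overline{p}$ for suitable $s'$, using $x<\overline{t}\,\overline{I}^{(1)}(\mathbf{0};\Sigma_0\overline{p})$. We then use concavity of $\overline{I}^{(1)}$ together with the ray decrease to get a first-order improvement. I expect this combined-constraint case to need care.

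\textbf{Level constraint.} Suppose $\overline{I}^{(1)}(\mathbf{0};\mu^\star)=x$. First compute the value. By the fixed point, $\mathbf{1}\{\mu^{\star\top}\Sigma_0^{-1}\varDelta\ge0\}$ together with the two-point prior $\pi_{\mu^\star}$ forms an equilibrium of the restricted game \eqref{eq:prop3.1}, so the value of \eqref{eq:prop3.1} equals $r(\mu^\star;\Sigma_0)$. Next, build a competing prior. Take $\pi_{\mu_1}$ with $\mu_1:=\overline{x}\,\Sigma_0\overline{p}/\overline{I}^{(1)}(\mathbf{0};\Sigma_0\overline{p})$. We are in case (iii), where $\overline{x}>\overline{t}\,\overline{I}^{(1)}(\mathbf{0};\Sigma_0\overline{p})$, so we truncate to $\mu_1:=s_1\Sigma_0\overline{p}$ with $s_1\le\overline{t}$ chosen so that $\overline{I}^{(1)}(\mathbf{0};\mu_1)$ lies in $(x,\overline{t}\,\overline{I}^{(1)}(\mathbf{0};\Sigma_0\overline{p})]$. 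Then $\mu_1\in S$, and by Lemma \ref{lem:key} we have $r(\mu_1;\Sigma_0)=(\overline{C}+y)\Phi(-y/(\overline{p}^\top\Sigma_0\overline{p})^{1/2})$ with $y:=\overline{I}^{(1)}(\mathbf{0};\mu_1)$. Since $y\le\overline{t}\,\overline{I}^{(1)}(\mathbf{0};\Sigma_0\overline{p})<\overline{x}$, this is increasing in $y$ on $[0,\overline{x}]$.

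\textbf{Comparison.} It remains to show $r(\mu^\star;\Sigma_0)<r(\mu_1;\Sigma_0)$. Then the Bayes value of $\pi_{\mu_1}$ exceeds the value of the restricted game \eqref{eq:prop3.1}, which is a contradiction. To get the inequality we bound $\mu^{\star\top}\Sigma_0^{-1}\mu^\star$ from below. By Lemma \ref{lem:key}, $\overline{I}^{(1)}(\mathbf{0};\mu)\le(\overline{p}^\top\Sigma_0\overline{p})^{1/2}(\mu^\top\Sigma_0^{-1}\mu)^{1/2}$ for every $\mu$. This gives $r(\mu^\star;\Sigma_0)\le(\overline{C}+x)\Phi(-x/(\overline{p}^\top\Sigma_0\overline{p})^{1/2})<r(\mu_1;\Sigma_0)$, where the last step uses $x<y\le\overline{x}$ and the monotonicity just noted.

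I expect this last comparison to be the main step. The delicate points are verifying the monotonicity in $y$ below $\overline{x}$ (unimodality, as in Lemma \ref{lem:non-diff-continuous}) and checking that the truncated point $\mu_1$ lies in $S$ and in $M_{\mu,\infty}$, which uses Step 2 of the proof of Theorem \ref{thm:non-diff}(ii).
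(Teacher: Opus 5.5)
Your proposal is correct. The level-constraint half follows the paper's proof; the norm-constraint half takes a genuinely different route.

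\textbf{Level constraint.} This is essentially the paper's Step 1. The paper maximizes $r(\cdot;\Sigma_0)$ over the level set $\{\overline{I}^{(1)}(\mathbf{0};\mu)=x\}$ using Lemma \ref{lem:key}, which gives the value $(\overline{C}+x)\Phi(-x/(\overline{p}^{\top}\Sigma_0\overline{p})^{1/2})$. It then uses that this function is strictly increasing on $(0,\overline{t}\,\overline{I}^{(1)}(\mathbf{0};\Sigma_0\overline{p})]$. Your Cauchy--Schwarz bound and your comparison point $\mu_1$ on the $\Sigma_0\overline{p}$ ray are the same argument. You phrase it through the value of the restricted game \eqref{eq:prop3.1}, whereas the paper phrases it as ``$\mu^\star$ maximizes $r$ over $S$''; both are equivalent consequences of the fixed point.

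\textbf{Norm constraint.} Here the paper argues globally. If $\|\mu^\star\|=t$, then $\mu^\star\neq\mu^{o}$, because $\|\mu^{o}\|\le\overline{\mathbf{t}}(\Sigma_0)<t$. By uniqueness of $\mu^{o}$ (via Lemma \ref{lem:non.diff.quasiconcave}), this forces $\mu^{o}\notin S$, hence $0\le\overline{I}^{(1)}(\mathbf{0};\mu^{o})<x$, which the level-set bound rules out again. You instead shrink $\mu^\star$ along its own ray. There $g(\cdot,\mu^\star)$ coincides with $r(\cdot;\Sigma_0)$ and is strictly decreasing beyond $\overline{t}(\mathbf{v}^\star;\Sigma_0)\le\overline{\mathbf{t}}(\Sigma_0)<t$, which contradicts the fixed-point property directly. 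Your version needs only log-concavity along a single ray and the definition of $\overline{\mathbf{t}}(\Sigma_0)$. It does not use the existence, uniqueness, or location of $\mu^{o}$, so it is more elementary and local. The paper's version economizes by recycling its Step 1 bound.

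\textbf{Two small repairs.} First, prove the level inequality before the norm inequality. That argument never uses $\|\mu^\star\|$, so it already excludes the case where both constraints bind, and your unfinished ``interior direction'' sketch becomes unnecessary. Second, $\mu_1\in S$ also requires $\mu_1^{\top}\mu_1\le t^2$, which you did not check; the paper leaves this implicit as well. It holds for the following reasons:
\begin{itemize}
\item Along $\overline{\mathbf{v}}=\Sigma_0\overline{p}/\|\Sigma_0\overline{p}\|$, the unconstrained ray maximizer is $\overline{t}(\overline{\mathbf{v}};\Sigma_0)=\overline{x}/\overline{I}^{(1)}(\mathbf{0};\overline{\mathbf{v}})$.
\item In case (iii), this exceeds $\overline{t}\,\|\Sigma_0\overline{p}\|\ge\|\mu_1\|$.
\item Finally, $t>\overline{\mathbf{t}}(\Sigma_0)\ge\overline{t}(\overline{\mathbf{v}};\Sigma_0)$.
\end{itemize}
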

\begin{proof}
\textbf{Step 1}: we show $\overline{I}^{(1)}(\mathbf{0};\mu^{\star})>x$.  
The Bayes value for any two-point prior that randomizes evenly
between
\[
\left\{ \left(\widetilde{\mu},\overline{I}^{(1)}(\mathbf{0};\widetilde{\mu})+\overline{C}\right),\left(-\widetilde{\mu},\underline{I}^{(1)}(\mathbf{0};-\widetilde{\mu})+\underline{C}\right)\right\} ,
\]
where $\widetilde{\mu}\in S$, is $r(\widetilde{\mu};\Sigma_0)$. 
Step 2 in the proof of Lemma \ref{lem:thm.non.diff.case.3.1} implies  that 
\begin{equation}\label{pf:fixed.point.condition}
\mu^{\star}\in\arg\max_{\widetilde{\mu}\in S}r(\widetilde{\mu};\Sigma_0).
\end{equation}
We show that for any $\mu$ such that $\overline{I}^{(1)}(\mathbf{0};\mu)=x$,
\[
\mu\notin\arg\max_{\widetilde{\mu}\in S}r(\widetilde{\mu};\Sigma_0).
\]
To see this, consider 
\[
\max r(\widetilde{\mu};\Sigma_0)\quad s.t.\quad\overline{I}^{(1)}(\mathbf{0};\widetilde{\mu})=x,
\]
 which can be solved by considering 
\begin{align}\label{eq:min.mu.tilde}
\min & \sqrt{\tilde{\mu}^{\top}\Sigma_{0}^{-1}\tilde{\mu}}\quad
s.t. \quad \overline{I}^{(1)}(\mathbf{0};\tilde{\mu})=x.
\end{align}
As we can write $\tilde{\mu}=\widetilde{t}\widetilde{\mathbf{v}}$,
where $\widetilde{\mathbf{v}}$ denotes the direction of $\tilde{\mu}$,
and $\widetilde{t}$ is the length of $\widetilde{\mu}$, solving \eqref{eq:min.mu.tilde} is equivalent to solving
\begin{align}\label{eq:max.v.tilde}
\max_{\widetilde{\mathbf{v}}} \overline{I}^{(1)}(\mathbf{0};\widetilde{\mathbf{v}})\left(\widetilde{\mathbf{v}}^{\top}\Sigma_{0}^{-1}\widetilde{\mathbf{v}}\right)^{-1/2},
\end{align}
among all directions $\widetilde{\mathbf{v}}$ such that the corresponding length of $\widetilde{\mu}={\frac{x}{\overline{I}^{(1)}(\mathbf{0};\widetilde{\mathbf{v}})}}$ does not exceed the largest possible length that $\widetilde{\mu}$ can take along its direction. By Lemma  \ref{lem:key}, a unique solution  of \eqref{eq:max.v.tilde} among all directions is  $\overline{\mathbf{v}}$ (proportional to $\Sigma_0\overline{p}$). Along this direction, note indeed $\frac{x}{\overline{I}^{(1)}(\mathbf{0};\overline{\mathbf{v}})}$ does not exceed its largest possible length $\overline{t}\Vert\Sigma_0\overline{p}\Vert$  by construction. Conclude that
\[
\max_{\overline{I}^{(1)}(\mathbf{0};\tilde{\mu})=x}r(\widetilde{\mu};\Sigma_0)=r\left(\overline{\mathbf{v}}\frac{x}{\overline{I}^{(1)}(\mathbf{0};\overline{\mathbf{v}})};\Sigma_0\right)=\left(\overline{C}+x\right)\Phi\left(-\frac{x}{\left(\overline{p}^{\top}\Sigma_{0}\overline{p}\right)^{1/2}}\right)
\]
after lengthy algebra. However, note $f(\cdot):=(\overline{C}+\cdot)\Phi(-(\cdot)/(\overline{p}^{\top}\Sigma_{0}\overline{p})^{1/2})$ is strictly increasing in $(0,\overline{t}\overline{I}^{(1)}(\mathbf{0};\Sigma_0\overline{p{}})]$ under the conditions of this lemma and $x<\overline{t}\overline{I}^{(1)}(\mathbf{0};\Sigma_0\overline{p{}})$ by construction. Conclude that  $\mu^{\star}$ cannot be attained
at any point such that $\overline{I}^{(1)}(\mathbf{0};\mu)=x$.

\noindent \textbf{Step 2}: we show $(\mu^{\star})^{\top}\mu^{\star}<t^2$. Recall the fixed point $\mu^{\star}$ satisfies $\mu^{\star}\in\arg\max_{\widetilde{\mu}\in S}r(\widetilde{\mu};\Sigma_{0})$
(cf. \eqref{pf:fixed.point.condition}), where $S\subseteq\{\mu\in M_{\mu,\infty}:\overline{I}^{(1)}(\mathbf{0};\mu)\geq0\}$.
Meanwhile, proof of Lemma \ref{lem:non-diff-continuous} establishes
that $\mu^{o}=\arg\max_{\left\{ \mu\in M_{\mu,\infty}:\overline{I}^{(1)}(\mathbf{0};\mu)\geq0\right\} }r(\mu;\Sigma_{0})$
has a length not exceeding $\overline{\mathbf{t}}(\Sigma_{0})$. Suppose, on the contrary, $(\mu^{\star})^{\top}\mu^{\star}=t^{2}$, where
$t>\overline{\mathbf{t}}(\Sigma_{0})$. Then, we must have $\mu^{o}\notin S$, i.e., $\mu^{o}$ must be such that $0\leq\overline{I}^{(1)}(\mathbf{0};\mu^{o})<x$.
However, by analogous arguments used in Step 1, we can show that
$\mu^{o}$ cannot be attained at any point such that $0\leq\overline{I}^{(1)}(\mathbf{0};\mu)<x$ under conditions of this lemma,  a contradiction.
\end{proof}

\begin{lem}\label{lem:non-diff.correspondence}
Under conditions of Theorem \ref{thm:non-diff}, $\mathcal{C}(\Sigma_0)$  defined in \eqref{eq:non.diff.C.set} is a continuous correspondence.
\end{lem}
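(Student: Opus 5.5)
We need to show that $\mathcal{C}(\Sigma_0)=\{\mu\in M_{\mu,\infty}:\overline{I}^{(1)}(\mathbf{0};\mu)\geq0,\ \mu^{\top}\mu\leq\overline{\mathbf{t}}^{2}(\Sigma_0)\}$ is both upper and lower hemicontinuous in $\Sigma_0$ on the set of positive definite matrices. The set $K:=\{\mu\in M_{\mu,\infty}:\overline{I}^{(1)}(\mathbf{0};\mu)\geq0\}$ does not depend on $\Sigma_0$. It is closed, since $\overline{I}^{(1)}(\mathbf{0};\cdot)$ and $\underline{I}^{(1)}(\mathbf{0};\cdot)$ are finite concave and convex functions and hence continuous. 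It is convex, since $M_{\mu,\infty}=\{\mu:\overline{I}^{(1)}(\mathbf{0};\mu)-\underline{I}^{(1)}(\mathbf{0};\mu)+2\overline{C}\geq0\}$ is a superlevel set of a concave function, and the constraint $\overline{I}^{(1)}(\mathbf{0};\mu)\geq 0$ is of the same type. It contains $\mathbf{0}$, because $2\overline{C}>0$. Only the radius $\overline{\mathbf{t}}(\Sigma_0)$ moves with $\Sigma_0$, and Step 1 of Lemma \ref{lem:non-diff-continuous} already gives that it is finite, nonnegative and continuous in $\Sigma_0$. So $\mathcal{C}(\Sigma_0)=K\cap B(\overline{\mathbf{t}}(\Sigma_0))$, where $B(r)$ is the closed Euclidean ball of radius $r$. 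The task reduces to showing that intersecting a fixed closed convex set containing the origin with a ball of continuously varying radius gives a continuous correspondence.

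\textbf{Upper hemicontinuity.} The values are compact, since they are closed and lie inside a ball, and they are nonempty, since $\mathbf{0}\in\mathcal{C}(\Sigma_0)$. Because $\overline{\mathbf{t}}$ is continuous, on a neighborhood of any $\Sigma_0$ all values sit inside a common compact ball. It therefore suffices to show the graph is closed. Take $\Sigma_k\to\Sigma_0$ and $\mu_k\in\mathcal{C}(\Sigma_k)$ with $\mu_k\to\mu$. Then $\mu\in K$ because $K$ is closed. Passing to the limit in $\mu_k^{\top}\mu_k\leq\overline{\mathbf{t}}^2(\Sigma_k)$ and using continuity of $\overline{\mathbf{t}}$ gives $\mu^{\top}\mu\le\overline{\mathbf{t}}^2(\Sigma_0)$.

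\textbf{Lower hemicontinuity.} Fix $\mu\in\mathcal{C}(\Sigma_0)$ and a sequence $\Sigma_k\to\Sigma_0$. We build approximants by radial shrinking toward the origin:
\[
\mu_k:=\lambda_k\mu,\qquad \lambda_k:=\min\{1,\overline{\mathbf{t}}(\Sigma_k)/\|\mu\|\},
\]
and set $\mu_k=\mathbf{0}$ if $\mu=\mathbf{0}$. Since $K$ is convex and contains $\mathbf{0}$, and $\lambda_k\in[0,1]$, we have $\lambda_k\mu\in K$. Alternatively, this follows from positive homogeneity of $\overline{I}^{(1)}$ and $\underline{I}^{(1)}$ together with $\overline{C}>0$. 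By construction $\|\mu_k\|\leq\overline{\mathbf{t}}(\Sigma_k)$, so $\mu_k\in\mathcal{C}(\Sigma_k)$. If $\overline{\mathbf{t}}(\Sigma_0)>0$, then $\lambda_k\to\min\{1,\overline{\mathbf{t}}(\Sigma_0)/\|\mu\|\}=1$, because $\|\mu\|\le\overline{\mathbf{t}}(\Sigma_0)$. If $\overline{\mathbf{t}}(\Sigma_0)=0$, then $\mu=\mathbf{0}$ and the claim is trivial. In both cases $\mu_k\to\mu$.

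\textbf{Main obstacle.} The real content sits in the continuity of $\overline{\mathbf{t}}(\Sigma_0)$, which we take from Lemma \ref{lem:non-diff-continuous}. That step needs $\overline{t}(\mathbf{v};\Sigma_0)$ to be jointly continuous, and it needs the supremum over the compact set of directions $\{\mathbf{v}:\mathbf{v}^{\top}\mathbf{v}=1,\ \overline{I}^{(1)}(\mathbf{0};\mathbf{v})\ge0\}$ to preserve continuity, which Berge's theorem provides because the constraint set is fixed. Joint continuity of $\overline{t}(\mathbf{v};\Sigma_0)$ could fail where the maximizer of $t\mapsto r(t\mathbf{v};\Sigma_0)$ switches from an interior first-order-condition solution to the corner $t=0$. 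We would check this directly: the first-order condition $\overline{I}^{(1)}(\mathbf{0};\mathbf{v})\Phi(-ta)=(\overline{C}+t\overline{I}^{(1)}(\mathbf{0};\mathbf{v}))a\,\phi(ta)$, with $a:=(\mathbf{v}^{\top}\Sigma_0^{-1}\mathbf{v})^{1/2}$, has an interior root that tends to $0$ as the boundary case $\overline{I}^{(1)}(\mathbf{0};\mathbf{v})/2=\overline{C}a\phi(0)$ is approached. The two regimes therefore paste together continuously, and the rest of the argument is routine.
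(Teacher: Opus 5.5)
Your proof is correct and follows the paper's argument. Upper hemicontinuity comes from the closed graph of the ball constraint, using continuity of $\overline{\mathbf{t}}$ from Lemma~\ref{lem:non-diff-continuous}, and lower hemicontinuity from radial shrinking $\mu_k=\min\{1,\overline{\mathbf{t}}(\Sigma_k)/\|\mu\|\}\mu$ inside a convex set containing $\mathbf{0}$. Your single construction merges the paper's two cases ($\|\mu\|<\overline{\mathbf{t}}(\Sigma_0)$ versus $\|\mu\|=\overline{\mathbf{t}}(\Sigma_0)$), and you explicitly verify that $K$ is closed and convex and that the values are locally bounded, details the paper leaves implicit.
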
 
\begin{proof}
\textbf{Step 1}: we show $\mathcal{C}(\Sigma_0)$ is upper hemi-continuous. Note
$\mathcal{C}(\Sigma_0)=\mathcal{C}_{1}\cap\mathcal{C}_{2}(\Sigma_0)$
where $\mathcal{C}_{1}:=\left\{ \mu\in M_{\mu,\infty}:\overline{I}^{(1)}(\mathbf{0};\mu)\geq0\right\} $
is closed, and 
\[
\mathcal{C}_{2}(\Sigma_0)=\left\{ \mu\in M_{\mu,\infty}:\mu^{\top}\mu\leq\mathbf{\overline{t}}^{2}(\Sigma_0)\right\} 
\]
is compact. Therefore, it suffices to show that $\mathcal{C}_{2}(\cdotp)$
has a closed graph. To this end, pick any sequences $\left\{ \Sigma_{n}\right\} ,\left\{ \mu_{n}\right\} $
such that $\Sigma_{n}$ is positive definite, and $\mu_{n}\in M_{\mu,\infty}$.
Suppose $\Sigma_{n}\rightarrow\Sigma_0$, $\mu_{n}\rightarrow\mu$ for
$\Sigma_0$ positive definite, and $\mu\in M_{\mu,\infty}$. Moreover, suppose
$\mu_{n}\in\mathcal{C}_{2}(\Sigma_{n})$ for each $n$, i.e., 
$\mu_{n}^{\top}\mu_{n}\leq\mathbf{\overline{t}}^{2}(\Sigma_{n})$. As $\mathbf{\overline{t}}^{2}(\cdotp)$ is continuous, we have $\mathbf{\overline{t}}^{2}(\Sigma_{n})\rightarrow\mathbf{\overline{t}}^{2}(\Sigma_0)$.
Moreover, $\mu_{n}^{\top}\mu_{n}\rightarrow\mu^{\top}\mu$. It follows
then we have $\mu^{\top}\mu\leq\mathbf{\overline{t}}^{2}(\Sigma_0)$,
i.e., $\mu\in\mathcal{C}_{2}(\Sigma_0)$. Conclude that $\mathcal{C}_{2}(\cdotp)$
has a closed graph. 

\noindent \textbf{Step 2}: We show that $\mathcal{C}(\Sigma_0)$ is lower hemi-continuous. Let
$\Sigma_{n}\rightarrow\Sigma_0$. Fix each $\mu\in\mathcal{C}(\Sigma_0)$. Consider the following two cases.
\begin{itemize}
\item[(1)] Suppose $\mu^{\top}\mu<\mathbf{\overline{t}}^{2}(\Sigma_0)$. As $\mathbf{\overline{t}}(\Sigma_{n})\rightarrow\mathbf{\overline{t}}(\Sigma_0)$,
we have $\mu^{\top}\mu<\mathbf{\overline{t}}^{2}(\Sigma_{n})$ for
$n$ sufficiently large. Note $\overline{I}^{(1)}(\mathbf{0};\mu)\geq0$
as well. Therefore, by setting $\mu_{n}=\mu$ for all $n$, we have
$\mu_{n}\in\mathcal{C}(\Sigma_{n})$ for $n$ sufficiently large and
$\mu_{n}\rightarrow\mu$ by construction. 
\item[(2)] Suppose $\mu^{\top}\mu=\mathbf{\overline{t}}^{2}(\Sigma_0)$.  In this
case, let $a_{n}:=\min\left\{ 1,\frac{\mathbf{\overline{t}}(\Sigma_{n})}{\sqrt{\mu^{\top}\mu}}\right\} $
and $\mu_{n}=a_{n}\mu$ (if $\mu=\mathbf{0}$, set $a_{n}=1$). As $\mathbf{\overline{t}}(\Sigma_{n})\rightarrow\mathbf{\overline{t}}(\Sigma_0)$
and $\mu^{\top}\mu=\mathbf{\overline{t}}^{2}(\Sigma_0)$, we have $a_{n}\rightarrow 1$
and $\mu_{n}\rightarrow\mu$ by construction. It suffices to show that $\mu_{n}\in\mathcal{C}(\Sigma_{n})$.
To this end, note $\mu_{n}=a_{n}\mu+(1-a_{n})\cdotp\mathbf{0}$. As
$\mathbf{0}\in M_{\mu,\infty}$, $\mu\in M_{\mu,\infty}$ and $M_{\mu,\infty}$ is convex,
we have $\mu_{n}\in M_{\mu,\infty}$. Moreover, $\overline{I}^{(1)}(\mathbf{0};\mu_{n})=a_{n}\overline{I}^{(1)}(\mathbf{0};\mu)\geq0$,
and 
\[
\mu_{n}^{\top}\mu_{n}=\min\left\{ 1,\frac{\mathbf{\overline{t}}^{2}(\Sigma_{n})}{\mu^{\top}\mu}\right\} \mu^{\top}\mu\leq\mathbf{\overline{t}}^{2}(\Sigma_{n}).
\]
Conclude that $\mu_{n}\in\mathcal{C}(\Sigma_{n})$ for each $n$. 
\end{itemize}
Based on the above two cases, we conclude that $\mathcal{C}(\Sigma)$
is lower hemi-continuous.
\end{proof}

\begin{lem}\label{lem:non.diff.quasiconcave}
 Under conditions of Theorem \ref{thm:non-diff}, $r(\cdotp;\Sigma_0)$
is strictly quasiconcave in \[\left\{ \mu\in M_{\mu,\infty}:\overline{I}^{(1)}(\mathbf{0};\mu)\geq0\right\}\]
for each positive definite $\Sigma_0$.    
\end{lem}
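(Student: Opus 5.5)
We need to prove that $r(\mu;\Sigma_0)=(\overline{C}+\overline{I}^{(1)}(\mathbf{0};\mu))\Phi(-\sqrt{\mu^\top\Sigma_0^{-1}\mu})$ is strictly quasiconcave on the convex set $K:=\{\mu\in M_{\mu,\infty}:\overline{I}^{(1)}(\mathbf{0};\mu)\geq0\}$. The plan is to reduce to one dimension. Fix distinct $\mu_1,\mu_2\in K$, set $\mu_\lambda:=\lambda\mu_2+(1-\lambda)\mu_1$, and show that $\lambda\mapsto r(\mu_\lambda;\Sigma_0)$ satisfies $r(\mu_\lambda)>\min\{r(\mu_1),r(\mu_2)\}$ for $\lambda\in(0,1)$.

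\textbf{Step 1 (convexity of $K$).} Convexity of $K$ follows because $M_{\mu,\infty}$ is convex under Assumption \ref{asm:centro}: it is the set where the concave function $\overline{I}_\infty$ dominates the convex function $\underline{I}_\infty$. The superlevel set $\{\overline{I}^{(1)}(\mathbf{0};\cdot)\geq0\}$ is also convex.

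\textbf{Step 2 (log decomposition).} Take logarithms. On $K$ we have $\overline{C}+\overline{I}^{(1)}(\mathbf{0};\mu)\geq\overline{C}>0$. Write
\[
\log r(\mu_\lambda)=\log\bigl(\overline{C}+\overline{I}^{(1)}(\mathbf{0};\mu_\lambda)\bigr)+\log\Phi\bigl(-q(\mu_\lambda)\bigr),
\qquad q(\mu):=\sqrt{\mu^\top\Sigma_0^{-1}\mu}.
\]
The first term is concave in $\lambda$, since $\overline{I}^{(1)}(\mathbf{0};\cdot)$ is concave and $\log$ is concave and increasing. For the second term, $q$ is a norm and hence convex; $s\mapsto\log\Phi(-s)$ is concave (log-concavity of $\Phi$) and decreasing; and a concave decreasing function composed with a convex function is concave. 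So $\log r(\mu_\lambda)$ is concave, which gives quasiconcavity.

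\textbf{Step 3 (strictness; the main obstacle).} A sum of concave functions is strictly concave as soon as one summand is strictly concave, so it suffices to find a strictly concave summand along each segment. On $s\geq0$, the function $s\mapsto\log\Phi(-s)$ is strictly concave and strictly decreasing, because $\Phi$ is strictly log-concave. Hence $\log\Phi(-q(\mu_\lambda))$ is strictly concave in $\lambda$ unless $q(\mu_\lambda)$ is affine in $\lambda$ and also constant. Split into cases:
\begin{itemize}
\item[(a)] If $q(\mu_\lambda)$ is not affine in $\lambda$, it is strictly convex somewhere along the segment. Since $\log\Phi(-\cdot)$ is strictly decreasing, the composition is strictly below the chord, and $\log r$ is strictly concave.
\item[(b)] If $q(\mu_\lambda)$ is affine but not constant, strict concavity of $\log\Phi(-s)$ on the nonconstant affine image gives strictness.
\item[(c)] If $q(\mu_\lambda)$ is affine, then equality in the triangle inequality for the norm $q$ forces $\mu_1,\mu_2$ to be nonnegatively collinear, i.e.\ on the same ray (or one of them is $\mathbf{0}$). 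In that situation $q(\mu_\lambda)$ is constant only if $\mu_1=\mu_2$, which is excluded.
\end{itemize}
In every case $\log r(\mu_\lambda)$ is strictly concave on $[0,1]$. Therefore $r(\mu_\lambda)>\min\{r(\mu_1),r(\mu_2)\}$ for $\lambda\in(0,1)$, which is strict quasiconcavity.

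The delicate point is case (c) and the use of strict log-concavity of $\Phi$. This follows from the Mills ratio: $\frac{d}{ds}\log\Phi(-s)=-\phi(s)/\Phi(-s)$, and the inverse Mills ratio $\phi(s)/\Phi(-s)$ is strictly increasing in $s$. That monotonicity is the standard fact I would cite or verify directly.
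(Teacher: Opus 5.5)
Your proof is correct. It shares the paper's skeleton: split $\log r(\mu;\Sigma_0)$ into the concave term $\log(\overline{C}+\overline{I}^{(1)}(\mathbf{0};\mu))$ plus $\log\Phi(-\sqrt{\mu^\top\Sigma_0^{-1}\mu})$, show the second term is strictly concave, and conclude through the exponential.

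The difference is in how you get strictness.

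\textbf{The paper's route.} It works in $\mu$-space and splits into three cases:
\begin{itemize}
\item one endpoint equals $\mathbf{0}$;
\item the segment passes through $\mathbf{0}$, with the symmetric subcase $\mu_1=-\mu_2$ handled separately;
\item the segment avoids $\mathbf{0}$, where it computes the Hessian of $\log\Phi(-\sqrt{\mu^\top\Sigma_0^{-1}\mu})$ and shows it is negative definite for every $\mu\neq\mathbf{0}$, using the inverse Mills ratio bound and Cauchy--Schwarz.
\end{itemize}
The separate origin cases are needed because the norm is not differentiable at $\mathbf{0}$.

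\textbf{Your route.} You restrict to the line and use only two facts: $s\mapsto\log\Phi(-s)$ is strictly decreasing and strictly concave, and the $\Sigma_0^{-1}$-norm is strictly convex, so equality in the triangle inequality occurs only for nonnegatively collinear vectors.

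\textbf{What each buys.} Your argument treats all of the paper's cases uniformly, never touches the kink at the origin, and would work for any strictly convex norm. The paper's Hessian computation instead gives a pointwise, local certificate (negative definiteness away from $\mathbf{0}$), at the cost of the extra origin cases. You also verify convexity of the domain explicitly, which the paper only asserts.

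\textbf{Two small wording fixes.}
\begin{itemize}
\item In (a), it is $q(\mu_\lambda)$ that lies strictly below its chord. Hence $\log\Phi(-q(\mu_\lambda))$ lies strictly \emph{above} its chord.
\item The exact obstruction to strict concavity of $\log\Phi(-q(\mu_\lambda))$ is $q$ being constant on some subinterval, not on all of $[0,1]$. This is ruled out by applying (c) to sub-segments. It also follows directly: $q(\mu_\lambda)^2$ is a quadratic in $\lambda$ with leading coefficient $(\mu_2-\mu_1)^\top\Sigma_0^{-1}(\mu_2-\mu_1)>0$, so it cannot be constant on any interval.
\end{itemize}
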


\begin{proof}
Note $\left\{ \mu\in M_{\mu,\infty}:\overline{I}^{(1)}(\mathbf{0};\mu)\geq0\right\} $
is convex.
Also, for each $\mu\in M_{\mu,\infty}$ such that $\overline{I}^{(1)}(\mathrm{\mathbf{0}};\mu)\geq0$,
we have 
\[
\log r(\mu;\Sigma_0)=\log\left(\overline{C}+\overline{I}^{(1)}(\mathrm{\mathbf{0}};\mu)\right)+\log\Phi\left(-\sqrt{\mu^{\top}\Sigma_0^{-1}\mu}\right).
\]
The first component of the above is concave. Below, we show the second
component is strictly concave. And the conclusion of the lemma follows
because a strictly monotone transformation of a strictly concave function
is strictly quasiconcave.  To this end, write $h(z)=\log\Phi(z)$, and $g(\mu)=\sqrt{\mu^{\top}\Sigma_0^{-1}\mu}$.
Note $h^{(1)}(z)=\frac{\phi(z)}{\Phi(z)}>0$, 
\begin{align*}
h^{(2)}(z) & =\frac{-z\phi(z)\Phi(z)-\phi^{2}(z)}{\Phi^{2}(z)}=-zh^{(1)}(z)-\left(h^{(1)}(z)\right)^{2}\\
 & =-h^{(1)}(z)\left(z+h^{(1)}(z)\right)<0
\end{align*}
for all $z\in\mathbb{R}$. Therefore, $h(\cdotp)$ is strictly concave
in $\mathbb{R}$. Moreover, note $g(\mu)$ is  convex in $\mathbb{R}$.
Conclude that $h(-g(\mu))$ is concave. To show that the function
is in fact strictly concave, pick any $\mu_{1}\neq\mu_{2}$ such that
$\overline{I}^{(1)}(\mathrm{\mathbf{0}};\mu_{1})\geq0$ and $\overline{I}^{(1)}(\mathrm{\mathbf{0}};\mu_{2})\geq0$. We consider three cases.

\noindent\textbf{Case 1}: Suppose one of $\mu_{1}$ and $\mu_{2}$ equals $\mathbf{0}$. Without
loss of generality, let $\mu_{1}=\mathbf{0}$ and $\mu_{2}\neq\mathbf{0}$. As $g$ is convex and $h$ is strictly increasing, we have
\begin{equation}\label{pf:strict.concave.1}
 h(-g(\alpha\mu_{1}+(1-\alpha)\mu_{2}))\geq h(\alpha\left(-g(\mu_{1})\right)+(1-\alpha)\left(-g(\mu_{2})\right)),   
\end{equation}
where also note $g(\mu_{1})\neq g(\mu_{2})$. Since $h$ is also strictly concave, conclude that 
\begin{equation}\label{pf:strict.concave.2}
h\left(\alpha\left(-g(\mu_{1})\right)+(1-\alpha)\left(-g(\mu_{2})\right)\right)>\alpha h\left(-g(\mu_{1})\right)+(1-\alpha)h\left(-g(\mu_{2})\right),
\end{equation}
establishing the desired strict inequality. 

\noindent \textbf{Case 2}: Suppose $\mu_{1}\neq\mu_{2}\neq\mathbf{0}$ but there exists some $\alpha_{0}\in(0,1)$ such that $\alpha_{0}\mu_{1}+(1-\alpha_{0})\mu_{2}=\mathbf{0}$. Then, $\mu_{1}=-\lambda_{0}\mu_{2}$ for $\lambda_{0}:=(1-\alpha_{0})/\alpha_{0}\in(0,\infty)$. If $\lambda_{0}\neq1$, then note $g(\mu_{1})\neq g(\mu_{2})$ still holds, implying that arguments via \eqref{pf:strict.concave.1} and \eqref{pf:strict.concave.2}  still go through. If $\lambda_{0}=1$, we have, for any $\alpha\in(0,1)$, $\left|1-2\alpha\right|<1$. Thus, 
\begin{equation}\label{pf:strict.concave.3}
-g(\alpha\mu_{1}+(1-\alpha)\mu_{2})	=-g((1-2\alpha)\mu_{2})=-\left|1-2\alpha\right|g(\mu_{2})>-g(\mu_{2}).    
\end{equation}
Moreover, since $\mu_{1}=-\mu_{2}$, $g(\mu_{1})=g(\mu_{2})$, implying
\begin{equation}\label{pf:strict.concave.4}
\alpha(-g(\mu_{1}))+(1-\alpha)(-g(\mu_{2}))=-g(\mu_{2}).    
\end{equation}
\eqref{pf:strict.concave.3} and \eqref{pf:strict.concave.4}  jointly imply $-g(\alpha\mu_{1}+(1-\alpha)\mu_{2})>\alpha(-g(\mu_{1}))+(1-\alpha)(-g(\mu_{2}))$. As a result,  the desired strict inequality still holds due to strict monotonicity of $h$. 

\noindent \textbf{Case 3}: Suppose $\mu_{1}\neq\mu_{2}\neq\mathbf{0}$ and their convex combination does not pass through $\mathbf{0}$. Then, it suffices to check the second order condition. To this end, note
$h^{(1)}(z)$ is the inverse mills ratio, with the property that $\ensuremath{h^{(1)}(z)+z>0}$
for all $\ensuremath{z\in\mathbb{R}}$. Algebra shows that, for any
$\mu\neq\mathbf{0}$, 
\begin{align*}
\frac{\partial h\left(-g(\mu)\right)}{\partial\mu} & =-h^{(1)}\left(-g(\mu)\right)\frac{\Sigma_0^{-1}\mu}{g(\mu)},\\
\frac{\partial^{2}h\left(-g(\mu)\right)}{\partial\mu\partial\mu^{\top}} & =h^{(2)}\left(-g(\mu)\right)\frac{\Sigma_0^{-1}\mu\mu^{\top}\Sigma_0^{-1}}{g^{2}(\mu)}+h^{(1)}\left(-g(\mu)\right)\left(\frac{\Sigma_0^{-1}\mu\mu^{\top}\Sigma_0^{-1}}{g^{3}(\mu)}-\frac{\Sigma_0^{-1}}{g(\mu)}\right)\\
 & =-h^{(1)}\left(-g(\mu)\right)\frac{\Sigma_0^{-1}}{g(\mu)}+\Sigma_0^{-1}\mu\mu^{\top}\Sigma_0^{-1}\left(\frac{h^{(2)}\left(-g(\mu)\right)}{g^{2}(\mu)}+\frac{h^{(1)}\left(-g(\mu)\right)}{g^{3}(\mu)}\right)\\
 & =-h^{(1)}\left(-g(\mu)\right)\frac{\Sigma_0^{-1}}{g(\mu)}+\Sigma_0^{-1}\mu\mu^{\top}\Sigma_0^{-1}\frac{h^{(1)}(-g(\mu))}{g(\mu)}\left(\frac{g(\mu)-h^{(1)}(-g(\mu))}{g(\mu)}+\frac{1}{g^{2}(\mu)}\right).
\end{align*}
As $g(\mu)>0$, $h^{(1)}\left(-g(\mu)\right)>0$, the definiteness
of the above matrix is determined by 
\begin{align*}
-\Sigma_0^{-1}+\Sigma_0^{-1}\mu\mu^{\top}\Sigma_0^{-1}\left(\frac{-g(\mu)\left(h^{(1)}(-g(\mu))-g(\mu)\right)+1}{g^{2}(\mu)}\right)= & A+B,
\end{align*}
where 
\begin{align*}
A & =\underset{<0}{\underbrace{-\frac{\left(h^{(1)}(-g(\mu))-g(\mu)\right)}{g(\mu)}}}\left(\Sigma_0^{-1}\mu\mu^{\top}\Sigma_0^{-1}\right),\\
B & =\left(\frac{\Sigma_0^{-1}\mu\mu^{\top}\Sigma_0^{-1}}{g^{2}(\mu)}-\Sigma_0^{-1}\right).
\end{align*}
Note $\Sigma_0^{-1}\mu\mu^{\top}\Sigma_0^{-1}$ is positive semidefinite,
with 
\[
a^{\top}\Sigma_0^{-1}\mu\mu^{\top}\Sigma_0^{-1}a=0
\]
for some $a\neq\mathbf{0}$ only when $a^{\top}\Sigma_0^{-1}\mu=0$,
i.e., $a$ is orthogonal to $\Sigma_0^{-1}\mu$; Moreover, $B$ is negative
semidefinite due to Cauchy Schwarz inequality, with $a^{\top}Ba=0$
for some $a\neq\mathbf{0}$ only when $a$ is proportional to $\mu$.
Therefore, for any $a\neq\mathbf{0}$, at most one of $a^{\top}Aa$
and $a^{\top}Ba$ can equal zero. Conclude that $A+B$ is in fact
negative definite for all $\mu\neq\mathbf{0}$. 
\end{proof}

\begin{lem}\label{lem:non.diff.quasi.concave.best.respond}
\label{lem:quasi.concavity} For each $\widetilde{\mu}\in S$ and each positive definite $\Sigma_0$, $g(\mu,\widetilde{\mu})$
is quasiconcave in 
\[\left\{ \mu\in M_{\mu,\infty}:\overline{I}^{(1)}(\mathbf{0};\mu)\geq0\right\}.\]
\end{lem}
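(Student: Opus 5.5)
Fix $\widetilde{\mu}\in S$ and put $c:=\Sigma_0^{-1}\widetilde{\mu}/(\widetilde{\mu}^{\top}\Sigma_0^{-1}\widetilde{\mu})^{1/2}$, so that
\[
g(\mu,\widetilde{\mu})=\left(\overline{C}+\overline{I}^{(1)}(\mathbf{0};\mu)\right)\Phi\left(-c^{\top}\mu\right).
\]
The function is therefore a product of two positive functions on $D:=\{\mu\in M_{\mu,\infty}:\overline{I}^{(1)}(\mathbf{0};\mu)\geq0\}$. (Note $\widetilde{\mu}\neq\mathbf{0}$ because $\overline{I}^{(1)}(\mathbf{0};\widetilde{\mu})\geq x>0$ on $S$, so $c$ is well defined.) The plan mirrors the proof of Lemma \ref{lem:non.diff.quasiconcave}: show that $\log g(\cdot,\widetilde{\mu})$ is concave on the convex set $D$, and then conclude quasiconcavity, since $\exp$ is strictly increasing and every upper level set of $g$ is an upper level set of $\log g$.

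The first step is to record that $D$ is convex. This holds because $M_{\mu,\infty}$ is convex, being the set where the concave function $\overline{I}_\infty-\underline{I}_\infty=2\overline{C}+\overline{I}^{(1)}(\mathbf{0};\cdot)-\underline{I}^{(1)}(\mathbf{0};\cdot)$ is nonnegative; here $\overline{I}^{(1)}(\mathbf{0};\cdot)$ is concave and $\underline{I}^{(1)}(\mathbf{0};\cdot)$ is convex by property (iii)--(v) in the proof of Theorem \ref{thm:non-diff}. In addition, $\{\overline{I}^{(1)}(\mathbf{0};\mu)\geq0\}$ is an upper level set of a concave function.

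Next, on $D$ we have $\overline{C}+\overline{I}^{(1)}(\mathbf{0};\mu)\geq\overline{C}>0$, and $\overline{C}+\overline{I}^{(1)}(\mathbf{0};\cdot)$ is concave, since $\overline{I}^{(1)}(\mathbf{0};\cdot)$ is a minimum of linear functions $p^{\top}\mu$ over $\partial\overline{I}(\mathbf{0})$. Hence $\log(\overline{C}+\overline{I}^{(1)}(\mathbf{0};\mu))$ is concave, as the composition of the concave nondecreasing $\log$ with a positive concave function.

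Finally, $\mu\mapsto\log\Phi(-c^{\top}\mu)$ is the composition of the concave function $h(z)=\log\Phi(z)$ (shown concave in the proof of Lemma \ref{lem:non.diff.quasiconcave} via $h^{(2)}<0$) with an affine map, so it is concave. Summing the two pieces gives concavity of $\log g(\cdot,\widetilde{\mu})$ on $D$, and quasiconcavity of $g(\cdot,\widetilde{\mu})$ follows. Restricting to the convex subset $S\subseteq D$ preserves this, which is what Step 2(c) of Lemma \ref{lem:thm.non.diff.case.3.1} needs for convex-valuedness of $\phi$.

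No step is a serious obstacle. The only care needed is to ensure that positivity of both factors holds on all of $D$, so that the logarithm is legitimate, and that the argument of $\Phi$ is linear rather than the norm appearing in $r$. The linearity actually makes this simpler than Lemma \ref{lem:non.diff.quasiconcave}, because no case analysis near $\mathbf{0}$ is required.
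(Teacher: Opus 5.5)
Your proof is correct and follows the same route as the paper. Both arguments split $\log g(\cdot,\widetilde{\mu})$ into $\log(\overline{C}+\overline{I}^{(1)}(\mathbf{0};\mu))$ plus $\log\Phi$ evaluated at a linear function of $\mu$, note that each piece is concave on the convex domain, and conclude quasiconcavity. Your extra checks ($\widetilde{\mu}\neq\mathbf{0}$ on $S$, positivity of both factors, convexity of the domain) spell out what the paper leaves to ``analogous to Lemma \ref{lem:non.diff.quasiconcave}.''
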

\begin{proof}
For each $\widetilde{\mu}\in S$, consider
\begin{align}
\log\left(g(\mu,\widetilde{\mu})\right) & =\log\left(\overline{C}+\overline{I}^{(1)}(\mathbf{0};\mu)\right)+\log\left[\Phi\left(-\frac{\tilde{\mu}^{\top}\Sigma_{0}^{-1}\mu}{\left(\tilde{\mu}^{\top}\Sigma_{0}^{-1}\tilde{\mu}\right)^{1/2}}\right)\right].\label{eq:log.concave}
\end{align}
Then, the proof is analogous to Lemma \ref{lem:non.diff.quasiconcave} by noting that both components on the RHS of \eqref{eq:log.concave} are  concave. 
\end{proof}

\begin{lem}\label{lem:non-diff-asymptotic-master}
Under conditions of Theorem \ref{thm:non-diff-asymptotic}, the following statements hold true.

\begin{itemize}
\item[(i)] When $\left(\frac{\pi}{2}\overline{p}^{\top}\Sigma_{0}\overline{p}\right)^{1/2}<\overline{C}$, $d_{F}$ is matched with $d_{\infty,RT}$ in the sense of \eqref{eq:matching.rule};
\item[(ii)] When $\left(\frac{\pi}{2}\overline{p}^{\top}\Sigma_{0}\overline{p}\right)^{1/2}\geq\overline{C}$ and $\overline{x}\leq\overline{t}\overline{I}^{(1)}(\mathbf{0};\Sigma_{0}\overline{p})$, $d_{F}$ is matched with $d_{\infty,\overline{p}}$;

\item[(iii)] When $\left(\frac{\pi}{2}\overline{p}^{\top}\Sigma_{0}\overline{p}\right)^{1/2}\geq\overline{C}$
and $\overline{x}>\overline{t}\overline{I}^{(1)}(\mathbf{0};\Sigma_{0}\overline{p})$,  $d_{F}$ is matched with $d_{\infty,\mu^{o}}$. 
\end{itemize}

\end{lem}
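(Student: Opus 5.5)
The plan is to mimic the proofs of Lemmas \ref{lem:matching.diff} and \ref{lem:full.diff.lem.3}, working with the joint limit of the estimated regime indicators and the candidate rules under $P_h^n$.

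\textbf{Step 1: limits of the ingredients.} Le Cam's first lemma gives $\hat\Sigma\overset{P_h^n}{\rightarrow}\Sigma_0$ for every $h$. Lemma \ref{lem:non-diff-continuous} then yields, by the continuous mapping theorem,
\[
\hat p\overset{P_h^n}{\rightarrow}\overline p,\qquad \hat x\overset{P_h^n}{\rightarrow}\overline x,\qquad \hat\mu^o\overset{P_h^n}{\rightarrow}\mu^o .
\]
The maps $\Sigma\mapsto\overline I^{(1)}(\mathbf 0;\Sigma\overline p(\Sigma))$ and $\Sigma\mapsto\underline I^{(1)}(\mathbf 0;\Sigma\overline p(\Sigma))$ are continuous, since Hadamard directional derivatives of concave/convex functions are continuous in the direction. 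Hence the estimated boundary
\[
\hat b:=\frac{2\overline C\,\overline I^{(1)}(\mathbf 0;\hat\Sigma\hat p)}{\underline I^{(1)}(\mathbf 0;\hat\Sigma\hat p)-\overline I^{(1)}(\mathbf 0;\hat\Sigma\hat p)}
\]
converges in probability to $\overline t\,\overline I^{(1)}(\mathbf 0;\Sigma_0\overline p)$. When the denominator is zero in the limit, $\hat b\to\infty$ in probability, because the numerator is bounded away from zero by Lemma \ref{lem:key}.

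By Le Cam's third lemma and the delta method (using $\mu_0=\mathbf 0$), $\sqrt n\mu(\hat\gamma)\overset{h}{\rightsquigarrow}\mathcal N(G_0h,\Sigma_0)$. Since the threshold rules are scale invariant, we may multiply inside the indicators by $\sqrt n$. Slutsky then gives:
\begin{itemize}
\item[(a)] $d_{F,RT}\rightsquigarrow\Phi\bigl(\overline p^\top\varDelta/(\tfrac2\pi\overline C^2-\overline p^\top\Sigma_0\overline p)^{1/2}\bigr)$ whenever the limit variance term is positive;
\item[(b)] $d_{F,\hat p}\rightsquigarrow\mathbf 1\{\overline p^\top\varDelta\ge0\}$;
\item[(c)] $d_{F,\hat\mu^o}\rightsquigarrow\mathbf 1\{(\mu^o)^\top\Sigma_0^{-1}\varDelta\ge0\}$.
\end{itemize}
In (b) and (c) the boundaries $\{\overline p^\top\varDelta=0\}$ and $\{(\mu^o)^\top\Sigma_0^{-1}\varDelta=0\}$ are null sets, since $\overline p\ne\mathbf 0$ and $\mu^o\neq\mathbf 0$ in case (iii).

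\textbf{Step 2: interior cases.} Whenever the limiting inequalities defining a regime are strict, the corresponding estimated indicator converges in probability to $1$ and the others to $0$. Because all rules lie in $[0,1]$, $d_F$ then converges weakly to the matching limit rule. Uniform integrability converts this into convergence of expectations, giving \eqref{eq:matching.rule}. This covers case (i), case (ii) with strict inequalities, and case (iii).

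\textbf{Step 3: knife-edge cases (the main obstacle).} There are two boundaries where the limit regime indicator does not converge.

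The first is $(\tfrac\pi2\overline p^\top\Sigma_0\overline p)^{1/2}=\overline C$. Here I follow Lemma \ref{lem:full.diff.lem.3}. At equality $\overline x=0$, so the condition $\hat x\le\hat b$ holds with probability tending to one. It then suffices to show $Pr\{d_F\le x\}\to Pr\{\overline p^\top\varDelta<0\}$ for each $x\in(0,1)$. Split on the event $\{\hat\Sigma\text{ in regime (i)}\}$. On that event, $\{d_{F,RT}\le x\}$ equals $\{g_{n,1}\le0\}$ with
\[
g_{n,1}=\sqrt n\hat p^\top\mu(\hat\gamma)-\Phi^{-1}(x)\bigl(\tfrac2\pi\overline C^2-\hat p^\top\hat\Sigma\hat p\bigr)_+^{1/2},
\]
and the correction term vanishes in probability. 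Then $g_{n,1}-\sqrt n\hat p^\top\mu(\hat\gamma)\to0$ in probability. The continuity-set argument with $Pr\{\overline p^\top\varDelta=0\}=0$ gives that the difference of indicators vanishes.

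The second is $\overline x=\overline t\,\overline I^{(1)}(\mathbf 0;\Sigma_0\overline p)$ in case (ii), where $d_F$ may oscillate between $d_{F,\hat p}$ and $d_{F,\hat\mu^o}$. I would show that at this boundary $\mu^o=\overline x\,\Sigma_0\overline p/\overline I^{(1)}(\mathbf 0;\Sigma_0\overline p)$, so that $(\mu^o)^\top\Sigma_0^{-1}\varDelta$ is a positive multiple of $\overline p^\top\varDelta$. The argument is that at the boundary the two-point prior on direction $\Sigma_0\overline p$ with length $\overline t$ attains the linear-embedding upper bound, hence it solves \eqref{eq:non.benign}, and uniqueness follows from Lemma \ref{lem:non.diff.quasiconcave}. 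Consequently both candidate rules converge jointly to the same limit, $d_{F,\hat p}-d_{F,\hat\mu^o}\to0$ in probability, and the mixing indicators become irrelevant. Verifying this identification of $\mu^o$ at the boundary is the step I expect to be most delicate.
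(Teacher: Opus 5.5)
Your proposal is correct and follows the paper's proof essentially step for step. The shared structure is: consistency of $\hat p,\hat x,\hat\mu^o$ and of the estimated boundary via Lemma \ref{lem:non-diff-continuous}; Le Cam's third lemma, the continuous mapping theorem and uniform integrability for the interior regimes; and a Lemma \ref{lem:full.diff.lem.3}-style continuity-set argument for the two knife-edge cases, which the paper isolates in Lemma \ref{lem:non-diff-matching}. Your justification that $\mu^o=\overline t\,\Sigma_0\overline p$ at the second boundary is valid and fills in a step the paper simply asserts: the symmetric two-point prior there attains the linear-embedding upper bound, so it maximizes $r(\cdot;\Sigma_0)$, and uniqueness follows from strict quasiconcavity.
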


\begin{proof}

Note the following results hold under stated assumptions. 

\begin{itemize}
\item[1)] Le Cam's first lemma implies that $\hat{\Sigma}\overset{P_{h}^{n}}{\rightarrow}\Sigma_{0}$
for all $h$.
\item[2)] Lemma \ref{lem:non-diff-continuous} establishes that $\overline{p}(\cdotp)$, $\overline{x}(\cdotp)$  and $\mu^{o}(\cdotp)$ are all continuous functions. Conclude that 
$\hat{p}=\overline{p}(\hat{\Sigma})\overset{P_{h}^{n}}{\rightarrow}\overline{p}(\Sigma_{0})=\overline{p}$, 
 $\hat{x}=\overline{x}(\hat{\Sigma})\overset{P_{h}^{n}}{\rightarrow}\overline{x}(\Sigma_{0})$
and $\hat{\mu}^{o}=\mu^{o}(\hat{\Sigma})\overset{P_{h}^{n}}{\rightarrow}\mu^{o}(\Sigma_0)=\mu^{o}$
for all $h$.  
\item[3)] As $\overline{I}^{(1)}(\mathbf{0};\cdotp)$ and $\underline{I}^{(1)}(\mathbf{0};\cdotp)$
are continuous and $\overline{I}^{(1)}(\mathbf{0};\mathbf{v})\leq\underline{I}^{(1)}(\mathbf{0};\mathbf{v})$
for all $\mathbf{v}$, we have
\begin{align*}
 & \hat{x}-\frac{2\overline{C}\overline{I}^{(1)}(\mathbf{0};\hat{\Sigma}\hat{p})}{\underline{I}^{(1)}(\mathbf{0};\hat{\Sigma}\hat{p})-\overline{I}^{(1)}(\mathbf{0};\hat{\Sigma}\hat{p})}
\overset{P_{h}^{n}}{\rightarrow}  \overline{x}-\frac{2\overline{C}\overline{I}^{(1)}(\mathbf{0};\Sigma_{0}\overline{p})}{\underline{I}^{(1)}(\mathbf{0};\Sigma_{0}\overline{p})-\overline{I}^{(1)}(\mathbf{0};\Sigma_{0}\overline{p})},
\end{align*}
with  $\overline{x}-\frac{2\overline{C}\overline{I}^{(1)}(\mathbf{0};\Sigma_{0}\overline{p})}{\underline{I}^{(1)}(\mathbf{0};\Sigma_{0}\overline{p})-\overline{I}^{(1)}(\mathbf{0};\Sigma_{0}\overline{p})}=-\infty$
in case $\underline{I}^{(1)}(\mathbf{0};\Sigma_{0}\overline{p})-\overline{I}^{(1)}(\mathbf{0};\Sigma_{0}\overline{p})=0$. 
\item[4)] Since $\mu(\gamma_{0})=\mu_{0}=\mathbf{0}$, Le Cam's third lemma
implies $\sqrt{n}\mu(\hat{\gamma})\overset{h}{\rightsquigarrow}\mathcal{N}(G_{0}h,\Sigma_{0})$.
\item[5)]  With the above results, conclude that,  by continuous mapping theorem, we have
\begin{align*}
d_{F,\hat{p}} & \overset{h}{\rightsquigarrow}\mathbf{1}\left\{ \overline{p}^{\top}\mathcal{N}(G_{0}h,\Sigma_{0})\geq0\right\} ,
\end{align*}
and if $\left(\frac{\pi}{2}\overline{p}^{\top}\Sigma_{0}\overline{p}\right)^{1/2}<\overline{C}$,
\begin{align*}
d_{F,RT} & \overset{h}{\rightsquigarrow}\Phi\left(\frac{\overline{p}^{\top}\mathcal{N}(G_{0}h,\Sigma_{0})}{\sqrt{\frac{2}{\pi}\overline{C}^{2}-\overline{p}^{\top}\Sigma_{0}\overline{p}}}\right),
\end{align*}
and if $\mathbf{\mu}^{o}\neq\mathbf{0},$
\[
d_{F,\hat{\mu}^{o}}\overset{h}{\rightsquigarrow}\mathbf{1}\left\{ \left(\mathbf{\mu}^{o}\right)^{\top}\Sigma_{0}^{-1}\mathcal{N}(G_{0}h,\Sigma_{0})\geq0\right\} .
\]
\end{itemize}

\noindent \textbf{Case 1}: Suppose  $\left(\frac{\pi}{2}\overline{p}^{\top}\Sigma_{0}\overline{p}\right)^{1/2}<\overline{C}$. Then, note 
$\mathbf{1}\left\{ \left(\frac{\pi}{2}\hat{p}^{\top}\hat{\Sigma}\hat{p}\right)^{1/2}<\overline{C}\right\} \overset{P_{h}^{n}}{\rightarrow}1$,
\begin{align*}
\mathbf{1}\left\{ \left(\frac{\pi}{2}\hat{p}^{\top}\hat{\Sigma}\hat{p}\right)^{1/2}\geq\overline{C},\overline{x}(\hat{\Sigma})\leq\frac{2\overline{C}\overline{I}^{(1)}(\mathbf{0};\hat{\Sigma}\hat{p})}{\underline{I}^{(1)}(\mathbf{0};\hat{\Sigma}\hat{p})-\overline{I}^{(1)}(\mathbf{0};\hat{\Sigma}\hat{p})}\right\} \overset{P_{h}^{n}}{\rightarrow}0,\\
\mathbf{1}\left\{ \left(\frac{\pi}{2}\hat{p}^{\top}\hat{\Sigma}\hat{p}\right)^{1/2}\geq\overline{C},\overline{x}(\hat{\Sigma})>\frac{2\overline{C}\overline{I}^{(1)}(\mathbf{0};\hat{\Sigma}\hat{p})}{\underline{I}^{(1)}(\mathbf{0};\hat{\Sigma}\hat{p})-\overline{I}^{(1)}(\mathbf{0};\hat{\Sigma}\hat{p})}\right\} \overset{P_{h}^{n}}{\rightarrow}0.
\end{align*}
As $d_{F,\hat{p}}\leq1$, $d_{F,\hat{\mu}^{o}}\leq1$, conclude that, by continuous mapping theorem, 
$d_{F}\overset{h}{\rightsquigarrow}\Phi\left(\frac{\overline{p}^{\top}\mathcal{N}(G_{0}h,\Sigma_{0})}{\sqrt{\frac{2}{\pi}\overline{C}^{2}-\overline{p}^{\top}\Sigma_{0}\overline{p}}}\right)$
as $n\rightarrow\infty$. Then, uniform integrability implies that
\[
\mathbb{E}[d_{F}]\rightarrow\mathbb{E}\left[\Phi\left(\frac{\overline{p}^{\top}\mathcal{N}(G_{0}h,\Sigma_{0})}{\sqrt{\frac{2}{\pi}\overline{C}^{2}-\overline{p}^{\top}\Sigma_{0}\overline{p}}}\right)\right].
\]
As $\varDelta\sim\mathcal{N}(G_{0}h,\Sigma_{0})$, conclude that $d_{F}$
is matched with $d_{\infty,RT}$ in the sense of \eqref{eq:matching.rule}.

\noindent \textbf{Case 2}: Suppose $\left(\frac{\pi}{2}\overline{p}^{\top}\Sigma_{0}\overline{p}\right)^{1/2}\geq\overline{C}$
and $\overline{x}\leq\overline{t}\overline{I}^{(1)}(\mathbf{0};\Sigma_{0}\overline{p})$, which itself contains three separate cases. If $\left(\frac{\pi}{2}\overline{p}^{\top}\Sigma_{0}\overline{p}\right)^{1/2}>\overline{C}$
and $\overline{x}<\overline{t}\overline{I}^{(1)}(\mathbf{0};\Sigma_{0}\overline{p})$.
Then, note, $\mathbf{1}\left\{ \left(\frac{\pi}{2}\hat{p}^{\top}\hat{\Sigma}\hat{p}\right)^{1/2}<\overline{C}\right\} \overset{P_{h}^{n}}{\rightarrow}0$, 
and 
\begin{align*}
&\mathbf{1}\left\{ \left(\frac{\pi}{2}\hat{p}^{\top}\hat{\Sigma}\hat{p}\right)^{1/2}\geq\overline{C},\overline{x}(\hat{\Sigma})\leq\frac{2\overline{C}\overline{I}^{(1)}(\mathbf{0};\hat{\Sigma}\hat{p})}{\underline{I}^{(1)}(\mathbf{0};\hat{\Sigma}\hat{p})-\overline{I}^{(1)}(\mathbf{0};\hat{\Sigma}\hat{p})}\right\} \overset{P_{h}^{n}}{\rightarrow}1,\\
 & \mathbf{1}\left\{ \left(\frac{\pi}{2}\hat{p}^{\top}\hat{\Sigma}\hat{p}\right)^{1/2}\geq\overline{C},\overline{x}(\hat{\Sigma})>\frac{2\overline{C}\overline{I}^{(1)}(\mathbf{0};\hat{\Sigma}\hat{p})}{\underline{I}^{(1)}(\mathbf{0};\hat{\Sigma}\hat{p})-\overline{I}^{(1)}(\mathbf{0};\hat{\Sigma}\hat{p})}\right\} \overset{P_{h}^{n}}{\rightarrow}0.
\end{align*}
It follows by analogous arguments that $d_{F}\overset{h}{\rightsquigarrow}\mathbf{1}\left\{ \overline{p}^{\top}\mathcal{N}(G_{0}h,\Sigma_{0})\geq0\right\} $
and $d_{F}$ is matched with $d_{\infty,\overline{p}}$. Otherwise, we are in two edge cases: $\left(\frac{\pi}{2}\overline{p}^{\top}\Sigma_{0}\overline{p}\right)^{1/2}=\overline{C}$,
in which we must have $0=\overline{x}<\overline{t}\overline{I}^{(1)}(\mathbf{0};\Sigma_{0}\overline{p})$; and $\left(\frac{\pi}{2}\overline{p}^{\top}\Sigma_{0}\overline{p}\right)^{1/2}>\overline{C}$
and $\overline{x}=\overline{t}\overline{I}^{(1)}(\mathbf{0};\Sigma_{0}\overline{p})$. The matching of $d_F$ and $d_{\infty,\overline{p}}$ in these two edge cases are proved in Lemma \ref{lem:non-diff-matching}.

\noindent\textbf{Case 3}:  $\left(\frac{\pi}{2}\overline{p}^{\top}\Sigma_{0}\overline{p}\right)^{1/2}\geq\overline{C}$
and $\overline{x}>\overline{t}\overline{I}^{(1)}(\mathbf{0};\Sigma_{0}\overline{p})$.
In this case, note it must hold $\left(\frac{\pi}{2}\overline{p}^{\top}\Sigma_{0}\overline{p}\right)^{1/2}>\overline{C}$. Applying completely analogous arguments yields that  $d_{F}\overset{h}{\rightsquigarrow}\mathbf{1}\left\{ \left(\mathbf{\mu}^{o}\right)^{\top}\Sigma_{0}^{-1}\mathcal{N}(G_{0}h,\Sigma_{0})\geq0\right\} $
and $d_{F}$ is matched with $d_{\infty,\mu^{o}}$ .
\end{proof}

\begin{lem}\label{lem:non-diff-matching}
Under conditions of Theorem \ref{thm:non-diff-asymptotic}, the following statements hold true. 
\begin{itemize}
\item[(i)] If $\left(\frac{\pi}{2}\overline{p}^{\top}\Sigma_{0}\overline{p}\right)^{1/2}=\overline{C}$,
$d_{F}$ is matched with $d_{\infty,\overline{p}}$ in the sense of
\eqref{eq:matching.rule};

\item[(ii)] If $\left(\frac{\pi}{2}\overline{p}^{\top}\Sigma_{0}\overline{p}\right)^{1/2}>\overline{C}$
and $\overline{x}=\overline{t}\overline{I}^{(1)}(\mathbf{0};\Sigma_{0}\overline{p})$,
$d_{F}$ is matched with $d_{\infty,\overline{p}}$.
\end{itemize} 
\end{lem}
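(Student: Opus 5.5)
The plan follows the template of Lemma \ref{lem:full.diff.lem.3}. We exhibit a single limiting random variable to which $d_F$ converges in distribution under every $P_h^n$, and then use uniform integrability ($0\le d_F\le 1$) to pass to expectations. The target is $X_0:=\mathbf{1}\{\overline{p}^{\top}Z_h\ge 0\}$ with $Z_h\sim\mathcal{N}(G_0h,\Sigma_0)$. By items 1)--5) in the proof of Lemma \ref{lem:non-diff-asymptotic-master}, $\hat\Sigma\to\Sigma_0$, $\hat p\to\overline p$, $\hat x\to\overline x$ and $\hat\mu^o\to\mu^o$ in $P_h^n$-probability, and $\sqrt n\mu(\hat\gamma)\overset{h}{\rightsquigarrow}Z_h$. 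So $d_{F,\hat p}\overset{h}{\rightsquigarrow}X_0$. It therefore suffices to show that, on the events where $d_F$ uses one of the other two branches, $\mathbf{1}\{d_F\le x\}$ and $\mathbf{1}\{d_{F,\hat p}\le x\}$ differ only with vanishing probability for each $x\in(0,1)$. Equivalently, we mimic the decomposition $\Pr\{d_F\le x\}=A_n(h)+B_n(h)$, with $A_n(h)=\Pr\{d_{F,\hat p}=0\}\to \Pr\{\overline p^{\top}Z_h<0\}$, and we show that the remainder terms coming from the other branches vanish.

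\emph{Statement (i).} Here $(\frac{\pi}{2}\overline{p}^{\top}\Sigma_0\overline p)^{1/2}=\overline C$, so $\overline x=0$ and $\overline t\,\overline I^{(1)}(\mathbf 0;\Sigma_0\overline p)>0$ by Lemma \ref{lem:key}. The third indicator in $d_F$ therefore converges to $0$ in probability, and only the randomized branch competes with the threshold branch. On $\{d_F=d_{F,RT}\}$ we write $\{d_{F,RT}\le x\}=\{g_{n,1}\le 0\}$, where
\[
g_{n,1}:=\sqrt n\hat p^{\top}\mu(\hat\gamma)-\Phi^{-1}(x)\mathbf{1}\{\cdot\}\sqrt{\tfrac{2}{\pi}\overline C^2-\hat p^{\top}\hat\Sigma\hat p}.
\]
Because $\frac{2}{\pi}\overline C^2-\hat p^{\top}\hat\Sigma\hat p\to 0$ in probability, the correction term vanishes. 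Hence $g_{n,1}-g_{n,2}\to 0$ in probability, where $g_{n,2}:=\sqrt n\hat p^{\top}\mu(\hat\gamma)$, and $g_{n,2}\overset{h}{\rightsquigarrow}\overline p^{\top}Z_h$. As in Step 3 of Lemma \ref{lem:full.diff.lem.3}, the disagreement probability is then bounded by $\Pr\{|g_{n,1}-g_{n,2}|>\varepsilon\}+\Pr\{|g_{n,1}|\le\varepsilon\}$. The first term tends to zero, and the second tends to $\Pr\{|\overline p^{\top}Z_h|\le\varepsilon\}$, which goes to zero as $\varepsilon\downarrow0$ since $\overline p^{\top}\Sigma_0\overline p>0$.

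\emph{Statement (ii).} Here the first indicator converges to $0$ in probability, but the threshold and nonlinear-program branches can both be selected with nonvanishing probability. The key step, and the main obstacle, is to show that in this boundary case $\mu^o$ is a positive multiple of $\Sigma_0\overline p$. Then $(\hat\mu^o)^{\top}\hat\Sigma^{-1}\sqrt n\mu(\hat\gamma)$ and $c\,\hat p^{\top}\sqrt n\mu(\hat\gamma)$ agree asymptotically for some $c>0$, and the two threshold rules differ only on a set of vanishing probability.

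To establish this, note that when $\overline x=\overline t\,\overline I^{(1)}(\mathbf 0;\Sigma_0\overline p)$, the point $\mu^\dagger:=\overline t\,\Sigma_0\overline p$ lies in $M_{\mu,\infty}$. By Lemma \ref{lem:key}, $r(\mu^\dagger;\Sigma_0)=(\overline C+\overline x)\Phi(-\overline x/(\overline p^{\top}\Sigma_0\overline p)^{1/2})$, which equals the value of the linear embedding game. That value is an upper bound for $\max r$ over $\{\mu\in M_{\mu,\infty}:\overline I^{(1)}(\mathbf 0;\mu)\ge0\}$, because $\overline I^{(1)}(\mathbf 0;\mu)\le\overline p^{\top}\mu$ and $\overline p^{\top}\mu\le(\overline p^{\top}\Sigma_0\overline p)^{1/2}(\mu^{\top}\Sigma_0^{-1}\mu)^{1/2}$. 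So $\mu^\dagger$ attains the maximum, and by the uniqueness in Lemma \ref{lem:non-diff-continuous} (strict quasiconcavity, Lemma \ref{lem:non.diff.quasiconcave}), $\mu^o=\mu^\dagger\neq\mathbf 0$. Consequently $(\mu^o)^{\top}\Sigma_0^{-1}=\overline t\,\overline p^{\top}$, and by continuity $(\hat\mu^o)^{\top}\hat\Sigma^{-1}\to\overline t\,\overline p^{\top}$ in probability.

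With this identification, $d_{F,\hat\mu^o}$ equals $\mathbf{1}\{\overline t^{-1}(\hat\mu^o)^{\top}\hat\Sigma^{-1}\sqrt n\mu(\hat\gamma)\ge0\}$, and its argument differs from $\hat p^{\top}\sqrt n\mu(\hat\gamma)$ by $o_p(1)$. The same $\varepsilon$-sandwich argument then gives
\[
\Pr\{d_{F,\hat\mu^o}\neq d_{F,\hat p}\}\to0 .
\]
Therefore $d_F-d_{F,\hat p}\to0$ in probability, so $d_F\overset{h}{\rightsquigarrow}X_0$, and uniform integrability yields convergence of $\mathbb E[d_F]$ to $\mathbb E[d_{\infty,\overline p}(\Delta)]$ for every $h$.
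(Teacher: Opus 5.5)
Your proposal is correct and follows essentially the same route as the paper. It uses the same CDF decomposition into a threshold-branch main term plus remainders, the $\varepsilon$-sandwich from Step 3 of Lemma \ref{lem:full.diff.lem.3} for the randomized branch in (i), and the comparison of $\overline{t}^{-1}(\hat{\mu}^{o})^{\top}\hat{\Sigma}^{-1}\sqrt{n}\mu(\hat{\gamma})$ with $\hat{p}^{\top}\sqrt{n}\mu(\hat{\gamma})$ in (ii). You also add a genuine justification of $\mu^{o}=\overline{t}\Sigma_{0}\overline{p}$ in the boundary case (linear-embedding upper bound on $r$, attainment at $\overline{t}\Sigma_{0}\overline{p}$ via Lemma \ref{lem:key}, uniqueness by strict quasiconcavity), which the paper only asserts.
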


\begin{proof}
\textbf{Statement (i)}.
Suppose $\left(\frac{\pi}{2}\overline{p}^{\top}\Sigma_{0}\overline{p}\right)^{1/2}=\overline{C}$.
As $d_{F}$ is uniformly integrable, it suffices to show that $d_{F}\overset{h}{\rightsquigarrow}X_{0,\overline{p}}:= \mathbf{1}\left\{ \overline{p}^{\top}\mathcal{N}(G_{0}h,\Sigma_{0})\geq0\right\}$. 
Then, analogous to the proof of Lemma \ref{lem:full.diff.lem.3}, it suffices to show that,
for each $h$, we have, as $n\rightarrow\infty$, 
\begin{align*}
Pr\left\{ d_{F}\leq z\right\} \rightarrow & \Phi\left(\frac{-\left(\overline{p}\right)^{\top}G_{0}h}{\left(\left(\overline{p}\right)^{\top}\Sigma_{0}\overline{p}\right)^{1/2}}\right)
\end{align*}
for each $z\in(0,1)$. Now, pick each $z\in(0,1)$, algebra shows 
\begin{align*}
Pr\left\{ d_{F}\leq z\right\}  & =\mathbb{E}\left[\mathbf{1}\left\{ d_{F}\leq z\right\} \right]\\
 & =\mathbb{E}\left[\mathbf{1}\left\{ d_{F,\hat{p}}\leq z\right\} \right]+S_{1,n}(h)+S_{2,n}(h)+S_{3,n}(h),
\end{align*}
where 
\begin{align*}
S_{1,n}(h) & =\mathbb{E}\left[\mathbf{1}\left\{ \left(\frac{\pi}{2}\hat{p}^{\top}\hat{\Sigma}\hat{p}\right)^{1/2}<\overline{C}\right\} \left(\mathbf{1}\left\{ d_{F,RT}\leq z\right\} -\mathbf{1}\left\{ d_{F,\hat{p}}\leq z\right\} \right)\right]\\
S_{2,n}(h) & =-\mathbb{E}\left[\mathbf{1}\left\{ \left(\frac{\pi}{2}\hat{p}^{\top}\hat{\Sigma}\hat{p}\right)^{1/2}\geq\overline{C}\right\} \mathbf{1}\left\{ \overline{x}(\hat{\Sigma})>\frac{2\overline{C}\overline{I}^{(1)}(\mathbf{0};\hat{\Sigma}\hat{p})}{\underline{I}^{(1)}(\mathbf{0};\hat{\Sigma}\hat{p})-\overline{I}^{(1)}(\mathbf{0};\hat{\Sigma}\hat{p})}\right\} \mathbf{1}\left\{ d_{F,\hat{p}}\leq z\right\} \right],\\
S_{3,n}(h) & =\mathbb{E}\left[\mathbf{1}\left\{ \left(\frac{\pi}{2}\hat{p}^{\top}\hat{\Sigma}\hat{p}\right)^{1/2}\geq\overline{C}\right\} \mathbf{1}\left\{ \overline{x}(\hat{\Sigma})>\frac{2\overline{C}\overline{I}^{(1)}(\mathbf{0};\hat{\Sigma}\hat{p})}{\underline{I}^{(1)}(\mathbf{0};\hat{\Sigma}\hat{p})-\overline{I}^{(1)}(\mathbf{0};\hat{\Sigma}\hat{p})}\right\} \mathbf{1}\left\{ d_{F,\hat{\mu}^{o}}\leq z\right\} \right].
\end{align*}
Note $\mathbb{E}\left[\mathbf{1}\left\{ d_{F,\hat{p}}\leq z\right\} \right]=\mathbb{E}\left[\mathbf{1}\left\{ \hat{p}^{\top}\mu(\hat{\gamma})<0\right\} \right]=Pr\left\{ \hat{p}^{\top}\sqrt{n}\mu(\hat{\gamma})<0\right\} $.
As $\hat{p}^{\top}\sqrt{n}\mu(\hat{\gamma})\overset{h}{\rightsquigarrow}\mathcal{N}\left(\overline{p}^{\top}G_{0}h,\overline{p}^{\top}\Sigma_{0}\overline{p}\right)$,
we conclude that $\mathbb{E}\left[\mathbf{1}\left\{ d_{F,\hat{p}}\leq z\right\} \right]\rightarrow\Phi\left(-\frac{\overline{p}^{\top}G_{0}h}{\left(\overline{p}^{\top}\Sigma_{0}\overline{p}\right)^{1/2}}\right)$
as required. Also, note $S_{1,n}(h)\rightarrow0$
by using completely analogous arguments in Steps 2-3 in the proof
of Lemma \ref{lem:full.diff.lem.3}. Moreover, note when $\left(\frac{\pi}{2}\overline{p}^{\top}\Sigma_{0}\overline{p}\right)^{1/2}=\overline{C}$,
it must be the case that $0=\overline{x}<\overline{t}\overline{I}^{(1)}(\mathbf{0};\Sigma_{0}\overline{p})$.
Hence, 
\[\mathbf{1}\left\{ \overline{x}(\hat{\Sigma})>\frac{2\overline{C}\overline{I}^{(1)}(\mathbf{0};\hat{\Sigma}\hat{p})}{\underline{I}^{(1)}(\mathbf{0};\hat{\Sigma}\hat{p})-\overline{I}^{(1)}(\mathbf{0};\hat{\Sigma}\hat{p})}\right\} \overset{P_{h}^{n}}{\rightarrow}0,\]
implying $S_{2,n}(h)\rightarrow0$ and $S_{3,n}(h)\rightarrow0$
as well.

\noindent\textbf{Statement (ii)}. Suppose $\left(\frac{\pi}{2}\overline{p}^{\top}\Sigma_{0}\overline{p}\right)^{1/2}>\overline{C}$
and $\overline{x}=\overline{t}\overline{I}^{(1)}(\mathbf{0};\Sigma_{0}\overline{p})$. The proof is completely analogous. Pick each $z\in(0,1)$. Algebra shows
\begin{align*}
\mathbb{E}\left[\mathbf{1}\left\{ d_{F}\leq z\right\} \right] & =\mathbb{E}\left[\mathbf{1}\left\{ \left(\frac{\pi}{2}\hat{p}^{\top}\hat{\Sigma}\hat{p}\right)^{1/2}\geq\overline{C}\right\} \mathbf{1}\left\{ d_{F,\hat{p}}\leq z\right\} \right]+S_{4,n}(h)+S_{5,n}(h),
\end{align*}
where 
\begin{align*}
S_{4,n}(h) & =\mathbb{E}\left[\mathbf{1}\left\{ \left(\frac{\pi}{2}\hat{p}^{\top}\hat{\Sigma}\hat{p}\right)^{1/2}<\overline{C}\right\} \mathbf{1}\left\{ d_{F,RT}\leq z\right\} \right],\\
S_{5,n}(h) & =\mathbb{E}\left[\mathbf{1}\left\{ \left(\frac{\pi}{2}\hat{p}^{\top}\hat{\Sigma}\hat{p}\right)^{1/2}\geq\overline{C}\right\} \mathbf{1}\left\{ \overline{x}(\hat{\Sigma})>\frac{2\overline{C}\overline{I}^{(1)}(\mathbf{0};\hat{\Sigma}\hat{p})}{\underline{I}^{(1)}(\mathbf{0};\hat{\Sigma}\hat{p})-\overline{I}^{(1)}(\mathbf{0};\hat{\Sigma}\hat{p})}\right\} \left(\mathbf{1}\left\{ d_{F,\hat{\mu}^{o}}\leq z\right\} -\mathbf{1}\left\{ d_{F,\hat{p}}\leq z\right\} \right)\right].
\end{align*}
Note $\mathbf{1}\left\{ \left(\frac{\pi}{2}\hat{p}^{\top}\hat{\Sigma}\hat{p}\right)^{1/2}\geq\overline{C}\right\} \overset{P_{h}^{n}}{\rightarrow}1$,
and by continuous mapping theorem
\begin{align*}
\mathbf{1}\left\{ d_{F,\hat{p}}\leq z\right\} =\mathbf{1}\left\{ \hat{p}^{\top}\mu\left(\hat{\gamma}\right)<0\right\}  & \overset{h}{\rightsquigarrow}\mathbf{1}\left\{ \overline{p}^{\top}\mathcal{N}(G_{0}h,\Sigma_{0})<0\right\}.
\end{align*}
As a result, 
\[
\mathbf{1}\left\{ \left(\frac{\pi}{2}\hat{p}^{\top}\hat{\Sigma}\hat{p}\right)^{1/2}\geq \overline{C}\right\} \mathbf{1}\left\{ d_{F,\hat{p}}\leq z\right\} \overset{h}{\rightsquigarrow}\mathbf{1}\left\{ \overline{p}^{\top}\mathcal{N}(G_{0}h,\Sigma_{0})<0\right\} ,
\]
and we have, as $n\rightarrow\infty$, due to uniform integrability, 
\begin{align*}
 & \mathbb{E}\left[\mathbf{1}\left\{ \left(\frac{\pi}{2}\hat{p}^{\top}\hat{\Sigma}\hat{p}\right)^{1/2}\geq\overline{C}\right\} \mathbf{1}\left\{ d_{F,\hat{p}}\leq z\right\} \right]
{\rightarrow}  Pr\left\{ \overline{p}^{\top}\mathcal{N}(G_{0}h,\Sigma_{0})<0\right\} 
=  \Phi\left(-\frac{\overline{p}^{\top}G_{0}h}{\left(\overline{p}^{\top}\Sigma_{0}\overline{p}\right)^{1/2}}\right).
\end{align*}
Also,  $S_{4,n}(h){\rightarrow}0$ as $n\rightarrow\infty$.
Therefore, it
remains to show $S_{5,n}(h){\rightarrow}0$.  To
this end, note in the case of $\left(\frac{\pi}{2}\overline{p}^{\top}\Sigma_{0}\overline{p}\right)^{1/2}>\overline{C}$
and $\overline{x}=\overline{t}\overline{I}^{(1)}(\mathbf{0};\Sigma_{0}\overline{p})$, we must have $0<\overline{t}<\infty$, $\hat{\mu}^{o}\overset{P_{h}^{n}}{\rightarrow}\mu^{o}=\overline{t}\Sigma_{0}\overline{p}$,
$\hat{\Sigma}^{-1}\overset{P_{h}^{n}}{\rightarrow}\Sigma_{0}^{-1}$,
and $\hat{\Sigma}^{-1}\hat{\mu}^{o}\overset{P_{h}^{n}}{\rightarrow}\overline{t}\overline{p}$.  As a result, let
\begin{align*}
X_{n} & :=\left(\overline{t}\right)^{-1}\left(\hat{\mu}^{o}\right)^{\top}\hat{\Sigma}^{-1}\sqrt{n}\mu\left(\hat{\gamma}\right),Y_{n}:=\hat{p}^{\top}\sqrt{n}\mu\left(\hat{\gamma}\right),
H  :=\mathcal{N}\left(\overline{p}^{\top}G_0h,\left(\overline{p}\right)^{\top}\Sigma_{0}\overline{p}\right).
\end{align*} 
Then, we have  $X_{n}\overset{h}{\rightsquigarrow}H, Y_{n}\overset{h}{\rightsquigarrow}H$,
and 
\[X_{n}-Y_{n}=\left(\left(\overline{t}\right)^{-1}\left(\hat{\mu}^{o}\right)^{\top}\hat{\Sigma}^{-1}-\hat{p}^{\top}\right)\sqrt{n}\mu\left(\hat{\gamma}\right)\overset{P_{h}^{n}}{\rightarrow}0.\]
Then, 
\begin{align*}
\vert S_{5,n}(h)\vert & \leq\mathbb{E}\left[\left|\mathbf{1}\left\{ d_{F,\hat{\mu}^{o}}\leq z\right\} -\mathbf{1}\left\{ d_{F,\hat{p}}\leq z\right\} \right|\right]\\
 & =\mathbb{E}\left[\left|\mathbf{1}\left\{ \left(\hat{\mu}^{o}\right)^{\top}\hat{\Sigma}^{-1}\sqrt{n}\mu\left(\hat{\gamma}\right)<0\right\} -\mathbf{1}\left\{ \hat{p}^{\top}\sqrt{n}\mu\left(\hat{\gamma}\right)<0\right\} \right|\right]\\
 & =\mathbb{E}\left[\left|\mathbf{1}\left\{ \left(\overline{t}\right)^{-1}\left(\hat{\mu}^{o}\right)^{\top}\hat{\Sigma}^{-1}\sqrt{n}\mu\left(\hat{\gamma}\right)<0\right\} -\mathbf{1}\left\{ \hat{p}^{\top}\sqrt{n}\mu\left(\hat{\gamma}\right)<0\right\} \right|\right]\\
 & \leq Pr\left\{ \mathbf{1}\left\{ X_{n}<0\right\} \neq\mathbf{1}\left\{ Y_{n}<0\right\} \right\} ,
\end{align*}
and $Pr\left\{ \mathbf{1}\left\{ X_{n}<0\right\} \neq\mathbf{1}\left\{ Y_{n}<0\right\} \right\} \rightarrow0$
as $n\rightarrow\infty$ by using a completely analogous argument
to Step 3 in the proof of Lemma \ref{lem:full.diff.lem.3}.
\end{proof}

\end{document}